\newcommand{\alert}[1]{\textbf{\color{red}		[[[#1]]]}\marginpar{\textbf{\color{red}**}}\typeout{ALERT:\@\the\inputlineno: #1}}
\newtheorem*{theorem*}{Theorem}
\newtheorem{theorem}{Theorem}
\newtheorem*{definition*}{Definition}
\newtheorem{definition}{Definition}
\newtheorem*{lemma*}{Lemma}
\newtheorem{lemma}{Lemma}
\newtheorem*{corollary*}{Corollary}
\newtheorem{corollary}{Corollary}
\newtheorem{remark}{Remark}
\newtheorem*{fact*}{Fact}
\newcommand\blfootnote[1]{%
  \begingroup
  \renewcommand\thefootnote{}\footnote{#1}%
  \addtocounter{footnote}{-1}%
  \endgroup
}
\begin{document}
\title{On the Size Overhead of Pairwise Spanners}
\author{Ofer Neiman\footnote{Ben-Gurion University of the Negev, Israel. E-mail: \texttt{neimano@cs.bgu.ac.il}}\qquad  Idan Shabat\footnote{Ben-Gurion University of the Negev, Israel. Email: \texttt{idansha6@gmail.com}}}

\date{}
\maketitle

\blfootnote{Supported by Lynn and William Frankel Center for Computer Sciences and ISF grant 970/21.}

\pagenumbering{gobble}

\begin{abstract}

Given an undirected possibly weighted $n$-vertex graph $G=(V,E)$ and a set $\mathcal{P}\subseteq V^2$ of pairs, a subgraph $S=(V,E')$ is called a ${\cal P}$-{\em pairwise} $\alpha$-spanner of $G$, if for every pair $(u,v)\in\mathcal{P}$ we have $d_S(u,v)\leq\alpha\cdot d_G(u,v)$. The parameter $\alpha$ is called the {\em stretch} of the spanner, and its {\em size overhead} is define as $\frac{|E'|}{|{\cal P}|}$.

A surprising connection was recently discussed between the additive stretch of $(1+\epsilon,\beta)$-spanners, to the hopbound of $(1+\epsilon,\beta)$-hopsets. A long sequence of works showed that if the spanner/hopset has size $\approx n^{1+1/k}$ for some parameter $k\ge 1$, then $\beta\approx\left(\frac1\epsilon\right)^{\log k}$. In this paper we establish a new connection to the {\em size overhead} of pairwise spanners. In particular, we show that if $|{\cal P}|\approx n^{1+1/k}$, then a ${\cal P}$-pairwise $(1+\epsilon)$-spanner must have size at least $\beta\cdot |{\cal P}|$ with $\beta\approx\left(\frac1\epsilon\right)^{\log k}$ (a near matching upper bound was recently shown in \cite{ES23}). That is, the size overhead of pairwise spanners has similar bounds to the hopbound of hopsets, and to the additive stretch  of spanners.

We also extend the connection between pairwise spanners and hopsets to the large stretch regime, by showing nearly matching upper and lower bounds for ${\cal P}$-pairwise $\alpha$-spanners. In particular, we show that if $|{\cal P}|\approx n^{1+1/k}$, then the size overhead is $\beta\approx\frac k\alpha$.

A {\em source-wise spanner} is a special type of pairwise spanner, for which ${\cal P}=A\times V$ for some $A\subseteq V$. A {\em prioritized} spanner is given also a ranking of the vertices $V=(v_1,\dots,v_n)$, and is required to provide improved stretch for pairs containing higher ranked vertices. By using a sequence of reductions: from pairwise spanners to source-wise spanners to prioritized spanners, we improve on the state-of-the-art results for source-wise and prioritized spanners. Since our spanners can be equipped with a path-reporting mechanism, we also substantially improve the known bounds for path-reporting prioritized distance oracles. Specifically, we provide a path-reporting distance oracle, with size $O(n\cdot(\log\log n)^2)$, that has a constant stretch for any query that contains a vertex ranked among the first $n^{1-\delta}$ vertices (for any constant $\delta>0$). Such a result was known before only for non-path-reporting distance oracles.

\end{abstract}

\newpage
\tableofcontents
\newpage

\pagenumbering{arabic}
\setcounter{page}{1}

\section{Introduction}

Spanners and hopsets are basic graph structures that have been extensively studied, and found numerous applications in graph algorithms, distributed computing, geometric algorithms, and many more. In this work we study pairwise spanners, and prove an intriguing relation between pairwise spanners, general spanners, and hopsets. We then derive several new results on {\em source-wise} and {\em prioritized} spanners and distance oracles.

Let $G=(V,E)$ be an undirected $n$-vertex graph, possibly with nonnegative weights on the edges. An $(\alpha,\beta)$-{\em spanner} is a subgraph $S=(V,E')$ such that for every pair $u,v\in V$
\begin{equation}\label{eq:spanner}
d_S(u,v)\le \alpha\cdot d_G(u,v)+\beta~,
\end{equation}
where $d_G,d_S$ stand for the shortest path distances in $G,S$, respectively. A spanner is called {\em near-additive} if its multiplicative stretch is $\alpha=1+\epsilon$ for some small $\epsilon>0$.\footnote{Spanners with additive stretch $\beta>0$ are usually defined on unweighted graphs. A possible variation for weighted graph has also been studied \cite{EGN19,ABDKS20}, in which $\beta$ is multiplied by the weight of the heaviest edge on some $u-v$ shortest path.}
Given a set ${\cal P}\subseteq V^2$ of pairs, a {\em ${\cal P}$-pairwise} $\alpha$-spanner for $G$ has to satisfy Equation (\ref{eq:spanner}) (with $\beta=0$) only for pairs $(u,v)\in{\cal P}$.  The {\em size overhead} of a pairwise spanner is defined as $\frac{|E'|}{|{\cal P}|}$.
A {\em source-wise spanner}\footnote{Source-wise spanners were called {\em terminal spanners} in \cite{EFN17}.} is a special type of pairwise spanner, in which ${\cal P}=A\times V$ for some set $A\subseteq V$.

An $(\alpha,\beta)-${\em hopset} for a graph $G=(V,E)$ is a set of edges $H\subseteq\binom{V}{2}$ such that for every $u,v\in V$,
\[d^{(\beta)}_{G\cup H}(u,v)\leq\alpha\cdot d_G(u,v)~.\]
Here, $G\cup H$ denotes the graph $G$ with the additional edges of $H$, and the weight function $w(x,y)=d_G(x,y)$ for every $\{x,y\}\in H$. The notation $d^{(\beta)}_{G\cup H}(u,v)$ stands for the weight of the shortest $u-v$ path in this graph, among the ones that contain at most $\beta$ edges. 
A hopset is called {\em near-exact} if its stretch is $\alpha=1+\epsilon$ for some small $\epsilon>0$.

\subsection{Pairwise Spanners, Near-additive Spanners, and Hopsets}
In \cite{EN22,KP22}, a connection was discussed between the additive stretch $\beta$ of near-additive spanners, to the hopbound $\beta$ of near-exact hopsets. Given an integer parameter $k\ge 1$, that governs the size of the spanner/hopset to be $\approx n^{1+1/k}$, a sequence of works on spanners \cite{EP04,TZ06,P08,ABP18,EN16b,EGN19} and on hopsets \cite{C00,B09,HKN14,MPVX15,EN16,HP17,EN19}, culminated in achieving 
$\beta=O\left(\frac{\log k}{\epsilon}\right)^{\log k}$ 
for both spanners and hopsets. In \cite{ABP18} a lower bound of $\beta\ge\Omega_k\left(\frac{1}{\epsilon}\right)^{\log k-1}$ was shown
. So whenever $\epsilon$ is sufficiently small, we have that 
$\beta\approx \left(\frac1\epsilon\right)^{\log k}$.
As spanners and hopsets are different objects, and $\beta$ has a very different role for each, this similarity is somewhat surprising (albeit comparable techniques are used for the constructions). 

In this paper we establish an additional connection, to pairwise spanners. Here the parameter $k$ governs the number of pairs, 
$|{\cal P}|\approx n^{1+1/k}$,\footnote{In this paper we focus on ${\cal P}$-pairwise spanners with $|{\cal P}|\ge n$. Note that for $|{\cal P}|<n$ there is a trivial $\Omega(n)$ lower bound on the size.} 
and we show that ${\cal P}$-pairwise $(1+\epsilon)$-spanners must have size at least $\beta\cdot|{\cal P}|$ with $\beta\approx \left(\frac1\epsilon\right)^{\log k}$. So the parameter $\beta$, which is the additive stretch for near-exact spanners, and is the hopbound for hopsets, now plays the role of the {\em size overhead} for pairwise spanners.

An exact version of pairwise spanners (i.e., with $\alpha=1$) was introduced in \cite{CE05}, where they were called {\em distance preservers}. The sparsest distance preservers that are currently known are due to \cite{CE05,B21}. For an $n$-vertex graph and a set of pairs $\mathcal{P}$, they have size $O\left(\min\{n^{2/3}|\mathcal{P}|,n|\mathcal{P}|^{1/2}\}+n\right)$. So whenever $|{\cal P}|\le n^{2-\delta}$ for some constant $\delta>0$, the size overhead $\beta$ is polynomial in $n$.

When considering near-exact pairwise spanners, following \cite{KP22}, in  \cite{ES23} a ${\cal P}$-pairwise $(1+\epsilon)$-spanner of size  $\approx\beta\cdot|{\cal P}|$, with $\beta= O\left(\frac{\log k}{\epsilon}\right)^{\log k}$  was shown (where $k$ is such that $|{\cal P}|\approx n^{1+1/k}$). 

\noindent{\bf Our Results:} In this paper we show a near matching lower bound for ${\cal P}$-pairwise $(1+\epsilon)$-spanners, that $\beta\ge\Omega\left(\frac{1}{\log^2 k\cdot\epsilon}\right)^{\log k-1}$, establishing that the size overhead must be $\beta\approx \left(\frac1\epsilon\right)^{\log k}$. We derive this lower bound by a delicate adaptation of the techniques of \cite{ABP18} to the case of pairwise spanners.

\subsubsection{Larger Stretch Regime} 
As the stretch grows to be bounded away from 1, the connection between spanners and hopsets diminishes (note that the lower bound of \cite{ABP18} is meaningless for large stretch). For size $\approx n^{1+1/k}$ and constant stretch $\alpha$, one can obtain $(\alpha,\beta)$-spanners and hopsets with
$\beta\approx k^{1+1/\log\alpha}$ \cite{BP20,NS22}. However, as the stretch grows to $\alpha=k^{\delta}$ (for some constant $0<\delta<1$), there is a hopset with $\beta\approx k^{1-\delta}$ \cite{BP20,NS22}, while spanners must have $\beta=\Omega(k)$.\footnote{Note that an $(\alpha,\beta)$-spanner is also an $(\alpha+\beta,0)$-spanner. Thus, with size $n^{1+1/k}$ it must have $\alpha+\beta\ge k$ \cite{ADDJS93}.}
In \cite{NS22}, a lower bound of $\beta=\Omega\left(\frac k\alpha\right)$ for hopsets was shown.

\noindent{\bf Our Results:} In this work we exhibit a similar tradeoff for pairwise spanners, as is known for hopsets.\footnote{For stretch $3+\epsilon$ with $0<\epsilon<1$, a pairwise spanner of size $\beta\cdot|{\cal P}|$ with $\beta\approx k^{\log 3 +O(\log(1/\epsilon))}$ was shown in \cite{ES23}. However, we are not aware of any result for larger stretch.} In particular,  
we devise a ${\cal P}$-pairwise $\alpha$-spanner with size $\beta\cdot|{\cal P}|$ where $\beta\approx \frac{k^{1+1/\log\alpha}}{\alpha}$ (and $k$ is such that $|{\cal P}|\approx n^{1+1/k}$).
We also show a lower bound $\beta=\Omega\left(\frac{k}{\alpha}\right)$ on the size overhead of such pairwise spanners. 
Note that this lower bound nearly matches the upper bound for $\alpha\ge k^\delta$ for any constant $\delta>0$, and in this regime we have $\beta\approx \frac k\alpha$ for both hopsets and pairwise spanners. 

\subsection{Source-wise and Prioritized Spanners and Distance Oracles}

Source-wise spanners were first studied by \cite{RTZ05,Par14}. Given an integer parameter $k\ge 1$ and a subset $A\subseteq V$ of sources, they showed a source-wise spanner with stretch $2k-1$ and size $O(kn\cdot |S|^{1/k})$ (the spanners of \cite{RTZ05} could also be distance oracles, while \cite{Par14} had slightly improved stretch $2k-2$ for some pairs). By increasing the stretch to $4k-1$, \cite{EFN17} obtained improved size $O(n+\sqrt{n}\cdot |S|^{1+1/k})$. The current state-of-the-art result is by \cite{KP22}. For any $0<\epsilon<1$, they gave a source-wise spanner with stretch $4k-1+\epsilon$ and size $O(n+|S|^{1+1/k})\cdot\beta$, where $\beta\approx\left(\frac{\log\log n}{\epsilon}\right)^{\log\log n}$.

Given an undirected possibly weighted graph $G=(V,E)$, a {\em prioritized} metric structure (such as spanner, hopset, distance oracle) is also given an ordering of the vertices $V=(v_1,\dots,v_n)$, and is required to provide improved guarantees (e.g., stretch, hopbound, query time, etc.) for higher ranked vertices.

In \cite{EFN15}, among other results, prioritized distance oracles for general graphs were shown.\footnote{A distance oracle is a data structure that can efficiently report approximate distances. The parameters of interest are usually the query time, the size, and the multiplicative stretch.} For an $n$-vertex graph with priority ranking $(v_1,\dots,v_n)$, a distance oracle with size $O(n\log\log n)$, query time $O(1)$, and stretch $O\left(\frac{\log n}{\log n-\log j}\right)$ for any pair containing $v_j$ was shown.\footnote{Additional results with even smaller size and larger stretch were shown in \cite{EFN15} as well.} Note that the stretch is constant for any pair containing a vertex ranked among the first $n^{1-\delta}$ vertices (for constant $\delta>0$). However, that distance oracle could only return distances, and could not report paths. In \cite{EFN15} an additional {\em path-reporting} distance oracle was shown. Given an integer parameter $k\ge 1$, it had size $O(kn^{1+1/k})$, stretch  $2\left\lceil\frac{k\log j}{\log n}\right\rceil-1$ for pairs containing $v_j$, and query time $O\left(\left\lceil\frac{k\log j}{\log n}\right\rceil\right)$.
Note that this oracle size is $\Omega(n\log n)$ for any $k$, and with size $O(n\log n)$ it has prioritized stretch $2\log j-1$, which is much worse than the stretch that the previous oracle had for higher ranked vertices.

\noindent{\bf Our Results:}
By using a sequence of reductions, from pairwise spanners to source-wise spanners to prioritized spanners, we improve on the state-of-the-art results. In particular, for any integer parameter $k\ge 1$, any $0<\epsilon<1$, and a subset $A\subseteq V$, we obtain a source-wise spanner with stretch $4k-1+\epsilon$ and size $O(n\log\log n+ |S|^{1+1/k}\cdot\beta)$  where $\beta\approx\left(\frac{\log\log n}{\epsilon}\right)^{\log\log n}$. Note that  $\beta$ only multiplies the term $|S|^{1+1/k}$, which could be much smaller than $n$, while in \cite{KP22} the size is always at least $\Omega(n\cdot\beta)$.

Using the fact that the constructions of pairwise spanners can also be {\em path-reporting}, and that our reductions preserve this property, we devise {\em path-reporting} distance oracles with size $O(n(\log\log n)^2)$, query time $O(1)$, and prioritized stretch $O\left(\frac{\log n}{\log n-\log j-o(\log\log n)^2}\right)$. That is, we nearly achieve the improved parameters of the non-path-reporting oracles of \cite{EFN15} (in particular, we get constant stretch for any pair containing a vertex ranked among the first $n^{1-\delta}$ vertices). 

\subsection{Technical Overview}

\subsubsection{Lower Bound for Near-Exact Pairwise Spanners}


In \cite{ABP18}, a series of lower bounds was shown for graph compression structures, such as near-additive spanners, emulators, distance oracles and hopsets. Specifically, for the first three, \cite{ABP18} proved that for any positive integer $\kappa$, any such structure with size $O(n^{1+\frac{1}{2^\kappa-1}})$, that preserves distances $d$ up to $(1+\epsilon)d+\beta$, must have $\beta=\Omega\left(\frac{1}{\epsilon\kappa}\right)^{\kappa-1}$. Almost the same lower bound\footnote{For hopsets of size $O(n^{1+\frac{1}{2^\kappa-1}})$ and stretch $1+\epsilon$, the actual lower bound on $\beta$ that was proved in \cite{ABP18} was $\beta=\Omega\left(\frac{1}{\epsilon2^\kappa}\right)^{\kappa-1}$, instead of $\Omega\left(\frac{1}{\epsilon\kappa}\right)^{\kappa-1}$. It was suggested though, that a more careful analysis might achieve the same lower bound as of $(1+\epsilon,\beta)$-spanners.} was proved by \cite{ABP18} for $(1+\epsilon,\beta)$-hopsets of size $O(n^{1+\frac{1}{2^\kappa-1}})$.

All the lower bounds mentioned above were demonstrated in \cite{ABP18} on essentially the same graph. The construction of this graph relied on a \textit{base graph} $\ddot{B}$, that was presented in \cite{AB17} and had a \textit{layered structure}. That is, the vertices of $\ddot{B}$ are partitioned into $2l+1$ subsets, such that any edge can only be between vertices of adjacent layers. The first and the last layers of $\ddot{B}$ are called \textit{input ports} and \textit{output ports} respectively. This graph also had a set $\ddot{\mathcal{P}}$ of pairs of input and output ports $(u,v)$ such that there is a unique shortest path $P_{u,v}$ from $u$ to $v$ in $\ddot{B}$, that visits every layer exactly once. Moreover, each edge of $\ddot{B}$ is labeled by a label from a set $\ddot{\mathcal{L}}$, such that the edges of every such path $P_{u,v}$ are alternately labeled by a {\em unique} pair of labels $a,b\in\ddot{\mathcal{L}}$ (meaning that no other pair $(u',v')\in\ddot{\mathcal{P}}$ has its shortest path $P_{u',v'}$ alternately labeled by the same labels $a,b$).

On top of the base graph $\ddot{B}$, a graph $\mathcal{H}_\kappa$ was constructed, for every positive integer $\kappa$. In fact, for \textit{hopsets}, a slightly different graph was constructed in \cite{ABP18} than for \textit{near-additive spanners} (denoted as $\mathcal{H}_\kappa$ in both cases). The two constructions are recursive, with different base cases. Moreover, since the construction for near-additive spanners must be unweighted, edges with large weight in the construction for hopsets, must be replaced by long unweighted paths. In this paper, we construct a graph, still denoted as $\mathcal{H}_\kappa$, that is actually a \textit{combination} of their two constructions. In particular, we use the same base case as for near-additive spanners, which is simply a complete bipartite graph $K_{p,p}$, while using edges of large weight, similarly to the construction of $\mathcal{H}_\kappa$ for hopsets. In addition, we shift the indices of the sequence $\{\mathcal{H}_\kappa\}$ by $1$. We describe here the construction of $\mathcal{H}_1$, as the general construction is more involved and appears fully in Section \ref{sec:NearExactLowerBound}.

The graph $\mathcal{H}_1$ is the same graph as $\ddot{B}$, where every vertex in its $2l-1$ middle layers is replaced by a copy of the complete bipartite graph $K_{p,p}$, where $p$ is the number of labels in $\ddot{\mathcal{L}}$ (for $\kappa>1$, these vertices are replaced by copies of $\mathcal{H}_{\kappa-1}$). Each vertex in each side of the copy of $K_{p,p}$ is mapped to a label in $\ddot{\mathcal{L}}$. An original edge of $\ddot{B}$, that had label $a$, now connects these copies of $K_{p,p}$, by their vertices that correspond to the label $a$ (see Figure \ref{fig:NearExactLowerBound} in Section \ref{sec:NearExactLowerBound} for a detailed illustration). In addition, these edges are assigned with a large weight of $2l-1$ (for $\kappa>1$, this weight is $(2l-1)^\kappa$). This means that any shortest path in $\ddot{B}$ between a pair $(u,v)\in\ddot{\mathcal{P}}$ now must pass through $2l-1$ copies of $K_{p,p}$. Other paths that connect $u,v$, on the other hand, must visit a layer more than once, and therefore their weight is larger than $d_{\mathcal{H}_1}(u,v)$ by at least $2(2l-1)$. Choosing $l\approx\frac{1}{\epsilon}$ (for $\kappa>1$, $l\approx\left(\frac{1}{\epsilon\kappa}\right)^\kappa$), this means that paths other than the unique shortest path have stretch more then $1+\frac{2(2l-1)}{d(u,v)}\approx1+\frac{2(2l-1)}{2l(2l-1)}\approx1+\epsilon$. Hence, any $\ddot{\mathcal{P}}$-pairwise $(1+\epsilon)$-spanner must contain the unique shortest paths between any pair $(u,v)\in\ddot{\mathcal{P}}$.

In our proof of a lower bound for pairwise spanners, we utilize an additional property of the graph $\mathcal{H}_1$. Recall that the edges of $P_{u,v}$ (a shortest path in $\ddot{B}$ that connects some $(u,v)\in\ddot{\mathcal{P}}$) are alternately labeled by a unique pair of labels $a,b\in\ddot{\mathcal{L}}$. This means that in $\mathcal{H}_1$, shortest paths that connect different pairs $(u,v),(u',v')\in\ddot{\mathcal{P}}$ cannot share an edge of a copy of $K_{p,p}$. This is due to the fact that if such path goes through the edge $(x,y)$ of a copy of $K_{p,p}$, and $x$ corresponds to a label $a$, and $y$ to a label $b$, then it uniquely determines the pair of labels $a,b$ of the path $P_{u,v}$. The result is that any $\ddot{\mathcal{P}}$-pairwise $(1+\epsilon)$-spanner for $\mathcal{H}_1$ must contain a disjoint set of $2l-1$ edges, for every $(u,v)\in\ddot{\mathcal{P}}$. Hence, its size must be at least 
\[(2l-1)\cdot|\ddot{\mathcal{P}}|=\Omega\left(\frac{1}{\epsilon}\cdot|\ddot{\mathcal{P}}|\right)~.\]

For larger $\kappa$'s, the number of the edges that these paths do not share grows to $\beta=\Omega_\kappa\left(\frac{1}{\epsilon}\right)^\kappa$, therefore the lower bound for the size is $\beta|\mathcal{P}_\kappa|$, where $\mathcal{P}_\kappa$ is the corresponding set of pairs of the graph $\mathcal{H}_\kappa$. It can be proved that the number of pairs in $\mathcal{P}_\kappa$ is approximately $n^{1+\frac{1}{2^{\kappa+1}-1}}$, where $n$ is the number of vertices in $\mathcal{H}_\kappa$.

\subsubsection{Lower Bound for Pairwise Spanners with Large Stretch}

Our proof of a lower bound for pairwise spanners with stretch larger than $1+\epsilon$ (typically a constant stretch, or stretch between $2$ and $k$, where $k$ is such that $|\mathcal{P}|\approx n^{1+\frac{1}{k}}$) is demonstrated on an unweighted graph with high number of edges and high girth (the length of the minimal cycle). This kind of graphs was used in \cite{TZ01,NS22} to show lower bounds for spanners, distance oracles and hopsets. Specifically, we use a graph $G=(V,E)$ from \cite{LPS88}, that is regular, has $\Omega(n^{1+\frac{1}{k}})$ edges, and has girth larger than $k$.

Given a desired stretch $\alpha>1$, we consider pairs of vertices in $G$ of distance $\delta=\left\lfloor\frac{k}{\alpha+1}\right\rfloor$, henceforth $\delta$-pairs. The useful property of $\delta$-pairs is that due to the high girth of $G$, they have a unique shortest path that connects them, while any other path must have stretch more that $\alpha$. Note the similarity of this property to the property of the unique shortest paths from the lower bound for near-exact pairwise spanners. This property implies that any $\mathcal{P}$-pairwise $\alpha$-spanner, for any set $\mathcal{P}$ of $\delta$-pairs, must contain all the unique shortest paths that connect the $\delta$-pairs of $\mathcal{P}$. We call such paths the $\delta$-paths that correspond to the $\delta$-pairs in $\mathcal{P}$.

Then, we use the regularity of the graph $G=(V,E)$, as well as the high girth of $G$, to prove some combinatorial properties of $G$. We prove that there are many $\delta$-pairs - approximately $n^{1+\frac{\delta}{k}}$ pairs - and that each edge $e\in E$ participates in a large number of corresponding $\delta$-paths - approximately $\delta n^{\frac{\delta-1}{k}}$. 

To finally prove our lower bound for pairwise $\alpha$-spanners, we sample a set $\mathcal{P}$ of $O\left(\frac{ n^{1+\frac{1}{k}}}{\delta}\right)$ $\delta$-pairs out of all the possible $n^{1+\frac{\delta}{k}}$ such pairs. This means that the sample probability is $\frac{1}{\delta}n^{-\frac{\delta-1}{k}}$. But since the number of $\delta$-pairs whose corresponding $\delta$-path pass through a specific edge $e\in E$ is approximately $\delta n^{\frac{\delta-1}{k}}$, we expect that 
a constant fraction of the edges of $G$ still participate in a unique shortest path of the $\delta$-pairs in $\mathcal{P}$. Every $\mathcal{P}$-pairwise $\alpha$-spanner must contain these edges, and therefore must have size 
\[\Omega(|E|)=\Omega(n^{1+\frac{1}{k}})=\Omega(\delta|\mathcal{P}|)=\Omega\left(\frac{k}{\alpha}|\mathcal{P}|\right)~.\]
This proves our lower bound of $\beta=\Omega\left(\frac{k}{\alpha}\right)$.

\subsubsection{Upper Bound for Pairwise Spanners with Large Stretch}

The state-of-the-art pairwise $(1+\epsilon)$-spanners and pairwise $(3+\epsilon)$-spanners of \cite{ES23} were achieved using the following \textit{semi-reduction} from hopsets. Suppose that the hopset $H$ has stretch $\alpha$ and hopbound $\beta$, for a graph $G=(V,E)$. Given a set of pairs of vertices $\mathcal{P}$, we construct a pairwise spanner $S$ that consists of all the $\beta$-edges paths in $G\cup H$ that connect pairs in $\mathcal{P}$ and have stretch $\alpha$. Of course, these paths include edges of $H$ that are not allowed to be on the final pairwise spanner $S$ (as they don't exist in $G$). Therefore, an additional step is required, that adds more edges to $S$, such that every edge $(x,y)\in H$ will have a path in $S$ with weight $w(x,y)$. This way, the distance between every pair $(u,v)\in\mathcal{P}$ is the same distance as in the graph $G\cup H$, which is at most $\alpha\cdot d_G(u,v)$. The size of the pairwise spanner $S$ is $\beta\cdot|\mathcal{P}|$, plus the number of edges that are required to preserve the distances between every $(x,y)\in H$.

In \cite{ES23}, preserving the distances between pairs in $H$ is done by relying on the specific properties of the hopset $H$. Namely, it was observed that the relevant hopset $H$ has small \textit{supporting size} - the minimal number of edges required to preserve the distances in $H$. We, however, take an alternative approach. Instead of relying on specific properties of the hopset $H$ to preserve it \textit{accurately}, we use a pairwise spanner with low stretch, to preserve the distance in $H$ \textit{approximately}. In fact, we use the very same $H$-pairwise $(1+\epsilon)$-spanner of \cite{ES23} to do that.

Considering the process described above, we are only left to choose the hopset $H$ that we use, to achieve a pairwise spanner with relatively large stretch. The advantage of having a large stretch $\alpha$, is that the hopbound $\beta$ can be much smaller, and as a result, so is the size overhead of the resulting pairwise spanner. In particular, for pairwise $(1+\epsilon)$-spanners, we know by Section \ref{sec:NearExactLowerBound} that the size overhead must be $\Omega_k\left(\frac{1}{\epsilon}\right)^{\log k}$. But for a larger stretch $O(\alpha)$, we can get a much smaller hopbound, and therefore a much smaller size overhead, of $\beta=\frac{k^{1+\frac{2}{\ln\alpha}}}{\alpha}$. This is done by using the state-of-the-art hopsets with this type of stretch, from \cite{BP20,NS22}.

However, the process of directly applying a pairwise spanner with low stretch on a hopset described above, results in a somewhat large size of the pairwise spanner. This is because the size of the hopset, which is roughly $n^{1+\frac{1}{k}}$, is multiplied by the size overhead coefficient $\beta$, which is at least $\textrm{poly}(k)$.  To reduce this size, and avoid the additional term of $\textrm{poly}(k)\cdot n^{1+\frac{1}{k}}$, we eventually do use certain properties of the hopsets of \cite{NS22}. Specifically, we prove that their hopsets could be partitioned into three sets $H_1,H_2,H_3$, such that $H_1,H_2$ can be efficiently preserved, while $H_3$ is relatively small. Thus, we can use a pairwise spanner with low stretch only on $H_3$, instead of using it on the whole hopset $H$. Then, the coefficient $\beta\approx poly(k)$ multiplies only the size of $H_3$, which is significantly smaller than $n^{1+\frac{1}{k}}$.

\subsubsection{Subset, Source-wise and Prioritized Spanners}

Our new results for subset, source-wise and prioritized spanners are achieved via a series of reductions. These reductions implicitly appeared in \cite{ES23,EFN15}. The first reduction receives a pairwise spanner and uses it in order to construct a {\em subset} spanner. A {\em subset spanner} is a special type of pairwise spanner, for which ${\cal P}=A\times A$ for some $A\subseteq V$. The second is a quite simple reduction that turns a subset spanner into a source-wise spanner with almost the same properties. The last reduction uses source-wise spanners to construct prioritized spanners. Using the new state-of-the-art pairwise spanners, we achieve new results for each of these types of spanners.

Besides these new results, we believe that these reductions themselves could be of interest. It is immediate to find backwards reductions, from prioritized spanners to source-wise spanners, and from source-wise spanners to subset spanners. This means that, in a sense, these three\footnote{Note that the reduction that constructs a subset spanner, given a pairwise spanner, remains a one-way reduction, in the sense that no efficient reduction in the other direction is known.} types of spanners are equivalently hard to construct. Every new construction of one of these spanners immediately implies new constructions for the others.

Next we shortly describe the three reductions mentioned above. 

\vspace{5mm}
\noindent{\textbf{From pairwise to subset spanners.}}
Let $G=(V,E)$ be an undirected weighted $n$-vertex graph, and let $A\subseteq V$ be a subset, for which we want to construct a subset spanner. We consider the graph $K=(A,\binom{A}{2})$, where every pair of vertices $u,v\in A$ is connected by an edge of weight $d_G(u,v)$. On the graph $K$, we apply a known construction of \textit{emulator} (see Definition \ref{def:PathReportingEmulator}). That is, we find a small graph $R=(A,E')$, such that
\[d_G(u,v)=d_K(u,v)\le d_R(u,v)\leq\alpha_E\cdot d_G(u,v)~,\]
for any $u,v\in A$. The parameter $\alpha_E$ is the stretch of the emulator $R$. For this purpose, we use either the distance oracle of Thorup and Zwick from \cite{TZ01}, which can be thought of as an emulator with stretch $\alpha_E=2k-1$ and size $O(k|A|^{1+\frac{1}{k}})$, or the emulator from \cite{ES23}, that has stretch $\alpha_E=O(k)$ and size $O(|A|^{1+\frac{1}{k}})$. Here, $k$ is a positive integer parameter for our choice. These two emulators are \textit{path-reporting}, meaning that given $u,v\in A$, they can also efficiently report a path of stretch $\alpha_E$ inside the emulator itself.

Next, we consider the set $R$ as a set of pairs of vertices from $G$, and apply a pairwise spanner on this set. We use the existing pairwise spanners from \cite{ES23} that have stretch either $\alpha_P=1+\epsilon$ or $\alpha_P=3+\epsilon$, and size\footnote{In \cite{ES23}, the size of these pairwise spanners was presented as $O(|\mathcal{P}|\cdot\beta+n\log k+n^{1+\frac{1}{k}})$, where $\beta\approx \left(\frac{\log k}{\epsilon}\right)^{\log k}$, or $\beta=k^{O\left(\log\frac{1}{\epsilon}\right)}$, and $\mathcal{P}$ is the set of pairs. We choose here $k=\log n$ since for small enough $\mathcal{P}$, this is the most sparse that these spanners can get. In our case, the set $\mathcal{P}$ is the emulator $R$, which is indeed small enough.} 
\[O(|R|\cdot\beta+n\log\log n)~.\]
Here, $\beta\approx \left(\frac{\log\log n}{\epsilon}\right)^{\log\log n}$ when $\alpha_P=1+\epsilon$, and $\beta=(\log n)^{O\left(\log\frac{1}{\epsilon}\right)}$ when $\alpha_P=3+\epsilon$ (there is also additional dependency on $\epsilon$ in the size of the pairwise spanner, in case that $\alpha_P=1+\epsilon$). These two pairwise spanners are also path-reporting.

Now fix a pair of vertices $(u,v)\in A^2$. In the graph $K$, the shortest path between $u,v$ is the single edge $(u,v)$, that has weight $d_G(u,v)$, and thus in $R$ there is a path between $u,v$ that has weight $\alpha_E\cdot d_G(u,v)$. Note that this path consists of edges of $R$, which are not actual edges of the graph $G$. For that reason, we use our pairwise spanner, to find, for every edge $e$ on this path, a path in $G$ that replaces $e$ and has weight of at most $\alpha_P\cdot w(e)$. The result is a path in $G$, with stretch $\alpha_E\cdot\alpha_P$. That is, the resulting spanner has a $\alpha_E\cdot\alpha_P$-stretch path for every pair of vertices in $A$. Hence, this is a subset spanner. Note that since both the emulator $R$ and the pairwise spanner we used were path-reporting, then so is the resulting subset spanner.

\vspace{5mm}
\noindent{\textbf{From subset to source-wise spanner.}}
Let $G=(V,E)$ be an undirected weighted graph, and let $A\subseteq V$ be a subset. Suppose that $S$ is an $A$-subset spanner with stretch $\alpha$. To construct a source-wise spanner for $A$, we just add to $S$ a shortest path, from every vertex $v\in V$ to its closest vertex $p(v)$ in $A$. If the shortest paths and the vertices $p(v)$ are chosen in a consistent manner, it is not hard to prove that the added edges form a \textit{forest}. Thus, the increase in the size of the spanner $S$ is by at most $n-1$ edges, and also we can easily navigate from a vertex $v\in V$ to the corresponding $p(v)\in A$ (we simply go towards the root of $v$'s tree).



Given any $v\in V$ and $a\in A$, the spanner contains the $v-p(v)$ shortest path, concatenated with the path in $S$ from $p(v)$ to $a$. It can be shown that it has at most $2\alpha+1$ stretch. Also, if the subset spanner $S$ is path-reporting, then so is the new source-wise spanner.


\vspace{5mm}
\noindent{\textbf{From source-wise to prioritized spanner.}}
In the setting of prioritized spanners, we get a priority ranking of the vertices of $V$: $(v_1,v_2,...,v_n)$. To construct a prioritized spanner based on source-wise spanners, we consider a sequence of non-decreasing \textit{prefixes} of this list:
\[A_1=\{v_1,...,v_{f(1)}\},\;\;A_2=\{v_1,...,v_{f(1)},...,v_{f(2)}\},\;\cdots\;A_T=\{v_1,...,v_{f(1)},...,v_{f(2)},...,v_{f(T)}\}~,\]
where $f$ is some non-decreasing function, and we apply a source-wise spanner on each of them. The idea is that in the construction of the source-wise spanners for the first prefixes in the list, we may use a smaller stretch. Then, since these prefixes are small enough, the size of the resulting source-wise spanners will not be too large.

Specifically, our path-reporting prioritized spanner with size $O(n(\log\log n)^2)$ is achieved by using a sequence of prefixes of sizes $n^{\frac{1}{2}},n^{\frac{3}{4}},n^{\frac{7}{8}},...,n^{1-\frac{1}{2^i}}$. Recall that the size of an $A$-source-wise spanner with stretch $O(k)$ is $O(|A|^{1+\frac{1}{k}}\cdot\beta+n\log\log n)$, where $\beta=n^{o(1)}$. Note that for $A=A_i$, by choosing $k=2^i-1$, the first term is $|A_i|^{1+\frac{1}{k}}\leq n^{\frac{2^i-1}{2^i}\cdot\frac{2^i}{2^i-1}}=n$. Thus, we must choose $k$ a bit larger than $2^i-1$, in order to obtain an $A_i$-source-wise spanner with size $O(n\log\log n)$ (this is the sparsest $A_i$-source-wise spanner we can obtain, since our pairwise/subset/source-wise spanners always have an additive term of $n\log\log n$ in their size). Namely, for cancelling the factor $\beta$, we must choose $k=\frac{2^i-1}{1-2^i\cdot o(1)}$. 

In the choice of $k$ above, the $o(1)$ factor in the denominator is actually $\frac{\log\beta}{\log n}$. This means that we cannot make this choice for prefixes $A_i$ with $i>\log\log n-\log\log\beta=(1-o(1))\log\log n$. For this reason, we cannot choose $T$ such that the prefixes $A_1,...,A_T$ will cover the entire priority ranking $(v_1,...,v_n)$, or most of it (to cover $\frac{n}{2}$ vertices, we need $T=\log\log n$). Instead, we choose $T$ to be roughly $\log\log n-\log\log\beta$. The size of the resulting prioritized spanner is $T\cdot O(n\log\log n)=O(n(\log\log n)^2)$, and the covered vertices are the vertices $v_j$ such that $j\leq\frac{n}{\beta}$. For queries that include a covered vertex $v_j$, the stretch is roughly $O(2^i)$, where $i$ is the minimal integer such that $j\leq n^{1-\frac{1}{2^i}}$. That is, the stretch is approximately $O\left(\frac{\log n}{\log n-\log j}\right)$.

For uncovered vertices (i.e., vertices $v_j$ such that $j>\frac{n}{\beta}$), we simply add a (non-prioritized) spanner, that has stretch $O\left(\frac{\log n}{\log\log\log n}\right)$ for \textit{all} the vertices of $G$. The size of such spanner may be as low as $O(n(\log\log n)^2)$, using results from \cite{ES23}. This spanner, as well as all the source-wise spanners we used, are path-reporting, therefore the resulting prioritized spanner is also path-reporting.

We provide another variation of a path-reporting prioritized spanner with reduced stretch, at the cost of increasing the size to $O(n\log n)$. To achieve that, we change the sequence of prefixes we use, to a sequence of prefixes with sizes $n^{\frac{1}{2}},n^{\frac{2}{3}},n^{\frac{3}{4}},...,n^{1-\frac{1}{i}}$. This sequence grows slower than the previous one. While increasing the size of the resulting prioritized spanner, this enable us to control the stretch of each source-wise spanner more carefully. 

\subsection{Organization}

After some preliminaries in Section \ref{sec:Preliminaries}, we prove our new lower bounds for pairwise spanners in Section \ref{sec:LowerBounds} (lower bounds for pairwise spanners with stretch $1+\epsilon$ are in Section \ref{sec:NearExactLowerBound}, while Section \ref{sec:HighStretchLowerBound} is for higher stretch). For the high stretch regime, we prove our new upper bounds for pairwise spanners in Section \ref{sec:UpperBounds}. Lastly, in Section \ref{sec:PRDOTypes} we show the reductions between pairwise, subset, source-wise and prioritized spanners, which result in new upper bounds for these types of spanners. Appendix \ref{sec:ProofExtendedVersion} provides missing proofs for Section \ref{sec:UpperBounds}.

\section{Preliminaries} \label{sec:Preliminaries}

Given an undirected weighted graph $G=(V,E)$, we denote by $d_G(u,v)$ the distance between the two vertices $u,v\in V$. When the graph $G$ is clear from the context, we sometimes omit the subscript $G$ and write $d(u,v)$. 

When the given graph is weighted, $w(e)>0$ denotes the weight of the edge $e$. For every set of edges $P$ (e.g., a tree or a path), we denote $w(P)=\sum_{e\in P}w(e)$. If $P$ is a path, we denote by $|P|$ the length of $P$, i.e., the number of its edges.

\subsection{Path-Reporting Pairwise Spanners} \label{sec:PairwiseSpannersPre}

In this paper we provide pairwise spanners with the additional property of being \textit{path-reporting}.

\begin{definition}[Path-Reporting Pairwise Spanner] \label{def:PathReportingPairwiseSpanner}
Let $G=(V,E)$ be an undirected weighted graph, and let $\mathcal{P}\subseteq V^2$ be a set of pairs of vertices in $G$. A {\em path-reporting $\mathcal{P}$-pairwise $\alpha$-spanner} for $G$, with query time $q$, is a pair $(S,D)$, where $S$ is a subgraph of $G$, and $D$ is an oracle (a data structure), that given a pair $(u,v)\in\mathcal{P}$, returns a $u-v$ path $P$ in $S$, within time $O(q+|P|)$, such that
\[w(P)\leq\alpha d_G(u,v)~.\]
The {\em size} of a path-reporting pairwise spanner $(S,D)$ is defined as $\max\{|E_S|,|D|\}$. Here, $E_S$ is the set of edges in $S$, and $|D|$ is the storage size of the oracle $D$, measured in \textit{words}.

The subgraph $S$ is called a {\em pairwise spanner} of $G$. A {\em path-reporting preserver} is a path-reporting pairwise $1$-spanner.
\end{definition}

\begin{remark} \label{remark:QueryTimeOfPathReportingStructures}
Note that in path-reporting pairwise spanners (and also in path-reporting emulators, which will be defined later), a query that outputs a path $P$ always has query time $\Omega(|P|)$. This is inevitable, because a part of the task of a query is to report the path $P$ itself. A standard approach in the area of path-reporting structures (see \cite{EP15,NS23,ES23}) is to measure, instead of the whole query time, only the \textit{additional} query time besides reporting the path. More specifically, if the running time of each query of a path-reporting spanner/emulator is $O(q+|P|)$, where $P$ is the output path of the query, we say that this path-reporting spanner/emulator has query time $O(q)$.
\end{remark}


In \cite{ES23}, two path-reporting pairwise spanners were presented, one of them with stretch $1+\epsilon$ (a \textit{near-exact} pairwise spanner), and one of them with stretch $3+\epsilon$. We state these results again here, so we can use them later in Section \ref{sec:PRDOTypes}.

\begin{theorem}[\cite{ES23}] \label{thm:NearExactPairwiseSpanner}
Given an undirected weighted $n$-vertex graph $G=(V,E)$, an integer $k\in[3,\log_2n]$, a positive parameter $\epsilon\leq1$, and a set of pairs $\mathcal{P}\subseteq V^2$, there is a $\mathcal{P}$-pairwise $(1+\epsilon)$-spanner for $G$ with size 
\[O(|\mathcal{P}|\cdot\beta_1+n\cdot(\log k+\log\log(\frac{1}{\epsilon}))+n^{1+\frac{1}{k}})~,\]
where $\beta_1=\beta_1(\epsilon,k)=O(\frac{\log k}{\epsilon})^{(1+o(1))\log_2k+\log_{4/3}\log(\frac{1}{\epsilon})}$.
\end{theorem}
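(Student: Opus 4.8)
The theorem to prove is Theorem~\ref{thm:NearExactPairwiseSpanner}, which is cited as a prior result from \cite{ES23}: the existence of a $\mathcal{P}$-pairwise $(1+\epsilon)$-spanner of size $O(|\mathcal{P}|\cdot\beta_1+n\cdot(\log k+\log\log(1/\epsilon))+n^{1+1/k})$. Since this is quoted verbatim from another paper rather than proved afresh here, the natural ``proof proposal'' is to reconstruct the argument of \cite{ES23}, which is the semi-reduction from near-exact hopsets described in the technical overview.

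\medskip

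\noindent\textbf{The plan.} The strategy is a semi-reduction from near-exact hopsets to pairwise spanners. First I would invoke a state-of-the-art $(1+\epsilon',\beta)$-hopset $H$ for $G$ with $|H|=O(n^{1+1/k})$ and hopbound $\beta=O\!\left(\tfrac{\log k}{\epsilon'}\right)^{\log_2 k+\log_{4/3}\log(1/\epsilon')}$ (from \cite{EN19,HP17}), choosing $\epsilon'=\Theta(\epsilon)$ so that all stretches composed below stay within $1+\epsilon$. Then, for every pair $(u,v)\in\mathcal{P}$, fix a $u$--$v$ path $\pi_{u,v}$ in $G\cup H$ with at most $\beta$ edges and weight at most $(1+\epsilon')d_G(u,v)$; add all edges of all these paths to the spanner. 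This contributes $\beta\cdot|\mathcal{P}|$ edges, but some of them are ``virtual'' hopset edges $\{x,y\}\in H$ with weight $d_G(x,y)$, which are not edges of $G$ and hence cannot appear in a genuine subgraph. The key idea of \cite{ES23} is to replace these virtual edges approximately rather than exactly: view $H$ (or a suitable subset of it that is actually used by the chosen paths) as a new set of pairs $\mathcal{P}'\subseteq V^2$ with $|\mathcal{P}'|\le|H|=O(n^{1+1/k})$, and recursively apply a pairwise $(1+\epsilon'')$-spanner construction to $(G,\mathcal{P}')$ so that each $\{x,y\}\in H$ gets a genuine $G$-path of weight at most $(1+\epsilon'')d_G(x,y)$. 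Composing, every $(u,v)\in\mathcal{P}$ gets a $G$-path of weight at most $(1+\epsilon')(1+\epsilon'')d_G(u,v)\le(1+\epsilon)d_G(u,v)$ after rescaling $\epsilon',\epsilon''$.

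\medskip

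\noindent\textbf{Controlling the size and the recursion.} The size bookkeeping is the delicate part. A naive recursion multiplies $|H|\approx n^{1+1/k}$ by the pairwise overhead $\beta_1$, producing an undesirable $\beta_1\cdot n^{1+1/k}$ term. To get the claimed clean additive term $n^{1+1/k}$ (without the $\beta_1$ factor), one exploits structural properties of the hopset: as sketched in the overview, the hopset decomposes into parts $H_1,H_2$ that admit cheap exact preservers (supporting size $O(n^{1+1/k})$, using distance-preserver bounds of \cite{CE05,B21} or the specific recursive structure of the hopset construction) and a much smaller part $H_3$ of size $o(n^{1+1/k})$ to which the recursive pairwise-spanner overhead $\beta_1$ is applied. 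The additive $n\cdot(\log k+\log\log(1/\epsilon))$ term arises from the depth of the hopset recursion (roughly $\log k + \log\log(1/\epsilon)$ levels), each contributing an $O(n)$ cost for cluster/pivot forests. Tracking $\beta_1$ through the composition: the hopbound $\beta$ of the near-exact hopset is $O(\tfrac{\log k}{\epsilon})^{\log_2 k + \log_{4/3}\log(1/\epsilon)}$, and the recursion does not blow up the exponent by more than a $(1+o(1))$ factor, giving $\beta_1=O(\tfrac{\log k}{\epsilon})^{(1+o(1))\log_2 k+\log_{4/3}\log(1/\epsilon)}$ as stated.

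\medskip

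\noindent\textbf{Main obstacle.} The main difficulty is precisely this size control: ensuring that reintroducing genuine $G$-paths for the virtual hopset edges does not cost more than $O(n^{1+1/k})$ additional edges beyond the $\beta_1\cdot|\mathcal{P}|$ term. This requires either (i) a hopset whose supporting size (minimum number of $G$-edges needed to exactly preserve all hopset distances) is $O(n^{1+1/k})$, or (ii) the $H_1,H_2,H_3$ decomposition where only the small part $H_3$ pays the pairwise overhead; and then one must verify that the approximate ($1+\epsilon''$) preservation of $H_3$'s distances, rather than exact preservation, still composes to total stretch $1+\epsilon$. A secondary technical point is handling the base case of the recursion ($k=3$, or small $k$) directly and checking the constraint $k\in[3,\log_2 n]$ is what makes the $n^{1+1/k}$ term dominate appropriately. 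The path-reporting aspect follows because the chosen paths $\pi_{u,v}$ and the recursive preservers can all be stored in oracles whose sizes obey the same bounds, and a query concatenates the stored $\beta$-hop path with the recursively reported sub-paths for each virtual edge, yielding query time $O(\beta)=O(|P|)$ absorbed into the output length.
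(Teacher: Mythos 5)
This theorem is imported verbatim from~\cite{ES23}; the present paper does not prove it, so there is no ``paper's own proof'' to compare against line-by-line. Your reconstruction correctly identifies the high-level semi-reduction: take a near-exact $(1+\epsilon',\beta)$-hopset, store a $\beta$-hop path for each pair in $\mathcal{P}$, and then replace the virtual hopset edges by genuine $G$-paths. That framework matches what the paper attributes to~\cite{ES23} in the technical overview.

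However, you misattribute the key mechanism for handling the virtual edges. You write that ``the key idea of~\cite{ES23} is to replace these virtual edges approximately rather than exactly'' via a recursive pairwise-spanner call, and you then invoke the $H_1,H_2,H_3$ decomposition. According to the paper's own description (Section~\ref{sec:UpperBounds}, opening paragraphs), \cite{ES23} does the opposite: it preserves the hopset edges \emph{exactly}, ``relying on the properties of $H$ itself, and claiming that there is a small size \emph{exact} preserver (i.e., pairwise $1$-spanner) for $H$'' (the ``supporting size'' argument). The approximate-replacement route and the $H_1,H_2,H_3$ decomposition are precisely what the \emph{present} paper introduces as its new contribution for the large-stretch regime (Theorems~\ref{thm:HopsetsExtendedVersion} and~\ref{thm:PairwiseSpanner2}); they are not how~\cite{ES23} obtained Theorem~\ref{thm:NearExactPairwiseSpanner}. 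This matters beyond attribution: the approximate route as you sketch it risks circularity (you invoke a $(1+\epsilon'')$-pairwise spanner to build a $(1+\epsilon)$-pairwise spanner) and, without the exact-preserver structure of the specific hopset, general preserver bounds such as those of~\cite{CE05,B21} would give $\Theta(n\cdot|H|^{1/2})\gg n^{1+1/k}$ and blow up the additive term. So the ``main obstacle'' you flag is real, but your option~(ii) describes the wrong paper's solution; option~(i), exact preservation via the hopset's special structure, is the one~\cite{ES23} actually uses.
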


\begin{theorem}[\cite{ES23}] \label{thm:Near3PairwiseSpanner}
Given an undirected weighted $n$-vertex graph $G=(V,E)$, an integer $k\geq3$, a positive parameter $\epsilon\in(0,40]$, and a set of pairs $\mathcal{P}\subseteq V^2$, there is a $\mathcal{P}$-pairwise $(3+\epsilon)$-spanner for $G$ with size 
\[O(|\mathcal{P}|\cdot\beta_2+n\log k+n^{1+\frac{1}{k}})~,\]
where $\beta_2=\beta_2(\epsilon,k)=k^{\log_{4/3}(12+\frac{40}{\epsilon})}$.
\end{theorem}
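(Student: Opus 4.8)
The plan is to prove Theorem~\ref{thm:Near3PairwiseSpanner} by the hopset-based \emph{semi-reduction} sketched in the technical overview, instantiated so that the factor of $3$ (rather than $1+\epsilon$, as in Theorem~\ref{thm:NearExactPairwiseSpanner}) buys a much smaller hopbound. Fix $k\ge 3$, $\epsilon\in(0,40]$, and $\epsilon'=\Theta(\epsilon)$. I would first run a near-exact hopset construction (in the style of \cite{EN19,HP17}), designed so that its recursion has $\Theta(\log_{4/3}k)$ levels, over a net hierarchy $V=A_0\supseteq A_1\supseteq\dots\supseteq A_L$, $L=\Theta(\log_{4/3}k)$, where $A_i$ is a $\rho_i$-net for geometrically increasing radii $\rho_i$ (with $\rho_0=0$), the $|A_i|$ decrease geometrically so that $\sum_i|A_i|^{1+1/k}=O(n^{1+1/k})$, and each vertex $v$ carries a pointer $p_i(v)\in A_i$ with $d_G(v,p_i(v))\le\rho_i$. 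This yields, for each level $i$, a $(1+\epsilon',\beta_i)$-hopset $H_i$ on the distance graph of $A_i$ with hopbound $\beta_i=(O(1/\epsilon))^i$, so $\beta_i\le\beta_L=(O(1/\epsilon))^{\log_{4/3}k}=k^{\log_{4/3}(O(1/\epsilon))}=\beta_2$ (the explicit constants $12,40$ are exactly what the per-level error bookkeeping over the $\log_{4/3}k$ levels produces). Two auxiliary ingredients are collected alongside: the $v$--$p_i(v)$ shortest paths, which form a forest of $\le n-1$ edges per level for a total of $O(n\log k)$ edges (the source of the $n\log k$ term), and, for each $i$, a sparse subgraph of $G$ that \emph{exactly} realizes each hopset edge $\{x,y\}\in H_i$ by an $x$--$y$ path of weight $d_G(x,y)$, with $O(n^{1+1/k})$ edges in total (this uses the specific structure of the hopset edges — each connects net points of the same level at a controlled distance).

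The spanner $S$ is the union of the net forests, the realizing subgraphs of all the $H_i$, and, for every $(u,v)\in\mathcal{P}$, the genuine edges of a $\le\beta_i$-hop shortest $p_i(u)$--$p_i(v)$ path $Q$ in $(\text{distance graph of }A_i)\cup H_i$ together with the realizing paths of $Q$'s hopset edges; here $i=i(u,v)$ is the coarsest level with $2\rho_i\le d_G(u,v)$, which always exists since level $0$ ($\rho_0=0$, $p_0(v)=v$) is available. For the stretch: $w(Q)\le(1+\epsilon')d_G(p_i(u),p_i(v))\le(1+\epsilon')(d_G(u,v)+2\rho_i)$, so prepending and appending the two net paths (lengths $\le\rho_i$) gives a $u$--$v$ path in $S$ of weight $\le 2\rho_i+(1+\epsilon')(d_G(u,v)+2\rho_i)=(1+\epsilon')d_G(u,v)+(4+2\epsilon')\rho_i\le(3+\epsilon)d_G(u,v)$, using $\rho_i\le d_G(u,v)/2$ and $\epsilon'=\Theta(\epsilon)$ small enough. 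For the size: the forests and realizing subgraphs contribute $O(n\log k+n^{1+1/k})$; each pair adds at most $\beta_i\le\beta_2$ real edges of $Q$ plus $O(1)$ realizing edges per hopset edge of $Q$, i.e.\ $O(\beta_2)$ per pair, for a total of $O(|\mathcal{P}|\cdot\beta_2)$; altogether $O(|\mathcal{P}|\cdot\beta_2+n\log k+n^{1+1/k})$. Path-reporting is immediate: the oracle stores the $H_i$, the pointers $p_i$, and the realizing subgraphs (all within the size bound), and a query runs a $\beta_i$-hop Dijkstra in the small graph $(\text{distance graph of }A_i)\cup H_i$ and then expands the hopset edges by walking the stored realizing paths, with $O(\beta_2)$ extra query time.

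The step I expect to be the main obstacle is simultaneously meeting the three demands on the hopset: (i) driving the hopbound down to $k^{\log_{4/3}(O(1/\epsilon))}$ — which, unlike the textbook $\big(\frac{\log k}{\epsilon}\big)^{\log k}$ bound, leaves no $\mathrm{poly}(\log k)$ slack in the base and forces the branching factor of the recursion and the error budget of each of the $\log_{4/3}k$ levels to be tuned tightly; (ii) keeping the subgraphs that realize all of $\bigcup_iH_i$ within $O(n\log k+n^{1+1/k})$ edges (these must be built globally, since the hopset edges used across all pairs of $\mathcal{P}$ could exhaust $\bigcup_iH_i$, so the bound cannot be obtained by a pair-by-pair charging argument but needs a structural property of the hopset edges); and (iii) making the per-level multiplicative errors compound to \emph{exactly} $3+\epsilon$, rather than a larger constant times $1+\epsilon$ — this is what fixes the admissible range $\epsilon\in(0,40]$ and the constants appearing inside $\beta_2$. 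A minor additional point is verifying that a scale $\rho_i$ with $\rho_i\le d_G(u,v)/2$ (within a constant factor) genuinely exists for every relevant distance, which is where the geometric spacing of the $\rho_i$ and the choice $\rho_0=0$ are used.
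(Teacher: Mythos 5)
Theorem~\ref{thm:Near3PairwiseSpanner} is \emph{cited} from \cite{ES23}; this paper gives no proof of it, so the only thing to compare against is the one-paragraph informal account of the \cite{ES23} construction in the technical overview. Your high-level plan --- a ``semi-reduction'' from a hopset $H$: for each pair $(u,v)\in\mathcal{P}$ record the $G$-edges of a $\beta$-hop path, then add a sparse subgraph realizing all hopset edges --- matches that account. But the paper's phrasing (``the distance $\ldots$ is at most $\alpha\cdot d_G(u,v)$'') says that in \cite{ES23} the hopset itself has stretch $\alpha=3+\epsilon$ and its edges are preserved \emph{exactly}, whereas you derive the $3$ differently, from a $(1+\epsilon')$-hopset over a net hierarchy plus the round trip $u\to p_i(u)\to p_i(v)\to v$ at scale $\rho_i\le d_G(u,v)/2$. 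This is a different mechanism than the one the paper attributes to \cite{ES23}, and it is not obviously easier: it requires a $(1+\epsilon')$-hopset whose hopbound has base $O(1/\epsilon)$, whereas the known near-exact hopbound bases carry a $\log k$ factor (cf.\ $\beta_1$ in Theorem~\ref{thm:NearExactPairwiseSpanner}). The slack of $3$ is precisely what lets a hopset shed that $\log k$, and your scheme tries to spend the same slack on pivot detours and on the hopbound simultaneously.

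The concrete gap is the one you flag yourself and do not close. You assert a hopset over the nets with hopbound $(O(1/\epsilon))^{\log_{4/3}k}=k^{\log_{4/3}(O(1/\epsilon))}$ together with \emph{global} realizing structures (net forests plus exact realizers of every hopset edge) of total size $O(n\log k+n^{1+1/k})$. The latter is the hopset's small \emph{supporting size}; as you note it cannot be obtained by pair-by-pair charging, and it is exactly the structural property the paper credits to the \cite{ES23} hopset without re-deriving. You would also need the per-level error recursion to land on exactly $3+\epsilon$ with the constants $12$, $40$ and the admissible range $\epsilon\in(0,40]$, which your sketch does not produce. Conditional on such a hopset and supporting-size bound your stretch calculation, size count, and path-reporting oracle all go through, but as written the proposal reduces the theorem to an unproved lemma carrying nearly all of its content.
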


\subsection{Other Types of Path-Reporting Spanners} \label{sec:PRDOTypesIntro}

Besides pairwise spanners, there are several special types of spanners, which we define below.

\begin{definition}
Let $G=(V,E)$ be an undirected weighted graph, and let $A\subseteq V$ be a subset of its vertices.
\begin{itemize}
    \item A {\em path-reporting $\alpha$-spanner} for $G$ is a path-reporting $\mathcal{P}$-pairwise $\alpha$-spanner, where $\mathcal{P}=V^2$.
    \item A {\em path-reporting $A$-subset $\alpha$-spanner} for $G$ is a path-reporting $\mathcal{P}$-pairwise $\alpha$-spanner, where $\mathcal{P}=A\times A$.
    \item A {\em path-reporting $A$-source-wise $\alpha$-spanner} for $G$ is a path-reporting $\mathcal{P}$-pairwise $\alpha$-spanner, where $\mathcal{P}=A\times V$.
\end{itemize}
The subgraph $S$ is called a {\em spanner}, a {\em subset spanner}, or a {\em source-wise spanner}, respectively.
\end{definition}

\begin{definition}[Path-Reporting Prioritized Spanner] \label{def:PrioritizedSpanner}
Let $G=(V,E)$ be an undirected weighted graph, and fix a permutation $(v_1,v_2,...,v_n)$ of $V$. This permutation is called a {\em priority ranking} of $V$. A {\em path-reporting prioritized spanner} for $G$, with prioritized stretch $\alpha:\{1,2,...,n\}\rightarrow\mathbb{R}_{\geq0}$ and prioritized query time $q:\{1,2,...,n\}\rightarrow\mathbb{R}_{\geq0}$, is a pair $(S,D)$, where $S$ is a subgraph of $G$, and $D$ is an oracle that given a query $(v_j,v_{j'})$ such that $j<j'$, returns a $v_j-v_{j'}$ path $P$ in $S$ with weight at most $\alpha(j)\cdot d_G(v_j,v_{j'})$, within time at most $O(q(j)+|P|)$. The size of $(S,D)$ is defined as $\max\{|E_S|,|D|\}$, where $E_S$ is the set of edges in $S$, and $|D|$ is the space required to store $D$. The subgraph $S$ is called a {\em prioritized spanner}.
\end{definition}

A result of \cite{ES23} was the construction of several path-reporting spanners (referred as \textit{interactive spanners} in \cite{ES23}) with smaller size, stretch and query time than what was known before. We state here two of these results.

\begin{theorem}[Theorem 6 in \cite{ES23}] \label{thm:ES6}
For every undirected weighted $n$-vertex graph and parameters $k,\epsilon$ such that $3\leq k\leq\log n$ and $\sqrt{\frac{\log k\cdot\log\log k}{k}}\leq\epsilon\leq\frac{1}{2}$, there is a path-reporting spanner with stretch $(4+\epsilon)k$, query time $O(\log k)$ and size
\[O\left(\left\lceil\frac{k\cdot\log\log n\cdot\log\log\log n}{\epsilon\log n}\right\rceil\cdot n^{1+\frac{1}{k}}\right)~.\]
\end{theorem}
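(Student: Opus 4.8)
The plan is to start from a Thorup--Zwick-style hierarchical clustering of $G$, to realize each cluster \emph{not} by a full shortest-path tree in $G$ (which would cost $\Omega(kn^{1+1/k})$) but by a recursively built sparse path-reporting spanner of that cluster, and to answer a query by a binary search over $O(\log k)$ scales. The blow-up from stretch $2k-1$ to $(4+\epsilon)k$ is exactly the slack one needs to absorb the stretch accumulated along the recursion and the $(1+\epsilon)$-refinement.

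First I would set up the hierarchy as in \cite{TZ01}: sample $V=A_0\supseteq A_1\supseteq\cdots\supseteq A_k=\emptyset$ where $A_{i+1}$ keeps each vertex of $A_i$ with probability $n^{-1/k}$; for $v\in V$ let $p_i(v)$ be a nearest vertex of $A_i$, and for a center $w\in A_i\setminus A_{i+1}$ let $C(w)=\{v:d_G(v,w)<d_G(v,A_{i+1})\}$, with bunches $B(v)=\{w:v\in C(w)\}$, so that $\sum_w|C(w)|=\sum_v|B(v)|=O(kn^{1+1/k})$ in expectation. A standard $\epsilon$-refinement --- grouping the $k$ levels into $O(\log k)$ geometric blocks so that a query ever traverses only $O(\log k)$ of them, and admitting a cluster/hierarchy edge only when it improves the current distance estimate by a $(1+\epsilon)$ factor --- both caps the query time at $O(\log k)$ and shrinks the structure one must store by the claimed $\frac{1}{\epsilon}$-type factor, provided the cluster trees themselves are replaced by something sparse.

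The crux is that replacement. Instead of inserting a shortest-path tree of $C(w)$ into the spanner, I would recurse: apply (a suitably parametrized version of) this very construction to the graph induced on $C(w)$, obtaining a path-reporting spanner of $C(w)$ of size near-linear in $|C(w)|$ whose stretch exceeds that of the ideal tree by only a $(1+\epsilon^{2})$-type factor per recursion level. Since $|C(w)|$ shrinks rapidly (from $n$ down into $\mathrm{poly}\log n$ and further), the recursion bottoms out after $O(\log\log n)$ levels, each costing a factor $O(\log\log\log n/\epsilon)$ in the size; a (somewhat delicate) accounting of $\sum_w(\text{size of the recursive spanner of }C(w))$ then telescopes to $O\!\left(\lceil\frac{k\log\log n\log\log\log n}{\epsilon\log n}\rceil\,n^{1+1/k}\right)$, while the total stretch stays at $(2+o(1))k\cdot(1+\epsilon)^{o(\log\log n)\cdot O(1)}\le(4+\epsilon)k$. (An alternative realization is to materialize the clusters via the $(3+\epsilon)$-pairwise spanner of Theorem~\ref{thm:Near3PairwiseSpanner} with parameter $\log n$, but this is too lossy to reach the clean $(4+\epsilon)k$ and the stated size, so the recursion seems necessary.)

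For a query $(u,v)$, walk up the grouped hierarchy maintaining the Thorup--Zwick invariant until a center $w$ with $u,v\in C(w)$ is reached; by the geometric grouping this is a binary search of length $O(\log k)$. Then recover the $u$--$w$ and $w$--$v$ paths by querying, from the top of the recursion downward, the recursively stored path-reporting spanners of the clusters encountered, at $O(1)$ additional cost per recursion level; the concatenated path has weight $\le(4+\epsilon)k\cdot d_G(u,v)$ by the usual triangle-inequality telescoping of the Thorup--Zwick estimate with the per-level stretch slack. The main obstacle I foresee is making the three requirements coexist: the recursion that sparsifies the cluster trees must not break the Thorup--Zwick stretch bound (hence budgeting a full factor of $2$) nor the near-linear per-cluster size, and the $O(\log k)$ binary search over scales must stay compatible with the recursively stored path-reporting oracles --- which forces a uniform and carefully tuned choice of the recursion parameters, the level grouping, and the way cluster subgraphs are handed down the recursion.
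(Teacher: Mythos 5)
This theorem is not proved in the present paper; it is Theorem~6 of \cite{ES23}, imported as a black box and used later (in the proofs of Theorems~\ref{thm:PrioritizedSpanner1} and \ref{thm:PrioritizedSpanner2}). There is therefore no ``paper's own proof'' to compare against here, and the most one can do is assess your sketch on its merits.

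On those merits, I see a genuine gap. Your central idea is to take the Thorup--Zwick hierarchy and replace each cluster's shortest-path tree by a recursively built sparse spanner on $G[C(w)]$, then telescope the sizes. But the clusters $C(w)$ are heavily overlapping: a single vertex belongs to $\Theta(n^{1/k})$ clusters at each of the $k$ levels, which is exactly why $\sum_w|C(w)|=\Theta(kn^{1+1/k})$ rather than $O(n)$. Consequently $\sum_w|E(G[C(w)])|$ is not controlled by $|E(G)|$, there is no clean ``shrinking'' of cluster sizes to recurse on, and the claimed telescoping to $O(\lceil\frac{k\log\log n\log\log\log n}{\epsilon\log n}\rceil n^{1+1/k})$ does not go through. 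A second issue is the stretch bookkeeping: accumulating a $(1+\epsilon)$ factor per recursion level over $\Theta(\log\log n)$ levels gives $(1+\epsilon)^{\Theta(\log\log n)}$, which is not $O(1)$ in the stated parameter range $\epsilon\ge\sqrt{\log k\log\log k/k}$, so the ``stays at $(4+\epsilon)k$'' claim is unsupported. Finally, you dismiss the route through the hopset/pairwise-spanner machinery as ``too lossy,'' but this is in fact the machinery that \cite{ES23} (and this paper's own Lemma~\ref{lemma:PairwiseToSubset}, noted to appear implicitly in \cite{ES23}) is built on: one reduces the structure to a sparse emulator whose edges are then realized by a path-reporting pairwise spanner with low stretch, and the single factor of $2$ that widens $(2+\epsilon)k$ to $(4+\epsilon)k$ comes from a one-shot triangle-inequality step, not from a recursive cascade. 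I would abandon the cluster-recursion and instead work out the emulator-plus-pairwise-spanner route if you want a self-contained proof of this theorem.
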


\begin{theorem}[\cite{ES23}] \label{thm:ES7}
For every undirected weighted $n$-vertex graph and parameter $k$ such that $3\leq k\leq\log n$, there is a path-reporting spanner with stretch $O(k)$, query time $O(\log\left\lceil\frac{k\log\log n}{\log n}\right\rceil)$ and size
\[O\left(n\log k+\left\lceil\frac{k\cdot\log\log n}{\log n}\right\rceil\cdot n^{1+\frac{1}{k}}\right)~.\]
\end{theorem}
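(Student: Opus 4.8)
The plan is to realize this spanner as the union of a Thorup--Zwick--style clustering hierarchy equipped with a path-reporting mechanism, organized so that the two quantities to be controlled --- the total edge count and the depth of the search performed at query time --- are both governed by $h:=\lceil\frac{k\log\log n}{\log n}\rceil$ rather than by $k$. First I would build the usual nested sample $V=A_0\supseteq A_1\supseteq\cdots\supseteq A_k$, where $A_i$ retains each vertex of $A_{i-1}$ independently with probability $n^{-1/k}$ (so $\mathbb{E}|A_i|=n^{1-i/k}$), together with the pivots $p_i(v)\in A_i$ closest to $v$ and the standard bunches of expected size $O(n^{1/k})$. Running the hierarchy for all $k$ steps is what pins the stretch at $O(k)$; the novelty is in \emph{which} layers are stored as subgraphs and in how the query navigates them.

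Next I would designate $h+1$ \emph{checkpoint} indices $0=i_0<i_1<\cdots<i_h=k$, roughly evenly spaced with gap $b:=\lceil k/h\rceil\approx\log n/\log\log n$ (so $n^{1/b}=O(\log n)$). The edge set $S$ would consist of three parts: (i) shortest-path trees of the clusters living at (or just below) the checkpoint levels, arranged so that each checkpoint band contributes only $O(n^{1+1/k})$ edges rather than the $O(b\cdot n^{1+1/k})$ one would get by keeping every intervening level --- this amortization, which presumably routes the distances certified by the ``skipped'' levels through the coarser ones, is the heart of the matter; (ii) a bottom path-reporting spanner for the finest scale, for which Theorem~\ref{thm:ES6} with a constant $\epsilon$ already furnishes a subgraph of almost the right shape (stretch $O(k)$, size $O(\lceil\frac{k\log\log n\log\log\log n}{\log n}\rceil n^{1+1/k})$), the stray $\log\log\log n$ being removed by a separate local argument inside each cluster; and (iii) an $O(n\log k)$ store of pivot pointers plus the index enabling the binary search below. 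The stretch is $O(k)$ by the Thorup--Zwick argument across the $k$ implicit levels, since at the checkpoint where the search first succeeds the pivot distance $d_G(u,p_{i_j}(u))$ is still $O(k)\cdot d_G(u,v)$; collapsing examination to $h$ checkpoints costs only constants.

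On a query $(u,v)$ I would binary-search over the $h$ checkpoints for the smallest $j$ with $p_{i_j}(u)$ in $v$'s level-$i_j$ bunch; the monotonicity of $d_G(v,p_i(v))$ in $i$ validates the search, each checkpoint test is $O(1)$ using the stored pointers, so the search costs $O(\log h)=O(\log\lceil\frac{k\log\log n}{\log n}\rceil)$. With $w:=p_{i_j}(u)$ in hand I would report the $u$--$w$ path through the cluster tree of $w$ concatenated with a $w$--$v$ path through the bottom spanner; both lie in $S$, have weight $O(k)\,d_G(u,v)$, and tracing them costs $O(|P|)$ --- which is not charged, per Remark~\ref{remark:QueryTimeOfPathReportingStructures}.

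The step I expect to be the real obstacle is part (i) of the size bound: showing that one can examine only $O(h)$ bands of levels while still recovering, with stretch $O(k)$, the distances that the skipped levels would have certified, \emph{and} doing so with only $O(n^{1+1/k})$ extra edges per band --- naively storing the cluster trees of all $k$ levels gives $O(kn^{1+1/k})$, whereas storing clusters at checkpoint granularity makes them too large ($\Theta(n^{1+1/h})$ total per band). Reconciling these, together with pinning down the precise $O(n\log k)$ additive term and the removal of the $\log\log\log n$ factor inherited from Theorem~\ref{thm:ES6}, is where the careful bookkeeping lies, and I would defer it to the full proof.
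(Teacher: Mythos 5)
This theorem is imported verbatim from \cite{ES23} and is not proved in the present paper, so there is no in-paper proof to compare your sketch against; you are attempting a from-scratch reconstruction. On its own merits, your outline has a genuine gap, and you correctly flag where it is: you never establish that inspecting only the $O(h)$ checkpoint levels, $h=\lceil k\log\log n/\log n\rceil$, can simultaneously keep the stretch at $O(k)$ \emph{and} keep the per-band edge contribution at $O(n^{1+1/k})$. This tension is not deferred bookkeeping; it is the whole theorem. The Thorup--Zwick ball-growing argument advances one level at a time, using that if $p_i(u)\notin B_i(v)$ and $p_i(v)\notin B_i(u)$ then $d_G(u,p_{i+1}(u))\le d_G(u,p_i(u))+d_G(u,v)$; the bunches $B_i(\cdot)$ that certify this have expected size $n^{1/k}$ precisely because they are defined relative to $A_{i+1}$. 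If you jump $b\approx\log n/\log\log n$ levels to the next checkpoint, the analogous test must use bunches relative to $A_{i+b}$, whose expected size is $n^{b/k}=\Theta(\log n)\cdot n^{1/k}$, and the per-checkpoint cluster storage inflates to $\Theta(n^{1+b/k})$ --- exactly the bad regime you set out to avoid. Conversely, if you retain level-by-level bunches to keep the edge count down, then controlling stretch requires walking all $k$ levels and you are back to $\Omega(kn^{1+1/k})$. Some genuinely new idea is needed here (in \cite{ES23} this is handled by an entirely different mechanism, built from emulators and pairwise spanners rather than a pruned Thorup--Zwick hierarchy), and your proposal does not supply it.

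Two secondary issues. First, you invoke Theorem~\ref{thm:ES6} for the bottom layer and wave off the extra $\log\log\log n$ factor ``by a separate local argument,'' but that factor is not incidental; removing it is itself a nontrivial step and cannot be assumed. Second, your part (iii) claims the pivot-pointer store costs $O(n\log k)$, but a hierarchy with $k$ levels and one pivot per vertex per level costs $\Theta(kn)$ unless you also explain which levels' pivots you keep and why $O(\log k)$ of them per vertex suffice both for the binary search's correctness and for tracing the reported path; as written this is not justified.
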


\subsection{Path-Reporting Emulators} \label{sec:EmulatosPre}

A key component of our constructions in Section \ref{sec:PRDOTypes} is the use of \textit{path-reporting emulators}.

\begin{definition} \label{def:PathReportingEmulator}
Let $G=(V,E)$ be an undirected weighted graph. A {\em path-reporting $\alpha$-emulator} for $G$ is a pair $(Q,D)$, where $Q=(V,E')$ is a weighted graph with weights $w(x,y)=d_G(x,y)$, and $D$ is an oracle (a data structure), that given $u,v\in V$, returns a $u-v$ path $P$ in $Q$, such that
\[w(P)\leq\alpha d_G(u,v)~.\]
The graph $Q$ is called an $\alpha$-{\em emulator}. The parameter $\alpha$ is called the {\em stretch} of the path-reporting emulator $(Q,D)$. We say that the {\em query time} of $(Q,D)$ is $q$ if the running time of the oracle $D$ on a single query $(u,v)\in V^2$ is $O(q+|P|)$, where $P$ is the output path. The {\em size} of $(Q,D)$ is defined as $\max\{|E'|,|D|\}$, where $|D|$ is the storage size of the oracle $D$, measured in \textit{words}.
\end{definition}

The distance oracle of from \cite{TZ01} can be viewed as a path-reporting emulator. Considering the improvement to its query time by \cite{WN13}, it can be summarized in the following theorem.

\begin{theorem}[\cite{TZ01,WN13}] \label{thm:TZEmulator}
For every undirected weighted $n$-vertex graph $G=(V,E)$ and integer $k\geq1$, there is a path-reporting $(2k-1)$-emulator for $G$ with query time $O(\log k)$ and size $O(kn^{1+\frac{1}{k}})$.
\end{theorem}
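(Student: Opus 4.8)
The statement to prove is Theorem~\ref{thm:TZEmulator}: the Thorup--Zwick distance oracle, viewed through the emulator lens of Definition~\ref{def:PathReportingEmulator}, yields a path-reporting $(2k-1)$-emulator of size $O(kn^{1+1/k})$ with query time $O(\log k)$. The plan is to recall the Thorup--Zwick construction, read off its stored data as the edge set $E'$ of the emulator graph $Q$ (with weights $w(x,y)=d_G(x,y)$, which is exactly what TZ stores), verify the size bound, and then argue that the standard TZ query, together with the Wulff--Nilsen speedup, not only returns a stretch-$(2k-1)$ distance estimate but can be made to \emph{report} an explicit $u$--$v$ path inside $Q$ of weight at most $(2k-1)d_G(u,v)$ within the claimed time.

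**The construction and the size bound.** First I would recall the TZ sampling: let $A_0=V\supseteq A_1\supseteq\dots\supseteq A_{k-1}$, where $A_i$ is obtained from $A_{i-1}$ by keeping each vertex independently with probability $n^{-1/k}$, and $A_k=\emptyset$. For each $v$ let $p_i(v)$ be the closest vertex of $A_i$ to $v$ (ties broken consistently), and define the bunch $B(v)=\bigcup_{i=0}^{k-1}\{w\in A_i\setminus A_{i+1}: d_G(v,w)<d_G(v,A_{i+1})\}$. The emulator graph is $Q=(V,E')$ with $E'=\{\{v,w\}: v\in V,\ w\in B(v)\cup\{p_0(v),\dots,p_{k-1}(v)\}\}$ and weight $w(v,w)=d_G(v,w)$; this is legitimate since TZ already stores these distances. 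The expected size of each bunch is $O(kn^{1/k})$ (each level contributes $O(n^{1/k})$ in expectation by a standard geometric-series argument), so $|E'|=O(kn^{1+1/k})$ in expectation, and the pivot edges add only $O(kn)$; by Markov / the usual re-sampling trick this can be made a worst-case guarantee. The oracle $D$ stores, for each $v$, the pivots $p_i(v)$ and a hash table of $B(v)$ keyed by vertex id, holding the stored distances; its word-size is $O(kn^{1+1/k})$ as well, so $\max\{|E'|,|D|\}=O(kn^{1+1/k})$.

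**The query and path reporting.** Given $u,v$, run the TZ search: set $i\leftarrow 0$, $w\leftarrow u$; while $w\notin B(v)$, increment $i$, swap the roles of $u$ and $v$, and set $w\leftarrow p_i(u)$ (after the swap). This terminates within $k$ iterations and produces an index $i\le k-1$ with $w=p_i(\cdot)\in B(\cdot)$ on both sides. The standard analysis gives $d_G(u,p_i(u))\le i\cdot d_G(u,v)$ and hence an estimate $d_G(u,w)+d_G(w,v)\le (2i+1)d_G(u,v)\le(2k-1)d_G(u,v)$. For \emph{path reporting}, I observe that the witness $w$ is adjacent in $Q$ to both endpoints on which it was found: the edge to its ``owner'' $u'$ is a pivot edge (present by construction) and the edge to the other endpoint $v'$ is present because $w\in B(v')$. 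Tracing the $\le k$ iterations, we collect at most two hops — $u'\,$--$\,w$ and $w\,$--$\,v'$ — whose total weight is exactly the TZ estimate, so the reported path $P$ has $|P|\le 2$ and $w(P)\le(2k-1)d_G(u,v)$. For the query time, the naive loop is $O(k)$; to get $O(\log k)$ I would invoke the Wulff--Nilsen improvement~\cite{WN13}, which augments the stored data with a small amount of extra information per vertex (still $O(kn^{1+1/k})$ words) enabling the correct level $i$ to be located by a doubling/binary search over levels in $O(\log k)$ time rather than a linear scan; the reported path is still the corresponding two-hop path, so the query time is $O(\log k + |P|)=O(\log k)$ in the convention of Remark~\ref{remark:QueryTimeOfPathReportingStructures}.

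**Main obstacle.** The genuinely delicate point is not the size or the stretch — those are verbatim TZ — but making sure the \emph{path-reporting} and the $O(\log k)$ query time coexist. The TZ estimate is a two-hop quantity $d_G(u,w)+d_G(w,v)$, so the emulator path is trivially short once $w$ and its level $i$ are known; the real work is checking that Wulff--Nilsen's level-search machinery still pins down a witness $w$ with the two required $Q$-edges (pivot edge on one side, bunch membership on the other) and that the extra bookkeeping it needs fits in $O(kn^{1+1/k})$ words. I expect this to amount to a careful but routine reading of \cite{WN13}, and I would state it as "it follows from the analysis of \cite{TZ01,WN13}" while spelling out the two-hop path extraction explicitly, since that is the part not literally in those papers.
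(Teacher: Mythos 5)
The paper gives no proof of this theorem---it is cited directly to \cite{TZ01,WN13} with a one-line remark that the Thorup--Zwick distance oracle, sped up by Wulff--Nilsen, can be read as a path-reporting emulator, which is exactly the observation you reconstruct. Your account is correct and matches that intent: the emulator edges are the stored bunch and pivot pairs, the query's witness $w$ yields a two-edge path through $w$ in $Q$ whose weight equals the TZ estimate and is at most $(2k-1)d_G(u,v)$, and the $O(\log k)$ level bisection of \cite{WN13} fits in the same $O(kn^{1+1/k})$ words, so all three parameters follow.
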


In \cite{ES23}, the authors introduced another path-reporting emulator, that had reduced size and query time, with respect to that of Theorem \ref{thm:TZEmulator}, while its stretch was larger by a constant factor. This emulator is based on the distance oracle of \cite{MN06}.

\begin{theorem}[\cite{ES23}] \label{thm:MNEmulator}
For every undirected weighted $n$-vertex graph $G=(V,E)$ and integer $k\geq1$, there is a path-reporting $O(k)$-emulator for $G$ with query time $O(1)$ and size $O(n^{1+\frac{1}{k}})$.
\end{theorem}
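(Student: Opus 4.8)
The plan is to take the Mendel--Naor distance oracle \cite{MN06} and turn its implicit ultrametric structure into an explicit sparse emulator graph on $V$. Recall that this oracle is built from a collection $\mathcal{T}$ of hierarchically well‑separated trees (HSTs) together with an assignment of each vertex $v\in V$ to a \emph{home tree} $T_v\in\mathcal{T}$ having $v$ as a leaf, such that $d_G(v,u)\le d_{T_v}(v,u)\le O(k)\cdot d_G(v,u)$ for every $u\in V$; moreover each tree may be taken in reduced form (every internal node has at least two children, at an $O(1)$ cost in distortion), the total number of leaves summed over all trees in $\mathcal{T}$ is $O(n^{1+1/k})$, and the map $v\mapsto T_v$, together with $O(1)$‑time ancestor and LCA queries inside each tree, can be stored in $O(n^{1+1/k})$ words. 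The only thing blocking path‑reporting is that these trees are not subgraphs of $G$.

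To fix this, for every $T\in\mathcal{T}$ I would build a spanning tree $\widehat{T}$ on the leaf set of $T$ by the standard recursive‑representative construction: choose for each node $a$ of $T$ a representative leaf $c(a)$ that agrees with the representative of one fixed child of $a$, and for each internal node $a$ with children $a_1,\dots,a_m$ (where $a_1$ is the fixed child) put into $\widehat{T}$ the edges $\{c(a),c(a_i)\}$ for $i=2,\dots,m$, each of weight $d_G(c(a),c(a_i))$. A short count shows $\widehat{T}$ has exactly $|\mathrm{leaves}(T)|-1$ edges and spans $\mathrm{leaves}(T)$. Set $Q=(V,\bigcup_{T\in\mathcal{T}}E(\widehat{T}))$ with these $d_G$‑weights, which are exactly the weights required by Definition~\ref{def:PathReportingEmulator}. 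Its size is $\sum_{T}|\mathrm{leaves}(T)|=O(n^{1+1/k})$, and since every edge weight equals a true distance, $d_Q(u,v)\ge d_G(u,v)$ for all $u,v$ by the triangle inequality.

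For the stretch, the key lemma is that a reduced $k$‑HST is distorted by at most an $O(1)$ factor by its recursive‑representative tree: for leaves $x,y$ with $w=\mathrm{LCA}_T(x,y)$, lying in distinct children $a_i,a_j$ of $w$, the $\widehat{T}$‑path from $x$ to $y$ runs from $x$ up to $c(a_i)$, across the edges $\{c(a_i),c(w)\}$ and $\{c(w),c(a_j)\}$, and down to $y$; summing the geometrically decreasing HST level‑labels bounds its weight by $O(1)$ times the label of $w$, which is within an $O(1)$ factor of $d_T(x,y)$. Applying this inside $T_u$ gives $d_Q(u,v)\le d_{\widehat{T_u}}(u,v)\le O(1)\cdot d_{T_u}(u,v)\le O(k)\cdot d_G(u,v)$, so $Q$ is an $O(k)$‑emulator. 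The oracle $D$, given $(u,v)$, reads $T_u$, computes $w=\mathrm{LCA}_{T_u}(u,v)$ in $O(1)$, and reconstructs the path above by hopping leaf to leaf --- from $u$ up to $c(w)$ and from $v$ up to $c(w)$ --- using precomputed ``top of the representative chain'' and ``next representative'' pointers; each hop costs $O(1)$ and emits one edge of $\widehat{T_u}$, so the query runs in time $O(1+|P|)$, which is $O(1)$ query time in the convention of Remark~\ref{remark:QueryTimeOfPathReportingStructures}. I expect the bulk of the work to be verifying that the recursive‑representative tree meets all three demands simultaneously --- $O(|\mathrm{leaves}(T)|)$ edges, $O(1)$ HST‑distortion (the telescoping estimate), and $O(1+|P|)$‑time path extraction (in particular, that two leaves joined by consecutive hops are genuinely adjacent in $\widehat{T}$) --- and pinning down precisely which guarantees of \cite{MN06} are being invoked.
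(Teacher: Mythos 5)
Your approach is sound and matches what the paper indicates: the paper does not prove Theorem~\ref{thm:MNEmulator} itself but attributes it to \cite{ES23} and explicitly says it is ``based on the distance oracle of \cite{MN06}''; extracting an explicit emulator on $V$ by replacing each Mendel--Naor HST with its recursive-representative spanning tree on the leaf set (weighted by $d_G$), and then bounding the resulting distortion via a telescoping geometric sum, is exactly the right high-level construction, with the correct size bookkeeping $\sum_{T}|\mathrm{leaves}(T)|=O(n^{1+1/k})$ and $O(1)$-per-hop path extraction.

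There is one inaccuracy in your recollection of \cite{MN06} that should be repaired, though it does not sink the argument. In the iterative Ramsey construction, one extracts $S_1\subseteq V_1=V$, then $S_2\subseteq V_2=V\setminus S_1$, and so on, and the home tree $T_i$ of a vertex $v\in S_i$ has only $V_i$ as its leaf set, not all of $V$. Hence the guarantee you state, $d_G(v,u)\le d_{T_v}(v,u)\le O(k)\,d_G(v,u)$ ``for every $u\in V$,'' is false as written: $T_v$ simply does not contain vertices already removed at earlier rounds. The standard fix --- which you should make explicit --- is that a query $(u,v)$ is routed to the tree whose home index is smaller; that tree contains both endpoints and preserves their distance up to $O(k)$. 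Since this only changes ``apply this inside $T_u$'' to ``apply this inside $T_{\min(u,v)}$'' (by home index), your stretch bound, the telescoping $O(1)$-distortion lemma for the reduced $2$-HST, the $O(n^{1+1/k})$ size accounting, and the $O(1)$-amortized path extraction all go through unchanged.
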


\section{Lower Bounds for Pairwise Spanners} \label{sec:LowerBounds}

\subsection{Near-exact Pairwise Spanners} \label{sec:NearExactLowerBound}

To prove our lower bound for pairwise $(1+\epsilon)$-spanners, we use almost the same construction of a graph as the one appears in \cite{ABP18}. Our argument that achieves this lower bound, however, is somewhat different. Before we go into the specific details of the construction, we overview the properties of the graph of \cite{ABP18}. 

The authors of \cite{ABP18} constructed a sequence of graphs $\{\mathcal{H}_\kappa\}_{\kappa=0}^\infty$, each with a layered structure. The first and last layers of $\mathcal{H}_\kappa$ serves as \textit{input} and \textit{output} ports (respectively), while the interior layers are made out of many copies of $\mathcal{H}_{\kappa-1}$. Then a relatively large set $\mathcal{P}_\kappa$ of input-output pairs is defined, such that for every $(u,v)\in\mathcal{P}_\kappa$ there is a unique shortest path in $\mathcal{H}_\kappa$ between $u,v$, that passes through each layer exactly once. The edges between the layers of $\mathcal{H}_\kappa$ are heavy enough, such that any other path from $u$ to $v$ suffers a large stretch, since it must visit at least two layers more than once (in the unweighted version, the edges between the layers are replaced with long paths).

We now construct a sequence $\{\mathcal{H}_\kappa\}$ with the same properties. The construction is essentially the same as in \cite{ABP18}. However, we fully describe it in details here, because (1) there are slight differences from the original construction, and (2) our lower bound proof refers to the specific details of this graph and the way it was constructed. We do use the \textit{second base graph} from Section 2.2 in \cite{ABP18} as it is (this graph is actually originated in \cite{AB17}). 

The second base graph is an undirected unweighted graph, denoted by $\ddot{B}[p,l]$, where $p,l>0$ are two positive integer parameters. The vertices of this graph are organized in $2l+1$ layers, each of them with size $p$, such that all the edges of the graph are between adjacent layers. The vertices on the first layer of $\ddot{B}[p,l]$ are called \textit{input ports}, and the vertices on the last layer of $\ddot{B}[p,l]$ are called \textit{output ports}. Aside from the graph itself, there is a set of pairs of input-output ports $\ddot{\mathcal{P}}[p,l]$, such that every pair in this set is connected in $\ddot{B}[p,l]$ by a unique shortest path. The size of $\ddot{\mathcal{P}}[p,l]$ is $p^{2-o(1)}$, thus it contains a large portion of all the $p^2$ possible input-output pairs. Lastly, the graph $\ddot{B}[p,l]$ has labels on its edges, such that the edges on the unique shortest path of every input-output pair in $\ddot{\mathcal{P}}[p,l]$ are labeled alternately by two labels. One can think of these labels as routing directions to get from an input to an output. 

Formally, the graph $\ddot{B}[p,l]$ is described in the following lemma. The proof of this lemma is implicit in \cite{ABP18}, in which the construction of the graph $\ddot{B}[p,l]$ is described.

\begin{lemma} \label{lemma:SecondBaseGraph}
Let $p,l>0$ be two integer parameters. There is a function $\xi(p,l)=2^{O(\sqrt{\log p\log l})}$ which is non-decreasing in the parameter $p$, and there is an undirected unweighted graph $\ddot{B}[p,l]$, with the following properties. 
\begin{enumerate}
    \item The vertices of the graph $\ddot{B}[p,l]$ are partitioned into $2l+1$ disjoint layers $L_0,L_1,...,L_{2l}$, each of them of size $p$, such that every edge of $\ddot{B}[p,l]$ is between vertices that belongs to adjacent layers.
    \item There is a set of edge-labels $\ddot{\mathcal{L}}[p,l]$ of size $|\ddot{\mathcal{L}}[p,l]|\in\left[\frac{\sqrt{p}}{\xi(\sqrt{p},l)},\frac{\sqrt{p}}{2}\right]$, such that for every $i<2l$, every vertex $x\in L_i$, and every label $a\in\ddot{\mathcal{L}}[p,l]$, there is exactly one edge from $x$ to a vertex $y\in L_{i+1}$, that is labeled by $a$ (the vertex $y$ is different for every label $a\in\ddot{\mathcal{L}}[p,l]$).
    \item Given a vertex $u\in L_0$ and a pair of labels $(a,b)\in(\ddot{\mathcal{L}}[p,l])^2$, let $P_{u,v}$ be the path that starts at $u$, and its edges are labeled alternately by the labels $a,b$ (starting by $a$). Here, $v\in L_{2l}$ is the vertex in which the path $P_{u,v}$ ends, and we denote $v=out(u,a,b)$. Then, $P_{u,v}$ is the unique shortest path in $\ddot{B}[p,l]$. Moreover, for any other $u'\in L_0\setminus\{u\}$, the path $P_{u',v'}$, that is alternately labeled by the same labels $a,b$ and ends in $v'=out(u',a,b)\in L_{2l}$, is vertex-disjoint from $P_{u,v}$.
\end{enumerate}
\end{lemma}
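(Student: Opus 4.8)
The plan is to construct the graph $\ddot{B}[p,l]$ explicitly following the approach of \cite{AB17} and \cite{ABP18}, verifying each of the three claimed properties in turn. The natural object to start from is a \emph{unique-neighbor} or \emph{generalized Ruzsa--Szemerédi} type construction: we take a set $X$ of $p$ ``slots'' and a collection of roughly $\sqrt{p}$ ``labels'' that will serve as $\ddot{\mathcal{L}}[p,l]$, and we arrange the layers so that following a fixed pair of labels $(a,b)$ alternately from an input port traces out a path that is forced to be shortest. Concretely, I would identify each layer $L_i$ with a copy of an additive combinatorial structure (a subset of $\mathbb{Z}_p$ or a suitable abelian group), and define the edge labeled $a$ out of $x\in L_i$ to go to $x + \pi_i(a)$ in $L_{i+1}$, where the shifts $\pi_i$ are chosen to alternate between two families. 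The key is to select the label set so that the ``convexity/Behrend-type'' property holds: the only way to get from $u\in L_0$ to $v=out(u,a,b)$ in exactly $2l$ steps (hence the only shortest way, since any path must cross all $2l$ layer-boundaries) is to use the labels $a,b$ alternately; any deviation forces either backtracking (revisiting a layer) or using a third direction, which by the Behrend/no-three-term-AP property cannot close up to reach $v$ in $2l$ steps.

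The execution would proceed in four steps. First, fix the combinatorial seed: choose a subset $S \subseteq \mathbb{Z}_M$ of size $\sqrt{p}/\xi(\sqrt p,l)$ up to $\sqrt p /2$ that is free of nontrivial solutions to the relevant linear equation (this is where the $\xi(p,l)=2^{O(\sqrt{\log p\log l})}$ factor enters — it is exactly the density loss in Behrend-type sets avoiding length-$O(l)$ patterns, matched against layer count $2l$), and let $\ddot{\mathcal{L}}[p,l]$ be indexed by $S$. Second, define $\ddot{B}[p,l]$ layer by layer with the shift maps, and check Property 1 (layered, $p$ per layer, edges only between adjacent layers) and Property 2 (for each vertex and each label, exactly one outgoing edge to the next layer, with distinct targets — this is just that translation is a bijection). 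Third, prove Property 3: given $(a,b)$, the walk $P_{u,v}$ alternating $a,b$ has length $2l$ and reaches $L_{2l}$, so it is a shortest path because the layered structure forces any $u$–$v$ path to have length $\ge 2l$; uniqueness and the vertex-disjointness of $P_{u,v}, P_{u',v'}$ for $u\ne u'$ follow from the no-nontrivial-solution property of $S$ — if two such alternating paths shared a vertex in layer $L_i$, we could read off a forbidden additive relation among elements of $S$ and the starting offsets. Fourth, count: the number of pairs $(u,v)$ that arise as $(u, out(u,a,b))$ with the uniqueness property is $p \cdot |\ddot{\mathcal{L}}[p,l]|^2 / (\text{small overcount}) = p^{2-o(1)}$, giving $\ddot{\mathcal{P}}[p,l]$ of the claimed size.

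I expect the main obstacle to be Property 3, specifically the \emph{uniqueness} of the shortest path and the \emph{vertex-disjointness} clause, because this is precisely where the Behrend-type construction must be deployed with care: one has to rule out not just backtracking but any length-$2l$ path that uses the same two endpoints but a different label sequence, and then separately ensure that paths for distinct input ports with the same label pair never collide. This forces a quantitative choice of the ambient group size $M$ and the arithmetic-progression-free set relative to $l$, and getting the density bound to come out as $2^{O(\sqrt{\log p \log l})}$ rather than something weaker requires invoking the sharp Behrend bound (or its Elkin--type refinement) rather than a naive probabilistic deletion argument. Since the lemma only asserts that such a graph exists and attributes the construction to \cite{ABP18,AB17}, the cleanest route for this paper is to cite that construction, extract exactly the four properties above from it, and note the index/notation alignment; a fully self-contained reproof of the Behrend machinery would be possible but is not needed here. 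Thus the ``proof'' I would write is really a careful bookkeeping argument: set up the construction of \cite{ABP18}, rename its parameters to match ($p$ = layer width, $2l+1$ = number of layers, $\ddot{\mathcal{L}}$ = routing alphabet), and verify that the three enumerated items are literally what that construction delivers.
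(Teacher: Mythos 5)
Your proposal matches the paper's own treatment: the paper supplies no self-contained proof of this lemma, stating that it is implicit in \cite{ABP18} (where $\ddot{B}[p,l]$ is actually constructed, following \cite{AB17}) and that only the three enumerated properties are needed downstream. Your concluding paragraph—reduce to a bookkeeping citation of that construction with parameters renamed to match—is exactly what the paper does, and the Behrend/convex-position sketch you add is a plausible (if partly heuristic) reconstruction of the cited machinery rather than a genuinely different route.
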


We define the set $\ddot{\mathcal{P}}[p,l]\subseteq L_0\times L_{2l}$ as
\begin{equation} \label{eq:SecondBasePairs}
\ddot{\mathcal{P}}[p,l]=\{(u,out(u,a,b))\;|\;u\in L_0,\;a,b\in\ddot{\mathcal{L}}[p,l]\}~.
\end{equation}
The size of this set is $p\cdot|\ddot{\mathcal{L}}[p,l]|^2$, which is at least $p\cdot\left(\frac{\sqrt{p}}{\xi(\sqrt{p},l)}\right)^2=\frac{p^2}{\xi(\sqrt{p},l)^2}$ and at most $p\cdot\left(\frac{\sqrt{p}}{2}\right)^2=\frac{p^2}{4}$, by Lemma \ref{lemma:SecondBaseGraph}.

The specific details of the construction of the graph $\ddot{B}[p,l]$ appear in \cite{ABP18}. We, however, only use the properties that are described in Lemma \ref{lemma:SecondBaseGraph}, and do not need these details for our proof. We now define the sequence of graphs $\{\mathcal{H}_\kappa[p,l]\}_{\kappa=0}^\infty$ recursively, where $p,l>0$ are any two integer parameters. For every $\kappa$, we also define a corresponding set $\mathcal{P}_\kappa[p,l]$ of pairs of vertices from $\mathcal{H}_\kappa[p,l]$.

The graph $\mathcal{H}_0[p,l]$ is defined to be the complete bipartite graph $K_{p,p}$. The corresponding set of pairs $\mathcal{P}_\kappa[p,l]$ is defined to be all the pairs $(u,v)$, of a vertex $u$ from the left side of $\mathcal{H}_0[p,l]=K_{p,p}$ and a vertex $v$ from its right side.

To construct $\mathcal{H}_\kappa[p,l]$ for $\kappa>0$, we start with the graph $\ddot{B}[p,l]$ from Lemma \ref{lemma:SecondBaseGraph}. The vertices of this graph are partitioned into layers $L_0,L_1,...,L_{2l}$, where edges only exist in between adjacent layers. We call the vertices in the first layer $L_0$ \textit{input ports}, and the vertices in the last layer $L_{2l}$ \textit{output ports}. The rest of the vertices of $\ddot{B}[p,l]$ are called \textit{internal vertices}. The input and output ports also serves as the input/output ports of the graph $\mathcal{H}_\kappa[p,l]$ (in particular, there are $p$ input ports and $p$ output ports in $\mathcal{H}_\kappa[p,l]$). Let $\ddot{\mathcal{L}}[p,l]$ be the set of edge-labels, as described in Lemma \ref{lemma:SecondBaseGraph}, and let $\ddot{\mathcal{P}}[p,l]$ be the corresponding set of pairs from Definition \ref{eq:SecondBasePairs}. Let $p'=|\ddot{\mathcal{L}}[p,l]|$. We fix an arbitrary bijection $\pi:\ddot{\mathcal{L}}[p,l]\rightarrow\{1,2,...,p'\}$.

Consider the graph $\mathcal{H}_{\kappa-1}[p',l]$. Using $\pi$, we match each input/output port of this graph to a label in $\ddot{\mathcal{L}}[p,l]$. We replace each internal vertex of $\ddot{B}[p,l]$ by a copy of the graph $\mathcal{H}_{\kappa-1}[p',l]$. For a vertex $u$ in $\ddot{B}[p,l]$, denote this copy by $\mathcal{H}_{\kappa-1}^u[p',l]$. Let $(u,v)$ be an edge in $\ddot{B}[p,l]$ with label $a\in\ddot{\mathcal{L}}[p,l]$, such that $u\in L_i$, $v\in L_{i+1}$. In $\mathcal{H}_\kappa[p,l]$, we replace this edge by an edge of weight $(2l-1)^\kappa$ as follows.
\begin{itemize}
    \item If $i$ is even, the new edge is added from the $\pi(a)$-th {\em input} port of $\mathcal{H}_{\kappa-1}^u[p',l]$ (or, in case that $i=0$, from $u$ itself) to the $\pi(a)$-th {\em input} port of $\mathcal{H}_{\kappa-1}^v[p',l]$.
    \item If $i$ is odd, the new edge is added from the $\pi(a)$-th {\em output} port of $\mathcal{H}_{\kappa-1}^u[p',l]$ to the $\pi(a)$-th {\em output} port of $\mathcal{H}_{\kappa-1}^v[p',l]$ (or, in case that $i=2l-1$, to $v$ itself).
\end{itemize}

In other words, we can imagine that the copies of $\mathcal{H}_{\kappa-1}[p',l]$ are inserted as they are in odd layers, and \textit{reversed} in even layers. This way, input ports are connected to input ports, and output ports are connected to output ports. See Figure \ref{fig:NearExactLowerBound} for an illustration.

\begin{center}
\begin{figure}[ht!]
    \centering
    \includegraphics[width=14cm, height=8cm]{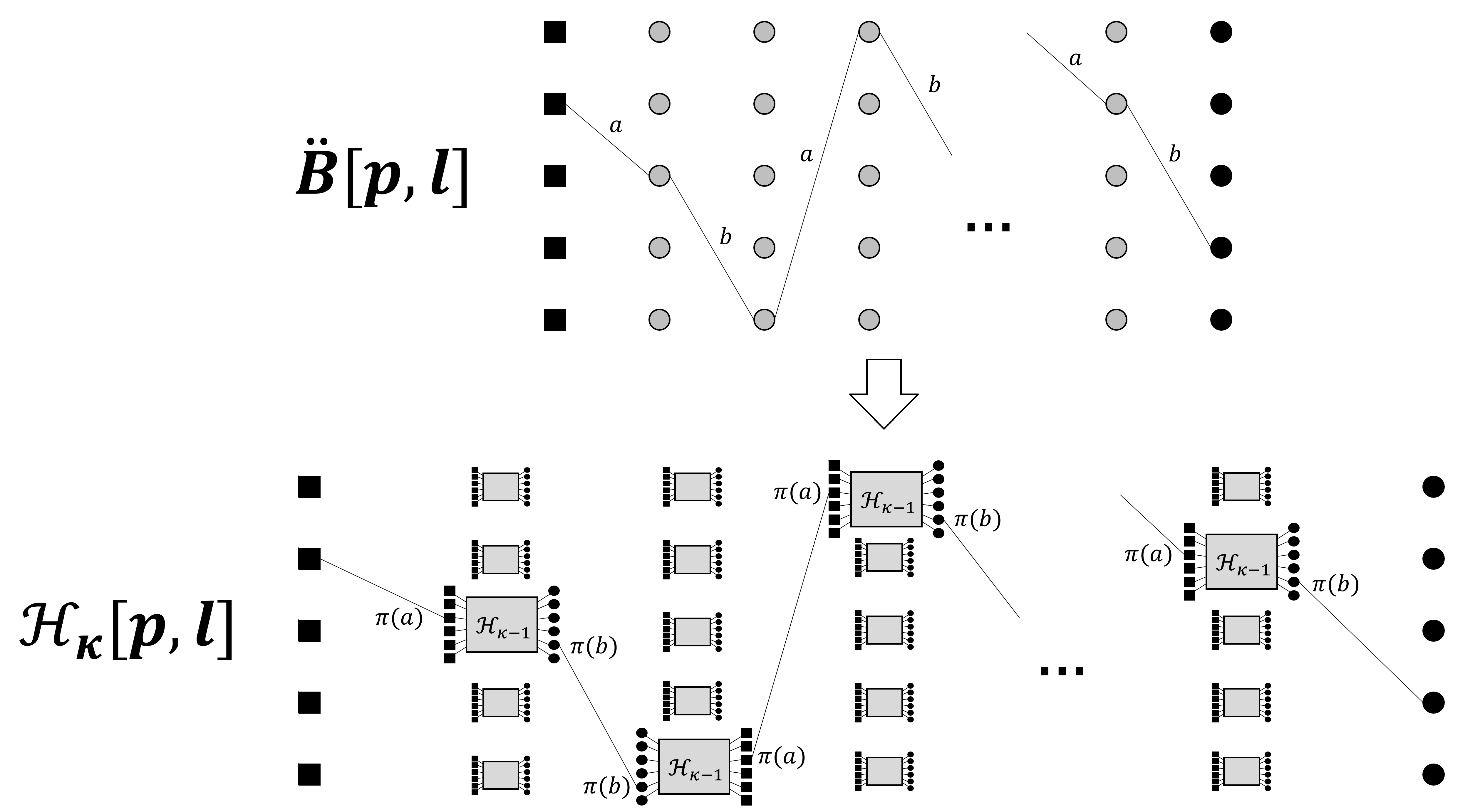}
    \caption{An illustration of the graph $\mathcal{H}_\kappa[p,l]$. The internal vertices of $\ddot{B}[p,l]$ are replaced by copies of $\mathcal{H}_{\kappa-1}[p',l]$, where $p'$ is the number of edges-labels in $\ddot{B}[p,l]$. An edge of $\ddot{B}[p,l]$ that had label $a$, and is from an even layer to an odd layer, is replaced by an edge that connects the $\pi(a)$-th \textit{input} ports of the corresponding copies of $\mathcal{H}_{\kappa-1}[p',l]$ (input port are represented in the figure by a square shape). If the edge is from an odd layer to an even layer, the same happens for the \textit{output} ports of these copies (represented by circular shape).}
    \label{fig:NearExactLowerBound}
\end{figure}
\end{center}

This completes the description of the graph $\mathcal{H}_\kappa[p,l]$. We define the corresponding set $\mathcal{P}_\kappa[p,l]$ as follows. Given some $(u,v)\in\ddot{\mathcal{P}}[p,l]$, let $(a,b)\in(\ddot{\mathcal{L}}[p,l])^2$ be the unique pair of labels such that the $u-v$ shortest path in $\ddot{B}[p,l]$ is labeled alternately with $a,b$. Henceforth, we call $(a,b)$ the {\em corresponding labels} to $(u,v)$. Denote by $u'$ the $\pi(a)$-th input port of $\mathcal{H}_{\kappa-1}[p',l]$, and by $v'$ the $\pi(b)$-th output port of $\mathcal{H}_{\kappa-1}[p',l]$. We say that $(u,v)\in\mathcal{P}_\kappa[p,l]$ if and only if $(u',v')\in\mathcal{P}_{\kappa-1}[p',l]$.

The following lemma is our version of Lemma 2.2 from \cite{ABP18}. We prove it here since our construction of $\mathcal{H}_\kappa[p,l],\mathcal{P}_\kappa[p,l]$ is slightly different.

\begin{lemma} \label{lemma:NumberOfPairs}

\[|\mathcal{P}_\kappa[p,l]|\geq\frac{p^2}{\xi(\sqrt{p},l)^{2\kappa}}~.\]
\end{lemma}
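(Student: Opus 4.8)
The statement is $|\mathcal{P}_\kappa[p,l]| \geq p^2/\xi(\sqrt p, l)^{2\kappa}$, and the natural approach is induction on $\kappa$. For the base case $\kappa=0$, the graph $\mathcal{H}_0[p,l]=K_{p,p}$ has $\mathcal{P}_0[p,l]$ equal to all pairs (left vertex, right vertex), so $|\mathcal{P}_0[p,l]| = p^2 = p^2/\xi(\sqrt p,l)^0$, which matches. For the inductive step, I would unwind the definition of $\mathcal{P}_\kappa[p,l]$: a pair $(u,v)\in\ddot{\mathcal{P}}[p,l]$ belongs to $\mathcal{P}_\kappa[p,l]$ exactly when the pair $(u',v')$ — formed from the $\pi(a)$-th input port and $\pi(b)$-th output port of $\mathcal{H}_{\kappa-1}[p',l]$, where $(a,b)$ are the corresponding labels of $(u,v)$ — lies in $\mathcal{P}_{\kappa-1}[p',l]$. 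Here $p' = |\ddot{\mathcal{L}}[p,l]|$.

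\textbf{Key steps.} First, I would set up the counting: there is a bijection between $\ddot{\mathcal{P}}[p,l]$ and the set of triples $(u, a, b)$ with $u\in L_0$ and $a,b\in\ddot{\mathcal{L}}[p,l]$ (via $v = out(u,a,b)$, using part 3 of Lemma~\ref{lemma:SecondBaseGraph}), and likewise a pair $(u',v')$ of an input and output port of $\mathcal{H}_{\kappa-1}[p',l]$ corresponds via $\pi$ to the label pair $(a,b)$. So membership of $(u,v)$ in $\mathcal{P}_\kappa[p,l]$ depends only on $(a,b)$ — not on $u$. Hence $|\mathcal{P}_\kappa[p,l]| = p \cdot |\{(a,b) : (\pi(a)\text{-th input port}, \pi(b)\text{-th output port}) \in \mathcal{P}_{\kappa-1}[p',l]\}| = p \cdot |\mathcal{P}_{\kappa-1}[p',l]|$, since $\pi$ is a bijection and the input/output ports of $\mathcal{H}_{\kappa-1}[p',l]$ are indexed $1,\dots,p'$. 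Next, apply the inductive hypothesis: $|\mathcal{P}_{\kappa-1}[p',l]| \geq (p')^2 / \xi(\sqrt{p'},l)^{2(\kappa-1)}$. Then I need the two quantitative facts from Lemma~\ref{lemma:SecondBaseGraph}: $p' = |\ddot{\mathcal{L}}[p,l]| \geq \sqrt p / \xi(\sqrt p, l)$, so $(p')^2 \geq p/\xi(\sqrt p,l)^2$; and the monotonicity of $\xi$ in its first argument together with $p' \leq \sqrt p/2 \leq \sqrt p$, giving $\xi(\sqrt{p'},l) \leq \xi(\sqrt{p},l)$ — wait, I need $\xi(\sqrt{p'},l)$ in the denominator, so a larger value hurts; since $\sqrt{p'} \leq p^{1/4} \leq \sqrt p$ and $\xi$ is non-decreasing in the first argument, $\xi(\sqrt{p'},l) \leq \xi(\sqrt p, l)$, which is exactly what I want. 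Chaining: $|\mathcal{P}_\kappa[p,l]| = p\cdot|\mathcal{P}_{\kappa-1}[p',l]| \geq p \cdot \frac{p/\xi(\sqrt p,l)^2}{\xi(\sqrt{p'},l)^{2(\kappa-1)}} \geq \frac{p^2}{\xi(\sqrt p,l)^2 \cdot \xi(\sqrt p,l)^{2(\kappa-1)}} = \frac{p^2}{\xi(\sqrt p, l)^{2\kappa}}$.

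\textbf{Main obstacle.} The one subtlety I would be careful about is the exact relationship the recursion passes down: $\mathcal{H}_\kappa[p,l]$ is built from copies of $\mathcal{H}_{\kappa-1}[p',l]$ with the \emph{same} second parameter $l$ but a shrunk first parameter $p' = |\ddot{\mathcal{L}}[p,l]|$, so the inductive hypothesis must be invoked at $(p',l)$, and the bound it yields involves $\xi(\sqrt{p'},l)$, not $\xi(\sqrt p,l)$. Everything hinges on the monotonicity clause ``$\xi(p,l)$ is non-decreasing in $p$'' from Lemma~\ref{lemma:SecondBaseGraph} to replace $\xi(\sqrt{p'},l)$ by $\xi(\sqrt p,l)$, and on the lower bound $|\ddot{\mathcal{L}}[p,l]| \geq \sqrt p/\xi(\sqrt p,l)$ to convert the $(p')^2$ factor into $p/\xi(\sqrt p,l)^2$. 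I should also double-check that $p' \ge 1$ so that $\mathcal{H}_{\kappa-1}[p',l]$ and the induction are well-defined, but this follows from $|\ddot{\mathcal{L}}[p,l]| \ge \sqrt p/\xi(\sqrt p,l) \ge 1$ for $p$ large enough (and the statement is vacuous otherwise). This is a clean induction with no real difficulty beyond bookkeeping the parameter substitution.
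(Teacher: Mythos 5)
Your proof is correct and follows essentially the same inductive argument as the paper: both establish the identity $|\mathcal{P}_\kappa[p,l]| = p\cdot|\mathcal{P}_{\kappa-1}[p',l]|$ (the paper via a per-input-port bijection $A_u \leftrightarrow \mathcal{P}_{\kappa-1}[p',l]$, you via the equivalent observation that membership depends only on the label pair $(a,b)$), then apply the inductive hypothesis and chain with $p'\geq\sqrt{p}/\xi(\sqrt{p},l)$ and monotonicity of $\xi$ exactly as the paper does.
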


\begin{proof}

We prove the lemma by induction on $\kappa$. For $\kappa=0$, by definition 
\[|\mathcal{P}_0[p,l]|=p^2=\frac{p^2}{\xi(\sqrt{p},l)^{2\cdot0}}~.\] 

Fix $\kappa>0$, and fix an input port $u$ of $\mathcal{H}_\kappa[p,l]$. Denote by $A_u$ the set of pairs $(u,v)\in\mathcal{P}_\kappa[p,l]$, i.e., the set of pairs in $\mathcal{P}_\kappa[p,l]$ with $u$ as their input port. We show a bijection between the sets $A_u$ and $\mathcal{P}_{\kappa-1}[p',l]$. Let $(u,v)$ be a pair in $A_u$, and let $(a,b)\in(\ddot{\mathcal{L}}[p,l])^2$ be the corresponding labels to $(u,v)$. By the definition of $\mathcal{P}_\kappa[p,l]$, the $\pi(a)$-th input port $u'$ and the $\pi(b)$-th output port $v'$ of $\mathcal{H}_\kappa[p,l]$ satisfy $(u',v')\in\mathcal{P}_{\kappa-1}[p',l]$. Thus, we map the pair $(u,v)\in A_u$ to the pair $(u',v')\in\mathcal{P}_{\kappa-1}[p',l]$.

To prove that this mapping is a bijection, we now show the inverse mapping. Note that for any $(u',v')\in\mathcal{P}_{\kappa-1}[p',l]$, there are unique labels $a,b\in\ddot{\mathcal{L}}[p,l]$ such that $u'$ is the $\pi(a)$-th input port and $v'$ is the $\pi(b)$-th output port of $\mathcal{H}_{\kappa-1}[p',l]$. This is true since $\pi$ is a bijection. Let $v$ be the output port of $\mathcal{H}_\kappa[p,l]$, such that $v=out(u,a,b)$ (using the notation $out()$ from Lemma \ref{lemma:SecondBaseGraph}). Then, $(a,b)$ are the corresponding labels to $(u,v)$, and since $(u',v')\in\mathcal{P}_{\kappa-1}[p',l]$, we conclude that $(u,v)\in\mathcal{P}_\kappa[p,l]$. That is, $(u,v)\in A_u$.

The two mappings that were described in the two paragraphs above are the inverse of each other, hence our mapping is a bijection, and we conclude that $|A_u|=|\mathcal{P}_{\kappa-1}[p',l]|$. Summing over all the input ports $u$ of $\mathcal{H}_\kappa[p,l]$, and using the induction hypothesis, we get
\[|\mathcal{P}_\kappa[p,l]|=\sum_u|A_u|=\sum_u|\mathcal{P}_{\kappa-1}[p',l]|\geq p\cdot\frac{(p')^2}{\xi(\sqrt{p'},l)^{2\kappa-2}}~.\]
By Lemma \ref{lemma:SecondBaseGraph}, we know that $p'=|\ddot{\mathcal{L}}[p,l]|\geq\frac{\sqrt{p}}{\xi(\sqrt{p},l)}$, and that $\xi(\sqrt{p'},l)\leq\xi(\sqrt{p},l)$, since $\xi$ is a non-decreasing function in the first variable (and $p'\leq\frac{\sqrt{p}}{2}<p$). Hence,
\[|\mathcal{P}_\kappa[p,l]|\geq p\cdot\frac{(p')^2}{\xi(\sqrt{p'},l)^{2\kappa-2}}\geq p\cdot\frac{p}{\xi(\sqrt{p},l)^2\xi(\sqrt{p'},l)^{2\kappa-2}}\geq\frac{p^2}{\xi(\sqrt{p},l)^{2\kappa}}~.\]



\end{proof}

Next, we estimate the number of vertices in $\mathcal{H}_\kappa[p,l]$. Denote this number by $n_\kappa[p,l]$.

\begin{lemma} \label{lemma:NumberOfVertices}

For every $\kappa\geq0$, 
\[\frac{2(2l-1)^\kappa}{\xi(\sqrt{p},l)^{2\kappa}}p^{2-\frac{1}{2^\kappa}}\leq n_\kappa[p,l]\leq2(2l)^\kappa p^{2-\frac{1}{2^\kappa}}~.\]

\end{lemma}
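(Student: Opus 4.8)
The plan is to prove both bounds by induction on $\kappa$, following the recursive structure of the construction. For the base case $\kappa=0$, the graph $\mathcal{H}_0[p,l]=K_{p,p}$ has exactly $2p$ vertices, and indeed $\frac{2(2l-1)^0}{\xi(\sqrt p,l)^0}p^{2-1}=2p$ and $2(2l)^0p^{2-1}=2p$, so both inequalities hold with equality. For the inductive step, I would set up the recursion for $n_\kappa[p,l]$ directly from the construction: the graph $\mathcal{H}_\kappa[p,l]$ is obtained from $\ddot B[p,l]$ by keeping the $2p$ input and output ports ($|L_0|=|L_{2l}|=p$) and replacing each of the $(2l-1)p$ internal vertices (the $2l-1$ middle layers, each of size $p$) by a copy of $\mathcal{H}_{\kappa-1}[p',l]$, where $p'=|\ddot{\mathcal{L}}[p,l]|$. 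The new weighted edges add no vertices. Hence
\[
n_\kappa[p,l] = 2p + (2l-1)\,p\cdot n_{\kappa-1}[p',l]~.
\]

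For the upper bound, I would plug the induction hypothesis $n_{\kappa-1}[p',l]\le 2(2l)^{\kappa-1}(p')^{2-1/2^{\kappa-1}}$ into the recursion, then use $p'=|\ddot{\mathcal{L}}[p,l]|\le \sqrt p/2 < \sqrt p$ from Lemma \ref{lemma:SecondBaseGraph} to bound $(p')^{2-1/2^{\kappa-1}}\le (\sqrt p)^{2-1/2^{\kappa-1}} = p^{1-1/2^\kappa}$. This gives $n_\kappa[p,l]\le 2p + (2l-1)p\cdot 2(2l)^{\kappa-1}p^{1-1/2^\kappa} \le 2p\cdot(2l)^\kappa p^{1-1/2^\kappa} + \ldots$; I then need to check that the leading term dominates and the additive $2p$ can be absorbed — since $(2l)^\kappa p^{1-1/2^\kappa}\ge 1$ for $l,\kappa\ge 1$, and in fact $\le (2l-1)p\cdot n_{\kappa-1}[p',l]$ already exceeds $2p$ comfortably, the sum $2p + (2l-1)p\cdot 2(2l)^{\kappa-1}p^{1-1/2^\kappa}$ is at most $2(2l)^\kappa p^{2-1/2^\kappa}$ as long as $2l-1\ge 1$ and a constant slack is available (one checks $2p \le (2l-1)p\cdot 2(2l)^{\kappa-1}p^{1-1/2^\kappa}$, i.e. $1\le (2l-1)(2l)^{\kappa-1}p^{1-1/2^\kappa}$, which holds since the right side is at least $1$). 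For the lower bound, I would similarly substitute $n_{\kappa-1}[p',l]\ge \frac{2(2l-1)^{\kappa-1}}{\xi(\sqrt{p'},l)^{2\kappa-2}}(p')^{2-1/2^{\kappa-1}}$, drop the harmless $+2p$ term, and use the two facts from Lemma \ref{lemma:SecondBaseGraph}: $p'\ge \sqrt p/\xi(\sqrt p,l)$ (so $(p')^{2-1/2^{\kappa-1}}\ge (\sqrt p)^{2-1/2^{\kappa-1}}/\xi(\sqrt p,l)^{2-1/2^{\kappa-1}}\ge p^{1-1/2^\kappa}/\xi(\sqrt p,l)^{2}$, using $2-1/2^{\kappa-1}\le 2$ and $\xi\ge 1$) and $\xi(\sqrt{p'},l)\le \xi(\sqrt p,l)$ by monotonicity (since $p'<p$). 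Collecting the $\xi$-powers: $\frac{1}{\xi(\sqrt{p'},l)^{2\kappa-2}}\cdot\frac{1}{\xi(\sqrt p,l)^2}\ge \frac{1}{\xi(\sqrt p,l)^{2\kappa}}$, and multiplying the $(2l-1)$ factors: $(2l-1)\cdot(2l-1)^{\kappa-1}=(2l-1)^\kappa$, yielding the claimed $\frac{2(2l-1)^\kappa}{\xi(\sqrt p,l)^{2\kappa}}p^{2-1/2^\kappa}$.

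The only mildly delicate point — and the part I would write out most carefully — is the bookkeeping of the exponents of $p$ and of $\xi(\sqrt p,l)$ across the recursion, since the exponent $2-1/2^\kappa$ is not multiplicative and one has to verify that replacing $p'$ by its bounds in terms of $p$ really produces exactly $p^{1-1/2^\kappa}$ after the extra factor of $p$ coming from the $(2l-1)p$ internal vertices (note $1+(1-1/2^{\kappa-1})/2 \cdot 2$ — more precisely $(p')^{2-1/2^{\kappa-1}}$ with $p'\approx\sqrt p$ gives $p^{1-1/2^\kappa}$, and the external factor $p$ brings it to $p^{2-1/2^\kappa}$). There is no real combinatorial obstacle here; the lemma is a routine-but-careful induction once the vertex recursion $n_\kappa = 2p + (2l-1)p\cdot n_{\kappa-1}[p',l]$ is written down and Lemma \ref{lemma:SecondBaseGraph}'s bounds on $p'=|\ddot{\mathcal{L}}[p,l]|$ and the monotonicity of $\xi$ are invoked.
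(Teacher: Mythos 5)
Your proposal follows the same inductive route as the paper: the $K_{p,p}$ base case, the vertex recursion $n_\kappa[p,l]=2p+(2l-1)p\cdot n_{\kappa-1}[p',l]$, the bounds on $p'$ from Lemma~\ref{lemma:SecondBaseGraph}, and the monotonicity of $\xi$ in the lower bound. The lower bound bookkeeping is sound. However, the verification you propose for absorbing the additive $2p$ in the upper bound does not suffice. After dropping the factor $\tfrac12$ from $p'\le\sqrt{p}/2$, the recursion gives $n_\kappa[p,l]\le 2p+2(2l-1)(2l)^{\kappa-1}p^{2-\frac{1}{2^\kappa}}$, and establishing merely $2p\le 2(2l-1)(2l)^{\kappa-1}p^{2-\frac{1}{2^\kappa}}$ only bounds the sum by $4(2l-1)(2l)^{\kappa-1}p^{2-\frac{1}{2^\kappa}}$, which exceeds the target $2(2l)^\kappa p^{2-\frac{1}{2^\kappa}}=4l(2l)^{\kappa-1}p^{2-\frac{1}{2^\kappa}}$ as soon as $l\ge 2$. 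The correct slack comes from writing $2l-1=(2l)-1$, so the main term equals $2(2l)^\kappa p^{2-\frac{1}{2^\kappa}}-2(2l)^{\kappa-1}p^{2-\frac{1}{2^\kappa}}$, and the right condition to verify is $2p\le 2(2l)^{\kappa-1}p^{2-\frac{1}{2^\kappa}}$, i.e.\ $1\le(2l)^{\kappa-1}p^{1-\frac{1}{2^\kappa}}$, which does hold for $l,\kappa\ge1$. This is how the paper closes the upper bound. (Alternatively, you could keep the $\tfrac12$ when bounding $(p')^{2-1/2^{\kappa-1}}\le(\sqrt{p}/2)^{2-1/2^{\kappa-1}}\le\tfrac12 p^{1-\frac{1}{2^\kappa}}$, which halves the coefficient of the main term and makes absorbing $2p$ immediate.) Once this step is corrected, the rest of your sketch matches the paper's proof.
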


\begin{proof}

We again prove the lemma by induction on $\kappa$. For $\kappa=0$, the graph $\mathcal{H}_0[p,l]$ is the complete bipartite graph $K_{p,p}$. Therefore, 
\[\frac{2(2l-1)^0}{\xi(\sqrt{p},l)^{2\cdot0}}p^{2-\frac{1}{2^0}}=2p=n_0[p,l]=2p=2(2l)^0p^{2-\frac{1}{2^0}}~.\]

For $\kappa>0$, recall that $\mathcal{H}_\kappa[p,l]$ was obtained by replacing each of the $(2l-1)p$ vertices in the interior layers of $\ddot{B}[p,l]$, by a copy of $\mathcal{H}_{\kappa-1}[p',l]$. The number of vertices in any of these copies is
\[n_{\kappa-1}[p',l]\leq2(2l)^{\kappa-1}(p')^{2-\frac{1}{2^{\kappa-1}}}\leq2(2l)^{\kappa-1}\left(\frac{\sqrt{p}}{2}\right)^{2-\frac{2}{2^\kappa}}<2(2l)^{\kappa-1}p^{1-\frac{1}{2^\kappa}}~.\]
On the other hand, this number is also
\[n_{\kappa-1}[p',l]\geq\frac{2(2l-1)^{\kappa-1}}{\xi(\sqrt{p'},l)^{2\kappa-2}}(p')^{2-\frac{1}{2^{\kappa-1}}}\geq\frac{2(2l-1)^{\kappa-1}}{\xi(\sqrt{p'},l)^{2\kappa-2}}\left(\frac{\sqrt{p}}{\xi(\sqrt{p},l)}\right)^{2-\frac{2}{2^\kappa}}\geq\frac{2(2l-1)^{\kappa-1}}{\xi(\sqrt{p},l)^{2\kappa}}p^{1-\frac{1}{2^\kappa}}~.\]
In these bounds we used the induction hypothesis, the fact that $\xi$ is a non-decreasing function, and the bounds on $p'$ from Lemma \ref{lemma:SecondBaseGraph}.

We conclude that the number of vertices in $\mathcal{H}_\kappa[p,l]$ is
\begin{eqnarray*}
n_\kappa[p,l]&\leq&2p+(2l-1)p\cdot2(2l)^{\kappa-1}p^{1-\frac{1}{2^\kappa}}\\
&=&2p-2(2l)^{\kappa-1}p^{2-\frac{1}{2^\kappa}}+2(2l)^\kappa p^{2-\frac{1}{2^\kappa}}\\
&\stackrel{\kappa\geq1}{\leq}&2p-2\cdot1\cdot p^{3/2}+2(2l)^\kappa p^{2-\frac{1}{2^\kappa}}\leq2(2l)^\kappa p^{2-\frac{1}{2^\kappa}}~,
\end{eqnarray*}
and also
\[n_\kappa[p,l]\geq2p+(2l-1)p\cdot\frac{2(2l-1)^{\kappa-1}}{\xi(\sqrt{p},l)^{2\kappa}}p^{1-\frac{1}{2^\kappa}}\geq\frac{2(2l-1)^\kappa}{\xi(\sqrt{p},l)^{2\kappa}}p^{2-\frac{1}{2^\kappa}}~.\]
This completes the inductive proof.

\end{proof}

In \cite{ABP18}, the authors showed that any $(1+\epsilon,\beta)$-spanner for this graph\footnote{For the lower bound for near-additive spanners, one need to use an \textit{unweighted graph}. Thus, \cite{ABP18} actually used a similar graph where the edges between the copies of $\mathcal{H}_{\kappa-1}[p',l]$ are replaced by paths of length $(2l-1)^{\kappa-1}$. This graph has essentially the same properties as the graph $\mathcal{H}_\kappa[p,l]$ described here.}, that has less than $|\mathcal{P}_\kappa[p,l]|$ edges, must have $\beta=\Omega\left(\frac{1}{\epsilon\kappa}\right)^{\kappa-1}$. But note that by Lemma \ref{lemma:NumberOfVertices}, the number of vertices in $\mathcal{H}_\kappa[p,l]$ is $n=n_\kappa[p,l]\approx p^{2-\frac{1}{2^\kappa}}$, while Lemma \ref{lemma:NumberOfPairs} proves that the number of pairs in $\mathcal{P}_\kappa[p,l]$ is roughly
\[p^2\approx\left(n^{\frac{1}{2-\frac{1}{2^\kappa}}}\right)^2=n^{\frac{2^{\kappa+1}}{2^{\kappa+1}-1}}=n^{1+\frac{1}{2^{\kappa+1}-1}}~.\]
Thus, the result of \cite{ABP18} means that less than $n^{1+\frac{1}{2^{\kappa+1}-1}}$ edges in a near-additive spanner implies $\beta=\Omega\left(\frac{1}{\epsilon\kappa}\right)^{\kappa-1}$. 

In our case, we will show that any $\mathcal{P}_\kappa[p,l]$-pairwise $(1+\epsilon)$-spanner for $\mathcal{H}_\kappa[p,l]$ must have at least $\beta|\mathcal{P}_\kappa[p,l]|$ edges, for $\beta=\Omega\left(\frac{1}{\epsilon\kappa^2}\right)^{\kappa}$. Otherwise, the stretch guarantee will not hold for at least one of the pairs in $\mathcal{P}_\kappa[p,l]$. 

To achieve this goal, we prove some properties of the shortest paths that connect the pairs in $\mathcal{P}_\kappa[p,l]$. A key notion will be that of a \textit{critical} edge, which also appears in \cite{ABP18}.

\begin{definition} \label{def:CriticalEdge}
An edge $e$ of $\mathcal{H}_\kappa[p,l]$ is said to be {\em critical} if it lies in a copy of $\mathcal{H}_0[p,l]$. 

\end{definition}

The following lemma is parallel to Lemma 2.3 in \cite{ABP18}.

\begin{lemma} \label{lemma:ShortestPathsProperties}
The distance between any input and output port of $\mathcal{H}_\kappa[p,l]$ is at least $(2l\kappa+1)(2l-1)^\kappa$.

Moreover, for every pair $(u,v)\in\mathcal{P}_\kappa[p,l]$, there is a unique shortest path $P_{u,v}$ in $\mathcal{H}_\kappa[p,l]$ that connects $u,v$, and has weight $w(P_{u,v})=(2l\kappa+1)(2l-1)^\kappa$. This path does not share a critical edge with any other shortest path $P_{u',v'}$, for $(u',v')\in\mathcal{P}_\kappa[p,l]\setminus\{(u,v)\}$. That is, there are no critical edges in $P_{u,v}\cap P_{u',v'}$, for any pair $(u',v')\neq(u,v)$ in $\mathcal{P}_\kappa[p,l]$.
\end{lemma}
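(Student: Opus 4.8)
The plan is to prove Lemma~\ref{lemma:ShortestPathsProperties} by induction on $\kappa$, tracking three things simultaneously: (i) a lower bound on input--output distances, (ii) the existence and weight of a unique canonical shortest path $P_{u,v}$ for each $(u,v)\in\mathcal{P}_\kappa[p,l]$, and (iii) the fact that these canonical paths are pairwise critical-edge-disjoint. The base case $\kappa=0$ is immediate: $\mathcal{H}_0[p,l]=K_{p,p}$, every edge is critical, the distance between a left vertex $u$ and a right vertex $v$ is exactly $1=(2l\cdot0+1)(2l-1)^0$, realized by the single edge $(u,v)$, which is clearly unique; and since $\mathcal{P}_0[p,l]$ consists of all left--right pairs, distinct pairs use distinct edges of $K_{p,p}$ (two pairs sharing an edge would be the same pair). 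So the critical-disjointness is trivially satisfied.

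For the inductive step, fix $(u,v)\in\mathcal{P}_\kappa[p,l]$ with corresponding labels $(a,b)$. First I would describe the canonical path: follow the unique shortest path $P^{\ddot B}_{u,v}$ of $\ddot B[p,l]$ from Lemma~\ref{lemma:SecondBaseGraph}, which alternates labels $a,b$ and passes through each of the $2l+1$ layers exactly once; each of its $2l$ edges becomes a weight-$(2l-1)^\kappa$ edge in $\mathcal{H}_\kappa$, and between consecutive such edges we must traverse one copy of $\mathcal{H}_{\kappa-1}[p',l]$ from its $\pi(a)$-th port to its $\pi(b)$-th port (in the appropriate input/output orientation, matching the ``reversed in even layers'' convention). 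By construction of $\mathcal{P}_\kappa$, the relevant port-pair $(u',v')$ lies in $\mathcal{P}_{\kappa-1}[p',l]$, so by the induction hypothesis each of the $2l-1$ internal copies contributes a subpath of weight $(2l(\kappa-1)+1)(2l-1)^{\kappa-1}$. Summing: total weight $=2l\cdot(2l-1)^\kappa+(2l-1)\cdot(2l(\kappa-1)+1)(2l-1)^{\kappa-1}=(2l-1)^\kappa(2l+2l(\kappa-1)+1)=(2l\kappa+1)(2l-1)^\kappa$, as claimed. This $P_{u,v}$ is well-defined because the port indices $a,b$ are fixed and the internal subpaths are unique by induction.

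Next I would argue the distance lower bound and uniqueness together, mirroring the reasoning sketched in the introduction. Any $u$--$v$ walk induces a walk on $\ddot B[p,l]$ among layers; since $\ddot B$ is layered, a walk not visiting every layer exactly once must backtrack, using at least two extra inter-layer edges, hence at least $2(2l-1)^\kappa$ extra weight --- strictly more than any slack we could save inside the $\mathcal{H}_{\kappa-1}$ copies (one must check that deviating inside a copy only costs more, using the induction hypothesis that $(2l(\kappa-1)+1)(2l-1)^{\kappa-1}$ is the exact shortest-path weight there, and that non-port-to-port distances inside a copy are at least as large --- this is where the first sentence of the lemma, the general input--output distance bound, earns its keep in the induction). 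So the only candidate shortest walks follow a monotone layer pattern, which forces the label pattern (by property~2 of Lemma~\ref{lemma:SecondBaseGraph}, once you commit to reaching $v=out(u,a,b)$ monotonically the edge labels are forced to alternate $a,b$), and inside each copy forces the unique internal shortest path; hence $P_{u,v}$ is unique. Finally, for critical-edge-disjointness: a critical edge lies in some specific copy $\mathcal{H}^x_{\kappa-1}$ of $\ddot B$; if $P_{u,v}$ and $P_{u',v'}$ shared it, then both enter and leave that copy through the ports dictated by their label pairs, so they enter/leave via the same ports, meaning the label pair $(a,b)$ is forced to equal $(a',b')$; combined with the vertex-disjointness clause of Lemma~\ref{lemma:SecondBaseGraph} (paths with the same label pair from distinct inputs are vertex-disjoint in $\ddot B$, so they can't even reach the same copy unless $u=u'$), this forces $(u,v)=(u',v')$ --- and within the single shared copy we invoke the induction hypothesis that the internal canonical paths of distinct pairs of $\mathcal{P}_{\kappa-1}[p',l]$ share no critical edge.

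The main obstacle I anticipate is the uniqueness / lower-bound argument in the inductive step: one must rule out ``hybrid'' paths that follow the monotone layer structure in $\ddot B$ but take a suboptimal or non-port-to-port route inside one of the internal copies, and also paths that backtrack in $\ddot B$ but somehow economize inside copies. Handling this cleanly requires the stronger statement that \emph{every} input--output distance in $\mathcal{H}_{\kappa-1}[p',l]$ (not just for pairs in $\mathcal{P}_{\kappa-1}$) is at least $(2l(\kappa-1)+1)(2l-1)^{\kappa-1}$ --- which is exactly why the lemma is phrased with that first sentence --- so that any detour strictly increases weight by a margin exceeding $0$, while a layer-backtrack costs the much larger $2(2l-1)^\kappa$. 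Making the bookkeeping of these two competing penalties precise, and verifying the edge-weight choice $(2l-1)^\kappa$ is large enough relative to the internal diameter $\approx 2l\kappa(2l-1)^{\kappa}$ accumulated across $2l-1$ copies, is the delicate part; everything else is routine induction and appeals to Lemma~\ref{lemma:SecondBaseGraph}.
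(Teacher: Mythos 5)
Your proposal is correct and follows the same inductive strategy as the paper: the same three simultaneous claims (distance lower bound, uniqueness and weight of $P_{u,v}$, critical-edge disjointness), the same case split between monotone and backtracking walks in $\ddot B[p,l]$, the same arithmetic $2l(2l-1)^\kappa+(2l-1)(2l(\kappa-1)+1)(2l-1)^{\kappa-1}=(2l\kappa+1)(2l-1)^\kappa$, and the same appeal to Lemma~\ref{lemma:SecondBaseGraph} plus the induction hypothesis for disjointness. Two small points to tidy in a final write-up: the claim that a monotone walk from $u$ to $v=out(u,a,b)$ must carry the alternating labels $a,b$ follows from property~3 of Lemma~\ref{lemma:SecondBaseGraph} (uniqueness of the shortest path), not property~2; and in the disjointness argument the assertion ``so they enter/leave via the same ports, meaning $(a,b)=(a',b')$'' is not immediate from sharing one critical edge --- it is exactly the induction hypothesis (distinct pairs in $\mathcal{P}_{\kappa-1}[p',l]$ have critical-edge-disjoint canonical paths) applied to the shared copy that forces the two internal port-pairs, and hence $(a,b)=(a',b')$, to coincide. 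You do invoke that induction hypothesis, but only at the end as an afterthought; it should come first, and then the vertex-disjointness clause of Lemma~\ref{lemma:SecondBaseGraph} finishes the argument, matching the paper's order.
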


\begin{proof}

We prove the Lemma by induction over $\kappa\geq0$. For $\kappa=0$, the graph $\mathcal{H}_0[p,l]$ is the complete bipartite graph $K_{p,p}$. One of its sides consists of the input ports, and the other consists of the output ports. Thus, the distance between any input port and output port is at least $1=(2l\cdot0+1)(2l-1)^0$. For every $(u,v)\in\mathcal{P}_0[p,l]$, the edge $(u,v)$ is clearly the unique shortest path between $u,v$ that has weight $1$. 
In addition, this path does not share its only critical edge $(u,v)$ with any other $u'-v'$ shortest path in $\mathcal{H}_0[p,l]=K_{p,p}$, for $(u',v')\in\mathcal{P}_0[p,l]$.

Fix $\kappa>0$. Every path that starts from an input port of $\mathcal{H}_\kappa[p,l]$ and ends in an output port must visit at least $2l-1$ copies of $\mathcal{H}_{\kappa-1}[p',l]$, each one of them replaces a vertex from a different layer of $\ddot{B}[p,l]$. By the induction hypothesis, passing through a copy of $\mathcal{H}_{\kappa}[p',l]$, from an input port to an output port (or vice versa), requires a path of weight at least $(2l(\kappa-1)+1)(2l-1)^{\kappa-1}$. The edges that connect the different copies are of weight $(2l-1)^\kappa$. Note that any path from the input layer of $\mathcal{H}_\kappa[p,l]$ to its output layer must pass through at least $2l$ of these edges. Overall, such path must have weight of at least
\[2l(2l-1)^\kappa+(2l-1)(2l(\kappa-1)+1)(2l-1)^{\kappa-1}=(2l+2l(\kappa-1)+1)(2l-1)^\kappa=(2l\kappa+1)(2l-1)^\kappa~.\]

Now fix a pair $(u,v)\in\mathcal{P}_\kappa[p,l]$. Recall that by definition, we know in particular that $u,v$ are vertices of $\ddot{B}[p,l]$, and $(u,v)\in\ddot{\mathcal{P}}[p,l]$. By Lemma \ref{lemma:SecondBaseGraph}, in the graph $\ddot{B}[p,l]$ there is a unique $u-v$ shortest path $P=(u=u_0,u_1,u_2,...,u_{2l}=v)$, labeled by some two labels $a,b\in\ddot{\mathcal{L}}[p,l]$. Denote by $H_1,H_2,...,H_{2l-1}$ the copies of $\mathcal{H}_{\kappa-1}[p',l]$ that replaced the vertices $u_1,u_2,...,u_{2l-1}$ in the construction of $\mathcal{H}_\kappa[p,l]$.

Recall that by its construction, the graph $\mathcal{H}_\kappa[p,l]$ contains the following edges. For every \textit{even} $i\in[0,2l-1]$, it contains an edge of weight $(2l-1)^\kappa$ from the $\pi(a)$-th input port of $H_i$ (or from $u$ itself if $i=0$) to the $\pi(a)$-th input port of $H_{i+1}$. For every \textit{odd} $i\in[0,2l-1]$, it contains an edge of weight $(2l-1)^\kappa$ from the $\pi(b)$-th output port of $H_i$ to the $\pi(b)$-th output port of $H_{i+1}$ (or to $v$ itself if $i=2l-1$). Also, recall that since $(u,v)\in\mathcal{P}_\kappa[p,l]$, it means that $(\pi(a),\pi(b))\in\mathcal{P}_{\kappa-1}[p',l]$ (here, and in the rest of this proof, we identify $\pi(a),\pi(b)$ with the $\pi(a)$-th input port and $\pi(b)$-output port of $\mathcal{H}_{\kappa-1}[p',l]$). Thus, when using the $2l$ edges described above, we can also find paths inside the copies $H_1,...,H_{2l-1}$, each of them with weight $(2l(\kappa-1)+1)(2l-1)^{\kappa-1}$. We conclude that there is a path in $\mathcal{H}_\kappa[p,l]$ with weight
\[2l(2l-1)^\kappa+(2l-1)(2l(\kappa-1)+1)(2l-1)^{\kappa-1}=(2l\kappa+1)(2l-1)^\kappa~.\]
Denote this path by $P_{u,v}$. Since we already proved that the distance between any input port and any output port is at least $(2l\kappa+1)(2l-1)^\kappa=w(P_{u,v})$, we deduce that $P_{u,v}$ is a shortest path between $u,v$.

Let $P'_{u,v}$ be a different path than $P_{u,v}$ between $u$ and $v$ in $\mathcal{H}_\kappa[p,l]$. Consider the list of copies of $\mathcal{H}_{\kappa-1}[p',l]$ that $P'_{u,v}$ passes through, by the same order they appear on $P'_{u,v}$. Since $P'_{u,v}\neq P_{u,v}$, there are two cases: either this list is identical to $H_1,H_2,...,H_{2l-1}$, but for at least one $j$ the path $P'_{u,v}$ passes through $H_j$ using a different path than $P_{u,v}$, or this list is not identical to $H_1,H_2,...,H_{2l-1}$.

In the first case, by the induction hypothesis, the path that $P'_{u,v}$ uses inside $H_j$ has weight strictly larger than $(2l(\kappa-1)+1)(2l-1)^{\kappa-1}$. The path inside the other copies has weight of at least $(2l(\kappa-1)+1)(2l-1)^{\kappa-1}$, again by the induction hypothesis. Together with the $2l$ edges with weight $(2l-1)^\kappa$ that connect these copies, we get that the weight of $P'_{u,v}$ is strictly more than
\[2l(2l-1)^\kappa+(2l-1)(2l(\kappa-1)+1)(2l-1)^{\kappa-1}=(2l\kappa+1)(2l-1)^\kappa~.\]

In the second case, we "translate" the path $P'_{u,v}$ into a path $Q$ in $\ddot{B}[p,l]$, by replacing each copy of $\mathcal{H}_{\kappa-1}[p',l]$ it passes through by the corresponding vertex of $\ddot{B}[p,l]$. The path $Q$ is different from the path $P=(u,u_1,u_2,...,u_{2l-1},v)$. Since the latter is the unique $u-v$ shortest path in $\ddot{B}[p,l]$, the path $Q$ must pass through a layer of $\ddot{B}[p,l]$ more than once (otherwise its weight would be equal to the weight of $P$). That is, $Q$ passes through at least $2l+1$ internal vertices of $\ddot{B}[p,l]$ and contains at least $2l+2$ edges. This means that the path $P'_{u,v}$ 
contains at least $2l+2$ edges of weight $(2l-1)^\kappa$. Also, note that $P'_{u,v}$ (like any other input-output path in $\mathcal{H}_\kappa[p,l]$) must pass through at least $2l-1$ copies of $\mathcal{H}_{\kappa-1}[p',l]$. By the induction hypothesis, the weight of $P'_{u,v}$ is at least
\[(2l-1)(2l(\kappa-1)+1)(2l-1)^{\kappa-1}+(2l+2)(2l-1)^\kappa>(2l\kappa+1)(2l-1)^\kappa~.\]

In conclusion, the path $P_{u,v}$ has length $(2l\kappa+1)(2l-1)^\kappa$, while any other $u-v$ path in $\mathcal{H}_\kappa[p,l]$ has a larger weight. Thus, $P_{u,v}$ is a unique shortest path between $u,v$ in $\mathcal{H}_\kappa[p,l]$.

To complete the proof, we have to show that if $P_{u,v}$ and $P_{u',v'}$ share a critical edge, for some $(u,v),(u',v')\in\mathcal{P}_\kappa[p,l]$, then it must be that $(u,v)=(u',v')$. Let $(u,v),(u',v')$ be such two pairs. Since their paths share a critical edge, they must pass through the same copy of $\mathcal{H}_{\kappa-1}[p',l]$. Denote this copy by $H$. We saw that the path $P_{u,v}$ is originated in a path $P$ in the graph $\ddot{B}[p,l]$, which is the unique shortest path between $u,v$ that satisfies $(u,v)\in\ddot{\mathcal{P}}[p,l]$, and that the edges of $P$ are alternately labeled by some $a,b\in\ddot{\mathcal{L}}[p,l]$. Moreover, $(\pi(a),\pi(b))\in\mathcal{P}_{\kappa-1}[p',l]$, because $(u,v)\in\ddot{\mathcal{P}}[p,l]$ (recall that we still identify the numbers $\pi(a),\pi(b)$ with input and output ports of $\mathcal{H}_{\kappa-1}[p',l]$). Symmetrically, the path $P'$ and the labels $a',b'$ that correspond to the path $P_{u',v'}$ in $\ddot{B}[p,l]$ satisfy $(\pi(a'),\pi(b'))\in\mathcal{P}_{\kappa-1}[p',l]$. Note that the unique shortest paths between $\pi(a),\pi(b)$ and between $\pi(a'),\pi(b')$ still share a critical edge. 
Thus, by the induction hypothesis, $(\pi(a),\pi(b))=(\pi(a'),\pi(b'))$, or equivalently $(a,b)=(a',b')$.

Notice that the two paths $P,P'$ pass through the same vertex in $\ddot{B}[p,l]$, because $P_{u,v},P_{u',v'}$ pass through the same copy $H$. We also know that they are alternately labeled by the same to labels $(a,b)=(a',b')$. By Lemma \ref{lemma:SecondBaseGraph}, it must be that $P=P'$, and in particular $(u,v)=(u',v')$, otherwise they cannot have the same pair of labels $a,b$. This completes the inductive proof.

\end{proof}

We are now ready to prove our main theorem.

\begin{theorem} \label{thm:LowerBoundSmallStretch}
For infinitely many integers $n>0$, and for any integer $1<\kappa\leq\log\log n$ and real $0<\epsilon<\frac{1}{12\kappa}$, there is an $n$-vertex graph $G=(V,E)$ and a set of pairs $\mathcal{P}\subseteq V^2$ with size at least $n^{1+\frac{1}{2^{\kappa+1}-1}-o(1)}$, such that any $\mathcal{P}$-pairwise $(1+\epsilon)$-spanner for $G$ must have at least $\beta\cdot|\mathcal{P}|$ edges, where $\beta=\Omega\left(\frac{1}{\epsilon\kappa^2}\right)^\kappa$.
\end{theorem}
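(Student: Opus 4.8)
The plan is to instantiate the graph family $\mathcal{H}_\kappa[p,l]$ and the pair set $\mathcal{P}_\kappa[p,l]$ with parameters chosen so that the near-exact stretch bound is exactly what forces every unique shortest path into the spanner, and then count critical edges. First I would fix $l=\lceil 1/(6\kappa\epsilon)\rceil$, so that $l=\Theta(1/(\kappa\epsilon))$, and let $p$ (hence $n=n_\kappa[p,l]$) range over the infinitely many admissible values coming from Lemma~\ref{lemma:SecondBaseGraph}. By Lemma~\ref{lemma:ShortestPathsProperties}, for every $(u,v)\in\mathcal{P}_\kappa[p,l]$ there is a unique shortest path $P_{u,v}$ of weight $(2l\kappa+1)(2l-1)^\kappa$, while any other $u$–$v$ path has strictly larger weight; tracing the proof of that lemma, the next-shortest path is heavier by at least an additive $2(2l-1)^\kappa$ (either one copy of $\mathcal{H}_{\kappa-1}$ is traversed suboptimally, or the translated path in $\ddot B$ revisits a layer, adding at least two edges of weight $(2l-1)^\kappa$). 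Hence any alternative path has stretch at least $1+\frac{2(2l-1)^\kappa}{(2l\kappa+1)(2l-1)^\kappa}=1+\frac{2}{2l\kappa+1}>1+\epsilon$ by the choice of $l$ and the hypothesis $\epsilon<\frac{1}{12\kappa}$. Therefore any $\mathcal{P}_\kappa[p,l]$-pairwise $(1+\epsilon)$-spanner must contain $P_{u,v}$ in full, for every $(u,v)\in\mathcal{P}_\kappa[p,l]$.

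Next I would count how many critical edges each $P_{u,v}$ contributes that cannot be shared. Unrolling the recursion: $P_{u,v}$ passes through $2l-1$ copies of $\mathcal{H}_{\kappa-1}$, each of which contains (by the same argument one level down) $2l-1$ copies of $\mathcal{H}_{\kappa-2}$, and so on, so $P_{u,v}$ meets $(2l-1)^\kappa$ copies of $\mathcal{H}_0=K_{p,p}$, and in each such copy it uses exactly one (critical) edge. By the last part of Lemma~\ref{lemma:ShortestPathsProperties}, no critical edge is shared between $P_{u,v}$ and $P_{u',v'}$ for distinct pairs. Consequently the spanner contains at least $(2l-1)^\kappa\cdot|\mathcal{P}_\kappa[p,l]|$ distinct edges, giving size overhead $\beta\ge(2l-1)^\kappa=\Omega\!\left(\frac{1}{\kappa\epsilon}\right)^\kappa=\Omega\!\left(\frac{1}{\kappa^2\epsilon}\right)^\kappa$ (the extra $\kappa$ in the denominator is slack I keep for safety, matching the theorem statement).

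Finally I would translate the parameters into the claimed form. By Lemma~\ref{lemma:NumberOfVertices}, $n=n_\kappa[p,l]$ satisfies $n=\Theta\big((2l)^{\Theta(\kappa)}\xi(\sqrt p,l)^{\pm 2\kappa}\big)\cdot p^{2-2^{-\kappa}}$; since $\kappa\le\log\log n$ and $l=\mathrm{poly}(\log\log n)$ and $\xi(\sqrt p,l)=2^{O(\sqrt{\log p\log l})}=p^{o(1)}$, all these prefactors are $n^{o(1)}$, so $p=n^{(1/2+o(1))\cdot\frac{2^{\kappa+1}}{2^{\kappa+1}-1}}$ up to $n^{o(1)}$ factors, i.e. $p^2=n^{1+\frac{1}{2^{\kappa+1}-1}\pm o(1)}$. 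By Lemma~\ref{lemma:NumberOfPairs}, $|\mathcal{P}_\kappa[p,l]|\ge p^2/\xi(\sqrt p,l)^{2\kappa}=p^{2-o(1)}=n^{1+\frac{1}{2^{\kappa+1}-1}-o(1)}$, as required. Combining with the edge count above completes the proof.

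The main obstacle, and the step that needs genuine care rather than bookkeeping, is the quantitative gap claim in the first paragraph: showing that \emph{every} non-shortest $u$–$v$ path in $\mathcal{H}_\kappa[p,l]$ is heavier than $P_{u,v}$ by at least $2(2l-1)^\kappa$, uniformly over all of the (exponentially many) alternative routings and recursion levels. This is essentially a strengthening of Lemma~\ref{lemma:ShortestPathsProperties} from "strictly larger weight" to "larger weight by a concrete additive amount," and it must be proved by the same induction, tracking the additive defect through both failure cases (suboptimal sub-copy traversal versus layer revisit in $\ddot B$) and checking the worst case is a clean $+2(2l-1)^\kappa$; the $\epsilon<\frac{1}{12\kappa}$ hypothesis is exactly the amount of room one needs for this to survive division by $w(P_{u,v})$.
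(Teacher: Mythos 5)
Your proposal has a genuine gap in its central step. You claim that every $u$--$v$ path other than the unique shortest path $P_{u,v}$ is heavier by an additive $2(2l-1)^\kappa$, and you conclude from this that any $\mathcal{P}$-pairwise $(1+\epsilon)$-spanner must contain $P_{u,v}$ in full, yielding $\beta\ge(2l-1)^\kappa$. This is false: the worst-case additive gap over all alternative paths is only $2$, not $2(2l-1)^\kappa$. The $2(2l-1)^\kappa$ lower bound on the defect does hold for the ``layer revisit in $\ddot B$'' case, but the ``suboptimal sub-copy traversal'' case is where the argument breaks: a path may follow $P_{u,v}$ everywhere except deep in the recursion, where it detours around a single missing critical edge inside one copy of $\mathcal{H}_0=K_{p,p}$, incurring an additive cost of exactly $2$. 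Concretely, take $\kappa=2$ and $l=2$, so $d_G(u,v)=(2l\kappa+1)(2l-1)^\kappa=81$ and $P_{u,v}$ uses $(2l-1)^\kappa=9$ critical edges of unit weight. Removing one critical edge yields $d_S(u,v)=83$, a stretch of $83/81\approx 1.025$, which is \emph{less} than $1+\epsilon$ for the admissible $\epsilon\approx 1/(12\kappa)=1/24$. So a spanner need not retain every critical edge, and the conclusion $\beta\ge(2l-1)^\kappa$ does not follow.

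The paper gets around this by a pigeonhole argument combined with a quantitatively sharper induction. It sets $b=\lfloor(2l-1)/\kappa\rfloor+1$ and shows, by induction on $\kappa$, that if $S$ contains \emph{fewer than $b^\kappa$} of the critical edges of $P_{u,v}$, then $d_S(u,v)\ge(2l\kappa+1)(2l-1)^\kappa+2(2l-b)^\kappa$; the key accounting is that at each level one argues at least $2l-b$ of the $2l-1$ sub-copies receive fewer than $b^{\kappa-1}$ critical edges (else you'd already have $b\cdot b^{\kappa-1}=b^\kappa$). The choice $b\approx(2l-1)/\kappa$ makes $(2l-b)^\kappa\approx(2l-1)^\kappa(1-1/\kappa)^\kappa$ still a constant fraction of $(2l-1)^\kappa$, which forces a stretch exceeding $1+\epsilon$. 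Then the pigeonhole says: if $|S|<b^\kappa|\mathcal{P}|$, some pair has fewer than $b^\kappa$ critical edges in $S$ (using disjointness of critical edges across pairs), and that pair's stretch is violated. The resulting bound is $\beta\ge b^\kappa=\Omega(1/(\epsilon\kappa^2))^\kappa$, which is weaker than the $(2l-1)^\kappa$ you claimed but, unlike your bound, actually holds. The parameter setup ($l$, Lemmas 1--4, and the size of $\mathcal{P}$) in your proposal does match the paper's; the flaw is solely in the ``must contain the full path'' claim and the resulting overhead count.
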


\begin{proof}

Fix $\kappa$ and $0<\epsilon\leq\frac{1}{12\kappa}$. Let $G,\mathcal{P}$ be $\mathcal{H}_\kappa[p,l],\mathcal{P}_\kappa[p,l]$, for $l=\left\lfloor\frac{1}{6\epsilon\kappa}\right\rfloor\geq1$ and some arbitrary $p$. Denote $b=\left\lfloor\frac{2l-1}{\kappa}\right\rfloor+1$. We will show that any $\mathcal{P}$-pairwise $(1+\epsilon)$-spanner for $G$ must have size of at least $b^\kappa|\mathcal{P}|$. This proves the theorem for 
\[\beta=b^\kappa\geq\left(\frac{2l-1}{\kappa}\right)^\kappa\geq\left(\frac{\frac{1}{3\epsilon\kappa}-3}{\kappa}\right)^\kappa\geq\left(\frac{\frac{1}{3\epsilon\kappa}-\frac{3}{12\epsilon\kappa}}{\kappa}\right)^\kappa=\left(\frac{1}{12\epsilon\kappa^2}\right)^\kappa~.\]

For the size of $\mathcal{P}$, recall that the number of vertices in $G$ is $n\leq2(2l)^\kappa p^{2-\frac{1}{2^\kappa}}$, by Lemma \ref{lemma:NumberOfVertices}. 
Thus, 
\[p^2\geq\left(\frac{n}{2(2l)^\kappa}\right)^{\frac{2}{2-\frac{1}{2^\kappa}}}\geq n^{1+\frac{1}{2^{\kappa+1}-1}}\cdot\frac{(2l)^{-2\kappa}}{4}~.\]

Therefore, by Lemma \ref{lemma:NumberOfPairs}, the size of $\mathcal{P}$ is at least
\[\frac{p^2}{\xi(\sqrt{p},l)^{2\kappa}}\geq n^{1+\frac{1}{2^{\kappa+1}-1}}\cdot\frac{(2l)^{-2\kappa}}{4\cdot2^{O\left(\kappa\sqrt{\log p\log l}\right)}}=n^{1+\frac{1}{2^{\kappa+1}-1}}\cdot2^{-2\kappa\log(2l)-2-O\left(\kappa\sqrt{\log n\log l}\right)}\]
\[=n^{1+\frac{1}{2^{\kappa+1}-1}-o(1)}~.\]
Here we used Lemma \ref{lemma:SecondBaseGraph} to bound $\xi(\sqrt{p},l)$, we used the fact that $\kappa\leq\log\log n$, and we used the fact that $p\leq n$ (this is trivial, by the construction of $\mathcal{H}_\kappa[p,l]$).


Let $S$ be a subset of the edges of $G=\mathcal{H}_\kappa[p,l]$, with $|S|<b^\kappa|\mathcal{P}|=b^\kappa|\mathcal{P}_\kappa[p,l]|$. By Lemma \ref{lemma:ShortestPathsProperties}, for every pair $(u,v)\in\mathcal{P}$, the unique $u-v$ shortest path in $G$ has a disjoint set of critical edges that it goes through. 
Therefore, there must be some $(u,v)\in\mathcal{P}$ such that $S$ contains less than $b^\kappa$ of its critical edges.


We prove by induction on $\kappa\geq0$ that in the graph $\mathcal{H}_\kappa[p,l]$, if a pair $(u,v)\in\mathcal{P}_\kappa[p,l]$ has less than $b^\kappa$ of its critical edges in $S$, then
\[d_S(u,v)\geq(2l\kappa+1)(2l-1)^\kappa+2(2l-b)^\kappa~.\]
For $\kappa=0$, a pair $(u,v)\in\mathcal{P}_0[p,l]$ that has less than $b^0=1$ of its critical edges in $S$, means a pair such that $(u,v)\notin S$. Since the graph $\mathcal{H}_0[p,l]=K_{p,p}$ is bipartite, $d_S(u,v)\geq3=(2l\cdot0+1)(2l-1)^0+2(2l-b)^0$.

Fix $\kappa>0$, and let $P'_{u,v}$ be a $u-v$ shortest path in $S$, for a pair $(u,v)\in\mathcal{P}_\kappa[p,l]$ that has less than $b^\kappa$ of its critical edges in $S$. Consider the path $Q$ in $\ddot{B}[p,l]$ that is obtained by replacing each copy of $\mathcal{H}_{\kappa-1}[p',l]$ that $P'_{u,v}$ passes through by the corresponding vertex of $\ddot{B}[p,l]$. If $Q$ is not the unique shortest path $P$ between $u,v$ in $\ddot{B}[p,l]$, then the path $Q$ must pass through a layer of $\ddot{B}[p,l]$ more than once (otherwise its weight would be equal to the weight of $P$). That is, $Q$ passes through at least $2l+1$ internal vertices of $\ddot{B}[p,l]$ and contains at least $2l+2$ edges. This means that the path $P'_{u,v}$ contains at least $2l+2$ of weight $(2l-1)^\kappa$. Note that $P'_{u,v}$ must pass through at least $2l-1$ copies of $\mathcal{H}_{\kappa-1}[p',l]$, just to get from $u$ to $v$. By Lemma \ref{lemma:ShortestPathsProperties}, the weight of $P'_{u,v}$ is at least
\begin{eqnarray*}
&&(2l-1)(2l(\kappa-1)+1)(2l-1)^{\kappa-1}+(2l+2)(2l-1)^\kappa\\
&=&(2l-1)(2l(\kappa-1)+1)(2l-1)^{\kappa-1}+2l(2l-1)^\kappa+2(2l-1)^\kappa\\
&=&(2l\kappa+1)(2l-1)^\kappa+2(2l-1)^\kappa\geq(2l\kappa+1)(2l-1)^\kappa+2(2l-b)^\kappa~.
\end{eqnarray*}


If the path $Q$ equals to $P$, the unique $u-v$ shortest path in $\ddot{B}[p,l]$, then it passes through exactly $2l-1$ internal vertices of $\ddot{B}[p,l]$. Hence, $P'_{u,v}$ passes through exactly $2l-1$ of copies of $\mathcal{H}_{\kappa-1}[p',l]$. We would like to know how many of them contain less than $b^{\kappa-1}$ critical edges of $(u,v)$. Note that there cannot be less than $2l-b$ such copies: otherwise the number of copies with \textit{at least} $b^{\kappa-1}$ critical edges of $(u,v)$ is more than $2l-1-(2l-b)=b-1$, i.e., at least $b$, so there are at least $b\cdot b^{\kappa-1}=b^\kappa$ critical edges of $(u,v)$ in $S$, in contradiction. Therefore, the number $t$ of copies in which there are less than $b^{\kappa-1}$ critical edges of $(u,v)$ is at least $2l-b$. In these copies, $P'_{u,v}$ suffers a weight of at least $(2l(\kappa-1)+1)(2l-1)^{\kappa-1}+2(2l-b)^{\kappa-1}$, by the induction hypothesis. In the other $2l-1-t$ copies, $P'_{u,v}$ must have a weight of at least $(2l(\kappa-1)+1)(2l-1)^{\kappa-1}$, by Lemma \ref{lemma:ShortestPathsProperties}. Together with the $2l$ edges that connect these copies and have weight of $(2l-1)^\kappa$, we get
\begin{eqnarray*}
d_S(u,v)&=&w(P'_{u,v})=2l(2l-1)^\kappa+(2l-1-t)(2l(\kappa-1)+1)(2l-1)^{\kappa-1}\\
&&+t\left[(2l(\kappa-1)+1)(2l-1)^{\kappa-1}+2(2l-b)^{\kappa-1}\right]\\
&=&2l(2l-1)^\kappa+(2l-1)(2l(\kappa-1)+1)(2l-1)^{\kappa-1}+t\cdot2(2l-b)^{\kappa-1}\\
&=&(2l\kappa+1)(2l-1)^\kappa+t\cdot2(2l-b)^{\kappa-1}\\
&\geq&(2l\kappa+1)(2l-1)^\kappa+(2l-b)\cdot2(2l-b)^{\kappa-1}\\
&=&(2l\kappa+1)(2l-1)^\kappa+2(2l-b)^\kappa~.
\end{eqnarray*}

This completes the inductive proof. It shows that there is a pair $(u,v)\in\mathcal{P}_\kappa[p,l]=\mathcal{P}$ with
\begin{eqnarray*}
d_S(u,v)&\geq&(2l\kappa+1)(2l-1)^\kappa+2(2l-b)^\kappa\\
&=&(2l\kappa+1)(2l-1)^\kappa\left(1+\frac{2(2l-b)^\kappa}{(2l\kappa+1)(2l-1)^\kappa}\right)\\
&=&d_G(u,v)\cdot\left(1+\frac{2}{(2l\kappa+1)}\left(\frac{2l-b}{2l-1}\right)^\kappa\right)\\
&=&d_G(u,v)\cdot\left(1+\frac{2}{(2l\kappa+1)}\left(1-\frac{b-1}{2l-1}\right)^\kappa\right)\\
&\geq&d_G(u,v)\cdot\left(1+\frac{2}{(2l\kappa+1)}\left(1-\frac{1}{\kappa}\right)^\kappa\right)
\stackrel{\kappa>1}{\geq} d_G(u,v)\cdot\left(1+\frac{2}{(2l\kappa+1)}\cdot\frac{1}{4}\right)\\
&>&d_G(u,v)\cdot\left(1+\frac{2}{3l\kappa}\cdot\frac{1}{4}\right)=d_G(u,v)\cdot\left(1+\frac{1}{6l\kappa}\right)\geq d_G(u,v)\cdot\left(1+\epsilon\right)~,
\end{eqnarray*}
Where in the last step we used the definition of $l=\left\lfloor\frac{1}{6\epsilon\kappa}\right\rfloor$. Thus, $S$ cannot be a $\mathcal{P}_\kappa[p,l]$-pairwise $(1+\epsilon)$-spanner. In other words, any $\mathcal{P}_\kappa[p,l]$-pairwise $(1+\epsilon)$-spanner must have size of at least $\beta|\mathcal{P}|$, where $\beta\geq\left(\frac{1}{12\epsilon\kappa^2}\right)^\kappa$.

\end{proof}

\begin{remark}
Notice that the proof of Theorem \ref{thm:LowerBoundSmallStretch} also works for every subset of the pairs $\mathcal{P}_\kappa[p,l]$. Therefore, the phrasing of Theorem \ref{thm:LowerBoundSmallStretch} may be strengthen as follows.

For infinitely many integers $n>0$, and for any integer $1<\kappa\leq\log\log n$ and real $0<\epsilon<\frac{1}{12\kappa}$, there is an $n$-vertex graph $G=(V,E)$ and a number $Q=n^{1+\frac{1}{2^{\kappa+1}-1}-o(1)}$, such that for every integer $q\leq Q$, there is a set of pairs $\mathcal{P}\subseteq V^2$ with size $q$, such that any $\mathcal{P}$-pairwise $(1+\epsilon)$-spanner for $G$ must have at least $\beta\cdot|\mathcal{P}|$ edges, where $\beta=\Omega\left(\frac{1}{\epsilon\kappa^2}\right)^\kappa$.
\end{remark}

\subsection{Large Stretch} \label{sec:HighStretchLowerBound}

The lower bound for pairwise spanners with large stretch is achieved using a graph with high girth and large number of edges. This properties of a graph were used for proving lower bounds for distance oracles and spanners (see \cite{TZ01}) and also for hopsets (see \cite{NS22}). In particular, we use the same graph that was used in \cite{NS22}, which was introduced by Lubotzky, Phillips and Sarnak in \cite{LPS88}. This graph has the additional convenient property of being \textit{regular}, besides its high girth and large number of edges. Its exact properties are described in the following theorem.

\begin{theorem}[\cite{LPS88}] \label{thm:LPS}
For infinitely many integers $n\in\mathbb{N}$, and for every integer $k\geq1$, there exists a $(p+1)$-regular graph $G=(V,E)$ with $|V|=n$ and girth at least $\frac{4}{3}k(1-o(1))$, where $p=D\cdot n^{\frac{1}{k}}$, for some universal constant $D$.
\end{theorem}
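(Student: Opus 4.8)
The statement is the classical Ramanujan-graph construction of Lubotzky, Phillips and Sarnak, so the plan is not to prove anything new but to instantiate \cite{LPS88} with the right parameters; I note that only the regularity and the girth lower bound are used here, so the deep spectral (Ramanujan) estimate, which rests on Deligne's theorem, is not needed. Recall the construction: fix distinct primes $p,q\equiv 1\pmod 4$ with $p$ a quadratic non-residue modulo $q$, and let $G=X^{p,q}$ be the Cayley graph of $PGL_2(\mathbb{F}_q)$ with respect to the symmetric generating set of size exactly $p+1$ read off from the integral quaternions $\alpha=a_0+a_1i+a_2j+a_3k$ of norm $N(\alpha)=a_0^2+a_1^2+a_2^2+a_3^2=p$, normalized so that $a_0>0$ is odd (Jacobi's four-square theorem gives exactly $p+1$ such generators up to units), each reduced modulo $q$. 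With $p$ a non-residue this graph is bipartite and $(p+1)$-regular on $n:=|V|=|PGL_2(\mathbb{F}_q)|=q(q^2-1)$ vertices.

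The key step, which I would cite from \cite{LPS88} rather than reprove, is the girth bound, and its mechanism is the following. A non-backtracking closed walk of length $\ell$ in $G$ lifts to an ordered product $\alpha=\alpha_{i_1}\cdots\alpha_{i_\ell}$ of the chosen quaternions; non-backtracking makes $\alpha$ primitive (not divisible by $p$), its norm is $p^\ell$, and triviality of the corresponding word in $PGL_2(\mathbb{F}_q)$ forces $\alpha$ to reduce modulo $q$ to a scalar matrix, i.e. $q\mid a_1,a_2,a_3$. If $p^\ell<q^2$ then $a_1^2+a_2^2+a_3^2$ is a nonnegative multiple of $q^2$ that is at most $N(\alpha)=p^\ell<q^2$, hence $a_1=a_2=a_3=0$; then $a_0^2=p^\ell$, so $\ell$ is even and $\alpha=\pm p^{\ell/2}$ is divisible by $p$, contradicting primitivity unless $\ell=0$. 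Therefore every cycle has $p^\ell\ge q^2$, giving girth $\ge 2\log_p q$, and a parity refinement specific to the bipartite case improves this to $g(G)\ge 4\log_p q-O(1)$.

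For the parameters: given $k\ge 1$, fix any prime $p\equiv 1\pmod 4$ (infinitely many exist) and, using Dirichlet's theorem on primes in arithmetic progressions together with quadratic reciprocity, choose a prime $q\equiv 1\pmod 4$ with $p$ a non-residue modulo $q$ and $q$ minimal subject to $q\ge p^{k/3}$; by prime-gap bounds $q=p^{k/3}(1+o(1))$. Put $G=X^{p,q}$ and $n=q(q^2-1)$; letting $q$ vary over this infinite set of primes realizes infinitely many $n$. Then $G$ is $(p+1)$-regular with girth $\ge 4\log_p q-O(1)=\tfrac43\log_p(q^3)-O(1)=\tfrac43\log_p n\,(1-o(1))$, and since $n=q(q^2-1)=p^{k(1+o(1))}$ we get $\log_p n=k(1-o(1))$, so the girth is at least $\tfrac43 k(1-o(1))$; finally $n^{1/k}=p^{1+o(1)}=\Theta(p)$, so $p=D\cdot n^{1/k}$ with $D$ a universal constant that absorbs the $o(1)$ terms, the gap between $q(q^2-1)$ and $q^3$, and prime spacing (discarding small $n$). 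The main obstacle is genuinely the girth estimate — the integral-quaternion / sum-of-four-squares argument above, plus the bipartite refinement needed for the sharp constant $\tfrac43$ — together with the bookkeeping that the discrete choices of $p$ and $q$ can always be made to yield the clean relation $p=\Theta(n^{1/k})$; all of this is contained in \cite{LPS88}.
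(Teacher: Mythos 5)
The paper offers no proof here: Theorem~\ref{thm:LPS} is a direct citation of the Lubotzky--Phillips--Sarnak construction, and your proposal correctly identifies it as such, recalls the $X^{p,q}$ Cayley-graph construction, the $4\log_p q - O(1)$ girth bound in the bipartite case, and the parameter translation $q \approx p^{k/3}$ giving $n \approx p^k$, $p=\Theta(n^{1/k})$, and girth $\ge \tfrac43 k(1-o(1))$. One small wording slip: it is $p$ that ranges over infinitely many primes (with $q$ then determined as the least admissible prime near $p^{k/3}$), not $q$ varying independently, and the assertion $q=p^{k/3}(1+o(1))$ is slightly stronger than what Dirichlet/Linnik immediately gives, but $q=O(p^{k/3})$ suffices and that is what the universal constant $D$ absorbs; otherwise the bookkeeping is sound and matches how the paper invokes the result.
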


Fix $\alpha,k\geq1$ such that $k\geq\alpha+1$, and a large enough $n\in\mathbb{N}$ as in Theorem \ref{thm:LPS}. Let $G=(V,E)$ be the corresponding $(p+1)$-regular graph from Theorem \ref{thm:LPS}. The girth of $G$ is at least $\frac{4}{3}k(1-o(1))$, thus larger than $k$. Denote $\delta=\left\lfloor\frac{k}{\alpha+1}\right\rfloor$. Define the following set of pairs $\mathcal{P}_0\subseteq V^2$.
\[\mathcal{P}_0=\{(u,v)\in V^2\;|\;d_G(u,v)=\delta\}~.\]

\begin{lemma} \label{lemma:UniqueSP}
For every $(u,v)\in\mathcal{P}_0$, there is a unique shortest path $P_{u,v}$ between $u,v$ in $G$. Furthermore, for every tour $P'_{u,v}$ between $u,v$, that has length $|P'_{u,v}|\leq\alpha\delta$, we have $P_{u,v}\subseteq P'_{u,v}$.
\end{lemma}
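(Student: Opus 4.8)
The plan is to exploit the high girth of $G$ in two steps: first to show uniqueness of the shortest path, and then to show that any short-enough tour must contain that path.

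\textbf{Uniqueness of the shortest path.}
Suppose for contradiction that there are two distinct shortest paths $P,P'$ between $u$ and $v$, each of length $\delta=\lfloor k/(\alpha+1)\rfloor$. Since they are distinct, their symmetric difference contains a cycle, and concatenating $P$ with the reverse of $P'$ yields a closed walk of length $2\delta$ containing this cycle; hence $G$ has a cycle of length at most $2\delta$. But $2\delta\le \frac{2k}{\alpha+1}\le k$ (using $\alpha\ge 1$), which contradicts the girth bound $\mathrm{girth}(G)\ge\frac43 k(1-o(1))>k$ from Theorem~\ref{thm:LPS}. (One should be slightly careful: two distinct shortest $u$--$v$ paths need not be internally disjoint, so the cleanest argument is to take the first vertex $x$ after which they diverge and the next vertex $y$ where they meet again; the two sub-paths between $x$ and $y$ are internally disjoint, have equal length $\le\delta$, and together form a cycle of length $\le 2\delta\le k$, again contradicting the girth.)

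\textbf{Short tours must contain $P_{u,v}$.}
Let $P'_{u,v}$ be any $u$--$v$ tour (walk) of length $|P'_{u,v}|\le\alpha\delta$. I want to show every edge of $P_{u,v}$ appears on $P'_{u,v}$. Suppose some edge $e=(x,y)$ of $P_{u,v}$ is missing from $P'_{u,v}$. Consider the closed walk $C$ obtained by concatenating $P'_{u,v}$ with the reverse of $P_{u,v}$; it has length at most $\alpha\delta+\delta=(\alpha+1)\delta\le k$. This closed walk traverses the edge $e$ exactly once in total (once on $P_{u,v}$, never on $P'_{u,v}$), so the edge set traversed an odd number of times is nonempty; a standard fact is that such a closed walk of length $L$ contains a cycle of length at most $L\le k$. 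This again contradicts the girth bound. Hence no edge of $P_{u,v}$ can be absent, i.e. $P_{u,v}\subseteq P'_{u,v}$.

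\textbf{Anticipated obstacle.}
The delicate point is handling walks rather than simple paths: $P'_{u,v}$ is only assumed to be a tour, so the naive ``concatenate and get a short cycle'' argument needs the right lemma, namely that a closed walk in which some edge is used an odd number of times (equivalently, whose edge-multiset is not even) contains a simple cycle of length no larger than the walk's length. This follows by an induction peeling off repeated vertices / canceling pairs of parallel traversals, but it must be invoked carefully. I would also double-check the arithmetic $(\alpha+1)\delta\le k$ and $2\delta\le k$: from $\delta=\lfloor k/(\alpha+1)\rfloor$ we get $(\alpha+1)\delta\le k$ directly, and $2\delta\le\frac{2}{\alpha+1}k\le k$ since $\alpha\ge1$; both are comfortably below the girth $\frac43 k(1-o(1))$ for large $n$. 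No other step should be hard.
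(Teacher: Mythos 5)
Your proof is correct and rests on the same engine as the paper's: combine $P_{u,v}$ with the tour $P'_{u,v}$, observe that any resulting cycle has length at most $\delta+\alpha\delta=(\alpha+1)\delta\le k$, and invoke the girth bound. The one organizational difference is that the paper proves the containment claim first (arguing simply that if $P_{u,v}\not\subseteq P'_{u,v}$ then the edge-union $P_{u,v}\cup P'_{u,v}$ already contains a cycle of length at most $(\alpha+1)\delta$) and then obtains uniqueness as the special case $|P'_{u,v}|=\delta$, whereas you prove uniqueness separately first and then handle tours via the ``odd-multiplicity edges form an even subgraph'' argument. Your walk-based cycle extraction is a more explicit justification of the step that the paper states tersely, and is a valid alternative; neither route requires anything beyond the stated girth and the arithmetic $(\alpha+1)\delta\le k$.
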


\begin{proof}

Let $P_{u,v}$ be some $u-v$ shortest path in $G$, and let $P'_{u,v}$ be a $u-v$ tour with $|P'_{u,v}|\leq\alpha\delta$. If $P_{u,v}\not\subseteq P'_{u,v}$, then the union $P_{u,v}\cup P'_{u,v}$ contains a cycle. This cycle is of length at most
\[|P_{u,v}|+|P'_{u,v}|\leq\delta+\alpha\delta=(\alpha+1)\delta\leq k~,\]
by the definition of $\delta=\left\lfloor\frac{k}{\alpha+1}\right\rfloor$. This is a contradiction to the fact that the girth of $G$ is larger than $k$. Hence, $P_{u,v}\subseteq P'_{u,v}$. In case $P'_{u,v}$ is also a $u-v$ shortest path, i.e., $|P'_{u,v}|=\delta$, then $P_{u,v}\subseteq P'_{u,v}$ implies $P_{u,v}=P'_{u,v}$. That is, $P_{u,v}$ is the unique shortest path between $u,v$ in $G$.

\end{proof}

Henceforth, we use the notations from Lemma \ref{lemma:UniqueSP}, that is, we denote by $P_{u,v}$ the $u-v$ shortest path in $G$.

\begin{lemma} \label{lemma:Incompressibility}
Let $\mathcal{P}\subseteq\mathcal{P}_0$ be some subset, and suppose that $S$ is a $\mathcal{P}$-pairwise $\alpha$-spanner for $G$. Then,
\[\bigcup_{(u,v)\in\mathcal{P}}P_{u,v}\subseteq S~.\]
\end{lemma}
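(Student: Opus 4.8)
The plan is to show that every $\mathcal{P}$-pairwise $\alpha$-spanner $S$ must contain, for each pair $(u,v) \in \mathcal{P}$, the unique shortest path $P_{u,v}$, and then observe that the union of these paths is therefore contained in $S$. Since $\mathcal{P} \subseteq \mathcal{P}_0$, every pair $(u,v) \in \mathcal{P}$ lies at distance $\delta$ in $G$, so Lemma \ref{lemma:UniqueSP} applies: there is a unique shortest path $P_{u,v}$ in $G$, and any $u$-$v$ tour of length at most $\alpha\delta$ must contain $P_{u,v}$ as a subset of its edges.

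The key step is as follows. Fix $(u,v) \in \mathcal{P}$. Because $S$ is a $\mathcal{P}$-pairwise $\alpha$-spanner, we have $d_S(u,v) \leq \alpha \cdot d_G(u,v) = \alpha\delta$. In particular $S$ contains a $u$-$v$ path $P'_{u,v}$ of length $|P'_{u,v}| = d_S(u,v) \leq \alpha\delta$. Viewing $P'_{u,v}$ as a tour between $u$ and $v$ in $G$ of length at most $\alpha\delta$, Lemma \ref{lemma:UniqueSP} gives $P_{u,v} \subseteq P'_{u,v}$. Since $P'_{u,v}$ is a path in $S$, all its edges belong to $S$, hence $P_{u,v} \subseteq S$. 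As this holds for every $(u,v) \in \mathcal{P}$, taking the union yields $\bigcup_{(u,v) \in \mathcal{P}} P_{u,v} \subseteq S$, which is exactly the claim.

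There is essentially no obstacle here; the work has already been done in Lemma \ref{lemma:UniqueSP}. The one small point worth being careful about is that a shortest path in $S$ realizing $d_S(u,v)$ need not be a simple path a priori if one is sloppy, but in a graph with nonnegative edge weights a shortest path can be taken simple, and in any case Lemma \ref{lemma:UniqueSP} is stated for arbitrary tours $P'_{u,v}$ of bounded length, so simplicity is not even needed — the length bound $|P'_{u,v}| \leq \alpha\delta$ is all that is invoked. So the proof is just a one-line application of the spanner property followed by Lemma \ref{lemma:UniqueSP}.
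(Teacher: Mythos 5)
Your proof is correct and follows essentially the same route as the paper's: fix a pair in $\mathcal{P}$, invoke the stretch guarantee of $S$ to obtain a $u$-$v$ walk of length at most $\alpha\delta$, apply Lemma \ref{lemma:UniqueSP} to conclude that the unique shortest path $P_{u,v}$ lies inside it (hence inside $S$), and take the union. Your remark that Lemma \ref{lemma:UniqueSP} is stated for tours, so simplicity of the path in $S$ need not be argued, is a correct and slightly more careful reading than the paper gives explicitly.
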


\begin{proof}

Fix some $(u,v)\in\mathcal{P}\subseteq\mathcal{P}_0$. Since $S$ has stretch $\alpha$ for every pair in $\mathcal{P}$, we know that there is a $u-v$ path $P'_{u,v}\subseteq S$ with $|P'_{u,v}|\leq\alpha|P_{u,v}|=\alpha\delta$. By Lemma \ref{lemma:UniqueSP}, the $u-v$ shortest path satisfies $P_{u,v}\subseteq P'_{u,v}\subseteq S$. In conclusion, $P_{u,v}\subseteq S$ for every $(u,v)\in\mathcal{P}$, thus $\bigcup_{(u,v)\in\mathcal{P}}P_{u,v}\subseteq S$.

\end{proof}

The following lemma describes several combinatorial properties of the graph $G$.

\begin{lemma} \label{lemma:CombinatorialProperties}
The number of edges in $G=(V,E)$ is $|E|=\frac{n(p+1)}{2}$. The number of paths in $\mathcal{P}_0$ is $\frac{n(p+1)p^{\delta-1}}{2}$. For every edge $e\in E$, there are $\delta p^{\delta-1}$ pairs $(u,v)\in\mathcal{P}_0$ such that $e\in P_{u,v}$.
\end{lemma}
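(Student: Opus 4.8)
The plan is to establish the three claims by exploiting the regularity of $G$ and the high girth, with the key tool being Lemma~\ref{lemma:UniqueSP}, which tells us that a $\delta$-pair determines its shortest path uniquely, and conversely that any shortest path of length $\delta$ in $G$ arises from a $\delta$-pair (its endpoints).

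First I would count edges: since $G$ is $(p+1)$-regular on $n$ vertices, the handshake lemma gives $|E| = \frac{n(p+1)}{2}$ immediately. Next, to count $|\mathcal{P}_0|$, I would count, for each vertex $u \in V$, the number of vertices $v$ with $d_G(u,v) = \delta$; then $|\mathcal{P}_0|$ is the sum over $u$ of this quantity (each ordered pair $(u,v)$ counted once, consistent with the statement). The crucial observation is that because the girth exceeds $k \geq (\alpha+1)\delta \geq 2\delta$, the ball of radius $\delta$ around $u$ is a tree: no two distinct length-$\leq\delta$ paths from $u$ can share an endpoint other than $u$ (such a pair would close a cycle of length $\leq 2\delta < \text{girth}$). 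Hence the BFS tree from $u$ up to depth $\delta$ is a genuine $(p+1)$-ary-at-the-root, $p$-ary-below tree: $u$ has $p+1$ neighbors, and every subsequent vertex at depth $<\delta$ has exactly $p$ children (its degree is $p+1$, one edge goes back toward $u$). So the number of vertices at distance exactly $\delta$ from $u$ is $(p+1)p^{\delta-1}$. Summing over the $n$ choices of $u$ and noting the problem statement counts ordered pairs as written (so no division by $2$ here — though I should double-check against the stated formula $\frac{n(p+1)p^{\delta-1}}{2}$, which suggests they are counting unordered pairs, in which case I divide by $2$), gives $|\mathcal{P}_0| = \frac{n(p+1)p^{\delta-1}}{2}$.

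For the third claim, fix an edge $e = (x,y) \in E$. I want to count pairs $(u,v) \in \mathcal{P}_0$ with $e \in P_{u,v}$. A shortest path $P_{u,v}$ of length $\delta$ through $e$ is split by $e$ into a subpath of length $i$ from $u$ to $x$ and a subpath of length $\delta - 1 - i$ from $y$ to $v$, for some $i \in \{0, 1, \dots, \delta-1\}$; moreover the whole thing being a shortest path means the $u$-to-$x$ part is a shortest path of length $i$ avoiding $y$ as its second-to-nothing... more precisely, by Lemma~\ref{lemma:UniqueSP} and the girth bound, for each choice of a vertex $u$ at distance $i$ from $x$ "on the $x$-side" and $v$ at distance $\delta-1-i$ from $y$ "on the $y$-side," the concatenation is a path, and it is simple and shortest because any shortcut or repeated vertex would again produce a cycle of length at most $2\delta < \text{girth}$. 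Using the tree structure of the depth-$\delta$ ball again: in $G - e$, the number of vertices at distance exactly $i$ from $x$ whose shortest path to $x$ does not use $y$ is $p^i$ for $i \geq 1$ (at $x$, one of the $p+1$ edges is $e$ itself, leaving $p$ choices, then $p$ at each subsequent level) and $1$ for $i = 0$; symmetrically $p^{\delta-1-i}$ from $y$. So the count of $\delta$-paths through $e$ with the $u$-$x$ segment having length $i$ is $p^i \cdot p^{\delta-1-i} = p^{\delta-1}$ when $1 \leq i \leq \delta-2$, and the boundary cases $i=0$ and $i = \delta-1$ also give $1 \cdot p^{\delta-1} = p^{\delta-1}$. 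Summing over the $\delta$ possible values of $i$ yields $\delta p^{\delta-1}$ pairs, as claimed.

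The main obstacle I anticipate is handling the ``avoids $y$''/``is genuinely shortest'' bookkeeping cleanly: I must argue that a concatenation of a shortest $u$–$x$ path, the edge $e$, and a shortest $y$–$v$ path is itself a \emph{shortest} $u$–$v$ path of length exactly $\delta$ (not shorter), and that distinct choices of $(u,i,v)$ give distinct pairs $(u,v)$ with $e$ indeed on $P_{u,v}$. Both reduce to the same girth argument — any failure (a shortcut, a repeated vertex, or two different length-$\delta$ shortest paths between the same endpoints) manifests as a cycle of length less than the girth, which is at least $\frac{4}{3}k(1-o(1)) > k \geq (\alpha+1)\delta \geq 2\delta$ — so the uniqueness in Lemma~\ref{lemma:UniqueSP} does the heavy lifting and the remaining work is careful but routine. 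A secondary point to get right is the consistency of ordered-versus-unordered counting across the three quantities; I would settle on the unordered convention since the stated formulas $\frac{n(p+1)}{2}$ for $|E|$ and $\frac{n(p+1)p^{\delta-1}}{2}$ for $|\mathcal{P}_0|$ both carry the factor $\frac12$, while the per-edge count $\delta p^{\delta-1}$ does not (an edge is on a $\delta$-path regardless of orientation).
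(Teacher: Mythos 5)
Your proposal matches the paper's proof essentially verbatim: handshake lemma for $|E|$, the girth-implies-tree argument for the BFS ball of radius $\delta$ to count $|\mathcal{P}_0|$ (with the same division by $2$ for unordered pairs), and the split-at-$e$ decomposition summing $p^i \cdot p^{\delta-1-i}$ over $i \in \{0,\dots,\delta-1\}$ for the per-edge count. Your extra remark that one must check the concatenated path is actually shortest of length $\delta$ and that the correspondence is bijective is a small point the paper leaves implicit, but it is handled by the same girth argument, so the two proofs coincide in substance.
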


\begin{proof}

The number of edges $|E|$ is half the sum of the degrees in $G$. Since $G$ is $(p+1)$-regular, we get $|E|=\frac{n(p+1)}{2}$. 

Now fix some $u\in V$, and consider its BFS tree up to distance $\delta$. The root $u$ has $p+1$ children in this tree, and every other internal vertex has a set of $p$ children, disjoint from the children set of any other vertex in this tree. This is true since otherwise there would be a cycle of length at most $2\delta\leq(\alpha+1)\delta\leq k$, in contradiction to the girth of $G$ being larger than $k$. Thus, the number of vertices $v$ in the $\delta$-th layer of this tree, is $(p+1)p^{\delta-1}$. That is, the number of $v\in V$ such that $d_G(u,v)=\delta$ is $(p+1)p^{\delta-1}$. Hence,
\[\sum_{u\in V}|\{v\in\;|\;d_G(u,v)=\delta\}|=\sum_{u\in V}(p+1)p^{\delta-1}=n(p+1)p^{\delta-1}~.\]
In this sum, each pair $(u,v)\in\mathcal{P}_0$ is counted exactly twice, therefore $|\mathcal{P}_0|=\frac{n(p+1)p^{\delta-1}}{2}$.

The proof of the third property is very similar. Fix some edge $e=(v_1,v_2)\in E$. For every integer $i\in[0,\delta-1]$, consider the BFS tree $T^i_1$ of $v_1$ up to distance $i$. Symmetrically, $T^i_2$ denotes the BFS tree of $v_2$ up to distance $i$. As before, the children sets of the vertices in $T^i_1\cup T^{\delta-1-i}_2$ are disjoint - otherwise there would be a cycle of length at most
\[\max\{i+1+\delta-1-i,2i,2(\delta-1-i)\}<2\delta\leq k~,\]
in contradiction. Hence, the number of vertices in the $i$-th layer of $T^i_1$ is $p^i$ (note that $v_1$ has $p$ children in this tree), and the number of vertices in the $(\delta-1-i)$-th layer of $T^{\delta-1-i}_2$ is $p^{\delta-1-i}$. For every pair $(u,v)\in\mathcal{P}_0$ such that $e\in P_{u,v}$, the path $P_{u,v}$ has one end in the $i$-th layer of $T^i_1$ and the other end in the $(\delta-1-i)$-th layer of $T^{\delta-1-i}_2$, for some $i\in[0,\delta-1]$. See Figure \ref{fig:HighStretchLowerBound} for an illustration. Thus, the number of such pairs is
\[\sum_{i=0}^{\delta-1}p^i\cdot p^{\delta-1-i}=\delta p^{\delta-1}~.\]

\end{proof}

\begin{center}
\begin{figure}[ht!]
    \centering
    \includegraphics[width=6cm, height=5cm]{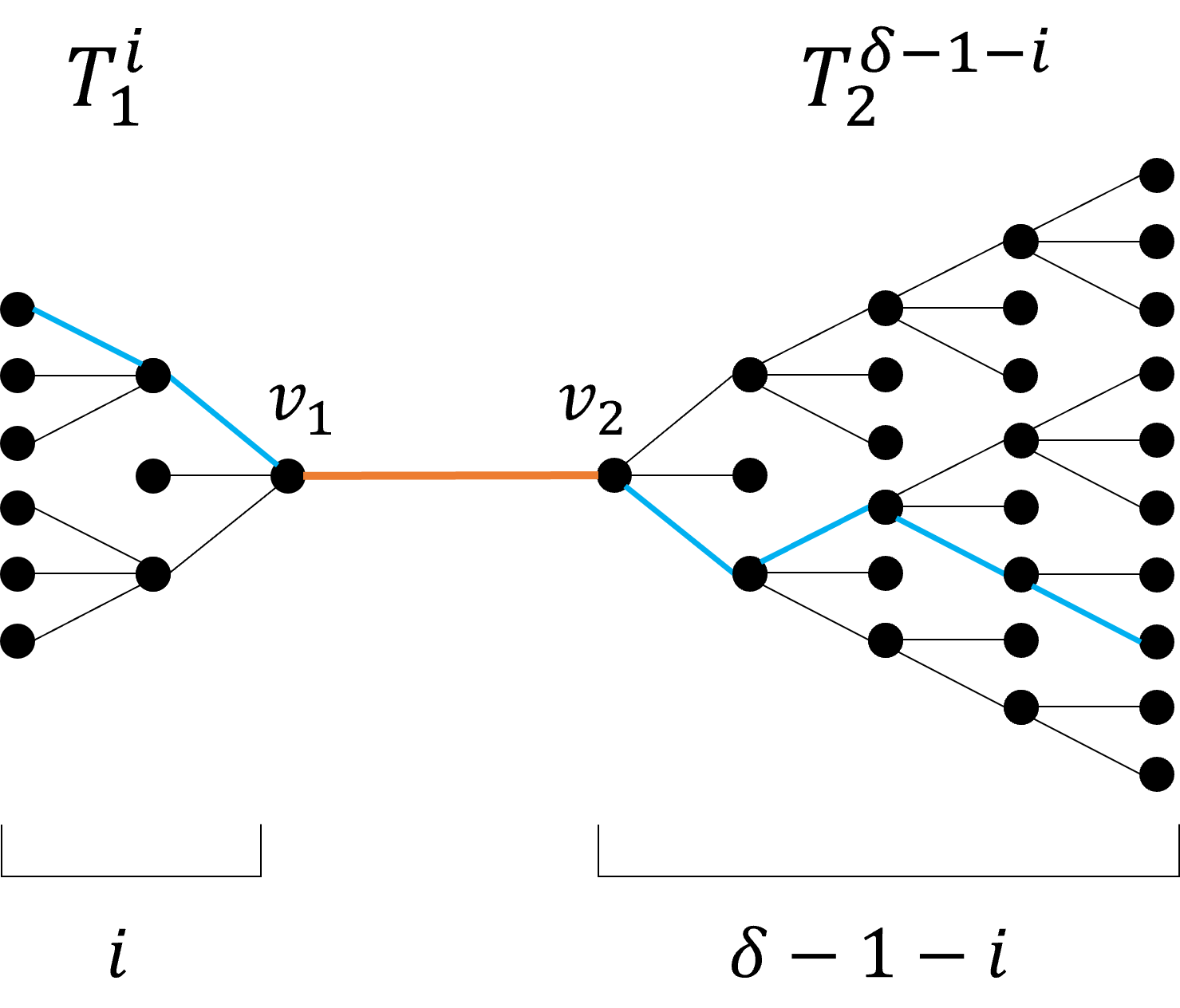}
    \caption{Given an edge $e=(v_1,v_2)$ (colored orange in the figure), we consider the BFS trees $T^i_1$ and $T^{\delta-1-i}_2$ rooted at $v_1$ and $v_2$ respectively, up to distance $i$ and $\delta-1-i$ respectively. There are no cycles within these two trees, because of the girth guarantee. By regularity, we know that each vertex in these trees, except the leaves, has exactly $p$ children. Every path of length $\delta$ that passes through $e$, such as the blue path in the figure, is determined by a leaf of $T^i_1$ and a leaf of $T^{\delta-1-i}_2$.}
    \label{fig:HighStretchLowerBound}
\end{figure}
\end{center}

We are now ready for the main theorem of this section.

\begin{theorem} \label{thm:LowerBoundLargeStretch}
For infinitely many integers $n>0$, and for any real $\alpha\geq1$ and integer $k$ such that $\alpha+1\leq k\leq\log n$, there is an $n$-vertex graph $G=(V,E)$ and a set of pairs $\mathcal{P}\subseteq V^2$ with size $\Theta\left(\frac{\alpha}{k}\cdot n^{1+\frac{1}{k}}\right)$, such that any $\mathcal{P}$-pairwise $\alpha$-spanner for $G$ must have at least $\Omega(n^{1+\frac1k})$ edges, that is, the size overhead is $\beta=\Omega\left(\frac{k}{\alpha}\right)$.
\end{theorem}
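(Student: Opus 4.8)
The plan is to instantiate $G$ as the Lubotzky--Phillips--Sarnak graph of Theorem \ref{thm:LPS} for the given $k$, so $G$ is $(p+1)$-regular on $n$ vertices with $p=Dn^{1/k}$ and girth larger than $k$. Set $\delta=\left\lfloor\frac{k}{\alpha+1}\right\rfloor$ and let $\mathcal{P}_0=\{(u,v)\in V^2 : d_G(u,v)=\delta\}$ be the set of $\delta$-pairs. Since $\alpha\ge 1$ gives $\alpha+1\le 2\alpha$, a short computation gives $\frac{k}{4\alpha}\le\delta\le\frac{k}{\alpha}$, i.e.\ $\delta=\Theta(k/\alpha)$, and $\delta\ge 1$. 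The structural facts I need are all already available: by Lemma \ref{lemma:UniqueSP} each $(u,v)\in\mathcal{P}_0$ has a unique shortest path $P_{u,v}$; by Lemma \ref{lemma:Incompressibility} every $\mathcal{P}$-pairwise $\alpha$-spanner $S$, for any $\mathcal{P}\subseteq\mathcal{P}_0$, must satisfy $\bigcup_{(u,v)\in\mathcal{P}}P_{u,v}\subseteq S$; and by Lemma \ref{lemma:CombinatorialProperties} we have $|E|=\frac{n(p+1)}{2}=\Theta(n^{1+1/k})$, $|\mathcal{P}_0|=\frac{n(p+1)p^{\delta-1}}{2}$, and every fixed edge lies on exactly $\delta p^{\delta-1}$ of the paths $\{P_{u,v}\}_{(u,v)\in\mathcal{P}_0}$.

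The heart of the proof is a first-moment subsampling. Assume $\delta\ge 2$ (the case $\delta=1$ forces $k/\alpha=\Theta(1)$ and is handled directly by taking $\mathcal{P}=\mathcal{P}_0=E$, each of whose edges is its own forced path in $S$, giving overhead $\ge 1=\Omega(k/\alpha)$). Include each pair of $\mathcal{P}_0$ in $\mathcal{P}$ independently with probability $q=\frac{c}{\delta p^{\delta-1}}$ for a suitably large constant $c$; note $q<1$ for $n$ large since $\delta p^{\delta-1}\ge 2p>c$. Then $\mathbb{E}|\mathcal{P}|=q|\mathcal{P}_0|=\frac{c\,n(p+1)}{2\delta}=\Theta\!\left(\frac{n^{1+1/k}}{\delta}\right)=\Theta\!\left(\frac{\alpha}{k}n^{1+1/k}\right)$. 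For a fixed edge $e$, the $\delta p^{\delta-1}$ pairs of $\mathcal{P}_0$ whose shortest path uses $e$ are excluded from $\mathcal{P}$ independently, so $e$ is covered by $\bigcup_{(u,v)\in\mathcal{P}}P_{u,v}$ with probability $1-(1-q)^{\delta p^{\delta-1}}\ge 1-e^{-c}$; hence the expected number of \emph{uncovered} edges is at most $e^{-c}|E|$.

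Now apply Markov's inequality twice: with probability at least $2/3$ the number of uncovered edges is below $3e^{-c}|E|$, and with probability at least $2/3$ we have $|\mathcal{P}|<3\,\mathbb{E}|\mathcal{P}|$. Choosing $c$ large enough that $3e^{-c}<\frac12$, a union bound produces a realization in which simultaneously $\bigcup_{(u,v)\in\mathcal{P}}P_{u,v}$ covers more than $|E|/2$ edges and $|\mathcal{P}|=O\!\left(\frac{\alpha}{k}n^{1+1/k}\right)$; if this $|\mathcal{P}|$ is smaller than $\frac{\alpha}{k}n^{1+1/k}$, pad it up to that order with arbitrary further $\delta$-pairs (there are $|\mathcal{P}_0|=\Theta(n^{1+\delta/k})$ of them, far more than needed, and adding pairs only enlarges the covered set). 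Fixing such a $\mathcal{P}$, Lemma \ref{lemma:Incompressibility} forces any $\mathcal{P}$-pairwise $\alpha$-spanner $S$ to contain $\bigcup_{(u,v)\in\mathcal{P}}P_{u,v}$, so $|E(S)|\ge\frac{|E|}{2}=\Omega(n^{1+1/k})$, and therefore the size overhead is $\frac{|E(S)|}{|\mathcal{P}|}=\Omega\!\left(\frac{n^{1+1/k}}{(\alpha/k)\,n^{1+1/k}}\right)=\Omega(k/\alpha)$, as claimed.

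I expect the only delicate part to be the parameter bookkeeping: verifying $\delta=\Theta(k/\alpha)$ uniformly over $\alpha+1\le k\le\log n$ (including the degenerate small-$\delta$ regime where the sampling probability would exceed $1$ and one instead takes $\mathcal{P}=\mathcal{P}_0$), checking $q<1$ so the sampling is well defined, and translating $p=\Theta(n^{1/k})$ into $|E|=\Theta(n^{1+1/k})$. The genuinely load-bearing ingredients—that $\delta$-pairs have unique shortest paths a spanner cannot avoid, and the exact counts of $\delta$-pairs and of paths through a fixed edge—are already supplied by Lemmas \ref{lemma:UniqueSP}, \ref{lemma:Incompressibility} and \ref{lemma:CombinatorialProperties}, so I do not anticipate any further structural obstacle beyond the elementary probabilistic-method step above.
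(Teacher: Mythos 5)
Your proof is correct and follows essentially the same route as the paper's: LPS graph, uniqueness of $\delta$-pair shortest paths via girth, independent subsampling of $\delta$-pairs, and a first-moment argument showing a constant fraction of edges remain covered. The only cosmetic differences are that the paper fixes sampling probability $\frac{1}{\delta p^{\delta-1}}$ and uses a Chernoff bound on $|\mathcal P|$ where you introduce a tunable constant $c$ in the probability and use Markov plus padding to hit the $\Theta(\frac{\alpha}{k}n^{1+1/k})$ target; both handle the bookkeeping equally well.
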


\begin{proof}


By Theorem \ref{thm:LPS}, for infinitely many integers $n>0$, there is a $(p+1)$-regular graph $G=(V,E)$ with $n$ vertices and girth larger than $k$, where $p=\Theta(n^{\frac{1}{k}})$. We use the same notations for $\delta$ and $\mathcal{P}_0$ as in the beginning of this section.

Let $\mathcal{P}\subseteq\mathcal{P}_0$ be a subset that is formed by sampling each pair in $\mathcal{P}_0$ independently with probability $\frac{1}{\delta p^{\delta-1}}$. The expected number of pairs in $\mathcal{P}$ is 
\[\frac{|\mathcal{P}_0|}{\delta p^{\delta-1}}=\frac{n(p+1)p^{\delta-1}}{2\delta p^{\delta-1}}=\frac{n(p+1)}{2\delta}=\frac{|E|}{\delta}~,\]
by Lemma \ref{lemma:CombinatorialProperties}. Moreover, by Chernoff bound,
\begin{equation} \label{eq:PSize}
    \Pr\left[\left||\mathcal{P}|-\frac{|E|}{\delta}\right|>\frac{|E|}{2\delta}\right]\leq2e^{-\frac{|E|}{12\delta}}=2e^{-\frac{n(p+1)}{24\delta}}\leq2e^{-\frac{n}{24\log n}}\leq2e^{-2}~,
\end{equation}
for large enough $n$, where we used the fact that $\delta\leq k\leq\log n$.

We say that a pair $(u,v)\in\mathcal{P}_0$ {\em covers} an edge $e\in E$ if $e\in P_{u,v}$. For an edge $e\in E$, the number of pairs in $\mathcal{P}_0$ that cover $e$ is $\delta p^{\delta-1}$, by Lemma \ref{lemma:CombinatorialProperties}. Therefore, the probability that none of the pairs that cover $e$ are in $\mathcal{P}$ is $(1-\frac{1}{\delta p^{\delta-1}})^{\delta p^{\delta-1}}\leq e^{-1}$. Hence, If we denote by $E'\subseteq E$ the set of edges that are not covered by any $(u,v)\in\mathcal{P}$, then $\mathbb{E}[|E'|]\leq|E|\cdot e^{-1}$. By Markov's inequality,
\begin{equation} \label{eq:E'Size}
    \Pr\left[|E'|>\frac{2}{e}|E|\right]\leq\frac{1}{2}
\end{equation}

Now, by the union bound, the probability that either $\left||\mathcal{P}|-\frac{|E|}{\delta}\right|>\frac{|E|}{2\delta}$, or $|E'|>\frac{2}{e}|E|$, is at most $2e^{-2}+\frac{1}{2}<1$, using Inequalities (\ref{eq:PSize}) and (\ref{eq:E'Size}). Therefore, there is a way to choose the subset $\mathcal{P}\subseteq\mathcal{P}_0$, such that the number of edges in $E$ that are not covered by any $(u,v)\in\mathcal{P}$ is at most $\frac{2}{e}|E|$, and such that $\frac{|E|}{2\delta}\leq|\mathcal{P}|\leq\frac{3|E|}{2\delta}$. In particular,
\[|\mathcal{P}|=\Theta\left(\frac{|E|}{\delta}\right)=\Theta\left(\frac{n(p+1)}{2\delta}\right)=\Theta\left(\frac{n^{1+\frac{1}{k}}}{\delta}\right)=\Theta\left(\frac{\alpha}{k}\cdot n^{1+\frac{1}{k}}\right)~.\]
For this choice of $\mathcal{P}$, the number of edges $e\in E$ that satisfy $e\in P_{u,v}$ for some $(u,v)\in\mathcal{P}$ is at least $\left(1-\frac{2}{e}\right)|E|$. Notice that these are exactly the edges in $\bigcup_{(u,v)\in\mathcal{P}}P_{u,v}$. By Lemma \ref{lemma:Incompressibility}, any $\mathcal{P}$-pairwise $\alpha$-spanner for $G$ must contain this set, and therefore must have size at least
\[\left(1-\frac{2}{e}\right)|E|\geq\frac{1}{4}|E|\geq\frac{1}{4}\cdot\frac{2\delta}{3}|\mathcal{P}|=\frac{\delta}{6}|\mathcal{P}|~.\]
This proves the theorem for $\beta=\frac{\delta}{6}=\Omega\left(\frac{k}{\alpha}\right)$.

\end{proof}

\section{An Upper Bound for Path-Reporting Pairwise Spanners} \label{sec:UpperBounds}

To construct path-reporting pairwise spanners with relatively large stretch (e.g., pairwise $\alpha$-spanners with $\alpha>3$) and small size, we combine hopsets that have large stretch with pairwise spanners that have small stretch. More precisely, suppose that the hopset $H$ for an $n$-vertex graph $G=(V,E)$ has stretch $\alpha$ and hopbound $\beta$. Recall that for every pair of vertices $(u,v)\in V^2$, there is a path $P_{u,v}\subseteq G\cup H$ with at most $\beta$ edges, and weight at most $\alpha\cdot d_G(u,v)$. To construct a pairwise spanner for the graph $G$ and some pairs set $\mathcal{P}$, we first add the edges of $P_{u,v}\cap G$, for every $(u,v)\in\mathcal{P}$, to the desired spanner. Note that for each pair in $\mathcal{P}$ we added at most $\beta$ edges. Next, we consider $H$ as a set of pairs of vertices in $G$, and we use a pairwise spanner $Q\subseteq E$ with small stretch $t$ on $H$. We insert the edges of $Q$ to our pairwise spanner. The result is a $\mathcal{P}$-pairwise spanner for $G$, with stretch at most $t\cdot\alpha$, and size at most $\beta\cdot|\mathcal{P}|+|Q|$. To make this pairwise spanner path-reporting, we make sure that the pairwise $t$-spanner we used is also path-reporting, and we also store in the oracle the paths $P_{u,v}$ for every $(u,v)\in\mathcal{P}$.

A similar technique, of combining a hopset with a pairwise spanner, appeared in \cite{ES23}. In \cite{ES23}, however, preserving the edges of the hopset $H$ was mostly done by relying on the properties of $H$ itself, and claiming that there is a small size \textit{exact} preserver (i.e., pairwise $1$-spanner) for $H$.

The combination described above, between a hopset and a pairwise spanner with small stretch, is relatively simple. We later show a more complicated construction that achieves better results. This construction relies on specific properties of the hopset we use. Both of these constructions of pairwise spanners with large stretch are also path-reporting.

\subsection{Warm Up: Path-Reporting Pairwise Spanners via Hopsets}

We start with the relatively simple procedure of constructing a path-reporting pairwise spanner, via a combination of a hopset and a path-reporting pairwise $t$-spanner, for small $t$. 

\begin{lemma} \label{lemma:CombineHopsetSpanner}
Let $G=(V,E)$ be an undirected weighted graph on $n$ vertices, and let $H$ be a hopset for $G$ with stretch $\alpha$, hopbound $\beta$ and size $M(n)$. Suppose that for every set of pairs $\mathcal{P}\subseteq V^2$, there is a path-reporting $\mathcal{P}$-pairwise $t$-spanner for $G$ with query time $O(1)$ and size $Q(n,|\mathcal{P}|)$.

Then, for every set of pairs $\mathcal{P}$ in $G$, there is a path-reporting $\mathcal{P}$-pairwise $t\alpha$-spanner for $G$, with query time $O(1)$ and size
\[\beta\cdot|\mathcal{P}|+Q(n,M(n))~.\]
\end{lemma}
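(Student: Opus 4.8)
The plan is to construct the desired pairwise spanner explicitly as a union of two edge sets, together with an oracle built from the two oracles we are handed, and then verify the stretch and size bounds. First I would fix, for every pair $(u,v)\in\mathcal{P}$, a $\beta$-hop path $P_{u,v}$ in $G\cup H$ with $w(P_{u,v})\le\alpha\cdot d_G(u,v)$, whose existence is guaranteed by the definition of a hopset of stretch $\alpha$ and hopbound $\beta$. Partition the edges of $P_{u,v}$ into the ``real'' edges $P_{u,v}\cap E$ and the ``hopset'' edges $P_{u,v}\cap H$. Let $E_1=\bigcup_{(u,v)\in\mathcal{P}}(P_{u,v}\cap E)$; this contributes at most $\beta\cdot|\mathcal{P}|$ edges. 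For the hopset edges, view $H$ itself as a set of pairs of vertices of $G$ (of size $M(n)$), and apply the hypothesized path-reporting $H$-pairwise $t$-spanner to obtain a subgraph $Q\subseteq E$ with $|Q|=Q(n,M(n))$ and an oracle $D_Q$ that, on query $\{x,y\}\in H$, returns an $x$--$y$ path in $Q$ of weight at most $t\cdot d_G(x,y)$ (note $d_G(x,y)$ is exactly the weight assigned to the hopset edge $\{x,y\}$). Set $S=(V,E_1\cup Q)$.

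Next I would describe the oracle $D$ for $(S,D)$: it stores, for each $(u,v)\in\mathcal{P}$, the sequence of edges of $P_{u,v}$ (tagging each as real or hopset), and it stores $D_Q$. On a query $(u,v)\in\mathcal{P}$, it walks along $P_{u,v}$; each real edge is output directly (it lies in $E_1\subseteq S$), and each hopset edge $\{x,y\}$ is replaced by calling $D_Q(\{x,y\})$ and splicing in the returned $x$--$y$ path in $Q\subseteq S$. Concatenating these pieces in order yields a $u$--$v$ walk $P$ in $S$, from which a simple $u$--$v$ path can be extracted without increasing weight. For the weight bound: the real edges of $P_{u,v}$ together contribute their total weight, and each hopset edge $\{x,y\}$ of weight $w(x,y)=d_G(x,y)$ is replaced by a path of weight at most $t\cdot d_G(x,y)=t\cdot w(x,y)$. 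Hence $w(P)\le t\sum_{e\in P_{u,v}}w(e)=t\cdot w(P_{u,v})\le t\alpha\cdot d_G(u,v)$, as required. For the query time: $P_{u,v}$ has at most $\beta$ edges, and each hopset-edge replacement costs $O(1)$ plus the length of the sub-path returned, so the total running time is $O(\beta)+O(\text{number of real edges})+\sum_{\text{hopset edges}}(O(1)+|\text{sub-path}|)=O(|P|)$, i.e.\ query time $O(1)$ in the convention of Remark \ref{remark:QueryTimeOfPathReportingStructures} (one may absorb the $O(\beta)$ term, since $|P|\ge$ the number of hopset edges and each hopset edge either contributes $\ge 1$ to $|P|$ or can be dropped).

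For the size bound, $|E_S|\le|E_1|+|Q|\le\beta\cdot|\mathcal{P}|+Q(n,M(n))$, and the storage $|D|$ is dominated by the total length of the stored paths $P_{u,v}$, which is at most $\beta\cdot|\mathcal{P}|$ words, plus $|D_Q|\le Q(n,M(n))$; so $\max\{|E_S|,|D|\}\le\beta\cdot|\mathcal{P}|+Q(n,M(n))$. The only genuinely delicate point — and the step I expect to need the most care — is bookkeeping the query time to confirm it is $O(|P|)$ rather than $O(\beta+|P|)$ with $\beta$ potentially larger than $|P|$; this is handled by the standard observation that traversing a stored $\beta$-hop path and discarding hopset edges that the oracle resolves to empty or trivial sub-paths can be folded into the cost of emitting $P$, so that the additive overhead is genuinely constant. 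Everything else is a routine concatenation-and-triangle-inequality argument.
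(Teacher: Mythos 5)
Your proposal is correct and follows essentially the same construction and argument as the paper's proof: store a $\beta$-hop $\alpha$-stretch path $P_{u,v}$ per pair, add its real edges, apply the $t$-spanner to the hopset edges viewed as pairs, and replace hopset edges at query time via the inner oracle. Your extra care in confirming that the $O(\beta)$ traversal cost folds into $O(|P|)$ (since every edge of $P_{u,v}$ contributes at least one edge to the output path) is a valid and slightly more explicit justification of the $O(1)$ query time than the paper gives, but it is the same argument in substance.
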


\begin{proof}

For every pair of vertices $(u,v)\in V^2$, there is a path $P_{u,v}\subseteq G\cup H$ with at most $\beta$ edges, and weight at most $\alpha\cdot d_G(u,v)$. We consider the hopset $H$ as a set of pairs in $G$, and conclude that there is a path-reporting $H$-pairwise $t$-spanner $(S,D)$ for $G$ with size at most $Q(n,|H|)=Q(n,M(n))$.

Define a path-reporting pairwise spanner $(R,D')$ as $R=S\cup\bigcup_{(u,v)\in\mathcal{P}}(P_{u,v}\cap E)$. In the oracle $D'$, we store the oracle $D$ and all the paths $P_{u,v}$ for every $(u,v)\in\mathcal{P}$. Given a pair $(u,v)\in\mathcal{P}$, let $P_{u,v}=(u=u_0,u_1,u_2,...,u_b=v)\subseteq E\cup H$ be the stored path for $u,v$. By the hopbound guarantee of the hopset $H$, we know that $b\leq\beta$, and by the stretch guarantee we know that $\sum_{i=0}^{b-1}d_G(u_i,u_{i+1})\leq\alpha d_G(u,v)$. For every $i\in[0,b-1]$, if $(u_i,u_{i+1})\in E$, then also $(u_i,u_{i+1})\in R$, and thus $d_R(u_i,u_{i+1})=d_G(u_i,u_{i+1})$. If $(u_i,u_{i+1})\in H$, then by the stretch guarantee of the spanner $(S,D)$, we have $d_R(u_i,u_{i+1})\leq t\cdot d_G(u_i,u_{i+1})$. See Figure \ref{fig:PreservingHopsets}.

\begin{center}
\begin{figure}[ht!]
    \centering
    \includegraphics[width=14cm, height=4cm]{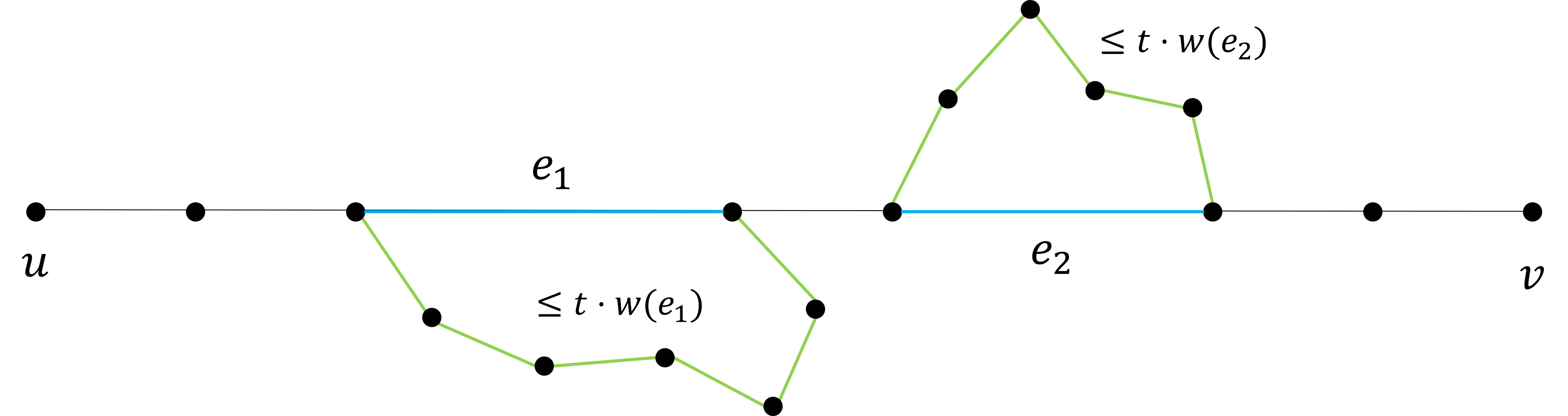}
    \caption{The path $P_{u,v}$, which consists of at most $\beta$ edges from $G\cup H$. The hop-edges $e_1$ and $e_2$, that are colored in blue, have a path in the pairwise spanner $S$ (depicted by green edges), with stretch at most $t$. Thus, when using the edge of $P_{u,v}\cap E$, and the edges of $S$, we get a path in $R=S\cup\bigcup_{(u,v)\in\mathcal{P}}(P_{u,v}\cap E)$ with weight at most $t\cdot w(P_{u,v})$. This weight, by the stretch guarantee of the hopset $H$, is at most $\alpha\cdot t\cdot d_G(u,v)$.}
    \label{fig:PreservingHopsets}
\end{figure}
\end{center}

In conclusion, we saw that $d_R(u_i,u_{i+1})\leq t\cdot d_G(u_i,u_{i+1})$ for every $i\in[0,b-1]$. Hence,
\[d_R(u,v)\leq\sum_{i=0}^{b-1}d_R(u_i,u_{i+1})\leq\sum_{i=0}^{b-1}t\cdot d_G(u_i,u_{i+1})\leq t\alpha\cdot d_G(u,v)~.\]
Therefore, the stretch of our pairwise spanner $R$ is at most $t\alpha$. 

The oracle $D'$, given a pair $(u,v)\in\mathcal{P}$, reports the edges of $P_{u,v}$ one by one. In case it encounters an edge of $H$, it uses the oracle $D$ to replace it with a path in $S$ with stretch at most $t$. Notice that this output path is contained in $R$, and by a similar proof as above, it has stretch at most $t\alpha$. The running time of this procedure is proportional to the length of the resulting path. Hence, the query time of our pairwise spanner is $O(1)$.

For the size of $R$, we have
\[|R|\leq|S|+\sum_{(u,v)\in\mathcal{P}}|P_{u,v}|\leq Q(n,M(n))+\sum_{(u,v)\in\mathcal{P}}\beta=\beta\cdot|\mathcal{P}|+Q(n,M(n))~.\]
The same bound holds for the oracle $D'$.

\end{proof}

To achieve path-reporting pairwise spanners with large stretch and small size, we use the best known hopsets that have large stretch. These hopsets were presented in \cite{BP20}, and were later reformulated and improved in \cite{NS22}. Some of the results of \cite{NS22} can be summarized by the following theorem\footnote{For stretch $3+\epsilon$, \cite{NS22} provides a hopset with hopbound $O\left(k^{\log_2(3+\frac{16}{\epsilon})}\right)$ and size $O\left(n\log k+n^{1+\frac{1}{k}}\right)$. We focus here, however, on hopsets that have larger stretch and smaller hopbound.}.

\begin{theorem}[\cite{NS22}] \label{thm:HopsetsWithLargeStretch}
Let $G=(V,E)$ be an undirected weighted graph on $n$ vertices, and let $1<c\leq k$ be two integer parameters. There is a hopset $H$ for $G$ with stretch $8c+3$, hopbound $O\left(\frac{1}{c}\cdot k^{1+\frac{2}{\ln c}}\right)$, and size $O\left(c\log_ck\cdot n+cn^{1+\frac{1}{k}}\right)$.
\end{theorem}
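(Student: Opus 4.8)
This statement is the large-stretch hopset construction of Bodwin and Pettie \cite{BP20}, in the streamlined form of \cite{NS22}, so the plan is to reproduce that construction and point out where the three quantities --- stretch, hopbound and size --- come from, rather than to build it from scratch. The construction is hierarchical. Sample a nested family $V=A_0\supseteq A_1\supseteq\cdots\supseteq A_\ell$ with $\ell=\Theta(\log_c k)$ levels, where $A_i$ retains each vertex of $A_{i-1}$ independently with a probability $p_i$ chosen so that the level sizes $|A_i|$ decay geometrically in the exponent (one arranges $|A_\ell|=O(1)$, and for the bunch estimates below one makes the per-level "fan-out" equal to $\Theta(c)$ --- this is the source of the $c$ factors in the final bounds). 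Assign to level $i$ a hop-scale $r_i\approx c^{i}$, so that the top level reaches hop-scale $r_\ell\approx k$. Process the levels $i=1,\dots,\ell$ in order: at level $i$, for every vertex $u$ run a Dijkstra computation in $G$ augmented with the hopset edges already inserted at levels $<i$, truncated to at most (roughly) $r_{i-1}$ hops, and add a hopset edge $\{u,z\}$ of weight $d_G(u,z)$ for each of the $\Theta(c)$ nearest level-$i$ centers $z$ found; call these centers the level-$i$ bunch $B_i(u)$. The hopset $H$ is the union over all levels of all these edges, which by construction are supported on shortest-path distances of $G$.

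\textbf{Size.} Charge each level-$i$ edge to the vertex $u$ whose bunch contains it, so $|H|\le\sum_{i=1}^{\ell}\sum_{u\in V}|B_i(u)|$. Since $|B_i(u)|=\Theta(c)$ once $A_{i+1}$ is sufficiently sparse and is otherwise at most $|A_i|$, the standard Thorup--Zwick-style summation (with the sampling rates as above, and replacing the expected bunch sizes by worst-case bounds via the usual concentration/derandomization argument) yields $|H|=O\!\left(c\log_c k\cdot n+c\,n^{1+1/k}\right)$: the first term collects the $\ell$ levels whose bunches have already collapsed to size $\Theta(c)$, and the second is the geometric sum of the earlier levels, dominated by the bottommost one, whose sampling rate is $\approx n^{1/k}$.

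\textbf{Stretch and hopbound --- the crux.} Fix $u,v$ and a shortest path $\pi$ between them, and run a path-climbing argument along $\pi$: maintain a current vertex and, whenever continuing along $\pi$ with single graph edges would exhaust the hop budget allotted to the current level, "lift" to a nearby level-$i$ center through a level-$i$ hopset edge, traverse one hopset edge between level-$i$ centers, and descend again. Two things must be shown. First, the multiplicative error stays $8c+3$: each lift-traverse-descend costs a constant times the local length of $\pi$, and --- this is the decisive point --- because the hop-scales $r_i$ grow geometrically, one level-$i$ excursion already accounts for a geometrically long portion of $\pi$, so the per-level errors do \emph{not} multiply across the $\log_c k$ levels; a careful amortization over the levels collapses the naive $c^{\Theta(\log_c k)}$ blow-up to a single $O(c)$ factor, with the explicit constant $8c+3$ falling out of tracking the three detour segments at the decisive level. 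Second, the number of edges on the resulting path is $O\!\left(\tfrac1c\,k^{1+2/\ln c}\right)$: counting, level by level, how many level-$i$ excursions are needed to cover $\pi$ gives a product over the $\ell=\log_c k$ levels of a per-level overhead $e^{O(1)}$ times the base cost $\approx k/c$ contributed at the top level, and $\prod_i e^{O(1)}=e^{O(\log_c k)}=k^{O(1/\ln c)}$, which is exactly where the unusual factor $k^{2/\ln c}$ originates. I expect the main obstacle to be precisely this joint bookkeeping --- choosing the per-level hop budgets and the radii $r_i$ so that the stretch telescopes to $O(c)$ while the hopbound stays $O(k^{1+2/\ln c}/c)$ --- together with verifying that the bounded-hop Dijkstra at level $i$ really does have access to all the lower-level hopset edges it needs, so that the climbing path it certifies is genuinely realizable within the stated hop budget.
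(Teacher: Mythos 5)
The paper does not re-prove Theorem~\ref{thm:HopsetsWithLargeStretch}: in Appendix~\ref{sec:ProofExtendedVersion} it cites the stretch and hopbound from \cite{NS22} as a black box (Theorem~\ref{thm:StretchAndHopbound}) and only re-derives the size (Lemma~\ref{lemma:HopsetSize}). The construction recalled there is the Thorup--Zwick pivot/bunch hierarchy $V=A_0\supseteq\cdots\supseteq A_F=\emptyset$ with the truncation function $f(i)=\lfloor i/c\rfloor c+c-1$ controlling which bunch levels each vertex connects to, together with the exponent sequence $\lambda_0=1$, $\lambda_i=1+\sum_{l<f^{-1}(i)}\lambda_l$, which satisfies $\lambda_{ac+b}=(c+1)^a$.

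Your sketch misstates two parameters that the argument actually hinges on, and the errors propagate. First, the bunches are not of size $\Theta(c)$: the expected size of $B_j(u)$ is $\Theta\bigl(\delta^{-1}n^{\lambda_j/k}\bigr)$, and since $\lambda_j$ grows geometrically in $\lfloor j/c\rfloor$, intermediate-level bunches are polynomially large in $n$. The $O(c\,n^{1+1/k})$ size term comes from the fact that the expected level size $|A_{f^{-1}(j)}|$ shrinks exactly in step with the growing bunch size, so their product telescopes; it is not the result of a $\Theta(c)$ fan-out cap, and imposing such a cap would break the TZ property that $B_j(u)$ contains \emph{every} $A_j$-vertex strictly closer to $u$ than $p_{j+1}(u)$, which is what the stretch proof uses. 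Second, the hierarchy has $F=c\lceil\log_{c+1}(k+1)\rceil=\Theta(c\log_c k)$ levels, not $\Theta(\log_c k)$; under your stated parameters the arithmetic does not close, since $\Theta(\log_c k)$ levels of $\Theta(c)$ fan-out would leave $|A_\ell|\approx n/k$ rather than $O(1)$. Finally, the stretch/hopbound phase of your write-up names the key difficulty (``a careful amortization collapses $c^{\Theta(\log_c k)}$ to $O(c)$'') but does not supply it; this is precisely the inductive argument in \cite{NS22} that the present paper declines to reproduce, and it cannot be reconstructed from your outline.
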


In addition, we use the state-of-the-art path-reporting pairwise spanner with stretch $3+\epsilon$, from Theorem \ref{thm:Near3PairwiseSpanner}. Lemma \ref{lemma:CombineHopsetSpanner}, when applied with Theorems \ref{thm:HopsetsWithLargeStretch},\ref{thm:Near3PairwiseSpanner}, provides the following result.

\begin{theorem} \label{thm:PairwiseSpanner1}
Let $G=(V,E)$ be an undirected weighted graph on $n$ vertices, and let $\mathcal{P}\subseteq V^2$ be a set of pairs of vertices in $G$. For every two integer parameter $1<c\leq k$, there is a path-reporting $\mathcal{P}$-pairwise $O(c)$-spanner for $G$, with query time $O(1)$ and size 
\[O\left(|\mathcal{P}|\cdot\frac{k^{1+\frac{2}{\ln c}}}{c}+ck^9\log_ck\cdot n+ck^9\cdot n^{1+\frac{1}{k}}\right)~.\]
\end{theorem}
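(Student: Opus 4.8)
The plan is to instantiate Lemma~\ref{lemma:CombineHopsetSpanner} with two ingredients already available in the paper: the large-stretch hopset of Theorem~\ref{thm:HopsetsWithLargeStretch} in the role of $H$, and the path-reporting $(3+\epsilon)$-pairwise spanner of Theorem~\ref{thm:Near3PairwiseSpanner} in the role of the low-stretch spanner. Concretely, for the given integers $1<c\le k$ I would take the hopset $H$ from Theorem~\ref{thm:HopsetsWithLargeStretch}, which has stretch $\alpha=8c+3$, hopbound $\beta=O\!\left(\tfrac1c k^{1+\frac{2}{\ln c}}\right)$, and size $M(n)=O\!\left(c\log_c k\cdot n+cn^{1+\frac1k}\right)$. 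For the low-stretch part I would apply Theorem~\ref{thm:Near3PairwiseSpanner} with a fixed constant $\epsilon$, so that $t=3+\epsilon=O(1)$, and the relevant size function is $Q(n,m)=O\!\left(m\cdot\beta_2+n\log k+n^{1+\frac1k}\right)$ with $\beta_2=k^{\log_{4/3}(12+\frac{40}{\epsilon})}$; as established in \cite{ES23} this spanner is path-reporting with $O(1)$ query time, which is exactly what Lemma~\ref{lemma:CombineHopsetSpanner} requires.

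The single quantitative point to pin down is the choice of $\epsilon$. Since $(4/3)^9>13$, taking $\epsilon=40$ gives $12+\tfrac{40}{\epsilon}=13<(4/3)^9$, hence $\log_{4/3}(12+\tfrac{40}{\epsilon})<9$ and therefore $\beta_2<k^9$. (Theorem~\ref{thm:Near3PairwiseSpanner} requires $k\ge3$; in the boundary case $k=2$, i.e.\ $c=k=2$, one simply invokes it with spanner-parameter $3$ in place of $k$, which only shrinks the additive term and leaves every resulting quantity a constant absorbed by the $O$-notation.) With these choices Lemma~\ref{lemma:CombineHopsetSpanner} directly yields a path-reporting $\mathcal{P}$-pairwise $(t\alpha)$-spanner with query time $O(1)$, where $t\alpha=(3+\epsilon)(8c+3)=O(c)$, so the stretch is as claimed.

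It then remains only to simplify the size bound $\beta\cdot|\mathcal{P}|+Q(n,M(n))$. The first term is $O\!\left(\tfrac{k^{1+2/\ln c}}{c}\cdot|\mathcal{P}|\right)$, matching the first term of the statement. For the second, using $\beta_2=O(k^9)$,
\[
Q(n,M(n))=O\!\left(M(n)\cdot k^9+n\log k+n^{1+\frac1k}\right)=O\!\left(ck^9\log_c k\cdot n+ck^9 n^{1+\frac1k}+n\log k+n^{1+\frac1k}\right).
\]
Since $c\ge2$, $k\ge2$, and $\log_c k\ge1$ (because $c\le k$), we have $ck^9\log_c k\cdot n\ge 2k^9 n\ge n\log k$ and $ck^9 n^{1+1/k}\ge n^{1+1/k}$, so the last two terms are absorbed, and combining with the first term gives precisely the claimed size
\[
O\!\left(|\mathcal{P}|\cdot\frac{k^{1+\frac{2}{\ln c}}}{c}+ck^9\log_c k\cdot n+ck^9 n^{1+\frac1k}\right).
\]
Path-reportingness and the $O(1)$ query time are inherited verbatim from Lemma~\ref{lemma:CombineHopsetSpanner}. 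I expect no genuine obstacle here beyond this bookkeeping: the only things to be careful about are tracking the $k\ge3$ hypothesis of the cited $(3+\epsilon)$-spanner, and checking that multiplying the hopset's $O(n)$ and $O(n^{1+1/k})$ size terms by $\beta_2=O(k^9)$ still stays inside the claimed bound — which it does exactly because we chose $\epsilon$ so that $\beta_2<k^9$.
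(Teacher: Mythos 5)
Your proposal is correct and coincides with the paper's own proof: the paper also instantiates Lemma~\ref{lemma:CombineHopsetSpanner} with the hopset of Theorem~\ref{thm:HopsetsWithLargeStretch} and the $(3+\epsilon)$-pairwise spanner of Theorem~\ref{thm:Near3PairwiseSpanner} at $\epsilon=40$, yielding stretch $(3+40)(8c+3)=O(c)$ and absorbing $\beta_2=k^{\log_{4/3}13}\le k^9$ and the lower-order $n\log k$, $n^{1+1/k}$ terms exactly as you do. Your handling of the $k\ge 3$ hypothesis of Theorem~\ref{thm:Near3PairwiseSpanner} is a small point the paper glosses over, but it does not change the argument.
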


\begin{proof}

In Theorem \ref{thm:Near3PairwiseSpanner}, choose $\epsilon=40$. By Lemma \ref{lemma:CombineHopsetSpanner}, applied with Theorems \ref{thm:HopsetsWithLargeStretch},\ref{thm:Near3PairwiseSpanner}, we get a path-reporting pairwise spanner with stretch $(3+\epsilon)\cdot(8c+3)=344c+129=O(c)$, and size
\begin{eqnarray*}
&&O\left(\frac{1}{c}\cdot k^{1+\frac{2}{\ln c}}\right)\cdot|\mathcal{P}|+O\left((c\log_ck\cdot n+cn^{1+\frac{1}{k}})\cdot k^{\log_{4/3}(12+\frac{40}{\epsilon})}+n\log k+n^{1+\frac{1}{k}}\right)\\
&=&O\left(|\mathcal{P}|\cdot\frac{k^{1+\frac{2}{\ln c}}}{c}+ck^9\log_ck\cdot n+ck^9\cdot n^{1+\frac{1}{k}}\right)~.
\end{eqnarray*}

\end{proof}

\subsection{Path-Reporting Pairwise Spanner with Improved Size}

In this section we further improve the size of the path-reporting pairwise spanner from Theorem \ref{thm:PairwiseSpanner1}, by decreasing the coefficient $k^9$ of $c\log_ck\cdot n$ and of $cn^{1+\frac{1}{k}}$. To do that, we first prove an extended version of Theorem \ref{thm:HopsetsWithLargeStretch}. In this version, we observe that when considering the hopset from \cite{NS22} as a set of pairs of vertices, a large part of this set is easy to preserve using shortest paths. That is, for this subset of pairs, there is a path-reporting \textit{preserver} with small size. For the other pairs, we will use a path-reporting pairwise spanner with small stretch, as we did in Lemma \ref{lemma:CombineHopsetSpanner}.

\begin{theorem}[Extended version of Theorem \ref{thm:HopsetsWithLargeStretch}] \label{thm:HopsetsExtendedVersion}
Let $G=(V,E)$ be an undirected weighted graph on $n$ vertices, and let $1<c\leq k$ be two integer parameters. 
There is a hopset $H$ for $G$ with stretch $8c+3$, hopbound $O\left(\frac{1}{c}\cdot k^{1+\frac{2}{\ln c}}\right)$, and size $O\left(cn\log_ck+ck^{\frac{9}{c-1}}n^{1+\frac{1}{k}}\right)$.

Moreover, the hopset $H$ can be divided into three disjoint subsets $H=H_1\cup H_2\cup H_3$, such that $H_1$ has a path-reporting preserver with size $O(cn\log_ck)$, $H_2$ has a path-reporting preserver with size $O\left(ck^{\frac{9}{c-1}}n^{1+\frac{1}{k}}\right)$, and 
\[|H_3|=O\left(ck^{-9}n^{1+\frac{1}{k}}\right)~.\]

\end{theorem}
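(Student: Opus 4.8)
The plan is to revisit the Neiman--Sorkin--Parter-style hopset construction of \cite{NS22} (which underlies Theorem~\ref{thm:HopsetsWithLargeStretch}) and track \emph{which hopset edges are added at which scale/level}, partitioning them accordingly. Recall that these hopsets are built hierarchically: one takes a sequence of superclusters (from a hopset hierarchy of depth roughly $\log_c k$), and at each level one adds edges connecting cluster centers to nearby vertices/centers. The key observation to extract is that hopset edges fall into three regimes: (i) edges incident to a ``popular'' (densely sampled, i.e. low-level) center, of which there are few \emph{centers} so their incident edges lie in a bounded-degree neighborhood structure; (ii) the bulk of the edges, which come in $O(\log_c k)$ levels each contributing roughly $k^{O(1/(c-1))}n^{1+1/k}$ edges whose realizing shortest paths can be charged disjointly; and (iii) a genuinely small leftover set $H_3$ of size $O(ck^{-9}n^{1+1/k})$. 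The ``$k^{9/(c-1)}$ vs.\ $k^{-9}$'' split is the arithmetic heart: it is exactly the bound from Theorem~\ref{thm:HopsetsWithLargeStretch} re-optimized so that the dominant term is pushed into the efficiently-preservable part $H_2$, leaving a polynomially-in-$k$ \emph{smaller} residue for $H_3$.

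\textbf{First I would} establish the stated size and stretch/hopbound bound for $H$ itself: this should follow from Theorem~\ref{thm:HopsetsWithLargeStretch} essentially verbatim, except that the size term $O(cn^{1+1/k})$ is refined to $O(cn\log_ck + ck^{9/(c-1)}n^{1+1/k})$ by being more careful about the per-level edge count. The point is that the NS22 hopset at each of the $\approx\log_c k$ levels adds $O(n^{1+1/k})$ edges only in the worst case; with the parameters chosen here the per-level contribution is $O(k^{9/(c-1)}n^{1+1/k})$ and summing the geometric-ish series over levels keeps it at that order (up to the additive $cn\log_ck$ term for the bottom-level bounded-degree part). This is the routine-but-tedious bookkeeping step, and I would cite \cite{NS22} for the underlying distance/hopbound analysis rather than reprove it.

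\textbf{The main work} is the decomposition and the two preserver claims. For $H_1$: take it to be the hopset edges added at the coarsest level, incident to the $O(n/\!\cdot)$ top-level centers; since each such vertex connects to $O(\mathrm{poly}(c,\log_c k))$ centers and the relevant shortest paths form a forest-like / bounded-overlap structure (as in the source-wise-to-preserver arguments, e.g.\ the ``forest'' observation used later in the paper), a shortest-path preserver of size $O(cn\log_c k)$ exists and is path-reporting by simply storing the tree(s). For $H_2$: these are the edges from the intermediate levels; the crucial structural fact is that for hopset edges $(x,y)$ generated at a fixed level, one can select canonical shortest $x$--$y$ paths that are \emph{consistent} (share prefixes/suffixes through common cluster centers), so the union of all these paths has size only $O(\text{\#edges} + n) = O(ck^{9/(c-1)}n^{1+1/k})$ --- i.e. a distance preserver in the sense of \cite{CE05,B21} restricted to structured pair sets, made path-reporting by storing the path decomposition. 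Finally $H_3$ is whatever remains, and the re-optimized parameter choice forces $|H_3| = O(ck^{-9}n^{1+1/k})$ directly from the per-level counts. I expect the \textbf{hardest obstacle} to be the $H_2$ preserver bound: one must argue that canonical shortest paths for same-level hopset edges genuinely overlap enough (or are few enough) to admit a near-linear-in-their-count preserver, which requires opening up the NS22 cluster structure and exploiting that each such edge's path passes through a small set of ``pivot'' centers --- this is where a clean, self-contained argument is delicate, and I would likely factor it out as a separate lemma about the NS22 hopset's ``supporting structure,'' analogous in spirit to the small-supporting-size observation of \cite{ES23} but proved here for the large-stretch hopset.
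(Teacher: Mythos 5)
Your high-level plan --- open up the NS22 construction, partition the hopset edges by how they arise, build a cheap forest-style preserver for the pivot part, charge the bulk to cluster trees, and isolate a small residue --- is aligned with the paper's strategy. But there is a genuine gap at the heart of the argument: \textbf{you never identify the modification to the NS22 construction that makes the small $H_3$ possible.} In the original NS22 hopset, the $\{\lambda_i\}$ recursion is set up precisely so that \emph{every} bunch level contributes $\Theta(n^{1+1/k})$ hopset edges in expectation (the exponent always telescopes to $1+1/k$); there is no ``geometric-ish'' decay across levels, and no partition by level produces a leftover set of size $o(n^{1+1/k})$. The paper's actual move is to change the hierarchy sampling: it samples $A_{i+1}$ from $A_i$ with an \emph{extra} downsampling factor $\delta$ (each vertex of $A_i$ survives with probability $\delta \cdot n^{-\lambda_i/k}$), and then sets $\delta = k^{-9/(c-1)}$. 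This is the ``re-optimized parameter choice'' you gesture at but never specify, and it is the key new ingredient: it deliberately \emph{increases} the number of bunch edges at low bunch levels (to $O((c/\delta)\, n^{1+1/k}) = O(c\,k^{9/(c-1)} n^{1+1/k})$, which is why the $H_2$-preserver and overall size bounds are worse than in Theorem~\ref{thm:HopsetsWithLargeStretch}) in exchange for shrinking $A_c, A_{c+1}, \dots$ by a factor $\delta^c$, so that the bunch edges at levels $j \ge c$ total only $O(c\,\delta^{c-1} n^{1+1/k}) = O(c\,k^{-9} n^{1+1/k})$. Your claim that ``being more careful about the per-level edge count'' refines the size bound has the direction reversed: the bound on $H_2$ is intentionally \emph{loosened} to make $H_3$ tiny.

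Two further misalignments. First, the decomposition itself: the paper takes $H_1$ to be \emph{all} pivot edges $\bigcup_{u,i}\{(u,p_i(u))\}$ (not the ``coarsest level'' edges), $H_2$ to be the bunch edges with bunch index $j \in [0,c-1]$ (where $f^{-1}(j)=0$, so every vertex of $V$ contributes), and $H_3$ to be the bunch edges with $j \ge c$ (where only the $\delta$-decimated $A_c$ participates). Your description of the three regimes is fuzzy and partly inverted (low levels are densely sampled and have \emph{many} centers, not few). Second, you expect the $H_2$ preserver to be the hardest obstacle, but in fact it is the standard Thorup--Zwick cluster argument: since $f^{-1}(j)=0$ for $j<c$, the clusters $C(v)$ for $v\in A_j$ are defined over all of $V$, and $\bigcup_{u\in C(v)} P_{u,v}$ is a tree of size $\le |C(v)|$; summing over $v$ and $j$ gives exactly $\sum_j\sum_u |B_j(u)| = O((c/\delta) n^{1+1/k})$. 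There is no delicate ``canonical consistent paths'' argument needed, nor a supporting-size lemma in the style of \cite{ES23}; the delicacy is entirely in the $\delta$-tweak and in splitting the bunch edges at level $j=c$, both of which your sketch is missing.
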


Note that Theorem \ref{thm:HopsetsExtendedVersion} provides a path-reporting preserver only for $H_1,H_2$, while for $H_3$, it only bounds the size. The proof of this theorem involves diving into the details of the proof of Theorem \ref{thm:HopsetsWithLargeStretch} from \cite{NS22}. We bring the full proof in Appendix \ref{sec:ProofExtendedVersion}.

Given Theorem \ref{thm:HopsetsExtendedVersion}, we use the same scheme as in the proof of Lemma \ref{lemma:CombineHopsetSpanner} to produce a path-reporting pairwise spanner. This time, we use the path-reporting pairwise spanner from Theorem \ref{thm:Near3PairwiseSpanner} only on $H_3$. For $H_1,H_2$, we simply use their path-reporting preserver from Theorem \ref{thm:HopsetsExtendedVersion}.

\begin{theorem} \label{thm:PairwiseSpanner2}
Let $G=(V,E)$ be an undirected weighted graph on $n$ vertices, and let $\mathcal{P}\subseteq V^2$ be a set of pairs of vertices in $G$. For every two integer parameter $1<c\leq k$, there is a path-reporting $\mathcal{P}$-pairwise $O(c)$-spanner for $G$ with size 
\[O\left(|\mathcal{P}|\cdot\frac{k^{1+\frac{2}{\ln c}}}{c}+cn\log_ck+ck^{\frac{9}{c-1}}n^{1+\frac{1}{k}})\right)~.\]
\end{theorem}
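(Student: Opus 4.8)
The plan is to reuse the hopset-plus-pairwise-spanner combination of Lemma~\ref{lemma:CombineHopsetSpanner}, but to feed it the refined decomposition $H=H_1\cup H_2\cup H_3$ of Theorem~\ref{thm:HopsetsExtendedVersion}, so that the (relatively expensive) pairwise $(3+\epsilon)$-spanner of Theorem~\ref{thm:Near3PairwiseSpanner} is charged only against the small set $H_3$, while $H_1$ and $H_2$ are handled by their exact path-reporting preservers. Concretely, let $H$ be the hopset from Theorem~\ref{thm:HopsetsExtendedVersion}, with stretch $8c+3$, hopbound $\beta=O\!\left(\tfrac1c k^{1+2/\ln c}\right)$, and decomposition $H=H_1\cup H_2\cup H_3$. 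For every $(u,v)\in\mathcal{P}$ fix a $\beta$-hop path $P_{u,v}\subseteq G\cup H$ of weight at most $(8c+3)\,d_G(u,v)$. The spanner $S$ consists of: (i) all edges of $\bigcup_{(u,v)\in\mathcal{P}}(P_{u,v}\cap E)$; (ii) the path-reporting preserver of $H_1$ given by Theorem~\ref{thm:HopsetsExtendedVersion}; (iii) the path-reporting preserver of $H_2$ given by Theorem~\ref{thm:HopsetsExtendedVersion}; and (iv) a path-reporting $H_3$-pairwise $(3+\epsilon)$-spanner from Theorem~\ref{thm:Near3PairwiseSpanner} with $\epsilon=40$ (and integer parameter $k$, using $3$ instead if $k<3$). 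The oracle stores the paths $\{P_{u,v}\}$ together with the oracles of the three structures in (ii)--(iv).

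For the stretch I would argue along $P_{u,v}=(u=u_0,u_1,\dots,u_b=v)$ with $b\le\beta$, exactly as in Lemma~\ref{lemma:CombineHopsetSpanner}: an edge $(u_i,u_{i+1})\in E$ survives in $S$; an edge in $H_1\cup H_2$ is replaced by a $u_i$--$u_{i+1}$ path of weight exactly $d_G(u_i,u_{i+1})$ from the relevant preserver; and an edge in $H_3$ is replaced by a path of weight at most $(3+40)\,d_G(u_i,u_{i+1})$ from the $H_3$-pairwise spanner. Summing, $d_S(u,v)\le 43\sum_{i=0}^{b-1}d_G(u_i,u_{i+1})\le 43(8c+3)\,d_G(u,v)=O(c)\,d_G(u,v)$. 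On a query $(u,v)$ the oracle walks $P_{u,v}$ and splices in the stored sub-paths for the hop-edges; since all ingredient structures are path-reporting with $O(1)$ query time, the result is path-reporting with $O(1)$ query time.

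The size accounting is the only computation, and it is where the gain over Theorem~\ref{thm:PairwiseSpanner1} appears. Part (i) adds at most $\beta\,|\mathcal{P}|=O\!\left(|\mathcal{P}|\cdot k^{1+2/\ln c}/c\right)$ edges; parts (ii) and (iii) add $O(cn\log_c k)$ and $O\!\left(ck^{9/(c-1)}n^{1+1/k}\right)$ by Theorem~\ref{thm:HopsetsExtendedVersion}. For part (iv), plugging $\epsilon=40$ into Theorem~\ref{thm:Near3PairwiseSpanner} gives coefficient $\beta_2=k^{\log_{4/3}(12+40/\epsilon)}=k^{\log_{4/3}13}\le k^{9}$, and since $|H_3|=O\!\left(ck^{-9}n^{1+1/k}\right)$ we get $|H_3|\cdot\beta_2=O\!\left(cn^{1+1/k}\right)$, plus the additive $O(n\log k+n^{1+1/k})$ of that theorem. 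Absorbing $n\log k$ into $cn\log_c k$ (using $c\ge\log c$) and $cn^{1+1/k},n^{1+1/k}$ into $ck^{9/(c-1)}n^{1+1/k}$ yields the claimed size. I do not expect any obstacle in this proof itself; the real work is in Theorem~\ref{thm:HopsetsExtendedVersion} (Appendix~\ref{sec:ProofExtendedVersion}), where one must open up the hopset construction of \cite{NS22} and show that, apart from an $O(ck^{-9}n^{1+1/k})$-sized remainder $H_3$, the hopset edges lie on short shortest paths that can be bundled into two cheap preservers $H_1,H_2$.
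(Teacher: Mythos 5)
Your proposal matches the paper's own proof essentially step for step: same hopset decomposition from Theorem~\ref{thm:HopsetsExtendedVersion}, exact preservers for $H_1,H_2$, the $(3+\epsilon)$-spanner with $\epsilon=40$ applied only to $H_3$, the same stretch bound $43(8c+3)$, and the same size accounting (including the absorption of $n\log k$ into $cn\log_c k$ via $\log_2 k\le c\log_c k$). No gaps; this is correct.
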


\begin{proof}

Let $H$ be the hopset from Theorem \ref{thm:HopsetsExtendedVersion}, and let $H_1,H_2,H_3$ be the corresponding partition of $H$. Let $(S,D)$ be the path-reporting pairwise spanner from Theorem \ref{thm:Near3PairwiseSpanner} for the graph $G$ and the set of vertices pairs $H_3$. For $(S,D)$, we choose the same parameter $k$ as for the hopset $H$. The parameter $\epsilon$ is chosen as $\epsilon=40$, so that $(S,D)$ has stretch $43$ and size
\begin{eqnarray*}
O\left(|H_3|\cdot k^{\log_{4/3}13}+n\log k+n^{1+\frac{1}{k}}\right)&=&O\left(ck^{-9}n^{1+\frac{1}{k}}\cdot k^9+n\log k+n^{1+\frac{1}{k}}\right)\\
&=&O\left(n\log k+cn^{1+\frac{1}{k}}\right)~.
\end{eqnarray*}

Let $(E_1,D_1),(E_2,D_2)$ be the path-reporting preservers for $H_1,H_2$. For every pair $(u,v)\in\mathcal{P}$, let $P_{u,v}$ be the $u-v$ path in $G\cup H$ that has at most $\beta=O\left(\frac{1}{c}\cdot k^{1+\frac{2}{\ln c}}\right)$ edges and stretch at most $\alpha=8c+3$. We define a path-reporting $\mathcal{P}$-pairwise spanner $(R,D')$ for $G$ by 
\[R=E_1\cup E_2\cup S\cup\bigcup_{(u,v)\in\mathcal{P}}(P_{u,v}\cap E)~.\]
The oracle $D'$ stores the oracles $D_1,D_2,D$, and all the paths $P_{u,v}$, for every $(u,v)\in\mathcal{P}$.

Given a query $(u,v)\in\mathcal{P}$, the oracle $D'$ restores the path $P_{u,v}$, and reports its edges one by one. Whenever it encounters an edge of $H_1$ or $H_2$, it uses the oracle $D_1$ or $D_2$ respectively, to replace this edge with a shortest path, and it reports its edges. For an edges of $H_3$, it uses the oracle $D$ for the same purpose. In this case the resulting path might have stretch at most $43$.

By a similar analysis to the one in Lemma \ref{lemma:CombineHopsetSpanner}, we get a $u-v$ path in $R$ that has stretch at most $43\cdot(8c+3)=O(c)$. The time required to report this path is proportional to the length of this output path.

Lastly, we use Theorem \ref{thm:HopsetsExtendedVersion} to conclude that the size of $R$ is at most
\begin{eqnarray*}
&&|E_1|+|E_2|+|S|+\sum_{(u,v)\in\mathcal{P}}|P_{u,v}|\\
&=&O\left(cn\log_ck+ck^{\frac{9}{c-1}}n^{1+\frac{1}{k}}+n\log k+cn^{1+\frac{1}{k}}+\sum_{(u,v)\in\mathcal{P}}\beta\right)\\
&=&O\left(cn\log_ck+ck^{\frac{9}{c-1}}n^{1+\frac{1}{k}}+n\log k+cn^{1+\frac{1}{k}}+|\mathcal{P}|\beta\right)\\
&=&O\left(|\mathcal{P}|\beta+cn\log_ck+ck^{\frac{9}{c-1}}n^{1+\frac{1}{k}}\right)~.
\end{eqnarray*}
In the last step, we used the fact that $\log_2k\leq c\log_ck$ for every $1<c,k$. The size of the oracle $D'$ has the same bound.

\end{proof}

For a constant $c$, we get the following result.

\begin{corollary} \label{cor:ConstantStretch}
For every \textit{constant} $c$, and $k\geq c$, there is a path-reporting $\mathcal{P}$-pairwise spanner for $G$, with stretch $O(c)$ and with size
\[O\left(|\mathcal{P}|\cdot k^{1+\frac{2}{\ln c}}+k^{\frac{9}{c-1}}n^{1+\frac{1}{k}}\right)~.\]
\end{corollary}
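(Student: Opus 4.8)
The plan is to obtain the statement as a direct specialization of Theorem~\ref{thm:PairwiseSpanner2}. I would apply that theorem with the given $k$ and the fixed integer $c$ (so $1<c\le k$): it yields a path-reporting $\mathcal{P}$-pairwise $O(c)$-spanner with query time $O(1)$ and size
\[
O\left(|\mathcal{P}|\cdot\frac{k^{1+\frac{2}{\ln c}}}{c}+c\,n\log_c k+c\,k^{\frac{9}{c-1}}n^{1+\frac{1}{k}}\right).
\]
The stretch $O(c)$ and the path-reporting property carry over verbatim, so all that remains is to simplify this size expression under the hypothesis that $c$ is an absolute constant.

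Since $c=O(1)$, the factor $1/c$ in the first term and the leading factor $c$ in the third term disappear into the $O(\cdot)$, giving $O\left(|\mathcal{P}|\cdot k^{1+2/\ln c}\right)$ and $O\left(k^{9/(c-1)}n^{1+1/k}\right)$ respectively. The one summand that requires a word of justification is the additive $c\,n\log_c k=\Theta(n\log k)$: I would argue it is dominated by the third term. Indeed $n\le n^{1+1/k}$, so it suffices that $\log_c k=O\left(k^{9/(c-1)}\right)$, which holds because $9/(c-1)$ is a positive constant (as $c>1$) and $\log_c k$ grows subpolynomially in $k$. (Alternatively, one may invoke the paper's standing convention $|\mathcal{P}|\ge n$ and write $n\log k\le|\mathcal{P}|\cdot k\le|\mathcal{P}|\cdot k^{1+2/\ln c}$, using $c\ge 2$ so the exponent exceeds $1$; then the term is absorbed into the first summand instead.)

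Combining these three simplifications collapses the bound to $O\left(|\mathcal{P}|\cdot k^{1+\frac{2}{\ln c}}+k^{\frac{9}{c-1}}n^{1+\frac{1}{k}}\right)$, which is precisely the claimed size. There is no genuine obstacle: the corollary is essentially a matter of reading off Theorem~\ref{thm:PairwiseSpanner2} with $c$ held constant. The only point deserving care --- and the reason I single it out above --- is verifying that the $n\log_c k$ term is truly subsumed by the other two, rather than contributing a separate $O(n\log n)$ summand.
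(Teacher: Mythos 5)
Your proposal is correct and matches the paper's (implicit) derivation exactly: the paper presents Corollary~\ref{cor:ConstantStretch} as an immediate specialization of Theorem~\ref{thm:PairwiseSpanner2} with $c$ held constant, with no separate proof given. You correctly identify the one non-trivial simplification --- absorbing the $cn\log_c k$ term into the remaining summands --- and both of your justifications for it are valid.
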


Now let $c=k^\epsilon$, for some $0<\epsilon<1$. If we set $\epsilon\geq\frac{\log\log k}{\log k}$, then we have $k^\epsilon\geq\log k$, and then $k^{O(\frac{1}{c})}=k^{O(\frac{1}{\log k})}=O(1)$. By Theorem \ref{thm:PairwiseSpanner2}, we get the following result.

\begin{corollary} \label{cor:KEpsStretch}
For every integer $k>1$ and $\epsilon\geq\frac{\log\log k}{\log k}$, there is a path-reporting $\mathcal{P}$-pairwise spanner for $G$, with stretch $O(k^\epsilon)$ and with size
\[O\left(|\mathcal{P}|\cdot k^{1-\epsilon}\cdot e^{\frac{2}{\epsilon}}+\epsilon^{-1}k^\epsilon\cdot n+k^\epsilon\cdot n^{1+\frac{1}{k}}\right)~.\]
\end{corollary}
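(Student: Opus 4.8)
The plan is to obtain Corollary~\ref{cor:KEpsStretch} by directly instantiating Theorem~\ref{thm:PairwiseSpanner2} with the choice $c=\lceil k^\epsilon\rceil$, and then rewriting the three summands of the size bound under the hypothesis $\epsilon\ge\frac{\log\log k}{\log k}$. First I would check that this $c$ is a legal parameter, i.e. $1<c\le k$: since $k>1$ and $\epsilon>0$ we have $k^\epsilon>1$ so $c\ge2$, and since $\epsilon<1$ we have $k^\epsilon<k$, so $c=\lceil k^\epsilon\rceil\le k$ in the relevant range (any boundary effect for very small $k$ is absorbed into the constants, where everything is $O(1)$ anyway). Theorem~\ref{thm:PairwiseSpanner2} then already yields a path-reporting $\mathcal{P}$-pairwise spanner of stretch $O(c)=O(k^\epsilon)$, so only the size expression needs to be simplified, and the path-reporting property is inherited verbatim.

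Next I would bound each of the three terms of Theorem~\ref{thm:PairwiseSpanner2}, using $k^\epsilon\le c\le 2k^\epsilon$ and $\ln c\ge\ln k^\epsilon=\epsilon\ln k$. For the first term, $k^{2/\ln c}=e^{2\ln k/\ln c}\le e^{2/\epsilon}$ and $1/c\le k^{-\epsilon}$, so $\frac{k^{1+2/\ln c}}{c}\le k^{1-\epsilon}e^{2/\epsilon}$; multiplying by $|\mathcal{P}|$ gives the first term of the corollary. For the second term, $\log_c k=\frac{\ln k}{\ln c}\le\frac1\epsilon$ and $c\le 2k^\epsilon$, hence $cn\log_c k=O(\epsilon^{-1}k^\epsilon n)$. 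For the third term, this is where the hypothesis $\epsilon\ge\frac{\log\log k}{\log k}$ enters: it forces $k^\epsilon\ge\log k$, so $c-1=\Omega(\log k)$, whence $k^{9/(c-1)}=k^{O(1/\log k)}=2^{O(1)}=O(1)$ and therefore $ck^{9/(c-1)}n^{1+\frac1k}=O(k^\epsilon n^{1+\frac1k})$. Summing these three bounds produces exactly the claimed size $O\!\left(|\mathcal{P}|\cdot k^{1-\epsilon}e^{2/\epsilon}+\epsilon^{-1}k^\epsilon n+k^\epsilon n^{1+\frac1k}\right)$.

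I do not expect a genuine obstacle here: the corollary is a clean specialization of Theorem~\ref{thm:PairwiseSpanner2}. The only two points that deserve a word of care are (i) rounding $k^\epsilon$ to an integer without disturbing the asymptotics, handled by taking the ceiling and tracking $k^\epsilon\le c\le 2k^\epsilon$, and (ii) justifying that the simplification $k^{9/(c-1)}=O(1)$ is valid precisely in the regime $\epsilon\ge\frac{\log\log k}{\log k}$, which is the reason this hypothesis is imposed in the statement.
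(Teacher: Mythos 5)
Your proposal is correct and follows exactly the paper's approach: set $c\approx k^\epsilon$ in Theorem~\ref{thm:PairwiseSpanner2} and use the hypothesis $\epsilon\ge\frac{\log\log k}{\log k}$ (equivalently $k^\epsilon\ge\log k$) to collapse $k^{9/(c-1)}$ to $O(1)$. Your write-up is more careful than the paper's (it rounds $c$ to an integer and explicitly bounds all three terms), but the underlying argument is identical.
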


Lastly, for $c=\log k$, we get from Theorem \ref{thm:PairwiseSpanner2}:

\begin{corollary} \label{cor:LogKStretch}
For every integer $k>1$ there is a path-reporting $\mathcal{P}$-pairwise spanner for $G$, with stretch $O(\log k)$ and with size
\[O\left(|\mathcal{P}|\cdot\frac{k^{1+\frac{2}{\log\log k}}}{\log k}+\frac{\log^2k}{\log\log k}\cdot n+\log k\cdot n^{1+\frac{1}{k}}\right)~.\]
\end{corollary}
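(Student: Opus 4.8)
The proof of Corollary~\ref{cor:LogKStretch} is a direct specialization of Theorem~\ref{thm:PairwiseSpanner2} to the choice $c=\log k$, so the plan is simply to invoke that theorem and then simplify each of the three summands in its size bound.

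First I would set $c=\lceil\log k\rceil$, which for every integer $k>1$ is an integer satisfying the hypothesis $1<c\le k$ of Theorem~\ref{thm:PairwiseSpanner2} (for the handful of tiny values of $k$ where the asymptotics are vacuous, the bound can be checked by hand). Theorem~\ref{thm:PairwiseSpanner2} then hands us a path-reporting $\mathcal{P}$-pairwise spanner of stretch $O(c)=O(\log k)$ — the claimed stretch — with size
\[
O\!\left(|\mathcal{P}|\cdot\frac{k^{1+\frac{2}{\ln c}}}{c}+cn\log_ck+ck^{\frac{9}{c-1}}n^{1+\frac{1}{k}}\right),
\]
and the path-reporting property and query time are inherited verbatim.

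The remaining work is to evaluate each term at $c=\log k$. The middle term is exact: by change of base $\log_c k=\log_{\log k}k=\frac{\log k}{\log\log k}$, so $cn\log_ck=\frac{\log^2 k}{\log\log k}\cdot n$. For the third term the key observation is $k^{1/\log k}=2$, whence $k^{9/(c-1)}=k^{9/(\log k-1)}=2^{9\log k/(\log k-1)}=O(1)$, so $ck^{9/(c-1)}n^{1+1/k}=O(\log k\cdot n^{1+1/k})$. For the first term, $\ln c=\ln\log k=\Theta(\log\log k)$, hence $k^{2/\ln c}=k^{\Theta(1/\log\log k)}$; writing the exponent as $\tfrac{2}{\log\log k}$ as in the statement corresponds to the convention $\log=\ln$ (equivalently, to absorbing a universal constant into that exponent's numerator), giving $|\mathcal{P}|\cdot\frac{k^{1+\frac{2}{\log\log k}}}{\log k}$. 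Summing the three simplified terms yields the stated bound.

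There is essentially no obstacle here: the only points requiring care are the logarithm-base bookkeeping in the first term's exponent and the boundary values of $k$, both purely cosmetic. The substantive content — in particular that the factor $k^{9/(c-1)}$ collapses to a constant, which is precisely what the split-off set $H_3$ in Theorem~\ref{thm:HopsetsExtendedVersion} was engineered to enable — is already packaged inside Theorem~\ref{thm:PairwiseSpanner2}.
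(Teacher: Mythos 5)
Your proof is correct and takes exactly the route the paper intends: the paper gives no argument for this corollary beyond the phrase ``for $c=\log k$, we get from Theorem~\ref{thm:PairwiseSpanner2}'', and your substitution plus the three term-by-term simplifications (change of base for $\log_c k$, the collapse of $k^{9/(\log k-1)}$ to $O(1)$ via $k^{1/\log k}=2$, and the $\ln c$ versus $\log\log k$ bookkeeping) fill in precisely what is left implicit. The one thing worth noting, which you correctly flag, is that the first term's exponent literally reads $\tfrac{2}{\ln\log k}$ after substitution, which differs from the stated $\tfrac{2}{\log\log k}$ by a base-dependent constant factor in the exponent; this is a genuine (if cosmetic) looseness in the paper's statement, not a gap in your argument.
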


\section{Subset, Source-wise and Prioritized Spanners} \label{sec:PRDOTypes}

In this section we describe several reductions between the types of path-reporting spanners that were described in Section \ref{sec:PRDOTypesIntro}. Using these reductions, we obtain a variety of results on these spanners. 
We start by describing the reductions in a robust way, such that the results themselves will be obtained by substituting specific properties. 

\subsection{Subset and Source-wise Spanners} \label{sec:SubsetSourcewiseSpanners}

The first reduction, from pairwise spanners to subset spanners, also uses a path-reporting \textit{emulator} (see definition \ref{def:PathReportingEmulator}). The same reduction appears implicitly in \cite{ES23}.

\begin{lemma} \label{lemma:PairwiseToSubset}
Suppose that every undirected weighted $n$-vertex graph $G=(V,E)$ admits a path-reporting $\alpha_E$-emulator with query time $q(n)$ and size $M_E(n)$. In addition, suppose that any such graph and any set of pairs $\mathcal{P}\subseteq V^2$ admit a path-reporting $\mathcal{P}$-pairwise $\alpha_P$-spanner with size $M_P(n,|\mathcal{P}|)$.

Then, for every undirected weighted $n$-vertex graph $G=(V,E)$ and a subset $A\subseteq V$, there is a path-reporting $A$-subset $\alpha_P\cdot\alpha_E$-spanner with query time $O(q(|A|))$ and size $M_E(|A|)+M_P(n,M_E(|A|))$.
\end{lemma}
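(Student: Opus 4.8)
The plan is to build the $A$-subset spanner in two stages: first compress the metric that $G$ induces on $A$ into a sparse emulator, then replace every emulator edge by a short genuine path of $G$ obtained from a pairwise spanner. Concretely, I would first form the complete weighted graph $K=(A,\binom{A}{2})$ in which the edge $\{x,y\}$ has weight $d_G(x,y)$. A one-line triangle-inequality argument gives $d_K(u,v)=d_G(u,v)$ for all $u,v\in A$: the direct edge gives $d_K(u,v)\le d_G(u,v)$, while any $K$-path $x_0,\dots,x_m$ has weight $\sum_i d_G(x_i,x_{i+1})\ge d_G(x_0,x_m)$. Applying the hypothesized path-reporting $\alpha_E$-emulator to the $|A|$-vertex graph $K$ yields a pair $(R,D_R)$ with $R=(A,E')$, edge weights $w(x,y)=d_K(x,y)=d_G(x,y)$, $|E'|\le M_E(|A|)$ and $|D_R|\le M_E(|A|)$, and query time $q(|A|)$.

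Next I would regard $\mathcal{P}:=E'$ as a set of pairs of vertices of $G$ (legitimate since $A\subseteq V$), and apply the hypothesized path-reporting $\mathcal{P}$-pairwise $\alpha_P$-spanner to $G$, obtaining $(S,D_S)$ with $S\subseteq G$ and $\max\{|E_S|,|D_S|\}\le M_P(n,|\mathcal{P}|)\le M_P(n,M_E(|A|))$, using monotonicity of $M_P$ in its second argument. The output $A$-subset spanner is $S$ itself, with an oracle $D'$ that simply stores $D_R$ and $D_S$. On a query $(u,v)\in A\times A$, I would call $D_R$ to get a $u$--$v$ path $P=(u=x_0,x_1,\dots,x_m=v)$ in $R$ with $w(P)\le\alpha_E\,d_K(u,v)=\alpha_E\,d_G(u,v)$; then, for each edge $(x_i,x_{i+1})\in E'=\mathcal{P}$, call $D_S$ to get an $x_i$--$x_{i+1}$ path $P_i$ in $S$ with $w(P_i)\le\alpha_P\,d_G(x_i,x_{i+1})$; and output the concatenation $P'=P_0P_1\cdots P_{m-1}$. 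Since the weight of $(x_i,x_{i+1})$ in $R$ equals $d_G(x_i,x_{i+1})$,
\[
w(P')=\sum_{i=0}^{m-1}w(P_i)\le\alpha_P\sum_{i=0}^{m-1}d_G(x_i,x_{i+1})=\alpha_P\,w(P)\le\alpha_P\alpha_E\,d_G(u,v),
\]
so $S$ is an $A$-subset $\alpha_P\alpha_E$-spanner.

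For the size, $\max\{|E_S|,|D'|\}\le M_E(|A|)+M_P(n,M_E(|A|))$, since $|D'|\le|D_R|+|D_S|$ and $|E_S|\le M_P(n,M_E(|A|))$. For the query time, the call to $D_R$ costs $O(q(|A|)+|P|)$ and the $m=|P|$ calls to $D_S$ cost $\sum_i O(1+|P_i|)=O(|P|+|P'|)$; since each $|P_i|\ge1$ we have $|P|\le|P'|$, so the total is $O(q(|A|)+|P'|)$, i.e.\ query time $O(q(|A|))$ in the sense of Remark~\ref{remark:QueryTimeOfPathReportingStructures}. I do not expect a genuine obstacle here — the argument is a clean two-level composition — and the only points that need care are: the identity $d_K=d_G$ on $A$ (so that the emulator's stretch is against $G$-distances), the monotonicity of $M_P(n,\cdot)$ used to replace $|\mathcal{P}|$ by the stated bound $M_E(|A|)$, and the query-time bookkeeping, which uses that the pairwise spanner has constant query time (as in every construction we apply; otherwise the $O(q(|A|))$ bound degrades by that factor). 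One should also note that $P'$ is a walk rather than a simple path, which is harmless since only its weight is constrained, and it may be shortcut if a simple path is desired.
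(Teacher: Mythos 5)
Your proposal is correct and follows essentially the same two-stage construction as the paper: form the complete metric graph $K$ on $A$, apply the emulator to $K$, then apply the pairwise spanner to $G$ with the emulator's edges as the pair set, and answer queries by composing the two oracles. The only additions beyond the paper's proof are a few explicit bookkeeping points (the identity $d_K=d_G$ on $A$, the implicit constant query time of the pairwise spanner, and the observation that the output is a walk), all of which are correct and consistent with the paper's intent.
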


\begin{proof}

Let $G=(V,E)$ be an undirected weighted $n$-vertex graph, and let $A\subseteq V$ be a subset. Consider the graph $K=(A,\binom{A}{2})$ with weights $w(x,y)=d_G(x,y)$ on the edges. By our assumption, $K$ has a path-reporting $\alpha_E$-emulator with query time $q(|A|)$ and size $M_E(|A|)$. Denote this emulator by $R_E\subseteq\binom{A}{2}\subseteq\binom{V}{2}$ and its corresponding oracle by $D_E$. We consider $R_E$ as a set of pairs $R_E\subseteq V^2$ by orienting each edge in $R_E$ in an arbitrary direction (alternatively, we may view each edge as two pairs, one in each direction). 

The graph $G$ and the set of pairs $R_E$ admit a path-reporting $R_E$-pairwise $\alpha_P$-spanner with size $M_P(n,|R_E|)\leq M_P(n,M_E(|A|))$. Denote this pairwise spanner by $R_P$ and its corresponding oracle by $D_P$. See Figure \ref{fig:PairwiseToSubset} for an illustration.

\begin{center}
\begin{figure}[ht!]
    \centering
    \includegraphics[width=14cm, height=5cm]{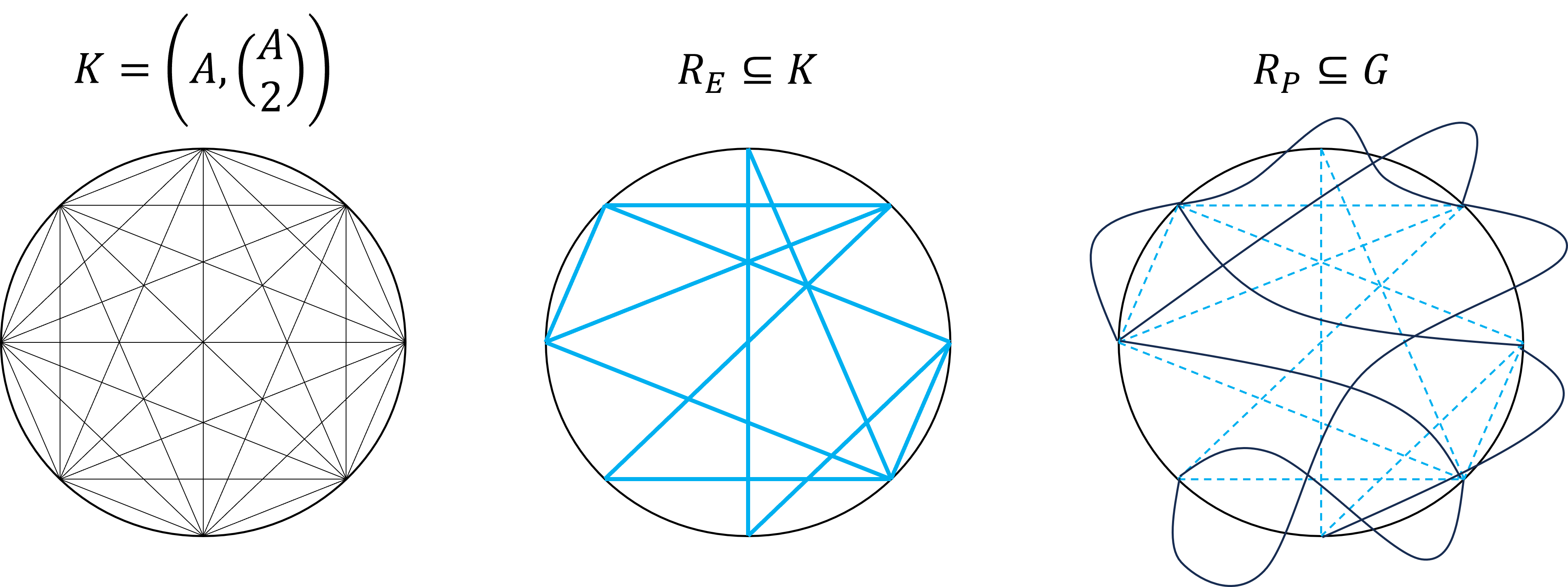}
    \caption{The graph $K$ is a complete graph on the vertices of the set $A$, where the weights of its edges represent the shortest path distance in $G$ between their ends. Then, a path-reporting emulator is applied on $K$, resulting in a small subgraph $R_E$ of $K$ (depicted in blue), that preserve distance up to stretch $\alpha_E$. Lastly, a path-reporting pairwise spanner is applied on the set of edges of $R_E$. In the resulting spanner $R_P$, each emulator edge is replaced by an actual path in the graph $G$ (depicted by a curvy line), with stretch at most $\alpha_P$.}
    \label{fig:PairwiseToSubset}
\end{figure}
\end{center}

We now define an $A$-subset spanner as follows. The spanner itself is the subgraph $R_P$ of $G$. The corresponding oracle $D$ contains the oracles $D_E$ and $D_P$. Given a pair of vertices $(u,v)\in A^2$, the oracle $D$ first finds a $u-v$ path $Q_{u,v}$ in $R_E$, with stretch $\alpha_E$, using the oracle $D_E$. Then, $D$ applies the oracle $D_P$ on every edge of $Q_{u,v}$, and concatenates the resulting paths. The result is a tour between $u,v$ in $R_P$ with weight at most $\alpha_P\cdot\alpha_E\cdot d_K(u,v)=\alpha_P\cdot\alpha_E\cdot d_G(u,v)$, i.e., stretch at most $\alpha_P\cdot\alpha_E$.

The query time of $D$ consists of the time required to run a single query of $D_E$, then query $D_P$ on every edge of the resulting path. The first part is $q(|A|)$, while each query in the second part runs in time that is proportional to the length of the resulting path. Since $D$ concatenates all these paths, this running time is overall proportional to the length of the resulting path, thus we get a query time of $q(|A|)$.

The size of the oracle $D$ is the sum of the sizes of $D_E,D_P$, which is $M_E(|A|)+M_P(n,M_E(|A|))$. The size of the spanner $R_P$ itself is $M_P(n,M_E(|A|))$. This completes our proof.

\end{proof}

Before we apply the reduction from Lemma \ref{lemma:PairwiseToSubset} to get subset spanners, we first show a very simple reduction from subset spanners to source-wise spanners. The resulting source-wise spanner from this reduction has a stretch that is twice as the stretch of the original subset spanner. At the same time, its query time and size are asymptotically the same as in the original subset spanner. This reduction appears implicitly in \cite{EFN15}.

\begin{lemma} \label{lemma:SubsetToSourcewise}
Suppose that the undirected weighted $n$-vertex graph $G=(V,E)$ admits a path-reporting $A$-subset $\alpha$-spanner with query time $q$ and size $M(n,|A|)$. Then, there is a path-reporting $A$-source-wise $(2\alpha+1)$-spanner with query time $O(q)$ and size $M(n,|A|)+O(n)$.
\end{lemma}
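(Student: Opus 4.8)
The plan is to augment the given $A$-subset $\alpha$-spanner $S$ with a single shortest-path forest that connects every vertex of $V$ to its nearest vertex in $A$, and then to route any source-wise query through this forest. First I would fix, for every $v\in V$, a vertex $p(v)\in A$ that is closest to $v$ (breaking ties by a fixed total order on $V$), together with a shortest $v$–$p(v)$ path $\pi_v$ in $G$ chosen in a consistent manner — concretely, run a single multi-source Dijkstra from $A$ and let $\pi_v$ follow parent pointers in the resulting shortest-path tree. The key structural observation is that the union $F=\bigcup_{v\in V}\pi_v$ of these paths forms a forest: each non-root vertex has a unique parent edge, and following parent pointers strictly decreases $d_G(\cdot,A)$, so no cycle can appear. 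Hence $F$ contributes at most $n-1$ edges, giving the claimed additive $O(n)$ term in the size. I would then set the source-wise spanner to be $S'=S\cup F$.

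Next I would describe the oracle $D'$. It stores the oracle $D$ of the subset spanner together with parent pointers for $F$ (which take $O(n)$ words). On a query $(a,v)\in A\times V$ — the source-wise case has the source in $A$ — the oracle first walks from $v$ up the forest $F$ to reach $p(v)\in A$, outputting the edges of $\pi_v$; then it invokes $D$ on the pair $(a,p(v))\in A\times A$ to obtain an $a$–$p(v)$ path $P$ in $S$ with $w(P)\le\alpha\cdot d_G(a,p(v))$; and finally it concatenates the two to produce an $a$–$v$ walk in $S'$. The query time is $O(|\pi_v|)$ for the forest walk plus $O(q+|P|)$ for the call to $D$, which is $O(q)$ plus the length of the output walk, as required by our convention for path-reporting query times (Remark~\ref{remark:QueryTimeOfPathReportingStructures}).

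It remains to bound the stretch, and this is the only step requiring a small calculation. Let $d=d_G(a,v)$ and let $\ell=d_G(v,p(v))=d_G(v,A)$. Since $a\in A$, the definition of $p(v)$ as a closest $A$-vertex gives $\ell\le d_G(v,a)=d$. By the triangle inequality $d_G(a,p(v))\le d_G(a,v)+d_G(v,p(v))=d+\ell\le 2d$. The output walk has weight at most $w(\pi_v)+w(P)\le \ell + \alpha\cdot d_G(a,p(v)) \le d + \alpha\cdot 2d = (2\alpha+1)d$, which is exactly the claimed stretch $2\alpha+1$. Finally, since both the forest walk and the sub-path $P$ returned by $D$ lie in $S'$, the concatenation is a genuine walk in $S'$ (one may shortcut it to a simple path if desired without increasing its weight). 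The size of $S'$ is $|E_S|+|F|\le M(n,|A|)+O(n)$, and $|D'|\le|D|+O(n)=M(n,|A|)+O(n)$, so the overall size is $M(n,|A|)+O(n)$; path-reporting is preserved because $D$ is path-reporting and the forest walk is reported explicitly. I do not anticipate a serious obstacle here; the only point that needs care is the consistent choice of $p(v)$ and $\pi_v$ so that $F$ is genuinely a forest (and so that navigation from $v$ to $p(v)$ is well-defined), which is handled cleanly by the single multi-source Dijkstra.
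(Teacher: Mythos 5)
Your proposal is correct and follows essentially the same route as the paper: build a shortest-path forest from each vertex to its nearest $A$-vertex (contributing $O(n)$ edges and pointers), concatenate the forest path with the subset-spanner path at query time, and bound the stretch via $d_G(v,p(v))\le d_G(v,a)$ and the triangle inequality $d_G(a,p(v))\le 2\,d_G(a,v)$. The only cosmetic difference is that you justify the forest/consistency property via a multi-source Dijkstra, whereas the paper argues directly that every $x$ on $P_{v,p(v)}$ satisfies $p(x)=p(v)$; both yield the same conclusion.
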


\begin{proof}

Let $(S_0,D_0)$ be the given path-reporting $A$-subset $\alpha$-spanner. For every vertex $v\in V$, denote by $p(v)$ the closest vertex to $v$ in the set $A$, breaking ties arbitrarily but consistently. Denote by $P_{v,p(v)}$ a shortest path between $v,p(v)$ in $G$. We define a new path-reporting $A$-source-wise spanner $(S,D)$ as follows.

The set $S$ is defined to be the union $S=S_0\cup\bigcup_{v\in V}P_{v,p(v)}$. The oracle $D$ stores the oracle $D_0$ and all the shortest paths $P_{v,p(v)}$, for every $v\in V$, together with a pointer $q(v)$ from $v$ to the next vertex after $v$ on the path $P_{v,p(v)}$. Given a query $(v,a)\in V\times A$, the oracle $D$ uses the pointers $q()$ to find the path $P_{v,p(v)}$. Then, it concatenates this path with the $p(v)-a$ path $Q_{p(v),a}$ that is returned by $D_0$. The oracle $D$ returns the resulting path as an output.

For the query algorithm to be well-defined, we need to show that indeed $P_{v,p(v)}$ can be found using only the pointers $q()$. This is true because for every vertex $x$ on the path $P_{v,p(v)}$, we have $p(x)=p(v)$ (since if there was a closer $A$-vertex to $x$ than $p(v)$, it would be also closer to $v$). Then, the path $P_{x,p(x)}=P_{x,p(v)}$ is a sub-path of $P_{v,p(v)}$, and the pointer $q(x)$ points at the next vertex on this path in the direction to $p(v)$. Hence, the oracle $D$ can use these pointers to obtain the path $P_{v,p(v)}$ entirely.

The path $P_{v,p(v)}\circ Q_{p(v),a}$ that $D$ outputs is a path in $\bigcup_{v\in V}P_{v,p(v)}\cup S_0=S$. Note that since $p(v)$ is the closest vertex to $v$ in $A$, we have $w(P_{v,p(v)})=d_G(v,p(v))\leq d_G(v,a)$. In addition, recall that the stretch of $(S_0,D_0)$ is $\alpha$, thus we have $w(Q_{p(v),a})\leq\alpha\cdot d_G(p(v),a)$. By the triangle inequality,
\[w(Q_{p(v),a})\leq\alpha\cdot d_G(p(v),a)\leq\alpha(d_G(p(v),v)+d_G(v,a))\leq2\alpha\cdot d_G(v,a)~.\]
We conclude that the weight of the resulting path is
\[w(P_{v,p(v)})+w(Q_{p(v),a})\leq d_G(v,a)+2\alpha\cdot d_G(v,a)=(2\alpha+1)d_G(v,a)~.\]

\begin{center}
\begin{figure}[ht!]
    \centering
    \includegraphics[width=7cm, height=4cm]{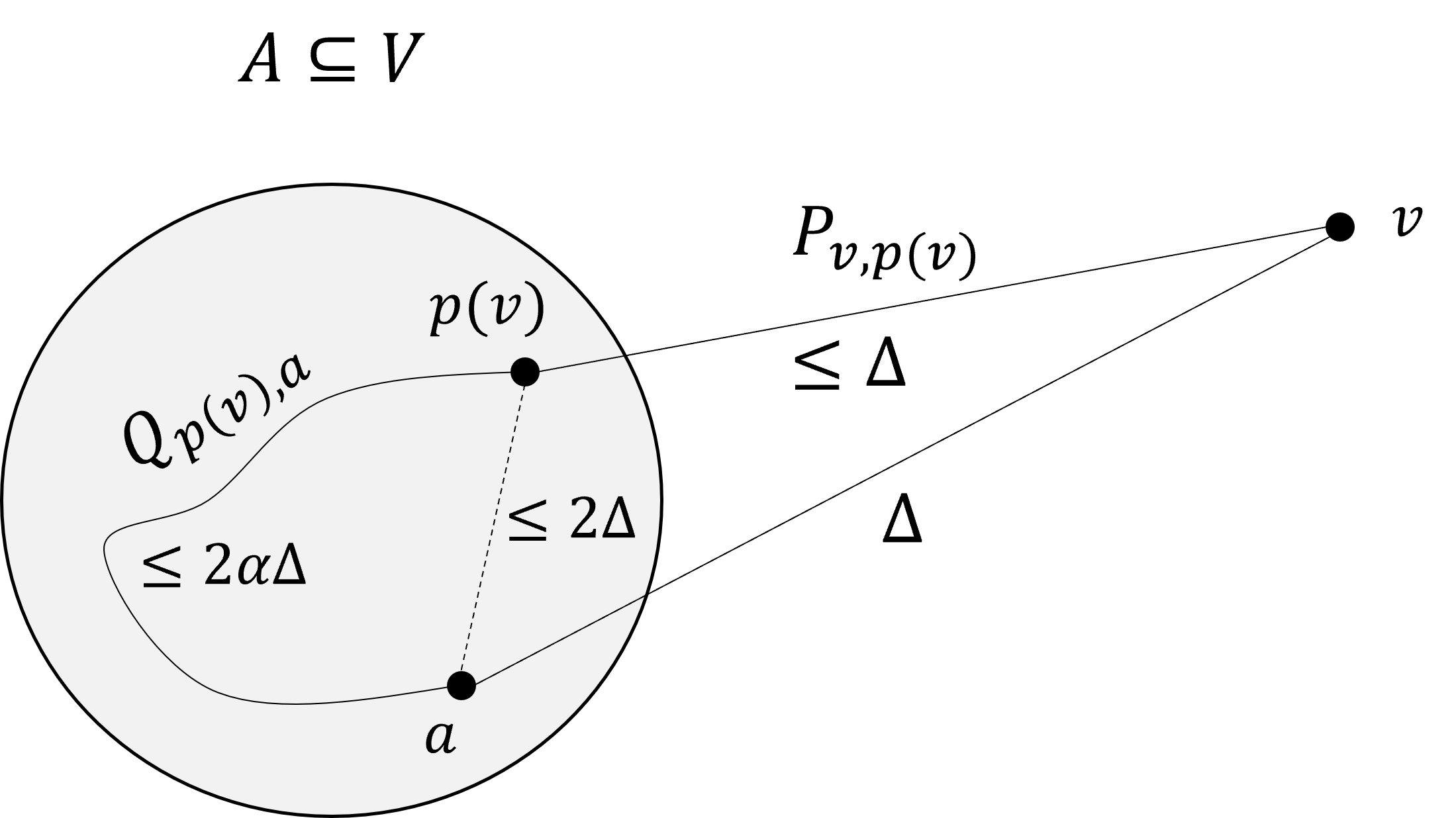}
    \caption{Given a vertex $v\in V$, the vertex $p(v)$ is the closest vertex to $v$ from the subset $A$. For a vertex $a\in A$, the distance $d_G(v,a)$ appears in the figure as $\Delta$. Since $p(v)$ is not farther from $v$ than $a$, we know that $d_G(v,p(v))\leq\Delta$. By the triangle inequality, $d_G(p(v),a)\leq\Delta+\Delta=2\Delta$. Therefore, the path $Q_{p(v),a}$ in the $A$-subset spanner, has weight at most $2\alpha\Delta$. To recover the path $P_{v,p(v)}$, we use the stored pointers in the oracle $D$. To find $Q_{p(v),a}$, we use the oracle $D_0$ of the $A$-subset spanner.}
    \label{fig:SubsetToSourcewise}
\end{figure}
\end{center}

The running time of finding $P_{v,p(v)}$ is proportional to the length of $P_{v,p(v)}$ itself. Obtaining the path $Q_{p(v),a}$ from $D_0$ requires $q+O(|Q_{p(v),a}|)$ time. Therefore, by our convention of measuring query time for path-reporting spanners, the total query time of the oracle $D$ is $O(q)$.

We now bound the size of the path-reporting source-wise spanner $(S,D)$. The oracle $D$ stores the oracle $D_0$, which requires $M(n,|A|)$ size. Aside from $D_0$, the oracle $D$ stores a pointer for every vertex in $V$, thus the size of $D$ is $M(n,|A|)+O(n)$. Notice that every edge in the set $\bigcup_{v\in V}P_{v,p(v)}$ is of the form $(x,q(x))$, for some vertex $x\in V$. We conclude that the size of $S$ is at most $|S_0|+n\leq M(n,|A|)+n=M(n,|A|)+O(n)$.

This completes the proof that $(S,D)$ is a path-reporting $|A|$-source-wise $(2\alpha+1)$-spanner with query time $O(q)$ and size $M(n,|A|)+O(n)$.

\end{proof}

We now use the reductions from Lemma \ref{lemma:PairwiseToSubset} and Lemma \ref{lemma:SubsetToSourcewise} to produce path-reporting subset and source-wise spanners. We apply Lemma \ref{lemma:PairwiseToSubset} on the emulators from Theorems \ref{thm:TZEmulator} and \ref{thm:MNEmulator} and on the pairwise spanners from Theorems \ref{thm:NearExactPairwiseSpanner} and \ref{thm:Near3PairwiseSpanner}. The results are given in Theorems \ref{thm:SubsetSpanner1}, \ref{thm:SubsetSpanner2} and \ref{thm:SubsetSpanner3}.

\pagebreak

\begin{theorem} \label{thm:SubsetSpanner1}
Let $G=(V,E)$ be an undirected weighted $n$-vertex graph and let $A\subseteq V$ be a subset of vertices. Let $k\in[1,\log n]$ be an integer parameter and let $0<\epsilon\leq1$ be a real parameter. Denote $q=\log_n(|A|^{1+\frac{1}{k}})$. Then, there is a path-reporting $A$-subset $(2+\epsilon)k$-spanner, with query time $O(\log k)$ and size 
\[O\left(k|A|^{1+\frac{1}{k}}\cdot\beta_1(\epsilon,p)+n\cdot\left(\log p+\log\log(\frac{1}{\epsilon})\right)\right)~.\]
Here, $\beta_1(\epsilon,p)=O(\frac{\log p}{\epsilon})^{(1+o(1))\log_2p+\log_{4/3}\log(\frac{1}{\epsilon})}$, and
\[p=\begin{cases}
    \frac{1}{q-1} & q>1 \\
    \log n & q\leq1~.
\end{cases}\]

In addition, there is a path-reporting $A$-source-wise $(4+\epsilon)k$-spanner, with query time $O(\log k)$ and the same size.

\end{theorem}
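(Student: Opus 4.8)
The plan is to obtain the subset spanner by instantiating the reduction of Lemma~\ref{lemma:PairwiseToSubset} with the Thorup--Zwick path-reporting emulator of Theorem~\ref{thm:TZEmulator} and the near-exact path-reporting pairwise spanner of Theorem~\ref{thm:NearExactPairwiseSpanner}, and then to deduce the source-wise statement by one further application of Lemma~\ref{lemma:SubsetToSourcewise}. I would run the emulator of Theorem~\ref{thm:TZEmulator} with its integer parameter equal to $k$; on the complete graph $K=(A,\binom A2)$, which has $|A|$ vertices, this yields a path-reporting $(2k-1)$-emulator of size $O(k|A|^{1+1/k})$ and query time $O(\log k)$, i.e. $\alpha_E=2k-1$, $M_E(|A|)=O(k|A|^{1+1/k})$, $q(|A|)=O(\log k)$ in the notation of Lemma~\ref{lemma:PairwiseToSubset}. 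For the pairwise spanner I would use Theorem~\ref{thm:NearExactPairwiseSpanner} with accuracy $\epsilon'=\epsilon/4$ and an internal integer parameter $p'$ (the ``$k$'' of that theorem), to be fixed below; this gives $\alpha_P=1+\epsilon/4$ and $M_P(n,|\mathcal P|)=O\bigl(|\mathcal P|\cdot\beta_1(\epsilon',p')+n(\log p'+\log\log(1/\epsilon'))+n^{1+1/p'}\bigr)$. Lemma~\ref{lemma:PairwiseToSubset} then produces a path-reporting $A$-subset $\alpha_P\alpha_E$-spanner, and since $\alpha_P\alpha_E=(1+\epsilon/4)(2k-1)<2k(1+\epsilon/4)\le(2+\epsilon)k$ its stretch is at most $(2+\epsilon)k$; its query time is $O(q(|A|))=O(\log k)$; and its size is $M_E(|A|)+M_P(n,M_E(|A|))=O\bigl(k|A|^{1+1/k}\beta_1(\epsilon',p')+n(\log p'+\log\log(1/\epsilon'))+n^{1+1/p'}\bigr)$, the bare $M_E(|A|)$ being absorbed because $\beta_1\ge 1$.

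The only delicate step is choosing $p'$ so that the additive term $n^{1+1/p'}$ vanishes into the other two, while keeping $\beta_1(\epsilon',p')$ of the claimed form $\beta_1(\epsilon,p)$; this is precisely what $q=\log_n(|A|^{1+1/k})$ controls. If $q>1$ I would take $p'=\max\{3,\lceil 1/(q-1)\rceil\}$, truncated to $\log n$ if it is larger, so that $1+1/p'\le q$ and hence $n^{1+1/p'}\le n^q=|A|^{1+1/k}\le k|A|^{1+1/k}\beta_1(\epsilon',p')$, which absorbs the term. If $q\le 1$ I would take $p'=\log n$, so $n^{1+1/p'}=n^{1+1/\log n}=O(n)$, absorbed into $n\log p'=n\log\log n$. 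In either case $p'\le p$ (or $p'=\Theta(p)$, up to the enforced $p'\ge3$), and since $\beta_1(\epsilon,\cdot)$ is monotone in its second argument, and replacing $\epsilon$ by $\epsilon/4$ and $p$ by its rounded/capped version only alters the base by a constant and the exponent by $o(1)$, we get $\beta_1(\epsilon',p')\le\beta_1(\epsilon,p)$ in the stated form (when $p=O(1)$ both are simply functions of $\epsilon$ of the same shape). One must also check that $p'\in[3,\log_2 n]$ so that Theorem~\ref{thm:NearExactPairwiseSpanner} is applicable, which the $\max\{3,\cdot\}$ and the truncation to $\log n$ guarantee. Collecting the terms yields size $O\bigl(k|A|^{1+1/k}\beta_1(\epsilon,p)+n(\log p+\log\log(1/\epsilon))\bigr)$ and query time $O(\log k)$, which is the first assertion.

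For the source-wise spanner I would feed the subset spanner just built into Lemma~\ref{lemma:SubsetToSourcewise}, which converts an $A$-subset $\alpha$-spanner of query time $q$ and size $M$ into an $A$-source-wise $(2\alpha+1)$-spanner of query time $O(q)$ and size $M+O(n)$. With $\alpha=(1+\epsilon/4)(2k-1)$ the new stretch is $2(1+\epsilon/4)(2k-1)+1<4k(1+\epsilon/4)=(4+\epsilon)k$; the extra $O(n)$ is absorbed into the $n(\log p+\log\log(1/\epsilon))$ term, and the query time remains $O(\log k)$. The main obstacle, as indicated, is the parameter bookkeeping around $p'$ --- ensuring simultaneously that Theorem~\ref{thm:NearExactPairwiseSpanner} is invoked with an admissible parameter and that $\beta_1(\epsilon',p')$ does not exceed $\beta_1(\epsilon,p)$, which is what forces the case split on $q$ and the definition of $p$ in the statement; everything else is a direct substitution into Lemmas~\ref{lemma:PairwiseToSubset} and~\ref{lemma:SubsetToSourcewise}.
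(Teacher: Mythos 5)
Your proposal matches the paper's proof essentially step for step: it instantiates Lemma~\ref{lemma:PairwiseToSubset} with the Thorup--Zwick emulator (Theorem~\ref{thm:TZEmulator}, parameter $k$) and the near-exact pairwise spanner (Theorem~\ref{thm:NearExactPairwiseSpanner}, parameter $p$ and a rescaled $\epsilon$), absorbs the $n^{1+1/p}$ term via the same case analysis on $q$, and then applies Lemma~\ref{lemma:SubsetToSourcewise} for the source-wise variant. The only differences are cosmetic --- you use $\epsilon/4$ throughout rather than the paper's $\epsilon/2$ followed by a final replacement $\epsilon\mapsto\epsilon/2$, and you are somewhat more explicit than the paper about clamping $p$ to the admissible range $[3,\log_2 n]$ of Theorem~\ref{thm:NearExactPairwiseSpanner}, which is a harmless refinement.
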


\begin{proof}

Apply Lemma \ref{lemma:PairwiseToSubset} on the path-reporting pairwise spanner from Theorem \ref{thm:NearExactPairwiseSpanner}, with parameters $\frac{\epsilon}{2},p$ instead of $\epsilon,k$, and on the path-reporting emulator from Theorem \ref{thm:TZEmulator}. This provides a path-reporting $A$-subset spanner, with stretch $\alpha=(1+\frac{\epsilon}{2})(2k-1)<(2+\epsilon)k$, query time $O(\log k)$ and size\footnote{In the following expression for the size of the subset spanner, $\beta_1(\epsilon,p)$ appears instead of $\beta_1(\frac{\epsilon}{2},p)$. This is due to the fact that
\[\beta_1(\frac{\epsilon}{2},p)<O(\frac{2\log p}{\epsilon})^{(1+o(1))\log_2p+\log_{4/3}\log(\frac{1}{\epsilon})+2.5}=O(\frac{\log p}{\epsilon})^{(1+o(1))\log_2p+\log_{4/3}\log(\frac{1}{\epsilon})}~,\]
since the $2.5$ is absorbed by the $o(1)\cdot\log_2p$. Thus, $\beta_1(\frac{\epsilon}{2},p)$ has the same bound as $\beta_1(\epsilon,p)$.}
\begin{equation} \label{eq:SubsetSpannerSize1}
O\left(k|A|^{1+\frac{1}{k}}\cdot\beta_1(\epsilon,p)+n\cdot\left(\log p+\log\log(\frac{1}{\epsilon})\right)+n^{1+\frac{1}{p}}\right)~.
\end{equation}
If $q>1$, then $n^{1+\frac{1}{p}}=n^q=|A|^{1+\frac{1}{k}}$. Thus, the third term in (\ref{eq:SubsetSpannerSize1}) is smaller than the first term. If, on the other hand, $q\leq1$, then this third term is $n^{1+\frac{1}{p}}=n^{1+\frac{1}{\log n}}=2n$, which is smaller than the second term in (\ref{eq:SubsetSpannerSize1}). Hence, we get the desired size for the resulting subset spanner.

By Lemma \ref{lemma:SubsetToSourcewise}, there is also a path-reporting $A$-source-wise spanner with stretch $2\alpha+1$, query time $O(\log k)$ and size 
\begin{eqnarray*}
&&O\left(k|A|^{1+\frac{1}{k}}\cdot\beta_1(\epsilon,p)+n\cdot\left(\log p+\log\log(\frac{1}{\epsilon})\right)\right)+O(n)\\
&=&O\left(k|A|^{1+\frac{1}{k}}\cdot\beta_1(\epsilon,p)+n\cdot\left(\log p+\log\log(\frac{1}{\epsilon})\right)\right)~.
\end{eqnarray*}

Notice that $\alpha=(1+\frac{\epsilon}{2})(2k-1)=(2+\epsilon)k-(1+\frac{\epsilon}{2})$ and therefore 
\[2\alpha+1=2(2+\epsilon)k-2(1+\frac{\epsilon}{2})+1<(4+2\epsilon)k~.\]
Replacing $\epsilon$ by $\frac{\epsilon}{2}$ in this source-wise spanner gives the desired result.


\end{proof}

The size of the subset spanner from Theorem \ref{thm:SubsetSpanner1} can be improved at the cost of multiplying the stretch by $3$. The proof of the following theorem is identical to the proof of Theorem \ref{thm:SubsetSpanner1}, when using the pairwise spanner from Theorem \ref{thm:Near3PairwiseSpanner} instead of the one from Theorem \ref{thm:NearExactPairwiseSpanner}.

\begin{theorem} \label{thm:SubsetSpanner2}
Let $G=(V,E)$ be an undirected weighted $n$-vertex graph and let $A\subseteq V$ be a subset of vertices. Let $k\in[1,\log n]$ be an integer parameter and let $0<\epsilon\leq40$ be a real parameter. Denote $q=\log_n(|A|^{1+\frac{1}{k}})$. Then, there is a path-reporting $A$-subset $(6+\epsilon)k$-spanner, with query time $O(\log k)$ and size 
\[O\left(k|A|^{1+\frac{1}{k}}\cdot\beta_2(\epsilon,p)+n\log p\right)~.\]
Here, $\beta_2(\epsilon,p)=p^{O\left(\log\frac{1}{\epsilon}\right)}$, and
\[p=\begin{cases}
    \frac{1}{q-1} & q>1 \\
    \log n & q\leq1~.
\end{cases}\]

In addition, there is a path-reporting $A$-source-wise $(12+\epsilon)k$-spanner, with query time $O(\log k)$ and the same size.

\end{theorem}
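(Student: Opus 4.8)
**The plan is to follow exactly the blueprint of Theorem \ref{thm:SubsetSpanner1}, swapping the near-exact pairwise spanner for the $(3+\epsilon)$-pairwise spanner of Theorem \ref{thm:Near3PairwiseSpanner}.** Concretely, I would apply Lemma \ref{lemma:PairwiseToSubset} with the path-reporting $(2k-1)$-emulator of Theorem \ref{thm:TZEmulator} in the role of the emulator, and the $\mathcal{P}$-pairwise $(3+\epsilon')$-spanner of Theorem \ref{thm:Near3PairwiseSpanner} in the role of the pairwise spanner, using parameters $\epsilon'=\epsilon/2$ and $p$ (as defined in the statement) in place of $\epsilon$ and $k$. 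The composed stretch is $\alpha_P\cdot\alpha_E=(3+\epsilon/2)(2k-1)$, which I would bound by $(6+\epsilon)k$ after a one-line estimate $(3+\epsilon/2)(2k-1)=(6+\epsilon)k-(3+\epsilon/2)<(6+\epsilon)k$. The query time is inherited from the emulator, namely $O(\log k)$.

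For the size, Lemma \ref{lemma:PairwiseToSubset} gives $M_E(|A|)+M_P(n, M_E(|A|))$. Here $M_E(|A|)=O(k|A|^{1+1/k})$ from Theorem \ref{thm:TZEmulator}, and plugging $|\mathcal{P}|=M_E(|A|)=O(k|A|^{1+1/k})$ into the size bound of Theorem \ref{thm:Near3PairwiseSpanner} with parameter $p$ yields
\[
O\!\left(k|A|^{1+\frac1k}\cdot\beta_2(\epsilon,p)+n\log p+n^{1+\frac1p}\right).
\]
The only nontrivial bookkeeping is disposing of the additive $n^{1+1/p}$ term. This is handled exactly as in Theorem \ref{thm:SubsetSpanner1} via the case split on $q=\log_n(|A|^{1+1/k})$: if $q>1$ then $p=\frac{1}{q-1}$, so $n^{1+1/p}=n^q=|A|^{1+1/k}\le k|A|^{1+1/k}\beta_2(\epsilon,p)$, absorbed into the first term; if $q\le 1$ then $p=\log n$, so $n^{1+1/p}=n^{1+1/\log n}=2n$, absorbed into the $n\log p$ term. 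One should also note, as in the footnote of Theorem \ref{thm:SubsetSpanner1}, that replacing $\epsilon$ by $\epsilon/2$ inside $\beta_2(\epsilon,p)=p^{O(\log(1/\epsilon))}$ changes only the hidden constant in the exponent, so $\beta_2(\epsilon/2,p)$ obeys the same bound as $\beta_2(\epsilon,p)$.

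For the source-wise statement, I would feed the subset spanner just constructed into Lemma \ref{lemma:SubsetToSourcewise}, obtaining stretch $2\alpha+1$ with an additive $O(n)$ in the size (which is already dominated by the $n\log p$ term) and the same $O(\log k)$ query time. Since $\alpha=(3+\epsilon/2)(2k-1)=(6+\epsilon)k-(3+\epsilon/2)$, we get $2\alpha+1=(12+2\epsilon)k-2(3+\epsilon/2)+1<(12+2\epsilon)k$; rescaling $\epsilon\mapsto\epsilon/2$ throughout gives the claimed $(12+\epsilon)k$ source-wise spanner with the stated size. Since the emulator of Theorem \ref{thm:TZEmulator}, the pairwise spanner of Theorem \ref{thm:Near3PairwiseSpanner}, and both reductions (Lemmas \ref{lemma:PairwiseToSubset} and \ref{lemma:SubsetToSourcewise}) are all path-reporting and the reductions preserve this property, the resulting structures are path-reporting as well.

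**There is no real obstacle here** — the argument is a verbatim repetition of the proof of Theorem \ref{thm:SubsetSpanner1} with one theorem citation swapped out; the only thing requiring a moment's care is the $\epsilon$-rescaling so that the final stretch constants come out as $6+\epsilon$ and $12+\epsilon$ rather than, say, $6+2\epsilon$, together with checking that this rescaling is harmless inside $\beta_2$.
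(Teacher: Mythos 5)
Your proposal is correct and follows essentially the same route as the paper: apply Lemma \ref{lemma:PairwiseToSubset} with the Thorup--Zwick emulator (Theorem \ref{thm:TZEmulator}) and the $(3+\epsilon/2)$-pairwise spanner (Theorem \ref{thm:Near3PairwiseSpanner}) with parameter $p$, absorb the $n^{1+1/p}$ term via the same case split on $q$, note that halving $\epsilon$ inside $\beta_2$ only changes the hidden constant in the exponent, and then apply Lemma \ref{lemma:SubsetToSourcewise} with one more $\epsilon\mapsto\epsilon/2$ rescaling. No gaps or deviations from the paper's argument.
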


\begin{proof}

By Lemma \ref{lemma:PairwiseToSubset}, Theorem \ref{thm:Near3PairwiseSpanner} with parameters $\frac{\epsilon}{2},p$ instead of $\epsilon,k$, and Theorem \ref{thm:TZEmulator}, there is a path-reporting $A$-subset spanner, with stretch $\alpha=(3+\frac{\epsilon}{2})(2k-1)<(6+\epsilon)k$, query time $O(\log k)$ and size
\[O\left(k|A|^{1+\frac{1}{k}}\cdot\beta_2(\epsilon,p)+n\log p+n^{1+\frac{1}{p}}\right)=O\left(k|A|^{1+\frac{1}{k}}\cdot\beta_2(\epsilon,p)+n\log p\right)~,\]
where similarly to the proof of Theorem \ref{thm:SubsetSpanner2}, the third term is smaller from the sum of the first two terms. Here, we replaced $\beta_2(\frac{\epsilon}{2},p)$ by $\beta_2(\epsilon,p)$, because $\beta_2(\frac{\epsilon}{2},p)=p^{O\left(\log\frac{2}{\epsilon}\right)}=p^{O\left(\log\frac{1}{\epsilon}\right)}$, and thus they both have the same bound.

By Lemma \ref{lemma:SubsetToSourcewise}, there is also a path-reporting $A$-source-wise spanner with stretch $2\alpha+1$, query time $O(\log k)$ and size 
\[O\left(k|A|^{1+\frac{1}{k}}\cdot\beta_2(\epsilon,p)+n\log p\right)+O(n)=O\left(k|A|^{1+\frac{1}{k}}\cdot\beta_2(\epsilon,p)+n\log p\right)~.\]
Notice that $\alpha_2=(3+\frac{\epsilon}{2})(2k-1)=(6+\epsilon)k-(3+\frac{\epsilon}{2})$ and therefore 
\[2\alpha_2+1=2(6+\epsilon)k-2(3+\frac{\epsilon}{2})+1<(12+2\epsilon)k~.\]
Replacing $\epsilon$ by $\frac{\epsilon}{2}$ in this source-wise spanner gives the desired result.

\end{proof}

If we allow even larger stretch, the size of the subset spanner further improves. This is achieved by using the path-reporting emulator from Theorem \ref{thm:MNEmulator}, rather then the one from Theorem \ref{thm:TZEmulator}. This also improves the query time of the resulting path-reporting subset spanner.

\begin{theorem} \label{thm:SubsetSpanner3}
Let $G=(V,E)$ be an undirected weighted $n$-vertex graph and let $A\subseteq V$ be a subset of vertices. Let $k\in[1,\log n]$ be an integer parameter. Denote $q=\log_n(|A|^{1+\frac{1}{k}})$. Then, there is a path-reporting $A$-subset $O(k)$-spanner, and a path-reporting $A$-source-wise $O(k)$-spanner, both with query time $O(1)$ and size 
\[O\left(|A|^{1+\frac{1}{k}}\cdot\beta_3(p)+n\log p\right)~,\]
where $\beta_3(p)=p^{\log_{4/3}13}<p^9$ and
\[p=\begin{cases}
    \frac{1}{q-1} & q>1 \\
    \log n & q\leq1~.
\end{cases}\]

\end{theorem}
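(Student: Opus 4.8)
The plan is to imitate the proof of Theorem~\ref{thm:SubsetSpanner1} (equivalently, of Theorem~\ref{thm:SubsetSpanner2}), but with the Thorup--Zwick emulator of Theorem~\ref{thm:TZEmulator} replaced by the sparser, faster emulator of Theorem~\ref{thm:MNEmulator}, and using the $(3+\epsilon)$-pairwise spanner of Theorem~\ref{thm:Near3PairwiseSpanner} with the fixed choice $\epsilon=40$. Concretely, I would invoke Lemma~\ref{lemma:PairwiseToSubset} with the path-reporting $O(k)$-emulator from Theorem~\ref{thm:MNEmulator} (so $\alpha_E=O(k)$, query time $q(n)=O(1)$, and $M_E(n)=O(n^{1+1/k})$), and with the path-reporting pairwise $43$-spanner obtained from Theorem~\ref{thm:Near3PairwiseSpanner} by setting $\epsilon=40$ and using the parameter $p$ in place of $k$ (so $\alpha_P=43$ and $M_P(n,|\mathcal{P}|)=O\!\left(|\mathcal{P}|\cdot p^{\log_{4/3}13}+n\log p+n^{1+1/p}\right)$, since $\beta_2(40,p)=p^{\log_{4/3}13}$). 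Lemma~\ref{lemma:PairwiseToSubset} then yields a path-reporting $A$-subset spanner with stretch $\alpha_P\cdot\alpha_E=O(k)$, query time $O(q(|A|))=O(1)$, and size
\[
M_E(|A|)+M_P\bigl(n,M_E(|A|)\bigr)=O\!\left(|A|^{1+1/k}\cdot p^{\log_{4/3}13}+n\log p+n^{1+1/p}\right),
\]
using $M_E(|A|)=O(|A|^{1+1/k})$. Note that this time the $k$ factor that appeared in Theorems~\ref{thm:SubsetSpanner1} and~\ref{thm:SubsetSpanner2} disappears, because the emulator of Theorem~\ref{thm:MNEmulator} has size $O(|A|^{1+1/k})$ rather than $O(k|A|^{1+1/k})$.

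It then remains to absorb the additive term $n^{1+1/p}$, exactly as in the proof of Theorem~\ref{thm:SubsetSpanner1}. If $q>1$ then $p=\tfrac{1}{q-1}$, hence $1+\tfrac1p=q$ and $n^{1+1/p}=n^q=|A|^{1+1/k}$, which is dominated by the first term. If $q\le 1$ then $p=\log n$ and $n^{1+1/p}=n^{1+1/\log n}=2n$, which is dominated by $n\log p$. In both cases the size simplifies to $O\!\left(|A|^{1+1/k}\cdot\beta_3(p)+n\log p\right)$ with $\beta_3(p)=p^{\log_{4/3}13}<p^9$, using $\log_{4/3}13<9$; this establishes the subset-spanner part of the statement.

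For the source-wise spanner I would simply feed the subset spanner just built into Lemma~\ref{lemma:SubsetToSourcewise}, obtaining a path-reporting $A$-source-wise spanner of stretch $2\cdot O(k)+1=O(k)$, query time $O(1)$, and size $O\!\left(|A|^{1+1/k}\cdot\beta_3(p)+n\log p\right)+O(n)=O\!\left(|A|^{1+1/k}\cdot\beta_3(p)+n\log p\right)$, as claimed. Path-reporting is preserved throughout, since the emulator of Theorem~\ref{thm:MNEmulator}, the pairwise spanner of Theorem~\ref{thm:Near3PairwiseSpanner}, and the two reductions of Lemmas~\ref{lemma:PairwiseToSubset} and~\ref{lemma:SubsetToSourcewise} are all path-reporting and these properties compose.

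There is no genuine obstacle here: the argument is a mechanical composition of already-established ingredients. The only point requiring any care is the case analysis on the value of $q$ used to swallow the $n^{1+1/p}$ term into either the $|A|^{1+1/k}$ term or the $n\log p$ term, and this is verbatim the analysis already carried out for Theorem~\ref{thm:SubsetSpanner1}.
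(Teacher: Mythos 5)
Your proof is correct and follows essentially the same route as the paper: apply Lemma~\ref{lemma:PairwiseToSubset} with the $O(k)$-emulator of Theorem~\ref{thm:MNEmulator} and the $(3+\epsilon)$-pairwise spanner of Theorem~\ref{thm:Near3PairwiseSpanner} at $\epsilon=40$ with parameter $p$, absorb the $n^{1+1/p}$ term via the same case analysis on $q$ as in Theorem~\ref{thm:SubsetSpanner1}, and then feed the result through Lemma~\ref{lemma:SubsetToSourcewise}. You also correctly report the source-wise query time as $O(1)$, where the paper's proof text contains an apparent slip writing $O(\log k)$, inconsistent both with its own theorem statement and with the $O(q)$ bound of Lemma~\ref{lemma:SubsetToSourcewise}.
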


\begin{proof}

Apply Lemma \ref{lemma:PairwiseToSubset} on the path-reporting pairwise spanner from Theorem \ref{thm:Near3PairwiseSpanner}, with parameters $\epsilon=40,k=p$, and on the path-reporting emulator from Theorem \ref{thm:MNEmulator}. This provides a path-reporting $A$-subset spanner, with stretch $(3+40)\cdot O(k)=O(k)$, query time $O(1)$ and size
\begin{equation} \label{eq:SubsetSpannerSize3}
O\left(|A|^{1+\frac{1}{k}}\cdot p^{\log_{4/3}(12+\frac{40}{\epsilon})}+n\log p+n^{1+\frac{1}{p}}\right)=O\left(|A|^{1+\frac{1}{k}}\cdot \beta_3(p)+n\log p+n^{1+\frac{1}{p}}\right)~.
\end{equation}
Similarly to the proof of Theorem \ref{thm:SubsetSpanner1}, the third term in (\ref{eq:SubsetSpannerSize3}) is smaller than the first term if $q>1$, and smaller than the second term if $q\leq1$. Hence, we get the desired size for the resulting subset spanner.

By Lemma \ref{lemma:SubsetToSourcewise}, there is also a path-reporting $A$-source-wise spanner with stretch\newline $2\cdot O(k)+1=O(k)$, query time $O(\log k)$ and size 
\[O\left(|A|^{1+\frac{1}{k}}\cdot \beta_3(p)+n\log p\right)+O(n)=O\left(|A|^{1+\frac{1}{k}}\cdot \beta_3(p)+n\log p\right)~.\]

\end{proof}

\subsection{Prioritized Spanners} \label{sec:PrioritizedSpanners}

In this section we describe yet another reduction, that turns our path-reporting source-wise spanners from Section \ref{sec:SubsetSourcewiseSpanners} into path-reporting prioritized spanners (see Definition \ref{def:PrioritizedSpanner}). This reduction implicitly appears in \cite{EFN15}. In the following lemma, we use the following notation from \cite{NS22}. Given a function $f:\mathbb{N}\rightarrow\mathbb{N}$, denote
\[f^{-1}(j)=\min\{i\in\mathbb{N}\;|\;f(i)\geq j\}~.\]
This notation is well-defined for every $j\leq\max Im(f)$, where $Im(f)$ is the image of $f$, i.e., the set of outputs of the function $f$.

\begin{lemma} \label{lemma:SourcewiseToPrioritized}
Let $G=(V,E)$ be an undirected weighted $n$-vertex graph. Suppose that every subset $A\subseteq V$ admits a path-reporting $A$-source-wise $\alpha_0(|A|)$-spanner with query time $q_0(|A|)$ and size $M(n,|A|)$. In addition, suppose that there is a path-reporting spanner $(S^*,D^*)$ for $G$, with stretch $\alpha^*$, query time $q^*$ and size $M^*(n)$.

Then, for every monotonically increasing function $f:\{1,2,...,T\}\rightarrow\{1,2,...,n\}$, there is a path-reporting prioritized spanner for $G$ with prioritized stretch 
\[\alpha(j)=\begin{cases}
    \alpha_0(f(f^{-1}(j))) & j\leq f(T) \\
    \alpha^* & j>f(T)
\end{cases}~,\]
prioritized query time 
\[q(j)=\begin{cases}
    q_0(f(f^{-1}(j))) & j\leq f(T) \\
    q^* & j>f(T)
\end{cases}~,\] 
and size $O(n)+M^*(n)+\sum_{i=1}^TM(n,f(i))$.
\end{lemma}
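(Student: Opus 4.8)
The plan is to build the prioritized spanner as a union of $T$ source-wise spanners on nested prefixes of the priority ranking, plus one global spanner for the tail. First I would define, for each $i \in \{1,\dots,T\}$, the prefix $A_i = \{v_1,\dots,v_{f(i)}\}$; since $f$ is monotonically increasing these form a chain $A_1 \subsetneq A_2 \subsetneq \cdots \subsetneq A_T$ with $|A_i| = f(i)$. For each $i$ invoke the hypothesis to obtain a path-reporting $A_i$-source-wise $\alpha_0(f(i))$-spanner $(S_i, D_i)$ of size $M(n, f(i))$. Also take the global path-reporting spanner $(S^*, D^*)$ from the hypothesis. The spanner of the prioritized structure is $S = S^* \cup \bigcup_{i=1}^T S_i$, and the oracle $D$ stores $D^*$, all the $D_i$, and the ranking itself (so that given an index $j$ we can recover $f^{-1}(j)$ and the relevant vertices in $O(1)$ time, assuming as usual a word-RAM with the ranking preprocessed).

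Next I would describe the query procedure and verify stretch. Given a query $(v_j, v_{j'})$ with $j < j'$, there are two cases. If $j > f(T)$, we answer using $D^*$, which returns a $v_j$–$v_{j'}$ path in $S^* \subseteq S$ of weight at most $\alpha^* d_G(v_j, v_{j'})$ in time $O(q^* + |P|)$, matching $\alpha(j) = \alpha^*$ and $q(j) = q^*$. If $j \le f(T)$, let $i = f^{-1}(j) = \min\{i : f(i) \ge j\}$; this is well-defined because $j \le f(T) = \max \mathrm{Im}(f)$. Then $v_j \in A_i$ (since $j \le f(i) = |A_i|$, and the prefix $A_i$ consists of exactly the $f(i)$ top-ranked vertices). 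Because $(S_i, D_i)$ is an $A_i$-source-wise spanner, i.e.\ a path-reporting $\mathcal{P}$-pairwise spanner for $\mathcal{P} = A_i \times V$, and $(v_j, v_{j'}) \in A_i \times V$, the oracle $D_i$ returns a $v_j$–$v_{j'}$ path $P$ in $S_i \subseteq S$ with $w(P) \le \alpha_0(f(i)) \cdot d_G(v_j, v_{j'})$, in time $O(q_0(f(i)) + |P|)$. Since $f(i) = f(f^{-1}(j))$, this is exactly the claimed stretch $\alpha(j)$ and query time $q(j)$. In both cases the reported path lies in $S$, so the output is valid for the prioritized spanner.

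Finally I would bound the size. The spanner $S = S^* \cup \bigcup_i S_i$ has $|E_S| \le M^*(n) + \sum_{i=1}^T M(n, f(i))$. The oracle $D$ stores $D^*$ (size $M^*(n)$), the $D_i$ (total size $\sum_i M(n, f(i))$), and $O(n)$ words for the ranking and index-to-vertex lookup tables. Hence the size, defined as $\max\{|E_S|, |D|\}$, is $O(n) + M^*(n) + \sum_{i=1}^T M(n, f(i))$, as required. This lemma is essentially bookkeeping; the one point that needs a little care is checking that $f^{-1}$ is well-defined precisely on the range $j \le f(T)$ and that membership $v_j \in A_i$ holds for $i = f^{-1}(j)$ — i.e.\ that we never query a source-wise spanner outside its pair set — but this follows immediately from monotonicity of $f$ and the definition of the prefixes, so I do not anticipate any real obstacle.
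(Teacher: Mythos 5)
Your proposal is correct and takes essentially the same approach as the paper: define nested prefixes $A_i=\{v_1,\dots,v_{f(i)}\}$, build a source-wise spanner on each, union them with a global spanner for the tail, store an $O(n)$-size lookup for $f^{-1}$, and route each query $(v_j,v_{j'})$ to $D_{f^{-1}(j)}$ when $j\le f(T)$ and to $D^*$ otherwise. The stretch/query-time and size bookkeeping match the paper's proof.
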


\begin{proof}

Let $(v_1,v_2,...,v_n)$ be the priority ranking of the vertices in $V$. Consider the sequence $\{A_i\}_{i=1}^T$ of monotonically increasing subsets of $V$, where 
\[A_i=\{v_1,v_2,...,v_{f(i)}\}~.\]
Let $(S_i,D_i)$ be a path-reporting $A_i$-source-wise spanner with stretch $\alpha_0(|A_i|)=\alpha_0(f(i))$, query time $q_0(|A_i|)=q_0(f(i))$ and size $M(n,|A_i|)=M(n,f(i))$. We define a path-reporting prioritized spanner $(S,D)$ as follows. Let $S=S^*\cup\bigcup_{i=1}^TS_i$. The oracle $D$ stores all the oracles $D_i$ for $i\in[1,T]$, the oracle $D^*$, and a table of values of the function $f^{-1}(j)$, for every $1\leq j\leq f(T)=\max Im(f)$. Given a query $(v_j,v_{j'})\in V^2$, where $j<j'$, the oracle $D$ consider two cases. If $j\leq f(T)$, then $D$ restores $i=f^{-1}(j)$ from its table, and apply the oracle $D_i$ on the query $(v_j,v_{j'})$. If $j> f(T)$, then $D$ applies the oracle $D^*$ on $(v_j,v_{j'})$. The resulting path is the output of $D$.

Fix $j\leq f(T)$ and let $i=f^{-1}(j)$. Recall that $D_i=D_{f^{-1}(j)}$ is a path-reporting $A_i$-source-wise spanner with stretch $\alpha_0(|A_i|)=\alpha_0(f(i))=\alpha_0(f(f^{-1}(j)))$ and query time $q_0(f(f^{-1}(j)))$. Note also that by definition, $f(f^{-1}(j))\geq j$, and therefore $v_j\in A_{f^{-1}(j)}$. Thus, the prioritized stretch of the oracle $D_i$ on the query $(v_j,v_{j'})$, where $j<j'$ and $j\leq f(T)$, is at most $\alpha_0(f(f^{-1}(j)))$, and the prioritized query time is at most $q_0(f(f^{-1}(j)))$.

For $j>f(T)$, the stretch and the query time for a query $(v_j,v_{j'})$, where $j<j'$, are $\alpha^*$ and $q^*$ respectively.

The storage of the oracle $D$ consists of the oracles $\{D_i\}$, for $1\leq i\leq T$ and the oracle $D^*$. The size of each $D_i$ is $M(n,|A_i|)=M(n,f(i))$ and the size of $D^*$ is $M^*(n)$. Note that the oracle $D$ also stores an $O(n)$-sized table for the values of $f^{-1}(j)$, for every $1\leq j\leq f(T)$. Hence, the total size of $D$ is $O(n)+M^*(n)+\sum_{i=1}^TM(n,f(i))$. The size of the spanner $S$ itself is at most $|S^*|+\sum_{i=1}^T|S_i|\leq M^*(n)+\sum_{i=1}^TM(n,f(i))$. This completes the proof of the lemma.

\end{proof}

To use Lemma \ref{lemma:SourcewiseToPrioritized}, we need to choose a function $f$, path-reporting source-wise spanners $(S_i,D_i)$, and a path-reporting spanner $(S^*,D^*)$. The source-wise spanners we use are from Theorems \ref{thm:SubsetSpanner1},\ref{thm:SubsetSpanner2} and \ref{thm:SubsetSpanner3}. Note that the size of each of these three source-wise spanners is
\[O\left(|A|^{1+\frac{1}{k}}\cdot\gamma+n\log p\right)~,\]
where $\gamma\in\{k\beta_1(\epsilon,p),k\beta_2(\epsilon,p),\beta_3(p)\}$. Here,
\[\beta_1(\epsilon,p)=O\left(\frac{\log p}{\epsilon}\right)^{(1+o(1))\log_2p+\log_{4/3}\log(\frac{1}{\epsilon})},\;\;\;\beta_2(\epsilon,p)=p^{\log_{4/3}(12+\frac{40}{\epsilon})},\;\;\;\beta_3(p)=p^{\log_{4/3}13}~.\]

We apply Lemma \ref{lemma:SourcewiseToPrioritized} on these source-wise spanners. Every choice of the function $f$ yields a path-reporting prioritized spanner with different properties. We now explicitly show the outcomes of several choices of this function. Many other choices that are not presented here can be made, and they might provide prioritized spanners with different properties.

For convenience, in the following theorem we use the notation $\delta_j=\frac{\log j}{\log n}$. That is, $\delta_j$ is the exponent $\delta$ such that $j=n^\delta$. This notation is useful when comparing our next result with the results of \cite{EFN15}. In \cite{EFN15}, {\em non-path-reporting} prioritized distance oracles were presented, that had prioritized stretch 
$\min\{\frac{4}{1-\delta_j}-1,\log n\}, \min\{\frac{8}{1-\delta_j}-5,\log n\}$ 
with size $O(n\log n),O(n\log\log n)$
, respectively\footnote{\cite{EFN15} also provided additional non-path-reporting prioritized spanners with sizes $O(n\log\log\log n)$ and $O(n\log^*n)$ (and possibly could provide many more prioritized spanners with different sizes and stretch parameters). However, as we will see in the proofs of Theorems \ref{thm:PrioritizedSpanner1} and \ref{thm:PrioritizedSpanner2}, our constructions of {\em path-reporting} prioritized spanners rely on our source-wise spanners from Section \ref{sec:SubsetSourcewiseSpanners}, that have size $\Omega(n\log\log n)$. Thus, in this paper we cannot achieve smaller size while keeping the spanners path-reporting.}. For comparison, a {\em path-reporting} prioritized distance oracle of size $O(n\log n)$ was also given in \cite{EFN15}, but it had a worse prioritized stretch of $2\delta_j\cdot\log n-1$.

\begin{theorem} \label{thm:PrioritizedSpanner1}
For every undirected weighted $n$-vertex graph $G=(V,E)$, and a real parameter $\epsilon$ such that $\sqrt{\frac{\log\log n\cdot\log\log\log n}{\log n}}\leq\epsilon\leq\frac{1}{2}$, there are path-reporting prioritized spanners with the following trade-offs.
\begin{enumerate}
    \item size $O_\epsilon\left(\frac{n\log n}{\log\log\log n}\right)$, prioritized stretch $\alpha(j)=\begin{cases}
        (4+\epsilon)\left\lceil\frac{1}{1-\delta_j-\eta_1(n)}\right\rceil & j\leq\frac{1}{2}n^{1-\eta_1(n)} \\
        (4+\epsilon)\frac{\log n}{\log\log\log n} & j>\frac{1}{2}n^{1-\eta_1(n)}
    \end{cases}$\newline and prioritized query time $O\left(\log\alpha(j)\right)$, where\newline $\eta_1(n)=\frac{4\log\beta_1(\epsilon,\log n)}{\log n}=O_\epsilon\left(\frac{\log\log n\cdot\log\log\log n}{\log n}\right)$.
    
    \item size $O_\epsilon(n\log n)$, prioritized stretch $\alpha(j)=\begin{cases}
        (12+\epsilon)\left\lceil\frac{1}{1-\delta_j-\eta_2(n)}\right\rceil & j<\frac{1}{2}n^{1-\eta_2(n)} \\
        (4+\epsilon)\frac{\log n}{\log\log\log n} & j\geq\frac{1}{2}n^{1-\eta_2(n)}
    \end{cases}$\newline 
    and prioritized query time $O\left(\log\alpha(j)\right)$, where\newline $\eta_2(n)=\frac{4\log\beta_2(\epsilon,\log n)}{\log n}=O_\epsilon\left(\frac{\log\log n}{\log n}\right)$.
    
    \item size $O(n\log n)$, prioritized stretch $\alpha(j)=\begin{cases}
        O\left(\left\lceil\frac{1}{1-\delta_j-\eta_3(n)}\right\rceil\right) & j<\frac{1}{2}n^{1-\eta_3(n)} \\
        O\left(\frac{\log n}{\log\log n}\right) & j\geq\frac{1}{2}n^{1-\eta_3(n)}
    \end{cases}$\newline 
    and query time $O(1)$, where $\eta_3(n)=\frac{4\log\beta_3(\log n)}{\log n}=O\left(\frac{\log\log n}{\log n}\right)$.
\end{enumerate}
\end{theorem}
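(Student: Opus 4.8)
The plan is to apply Lemma~\ref{lemma:SourcewiseToPrioritized} three times, once per item, each time feeding it (i) the appropriate family of path‑reporting source‑wise spanners, (ii) a global fallback path‑reporting spanner $(S^*,D^*)$ for the low‑priority pairs, and (iii) a carefully calibrated monotone function $f$. For item~1 I would take the source‑wise spanners of Theorem~\ref{thm:SubsetSpanner1} together with $(S^*,D^*)$ from Theorem~\ref{thm:ES6} with internal parameter $\tfrac{\log n}{\log\log\log n}$, so that $\alpha^*=(4+\epsilon)\tfrac{\log n}{\log\log\log n}$, query time $O(\log\log n)$, and $M^*(n)=O_\epsilon(n(\log\log n)^2)$; for item~2, the source‑wise spanners of Theorem~\ref{thm:SubsetSpanner2} with the same $(S^*,D^*)$; and for item~3, the source‑wise spanners of Theorem~\ref{thm:SubsetSpanner3} with $(S^*,D^*)$ from Theorem~\ref{thm:ES7} with internal parameter $\tfrac{\log n}{\log\log n}$, giving $\alpha^*=O(\tfrac{\log n}{\log\log n})$, $M^*(n)=O(n\log n)$, and query time $O(1)$. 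Path‑reportability is inherited from Lemma~\ref{lemma:SourcewiseToPrioritized} and from every ingredient, so no extra work is needed for it.

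The substantive choice is that of $f$ and of the internal parameters of the source‑wise spanners. Fix $\ell\in\{1,2,3\}$, let $\beta$ stand for $\beta_1(\epsilon,\log n)$, $\beta_2(\epsilon,\log n)$ or $\beta_3(\log n)$ accordingly, and set $\eta=\Theta(\eta_\ell(n))=\Theta\!\big(\tfrac{\log\beta}{\log n}\big)$; plugging $p=\log n$ into the formulas for $\beta_1,\beta_2,\beta_3$ and using $\epsilon\ge\sqrt{(\log\log n\cdot\log\log\log n)/\log n}$ gives the stated asymptotics for $\eta_1,\eta_2,\eta_3$, so $\eta=o(1)$ and $\beta=n^{\Theta(\eta)}$. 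I would then take $f(i)=\lfloor n^{i/(i+1)}\rfloor$ for $i=1,\dots,T$ with $T=\Theta(1/\eta)$ (this realizes the ``slowly growing'' prefix sizes $\approx n^{1/2},n^{2/3},n^{3/4},\dots$), and on the $i$‑th prefix $A_i=\{v_1,\dots,v_{f(i)}\}$ run the source‑wise spanner with internal parameter $k_i=\lceil i/(1-c(i+1)\eta)\rceil$ for a small absolute constant $c$. Then $k_i=\Theta(i)$, so $1\le k_i\le O(1/\eta)\le\log n$ is a legal choice, and the telescoping $\tfrac{i}{i+1}(1+1/k_i)\le 1-c\eta$ gives $|A_i|^{1+1/k_i}\le n^{1-c\eta}$. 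Hence the quantity $q$ of Theorems~\ref{thm:SubsetSpanner1}--\ref{thm:SubsetSpanner3} is $<1$, so $p=\log n$ there, and the leading size term $k_i\,|A_i|^{1+1/k_i}\,\beta$ is $n^{1-c\eta}\cdot n^{\Theta(\eta)}\cdot\mathrm{polylog}(n)=O(n)$, the factor $\beta$ being exactly swallowed by the savings in $|A_i|^{1+1/k_i}$. Thus each of the $T$ source‑wise spanners has size $O_\epsilon(n\log\log n)$, and Lemma~\ref{lemma:SourcewiseToPrioritized} gives total size $O(n)+M^*(n)+T\cdot O_\epsilon(n\log\log n)=M^*(n)+O_\epsilon(n\log\log n/\eta)$, which is $O_\epsilon(\tfrac{n\log n}{\log\log\log n})$ in item~1 and $O_\epsilon(n\log n)$ (resp.\ $O(n\log n)$) in item~2 (resp.\ 3).

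For the stretch, Lemma~\ref{lemma:SourcewiseToPrioritized} gives for a query $(v_j,v_{j'})$ with $j\le f(T)$ prioritized stretch $\alpha_0(f(f^{-1}(j)))$, which here is a constant ($4+\epsilon$, $12+\epsilon$, or $O(1)$) times $k_i$ with $i=f^{-1}(j)$. From $f(i)=\lfloor n^{i/(i+1)}\rfloor$ one gets $f^{-1}(j)=\lceil\delta_j/(1-\delta_j)\rceil<1/(1-\delta_j)$ where $\delta_j=\tfrac{\log j}{\log n}$; substituting into $k_i=\lceil i/(1-c(i+1)\eta)\rceil$ and simplifying yields $k_i\le\lceil 1/(1-\delta_j-\eta_\ell(n))\rceil$ (the constants $c$ and the factor in $\eta_\ell$ are chosen precisely to make this clean), giving the claimed prioritized stretch on covered priorities. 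Choosing the constant in $T$ appropriately makes $f(T)\ge\tfrac12 n^{1-\eta_\ell(n)}$, which is the claimed covered range; for $j>f(T)$ the oracle falls back on $(S^*,D^*)$ with stretch $\alpha^*$ as fixed above. Finally the prioritized query time is $O(\log k_i)=O(\log\alpha(j))$ in items~1--2 by the $O(\log k)$ query times of Theorems~\ref{thm:SubsetSpanner1},\ref{thm:SubsetSpanner2} and~\ref{thm:ES6}, and $O(1)$ in item~3 by Theorems~\ref{thm:SubsetSpanner3} and~\ref{thm:ES7}.

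The main obstacle is exactly the coupled calibration of $f(i)=\lfloor n^{i/(i+1)}\rfloor$ against the parameters $k_i$: $k_i$ must be \emph{large} enough that $k_i\,|A_i|^{1+1/k_i}\,\beta\le n$ — so that $p=\log n$ and each source‑wise spanner costs only $O_\epsilon(n\log\log n)$ — yet \emph{small} enough that the stretch $\Theta(k_i)$ it provides for a query at priority $j$ with $f^{-1}(j)=i$ is still $O\!\big(1/(1-\delta_j-\eta)\big)$. The heart of the argument is to check that these two constraints are simultaneously feasible precisely for $i\le T=\Theta(1/\eta)$, equivalently for priorities up to $n^{1-\Theta(\eta)}$: this single fact is what pins down both the covered range and the $1/\eta$ factor ($=\Theta(\log n/\log\beta_\ell)$) in the total size. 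Everything else is a routine substitution of the parameters of Theorems~\ref{thm:SubsetSpanner1}--\ref{thm:SubsetSpanner3}, \ref{thm:ES6} and~\ref{thm:ES7} into Lemma~\ref{lemma:SourcewiseToPrioritized}.
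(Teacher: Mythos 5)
Your proof is correct and follows essentially the same route as the paper's: the same reduction Lemma~\ref{lemma:SourcewiseToPrioritized}, the same source-wise spanners from Theorems~\ref{thm:SubsetSpanner1}--\ref{thm:SubsetSpanner3} and fallback spanners from Theorems~\ref{thm:ES6}/\ref{thm:ES7}, and the same prefix schedule (your $f(i)=\lfloor n^{i/(i+1)}\rfloor$ is just the paper's $f(i)=\lfloor n^{1-1/i}\rfloor$ with the index shifted by one, and your $k_i=\lceil i/(1-c(i+1)\eta)\rceil$ matches the paper's $k(|A|)=\lceil\log|A|/(\log n-2\log\beta_r-\log|A|)\rceil$ after the same shift). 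The only thing left implicit in your sketch is pinning down the constant $c$ so that both $k_i|A_i|^{1+1/k_i}\beta\le O(n)$ and the stretch bound $k_i\le\lceil 1/(1-\delta_j-\eta_\ell)\rceil$ hold simultaneously, which the paper carries out explicitly; this is a routine computation and your outline of it is correct.
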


\begin{proof}

In what follows, we denote $\beta_1=\beta_1(\epsilon,\log n)$, $\beta_2=\beta_2(\epsilon,\log n)$ and $\beta_3=\beta_3(\log n)$. We also denote $a_1=4+\epsilon$, $a_2=12+\epsilon$ and $a_3=C$, where $C$ is the constant such that the source-wise spanner from Theorem \ref{thm:SubsetSpanner3} has stretch $Ck$. We prove the three items in the theorem all at once. To do so, we fix some $r\in\{1,2,3\}$.

Let $f:\{1,2,...,T\}\rightarrow\{1,2,...,n\}$ be the following function
\[f(i)=\lfloor n^{1-\frac{1}{i}}\rfloor~,\]
where $T=\frac{\log n}{4\log\beta_r+1}$. For every $1\leq j\leq n$, note that $j\leq f(i)$ if and only if $j\leq n^{1-\frac{1}{i}}$, i.e., $\frac{1}{1-\delta_j}\leq i$. Therefore, we have $f^{-1}(j)=\left\lceil\frac{1}{1-\delta_j}\right\rceil$. In addition, we have 
\begin{equation} \label{eq:FOfFInv1}
f(f^{-1}(j))=\lfloor n^{1-\frac{1}{f^{-1}(j)}}\rfloor\leq n^{1-\frac{1}{\frac{1}{1-\delta_j}+1}}=n^{\frac{1}{2-\delta_j}}~.
\end{equation}

To use Lemma \ref{lemma:SourcewiseToPrioritized} with the function $f$, we first have to specify for every possible subset $A\subseteq V$, which path-reporting source-wise spanner we use. For $A\subseteq V$ with size $|A|\leq\frac{n}{2\beta_r^2}$, we use the source-wise spanner from Theorem \ref{thm:SubsetSpanner1} (for $r=1$), Theorem \ref{thm:SubsetSpanner2} (for $r=2$), or Theorem \ref{thm:SubsetSpanner3} (for $r=3$), with
\[k=k(|A|)=\left\lceil\frac{\log|A|}{\log n-2\log\beta_r-\log|A|}\right\rceil~.\]
Note that $|A|^{1+\frac{1}{k}}\leq\frac{n}{\beta_r^2}$, and in particular $q<1$, where $q=\log_n(|A|^{1+\frac{1}{k}})$ is the value from these theorems. Then, this source-wise spanner has stretch $a_rk$, query time $O(\log k)$ and size
\[O_\epsilon\left(k|A|^{1+\frac{1}{k}}\cdot\beta_r+n\log\log n\right)=O_\epsilon\left(k\cdot\frac{n}{\beta_r^2}\cdot\beta_r+n\log\log n\right)=O_\epsilon\left(n\log\log n\right)~,\]
where in the last step we used the fact that $k\leq\log n\leq\beta_r$.

For $A\subseteq V$ with size $|A|>\frac{n}{2\beta_r^2}$, the choice of $k$ that we used above is too large, negative or undefined. Thus, in this case we use $k=\log n$ or any other arbitrary\footnote{Note that in Lemma \ref{lemma:SourcewiseToPrioritized}, source-wise spanners of subsets $A\subseteq V$ that are larger than $f(T)$ are not used. In our case, $f(T)=\left\lfloor\frac{n}{2\beta_r^4}\right\rfloor\leq\frac{n}{2\beta_r^2}$, and therefore the choice of $k$ does not matter in case that $|A|>\frac{n}{2\beta_r^2}$.} value of $k$.

Lastly, for the role of the path-reporting spanner in Lemma \ref{lemma:SourcewiseToPrioritized}, we use the path-reporting spanner from Theorem \ref{thm:ES6}, with $k=\frac{\log n}{\log\log\log n}$. This path-reporting spanner has size\newline $O\left(\left\lceil\frac{k\log\log n\cdot\log\log\log n}{\epsilon\log n}\right\rceil\cdot n^{1+\frac{1}{k}}\right)=O_\epsilon\left(n(\log\log n)^2\right)$, stretch $(4+\epsilon)k=(4+\epsilon)\frac{\log n}{\log\log\log n}$ and query time $O(\log k)=O(\log\log n)$.

For $r=3$, we actually use the path-reporting spanner from Theorem \ref{thm:ES7}, with $k=\frac{\log n}{\log\log n}$, instead of the one from Theorem \ref{thm:ES6}. This spanner has stretch $O\left(\frac{\log n}{\log\log n}\right)$, query time $O(1)$, and size $O(n\log n)$.

We now apply Lemma \ref{lemma:SourcewiseToPrioritized}. Let $(v_j,v_{j'})$ be a query such that $j<j'$. If $j\leq f(T)=\left\lfloor\frac{n}{2\beta_r^4}\right\rfloor$ - i.e., $j\leq\frac{1}{2}n^{1-\frac{4\log\beta_r}{\log n}}=\frac{1}{2}n^{1-\eta_r(n)}$ - then this query has the following stretch in the resulting path-reporting prioritized spanner:
\begin{eqnarray*}
a_rk\left(f(f^{-1}(j))\right)&\stackrel{(\ref{eq:FOfFInv1})}{\leq}&a_rk(n^{\frac{1}{2-\delta_j}})\\
&=&a_r\left\lceil\frac{\log(n^{\frac{1}{2-\delta_j}})}{\log n-2\log\beta_r-\log(n^{\frac{1}{2-\delta_j}})}\right\rceil\\
&=&a_r\left\lceil\frac{\frac{1}{2-\delta_j}}{1-\frac{2\log\beta_r}{\log n}-\frac{1}{2-\delta_j}}\right\rceil\\
&=&a_r\left\lceil\frac{1}{1-\delta_j-\frac{(2-\delta_j)2\log\beta_r}{\log n}}\right\rceil\\
&\leq&a_r\left\lceil\frac{1}{1-\delta_j-\frac{2\cdot2\log\beta_r}{\log n}}\right\rceil=a_r\left\lceil\frac{1}{1-\delta_j-\eta_r(n)}\right\rceil~.
\end{eqnarray*}
The query time of such query $(v_j,v_{j'})$ is $O\left(\log k\left(f(f^{-1}(j))\right)\right)\leq O\left(\log\left\lceil\frac{1}{1-\delta_j-\eta_r(n)}\right\rceil\right)$, when $r\in\{1,2\}$. For $r=3$, this query time is $O(1)$, since this is the query time of the source-wise spanners from Theorem \ref{thm:SubsetSpanner3}.

For $j>f(T)=\left\lfloor\frac{n}{2\beta_r^4}\right\rfloor$, i.e., for $j>\frac{1}{2}n^{1-\eta_r(n)}$, the query $(v_j,v_{j'})$ (where $j<j'$) has stretch $(4+\epsilon)\frac{\log n}{\log\log\log n}$ and query time $O(\log\log n)$, when $r\in\{1,2\}$ (these are the same stretch and query time of the path-reporting spanner from Theorem \ref{thm:ES6}). For $r=3$, such query has stretch $O\left(\frac{\log n}{\log\log n}\right)$ and query time $O(1)$ (as in Theorem \ref{thm:ES7}, with $k=\frac{\log n}{\log\log n}$).

Finally, we bound the size of the resulting path-reporting prioritized spanner. Here we divide the analysis for the different $r$'s.

For $r=1$, note that $T=\frac{\log n}{4\log\beta_1+1}=O_\epsilon\left(\frac{\log n}{\log\log n\cdot\log\log\log n}\right)$. Thus, the size of the prioritized spanner is
\begin{eqnarray*}
&&O_\epsilon\left(n+n(\log\log n)^2+T\cdot n\log\log n\right)\\
&=&O_\epsilon\left(n(\log\log n)^2+\frac{\log n}{\log\log n\cdot\log\log\log n}\cdot n\log\log n\right)
=O_\epsilon\left(\frac{n\log n}{\log\log\log n}\right)~.
\end{eqnarray*}

For $r=2$, note that $T=\frac{\log n}{4\log\beta_2+1}=O_\epsilon\left(\frac{\log n}{\log\log n}\right)$. Thus, the size of the prioritized spanner is
\begin{eqnarray*}
O_\epsilon\left(n+n(\log\log n)^2+T\cdot n\log\log n\right)
&=&O_\epsilon\left(n(\log\log n)^2+\frac{\log n}{\log\log n}\cdot n\log\log n\right)\\
&=&O_\epsilon\left(n\log n\right)~.
\end{eqnarray*}

For $r=3$, note that $T=\frac{\log n}{4\log\beta_3+1}=O\left(\frac{\log n}{\log\log n}\right)$. Thus, the size of the prioritized spanner is
\begin{eqnarray*}
O\left(n+n\log n+T\cdot n\log\log n\right)
&=&O\left(n\log n+\frac{\log n}{\log\log n}\cdot n\log\log n\right)\\
&=&O\left(n\log n\right)~.
\end{eqnarray*}

\end{proof}

The following theorem improves the size of the prioritized spanners above to $O(n(\log\log n)^2)$, while only slightly increase the stretch (it actually increases it by a factor of at most $2$, with respect to the prioritized spanner from Item 1 in Theorem \ref{thm:PrioritizedSpanner1}). This time, we use only one of the source-wise spanners from Section \ref{sec:SubsetSourcewiseSpanners}. Using the other two provides similar results.

\begin{theorem} \label{thm:PrioritizedSpanner2}
For every undirected weighted $n$-vertex graph $G=(V,E)$, and a real parameter $\epsilon$ such that $\sqrt{\frac{\log\log n\cdot\log\log\log n}{\log n}}\leq\epsilon\leq\frac{1}{2}$, there is a path-reporting prioritized spanner with size $O_\epsilon\left(n(\log\log n)^2\right)$, prioritized stretch 
\[\alpha(j)=\begin{cases}
    (4+\epsilon)\left\lceil\frac{1+\delta_j}{1-\delta_j-\eta_1(n)}\right\rceil & j\leq\frac{1}{2}n^{1-\eta_1(n)} \\
    (4+\epsilon)\frac{\log n}{\log\log\log n} & j>\frac{1}{2}n^{1-\eta_1(n)}
\end{cases}\]
and prioritized query time $O\left(\log\alpha(j)\right)$, where\newline $\eta_1(n)=\frac{4\log\beta_1(\epsilon,\log n)}{\log n}=O_\epsilon\left(\frac{\log\log n\cdot\log\log\log n}{\log n}\right)$.
\end{theorem}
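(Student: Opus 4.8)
The plan is to invoke Lemma~\ref{lemma:SourcewiseToPrioritized} exactly as in the proof of Theorem~\ref{thm:PrioritizedSpanner1}, but with a much faster-growing family of prefixes, so that only $O(\log\log n)$ source-wise spanners are needed and the total size drops to $O_\epsilon(n(\log\log n)^2)$. Write $\beta_1=\beta_1(\epsilon,\log n)$ and $\eta_1(n)=\frac{4\log\beta_1}{\log n}$, so $n^{\eta_1}=\beta_1^4$. I would take the prefix function $f(i)=\lfloor n^{1-2^{-i}}\rfloor$, truncated at an integer $T$ chosen so that $f(T)\in[\tfrac12 n^{1-\eta_1},\,n/(2\beta_1^2))$; since $\log\beta_1\ge 1$ one gets $T=\Theta(\log(1/\eta_1))=O(\log\log n)$. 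For the ``tail'' spanner $(S^*,D^*)$ in Lemma~\ref{lemma:SourcewiseToPrioritized} I would use the path-reporting spanner of Theorem~\ref{thm:ES6} with $k=\log n/\log\log\log n$, which has stretch $(4+\epsilon)\frac{\log n}{\log\log\log n}$, query time $O(\log\log n)$ and size $O_\epsilon(n(\log\log n)^2)$.

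For each prefix $A_i=\{v_1,\dots,v_{f(i)}\}$ we have $|A_i|\le f(T)\le n/(2\beta_1^2)$, so I would attach to $A_i$ the path-reporting $A_i$-source-wise $(4+\epsilon)k$-spanner of Theorem~\ref{thm:SubsetSpanner1} with $k=k(|A_i|):=\lceil\frac{\log|A_i|}{\log n-2\log\beta_1-\log|A_i|}\rceil$, exactly as in the proof of Theorem~\ref{thm:PrioritizedSpanner1}. This choice of $k$ forces $|A_i|^{1+1/k}\le n/\beta_1^2$ (so we are in the $q\le1$ case of Theorem~\ref{thm:SubsetSpanner1}, hence $p=\log n$ and the relevant coefficient is $\beta_1$), and then this spanner has query time $O(\log k)$ and size $O_\epsilon(k|A_i|^{1+1/k}\beta_1+n\log\log n)=O_\epsilon(n\log\log n)$. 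Feeding these into Lemma~\ref{lemma:SourcewiseToPrioritized} yields a path-reporting prioritized spanner of size $O(n)+O_\epsilon(n(\log\log n)^2)+T\cdot O_\epsilon(n\log\log n)=O_\epsilon(n(\log\log n)^2)$, as $T=O(\log\log n)$.

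Finally I would read off the prioritized stretch. Since $j\le f(i)$ iff $2^i\ge\frac1{1-\delta_j}$, we get $f^{-1}(j)=\lceil\log\frac1{1-\delta_j}\rceil$ and therefore $2^{-f^{-1}(j)}\ge\frac{1-\delta_j}{2}$, so $f(f^{-1}(j))\le n^{1-\frac{1-\delta_j}{2}}=n^{(1+\delta_j)/2}$. For $j\le f(T)$ --- which, because $f(T)\ge\frac12 n^{1-\eta_1}$, is precisely the first case $j\le\frac12 n^{1-\eta_1(n)}$ --- Lemma~\ref{lemma:SourcewiseToPrioritized} gives stretch $(4+\epsilon)\,k(f(f^{-1}(j)))\le(4+\epsilon)\,k(n^{(1+\delta_j)/2})$, and substituting $|A|=n^{(1+\delta_j)/2}$ into $k(|A|)$ and simplifying the denominator $1-\frac{2\log\beta_1}{\log n}-\frac{1+\delta_j}{2}=\frac12(1-\delta_j-\eta_1(n))$ gives $k(n^{(1+\delta_j)/2})=\lceil\frac{1+\delta_j}{1-\delta_j-\eta_1(n)}\rceil$, with query time $O(\log k)=O(\log\alpha(j))$; for $j>f(T)$ the query is answered by $(S^*,D^*)$ with stretch $(4+\epsilon)\frac{\log n}{\log\log\log n}$ and query time $O(\log\log n)=O(\log\alpha(j))$. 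The main thing to verify carefully is that a single truncation point $T$ can simultaneously make $f(T)$ large enough to cover the whole range $j\le\frac12 n^{1-\eta_1(n)}$ and small enough ($\le n/(2\beta_1^2)$) that \emph{every} source-wise spanner used keeps size $O_\epsilon(n\log\log n)$; once this is pinned down, the rest is the same bookkeeping as in Theorem~\ref{thm:PrioritizedSpanner1}, with the faster growth of $f$ replacing the exponent $\frac1{2-\delta_j}$ there by $\frac{1+\delta_j}{2}$ here (which is exactly the source of the extra factor $1+\delta_j\le 2$ in the stretch).
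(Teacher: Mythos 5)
Your proposal matches the paper's proof essentially step for step: same prefix function $f(i)=\lfloor n^{1-2^{-i}}\rfloor$, same choice of $k(|A|)$ and the source-wise spanner from Theorem~\ref{thm:SubsetSpanner1}, same tail spanner from Theorem~\ref{thm:ES6} with $k=\log n/\log\log\log n$, same bound $f(f^{-1}(j))\le n^{(1+\delta_j)/2}$, and the same final bookkeeping. The only cosmetic difference is that the paper pins down $T=\log\log n-\log(4\log\beta_1+1)$ explicitly (giving $f(T)=\lfloor n/(2\beta_1^4)\rfloor$), whereas you characterize $T$ by the window $f(T)\in[\tfrac12 n^{1-\eta_1},n/(2\beta_1^2))$; both choices yield $T=O(\log\log n)$ and the same conclusion.
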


\begin{proof}

As in the proof of Theorem \ref{thm:PrioritizedSpanner1}, we denote $\beta_1=\beta_1(\epsilon,\log n)$. Let $f:\{1,2,...,T\}\rightarrow\{1,2,...,n\}$ be the function
\[f(i)=\lfloor n^{1-\frac{1}{2^i}}\rfloor~,\]
where $T=\log\log n-\log(4\log\beta_1+1)$. For every $1\leq j\leq n$, note that $j\leq f(i)$ if and only if $j\leq n^{1-\frac{1}{2^i}}$, i.e., $\log\left(\frac{1}{1-\delta_j}\right)\leq i$. Therefore, we have $f^{-1}(j)=\left\lceil\log\left(\frac{1}{1-\delta_j}\right)\right\rceil$. As a result, we have 
\begin{equation} \label{eq:FOfFInv2}
f(f^{-1}(j))=\lfloor n^{1-2^{-f^{-1}(j)}}\rfloor\leq n^{1-2^{-\log\left(\frac{1}{1-\delta_j}\right)-1}}=n^{1-\frac{1-\delta_j}{2}}=n^{\frac{1+\delta_j}{2}}~.
\end{equation}

We now specify the source-wise spanner that we use for each subset $A\subseteq V$, in order to apply Lemma \ref{lemma:SourcewiseToPrioritized}. For $A\subseteq V$ with size $|A|\leq\frac{n}{2\beta_1^2}$, we use the source-wise spanner from Theorem \ref{thm:SubsetSpanner1}, with the same $k$ as in the proof of Theorem \ref{thm:PrioritizedSpanner1}:
\[k=k(|A|)=\left\lceil\frac{\log|A|}{\log n-2\log\beta_1-\log|A|}\right\rceil~.\]
Note that $|A|^{1+\frac{1}{k}}\leq\frac{n}{\beta_1^2}$, and in particular $q<1$, where $q=\log_n(|A|^{1+\frac{1}{k}})$ is the value from Theorem \ref{thm:SubsetSpanner1}. Then, this source-wise spanner has stretch $(4+\epsilon)k$, query time $O(\log k)$ and size
\[O_\epsilon\left(k|A|^{1+\frac{1}{k}}\cdot\beta_1+n\log\log n\right)=O_\epsilon\left(k\cdot\frac{n}{\beta_1^2}\cdot\beta_1+n\log\log n\right)=O_\epsilon\left(n\log\log n\right)~,\]
where in the last step we used the fact that $k\leq\log n\leq\beta_1$.

For $A\subseteq V$ with size $|A|>\frac{n}{2\beta_1^2}$, the choice of $k$ does not affect the resulting prioritized spanner, thus we choose it arbitrarily.

Lastly, for the role of the path-reporting spanner in Lemma \ref{lemma:SourcewiseToPrioritized}, we use the path-reporting spanner from Theorem \ref{thm:ES6}, with $k=\frac{\log n}{\log\log\log n}$. This path-reporting spanner has size\newline $O\left(\left\lceil\frac{k\log\log n\cdot\log\log\log n}{\epsilon\log n}\right\rceil\cdot n^{1+\frac{1}{k}}\right)=O_\epsilon\left(n(\log\log n)^2\right)$, stretch $(4+\epsilon)k=(4+\epsilon)\frac{\log n}{\log\log\log n}$ and query time $O(\log k)=O(\log\log n)$.

We now apply Lemma \ref{lemma:SourcewiseToPrioritized}. Let $(v_j,v_{j'})$ be a query such that $j<j'$. If $j\leq f(T)=\left\lfloor\frac{n}{2\beta_1^4}\right\rfloor$ - i.e., $j\leq\frac{1}{2}n^{1-\frac{4\log\beta_1}{\log n}}=\frac{1}{2}n^{1-\eta_1(n)}$ - then this query has the following stretch in the resulting path-reporting prioritized spanner:
\begin{eqnarray*}
(4+\epsilon)k\left(f(f^{-1}(j))\right)&\stackrel{(\ref{eq:FOfFInv2})}{\leq}&(4+\epsilon)k(n^{\frac{1+\delta_j}{2}})\\
&=&(4+\epsilon)\left\lceil\frac{\log(n^{\frac{1+\delta_j}{2}})}{\log n-2\log\beta_1-\log(n^{\frac{1+\delta_j}{2}})}\right\rceil\\
&=&(4+\epsilon)\left\lceil\frac{\frac{1+\delta_j}{2}}{1-\frac{2\log\beta_1}{\log n}-\frac{1+\delta_j}{2}}\right\rceil\\
&=&(4+\epsilon)\left\lceil\frac{1+\delta_j}{1-\delta_j-\frac{4\log\beta_1}{\log n}}\right\rceil=(4+\epsilon)\left\lceil\frac{1+\delta_j}{1-\delta_j-\eta_1(n)}\right\rceil~.
\end{eqnarray*}
The query time of such query $(v_j,v_{j'})$ is $O\left(\log k\left(f(f^{-1}(j))\right)\right)\leq O\left(\log\left\lceil\frac{1+\delta_j}{1-\delta_j-\eta_1(n)}\right\rceil\right)$.

For $j>f(T)=\left\lfloor\frac{n}{2\beta_1^4}\right\rfloor$, i.e., for $j>\frac{1}{2}n^{1-\eta_1(n)}$, the query $(v_j,v_{j'})$ (where $j<j'$) has stretch $(4+\epsilon)\frac{\log n}{\log\log\log n}$ and query time $O(\log\log n)$ (these are the same stretch and query time of the path-reporting spanner from Theorem \ref{thm:ES6}).

Finally, since $T<\log\log n$, the size of the resulting path-reporting prioritized spanner is
\[O_\epsilon\left(n+n(\log\log n)^2+T\cdot n\log\log n\right)=O_\epsilon\left(n(\log\log n)^2\right)~.\]
This completes the proof of the theorem.
\end{proof}



\bibliography{hopset}

\appendix

\section{Proof of Theorem \ref{thm:HopsetsExtendedVersion}} \label{sec:ProofExtendedVersion}

To prove Theorem \ref{thm:HopsetsExtendedVersion}, we dive into the details of Theorem \ref{thm:HopsetsWithLargeStretch} from \cite{NS22}. In \cite{NS22}, a hopset $H$ was constructed, in a way that generalizes other constructions of hopsets, that appear in the papers \cite{EN19,HP17,BP20}, and many others. 

In all of these constructions, given an undirected weighted graph $G=(V,E)$ on $n$ vertices, a certain choice of a hierarchy of sets $V=A_0\supseteq A_1\supseteq\cdots A_F=\emptyset$ is made. This hierarchy of subsets is a concept that appears in many of the constructions regarding spanners, hopsets, preservers, distance oracles and more. It was originated in the seminal paper by Thorup and Zwick \cite{TZ01}. In all the results that rely on this technique, (a variation of) the notions of \textit{pivots} and \textit{bunches} also appear (here $i\in[0,F-1]$):
\begin{enumerate}
    \item The $i$-th {\em pivot} of a vertex $u\in V$ is the closest vertex to $u$ from the set $A_i$ (breaking ties in some consistent manner). It is denoted by $p_i(u)$.
    \item The $i$-th {\em bunch} of a vertex $u\in V$ is the set of all vertices $v\in A_i$ that are closer to $u$ than $p_{i+1}(u)$. It is denoted by $B_i(u)=\{v\in A_i\;|\;d_G(u,v)<d_G(u,p_{i+1}(u))\}$. For $i=F-1$, since $A_F=\emptyset$, the pivot $p_F(u)$ is not defined, therefore in this case we simply define $d_G(u,p_F(u))=\infty$ and thus $B_{F-1}(u)=A_{F-1}$.
\end{enumerate}

The general idea of constructing a hopset out of these definitions is to connect each vertex $u\in V$ by a hop-edge to any pivot $p_i(u)$, for every $i\in[0,F-1]$. In addition, every $u\in V$ connects by a hop-edge to every $v\in B_j(u)$, for indices $j$ in some range: in \cite{EN19,HP17}, a vertex $u\in A_i\setminus A_{i+1}$ connects only to the vertices of $B_i(u)$. The distance oracle of Thorup and Zwick \cite{TZ01} can be viewed as a hopset that is constructed in the very same way (as was observed by \cite{HP17}), where every $u\in A_i\setminus A_{i+1}$ connects to the vertices in the bunch $B_j(u)$, for every $j\in[i,F-1]$.

To generalize these constructions, \cite{NS22} sets a function $f$ that controls the range of the indices $j$ such that every $u\in A_i\setminus A_{i+1}$ connects by a hop-edge to any vertex in $B_j(u)$. That is, for every $i\in[0,F-1]$, every $u\in A_i\setminus A_{i+1}$ connects to all of the vertices in $B_j(u)$, for every $j\in[i,f(i)]$. The specific choice of the function $f$ is
\[f(i)=\lfloor\frac{i}{c}\rfloor\cdot c+c-1~,\]
where $c>0$ is an integer parameter. This function rounds $i$ to the minimal multiple of $c$ that is larger than $i$, then subtract $1$.

At this point, we can define the hopset from \cite{NS22}:
\begin{equation} \label{eq:HopsetDef}
H=\bigcup_{u\in V}\bigcup_{i=0}^{F-1}\{(u,p_i(u))\}\cup\bigcup_{i=0}^{F-1}\bigcup_{u\in A_i\setminus A_{i+1}}\bigcup_{j=i}^{f(i)}\bigcup_{v\in B_j(u)}\{(u,v)\}~.
\end{equation}

We also denote
\[f^{-1}(i)=\min\{j\in\mathbb{N}\;|\;f(j)\geq i\}\]
(in the case of our specific $f$, it is easy to see that $f^{-1}(i)=\left\lfloor\frac{i}{c}\right\rfloor\cdot c$). Note that $j\in[i,f(i)]$ if and only if $i\in[f^{-1}(j),j]$. Indeed, by the definition of $f^{-1}$, if $j\leq f(i)$ then $f^{-1}(j)\leq i$. In the other direction, we use the fact that $f$ is non-decreasing, and again the definition of $f^{-1}$, to conclude that if $f^{-1}(j)\leq i$, then $j\leq f(f^{-1}(j))\leq f(i)$. As a result, we can write the definition of $H$ also as
\begin{eqnarray*}
H&=&\bigcup_{u\in V}\bigcup_{i=0}^{F-1}\{(u,p_i(u))\}\cup\bigcup_{j=0}^{F-1}\bigcup_{i=f^{-1}(j)}^{j}\bigcup_{u\in A_i\setminus A_{i+1}}\bigcup_{v\in B_j(u)}\{(u,v)\}\\
&=&\bigcup_{u\in V}\bigcup_{i=0}^{F-1}\{(u,p_i(u))\}\cup\bigcup_{j=0}^{F-1}\bigcup_{u\in A_{f^{-1}(j)}}\bigcup_{v\in B_j(u)}\{(u,v)\}~.    
\end{eqnarray*}

A crucial observation regarding the stretch and hopbound in \cite{NS22}, is that it {\em does not rely on the choice of the hierarchy} $A_0\supseteq A_1\supseteq\cdots A_F$, and it is actually true for \textit{any} choice of such hierarchy. This means that the following theorem is true, even without specifying the choice of the sets $A_i$. The proof of this theorem can be found in \cite{NS22}.

\begin{theorem} \label{thm:StretchAndHopbound}
The hopset $H$ has stretch $8c+3$ and hopbound at most $(1+\frac{1}{c})^F\cdot(2c+2)^{2\lfloor\frac{F}{c}\rfloor}$.
\end{theorem}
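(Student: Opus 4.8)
The plan is to prove both the stretch bound $8c+3$ and the hopbound $(1+\frac1c)^F(2c+2)^{2\lfloor F/c\rfloor}$ by a single downward induction on the levels of the hierarchy $A_0\supseteq A_1\supseteq\cdots\supseteq A_F$, in the style of the Thorup--Zwick distance-oracle analysis \cite{TZ01} and its hopset adaptations \cite{HP17,EN19,BP20}. Fix $u,v\in V$ with $d=d_G(u,v)$. The core object is a ``hop-bounded approximate walk'': one tracks a current vertex $w$ that climbs through the hierarchy toward $v$, and one shows by induction (from level $F$ down to level $0$, stepping by the block size $c$) a statement of the form: \emph{there is a path in $G\cup H$ from $u$ to some $w$ with $w\in A_{tc}$ or $w=v$, using at most $h_t$ hops and of weight at most $(1+\Delta_t)\,d$}, where $\Delta_t$ is a decreasing ``error budget'' and $h_t$ a hop budget. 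The base case (top block) is immediate: $A_F=\emptyset$ forces the $w=v$ branch, and at the highest levels the relevant bunches contain everything, so a single pivot/bunch edge suffices.

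The decisive structural point is the choice $f(i)=\lfloor i/c\rfloor c+c-1$: any vertex that is ``active'' at some level inside a block $[tc,(t{+}1)c-1]$ has a hop-edge to every vertex of each of its bunches $B_j(\cdot)$ with $j$ ranging over that \emph{entire} block. Consequently, within a single block the construction behaves exactly like the Thorup--Zwick oracle-as-hopset: a $O(c)$-hop, $O(c)$-stretch argument shows that starting from $w$ one either reaches $v$ directly (using the standard dichotomy ``either $p_j(v)\in B_j(w)$, in which case we jump via two bunch edges, or $d_G(w,p_{j+1}(w))\le d_G(w,v)$, in which case we climb one level''), or reaches a vertex of $A_{(t+1)c}$, from which we recurse into the next block. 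This yields a two-tier recursion: a factor $(1+\frac1c)$ per \emph{level} (the cumulative cost of the ``bunch miss $\Rightarrow$ closer pivot'' climbing step, incurred up to $F$ times) and a factor of roughly $(2c+2)^2$ per \emph{block} (two block-crossing excursions, one growing out of $u$'s side and one out of $v$'s side), with $\lfloor F/c\rfloor$ blocks; multiplying out gives the claimed hopbound $h_0\le(1+\frac1c)^F(2c+2)^{2\lfloor F/c\rfloor}$. For the stretch, each block boundary contributes an additive $O(c)\cdot d$, but the ``closer pivot'' inequality forces these contributions to shrink geometrically, so they sum to at most $(8c+3)\,d$; one checks the exact constant $8c+3$ by summing the per-block stretch increments against that geometric ratio.

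The main obstacle is the simultaneous bookkeeping of the two interacting recursions together with the stretch: one must choose the block-level hop budget $h_t$ and stretch budget $\Delta_t$ so that \emph{both} recurrences close --- schematically $h_t\lesssim(2c+2)^2(1+\frac1c)^c\,h_{t+1}+O(c)$ and the corresponding additive stretch recurrence --- and then solve them to land \emph{exactly} on $(1+\frac1c)^F(2c+2)^{2\lfloor F/c\rfloor}$ and $8c+3$ rather than on a slightly larger bound. A secondary (routine) subtlety is the treatment of the topmost block when $F$ is not a multiple of $c$, and the fact that $A_F=\emptyset$ alters the base case. Since the statement is quoted verbatim from \cite{NS22} and, as emphasized there, is independent of the particular hierarchy $A_0\supseteq\cdots\supseteq A_F$ chosen, I would present the inductive skeleton and the block-crossing lemma above and defer the precise constant-chasing to \cite{NS22}.
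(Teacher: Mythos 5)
The paper does not actually prove this theorem: it states that the proof can be found in \cite{NS22}, observing only that the bound is independent of the particular hierarchy $A_0\supseteq\cdots\supseteq A_F$ chosen, and then uses the statement as an imported black box. Your proposal ends at the same place — deferring the constant-chasing to \cite{NS22} after sketching a plausible (though here unverified) outline of the Thorup--Zwick-style two-tier induction — so in effect you take essentially the same route as the paper.
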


Next, we specify the way that the hierarchy $A_0\supseteq A_1\supseteq\cdots A_F$ is chosen. Fix some integer parameter $k>0$ and let $\{\lambda_i\}$ be a sequence that is defined by the following recurrence relation
\begin{equation} \label{eq:LambdaDef}
    \lambda_0=1,\;\;\;\;\;\lambda_i=1+\sum_{l<f^{-1}(i)}\lambda_i~.
\end{equation}
The set $A_0$ is defined to be $V$. For every $i\geq0$, the set $A_{i+1}$ is obtained by sampling each vertex of $A_i$ independently with probability $\delta\cdot n^{-\frac{\lambda_i}{k}}$. Here, $0<\delta\leq\frac{1}{2}$ is a real parameter that will be determined later. This probability choice is slightly different from the one in \cite{NS22}. In particular, the parameter $\delta$ was not used in \cite{NS22}. This, however, does not change the result stated in Theorem \ref{thm:StretchAndHopbound}.

\begin{lemma} \label{lemma:HopsetSize}
The expected size of $H$ is $O\left(Fn+\frac{c}{\delta}\cdot n^{1+\frac{1}{k}}\right)$.
\end{lemma}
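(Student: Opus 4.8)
The plan is to bound the expected size of $H$ by separately estimating the expected number of pivot-edges and the expected number of bunch-edges in the definition \eqref{eq:HopsetDef}. The pivot-edges contribute $\bigcup_{u\in V}\bigcup_{i=0}^{F-1}\{(u,p_i(u))\}$, which is at most $Fn$ edges deterministically — each of the $n$ vertices has at most $F$ pivots — so this term is immediate and accounts for the $Fn$ summand.

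For the bunch-edges, I would use the rewriting
\[
H \supseteq \bigcup_{j=0}^{F-1}\bigcup_{u\in A_{f^{-1}(j)}}\bigcup_{v\in B_j(u)}\{(u,v)\}
\]
and argue that for a fixed level $j$, the expected total size $\sum_{u\in A_{f^{-1}(j)}}|B_j(u)|$ is $O\!\left(\tfrac{1}{\delta}\,n^{1+1/k}\right)$. The standard Thorup–Zwick style computation proceeds in two stages. First, condition on $A_j$ and bound $\mathbb{E}\big[\,|B_j(u)|\,\big]$ for a fixed $u$: order the vertices of $A_j$ by distance from $u$; a vertex at rank $t$ lies in $B_j(u)$ only if none of the first $t$ closest $A_j$-vertices were sampled into $A_{j+1}$, an event of probability $(1-\delta n^{-\lambda_j/k})^t$, so $\mathbb{E}[|B_j(u)|] \le \sum_{t\ge 0}(1-\delta n^{-\lambda_j/k})^t = \tfrac{1}{\delta} n^{\lambda_j/k}$. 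Second, sum over $u\in A_{f^{-1}(j)}$: since $\mathbb{E}|A_{f^{-1}(j)}| = n\cdot\prod_{l<f^{-1}(j)}(\delta n^{-\lambda_l/k}) \le n^{1-\frac{1}{k}\sum_{l<f^{-1}(j)}\lambda_l}$, and by the recurrence \eqref{eq:LambdaDef} we have $\sum_{l<f^{-1}(j)}\lambda_l = \lambda_j - 1$, the two exponents telescope: $\mathbb{E}\big[\sum_{u\in A_{f^{-1}(j)}}|B_j(u)|\big] \le \tfrac{1}{\delta}\, n^{\lambda_j/k}\cdot n^{1-(\lambda_j-1)/k} = \tfrac{1}{\delta}\, n^{1+1/k}$. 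One has to be mildly careful that $|B_j(u)|$ and the event "$u\in A_{f^{-1}(j)}$" are handled in the right conditioning order — I would condition first on the full hierarchy down through level $f^{-1}(j)$, then take expectation over $A_{j+1}$ — but this is routine.

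Summing the per-level bound over the at most $F$ relevant levels would naively give $O(\tfrac{F}{\delta} n^{1+1/k})$, whereas the claimed bound has only a factor $c$, not $F$. The key point — and the step I expect to require the most care — is that the function $f^{-1}(j) = \lfloor j/c\rfloor\cdot c$ is constant on each block of $c$ consecutive levels, so the sets $A_{f^{-1}(j)}$ repeat, and more importantly the range $j\in[i,f(i)]$ over which a given $u\in A_i\setminus A_{i+1}$ emits bunch-edges has length at most $c$. Reorganizing the sum by the emitting vertex rather than by level $j$: each $u\in A_i\setminus A_{i+1}$ contributes $\sum_{j=i}^{f(i)}|B_j(u)|$, a sum of at most $c$ terms; pushing the per-$(i,u,j)$ expectation bound $\mathbb{E}|B_j(u)| \le \tfrac1\delta n^{\lambda_j/k}$ through and using that $\sum_{u\in A_i}$ together with the definition of $\lambda_j$ again telescopes to $n^{1+1/k}$ for each fixed $j$, one gets that each level $j$ still costs only $O(\tfrac1\delta n^{1+1/k})$, but now each level is "charged" to at most $c$-fold overlap in a way that — after grouping the $F$ levels into $\lfloor F/c\rfloor+1$ blocks and noticing the dominant level in each block controls the rest geometrically, or more simply by observing that $\sum_{j}\mathbb E[\sum_{u\in A_{f^{-1}(j)}}|B_j(u)|]$ with the telescoping already built in collapses — yields total expected bunch-edge count $O(\tfrac{c}{\delta} n^{1+1/k})$. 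Combining with the $Fn$ pivot term gives the claimed $O\!\left(Fn + \tfrac{c}{\delta} n^{1+1/k}\right)$, and I would finish by remarking that this matches the corresponding size bound in \cite{NS22} up to the harmless extra $\tfrac1\delta$ factor introduced by the modified sampling probability.
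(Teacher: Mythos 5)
Your proof proposal starts out following essentially the same route as the paper's proof — splitting $H$ into pivot edges and bunch edges, bounding the pivot edges by $Fn$ deterministically, bounding each $\mathbb{E}|B_j(u)|$ by $\tfrac1\delta n^{\lambda_j/k}$ via the geometric-variable argument, and then pairing that against $\mathbb{E}|A_{f^{-1}(j)}|$ using the telescoping identity $\lambda_j = 1 + \sum_{l<f^{-1}(j)}\lambda_l$. But there is a genuine gap in the last step, and it stems from a line you wrote early on: you bound
$\mathbb{E}|A_{f^{-1}(j)}| = n\cdot\prod_{l<f^{-1}(j)}(\delta n^{-\lambda_l/k}) \le n^{1-\frac1k\sum_{l<f^{-1}(j)}\lambda_l}$
by silently discarding the factor $\delta^{f^{-1}(j)}$ (using $\delta\le 1$). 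That factor is not a nuisance to be dropped; it is the entire mechanism by which the paper gets a final sum of $O(c)$ rather than $O(F)$. Keeping it, the per-level contribution is $\delta^{f^{-1}(j)}\cdot\tfrac1\delta n^{1+1/k}$, and since $f^{-1}(j)=\lfloor j/c\rfloor\cdot c$ one gets $\sum_{j=0}^{F-1}\delta^{f^{-1}(j)} < c\sum_{a\ge 0}\delta^{ac} = O(c)$ using $\delta\le\tfrac12$. Once you throw that factor away you are stuck with $F$ identical summands of $\tfrac1\delta n^{1+1/k}$, and there is no way to recover.

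Your attempted repair — reorganizing the double sum by emitting vertex $u\in A_i\setminus A_{i+1}$ and noting that each such $u$ emits bunch-edges at only $c$ levels — does not close the gap, because reindexing a nonnegative sum cannot change its value: $\sum_{j}\sum_{u\in A_{f^{-1}(j)}}|B_j(u)|$ is identically equal to $\sum_i\sum_{u\in A_i\setminus A_{i+1}}\sum_{j=i}^{f(i)}|B_j(u)|$, so if the first presentation gives you $O(\tfrac{F}{\delta}n^{1+1/k})$ after dropping the $\delta$-powers, the second does too. The fact that each $u$ touches only $c$ levels $j$ is useful, but only in conjunction with the geometric decay of $\mathbb{E}|A_i|$ in $i$ (which carries exactly the $\delta^i$ factors you discarded). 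The phrase "the telescoping already built in collapses" is doing no work here: the telescoping identity $\lambda_j - \sum_{l<f^{-1}(j)}\lambda_l = 1$ reduces the exponent of $n$ to $1+\tfrac1k$ at every level $j$, but it says nothing about the number of levels contributing. In short: restore the $\delta^{f^{-1}(j)}$ factor, and the geometric sum over blocks of size $c$ gives the $O(\tfrac{c}{\delta}n^{1+1/k})$ bound directly; without it, the argument as written proves only the weaker $O(Fn + \tfrac{F}{\delta}n^{1+1/k})$.
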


\begin{proof}

We denote 
\[H_P=\bigcup_{u\in V}\bigcup_{i=0}^{F-1}\{(u,p_i(u))\},\;\;\;\;\;H_B=\bigcup_{j=0}^{F-1}\bigcup_{u\in A_{f^{-1}(j)}\setminus A_{j+1}}\bigcup_{v\in B_j(u)}\{(u,v)\}~,\]
so that $H=H_P\cup H_B$. The size of $H_P$ is at most $\sum_{u\in V}\sum_{i=0}^{F-1}1=Fn$.

The expected size of the set $A_i$ is $|A_i|=\delta^i n^{1-\frac{1}{k}\sum_{l<i}\lambda_l}$. To bound the size of the bunch $B_j(u)$, for some $u\in V$, we use the following standard argument. This argument appears in \cite{TZ01} and in most of the constructions that rely on the same technique. Let $u=u_1,u_2,u_3,u_4,...$ be a list of the vertices in $A_j$, ordered by their distance from $u$. Since each of them is sampled into $A_{j+1}$ independently with probability $\delta n^{-\frac{\lambda_j}{k}}$, the index of the first vertex on this list to be in $A_{j+1}$ is bounded by a geometric random variable with expectation $\frac{1}{\delta}n^{\frac{\lambda_j}{k}}$. This index equals to $|B_j(u)|+1$, by the definition of bunches. Hence, in expectation,
\begin{equation} \label{eq:BunchSize}
    |B_j(u)|\leq\frac{1}{\delta}n^{\frac{\lambda_j}{k}}-1
\end{equation}

We use these bounds on the expected size of $A_i$ and $B_j(u)$, to bound the expected size of $H_B$.
\begin{eqnarray*}
\mathbb{E}[|H_B|]&\leq&\sum_{j=0}^{F-1}\sum_{u\in A_{f^{-1}(j)}}\mathbb{E}[|B_j(u)|]
\leq\sum_{j=0}^{F-1}\mathbb{E}[|A_{f^{-1}(j)}|]\cdot\frac{1}{\delta}n^{\frac{\lambda_j}{k}}\\
&\leq&\sum_{j=0}^{F-1}\delta^{f^{-1}(j)}n^{1-\frac{1}{k}\sum_{l<f^{-1}(j)}\lambda_l}\cdot\frac{1}{\delta}n^{\frac{\lambda_j}{k}}
\leq\frac{1}{\delta}\sum_{j=0}^{F-1}\delta^{f^{-1}(j)}n^{1+\frac{1}{k}(\lambda_j-\sum_{l<f^{-1}(j)}\lambda_l)}~.
\end{eqnarray*}
We use the fact that $\lambda_j=1+\sum_{l<f^{-1}(j)}\lambda_l$ from Equation (\ref{eq:LambdaDef}) to bound the term in the exponent (in fact, this is exactly the reason for this choice of the sequence $\{\lambda_i\}$).
\[\mathbb{E}[|H_B|]\leq\frac{1}{\delta}\sum_{j=0}^{F-1}\delta^{f^{-1}(j)}n^{1+\frac{1}{k}(\lambda_j-\sum_{l<f^{-1}(j)}\lambda_l)}= n^{1+\frac{1}{k}}\cdot\frac{1}{\delta}\sum_{j=0}^{F-1}\delta^{f^{-1}(j)}~.\]

Recall our choice of the function $f(i)=\left\lfloor\frac{i}{c}\right\rfloor\cdot c+c-1$. For this $f$, we have $f^{-1}(j)=\left\lfloor\frac{j}{c}\right\rfloor\cdot c$. Therefore, we can write
\[\mathbb{E}[|H_B|]\leq n^{1+\frac{1}{k}}\cdot\frac{1}{\delta}\sum_{j=0}^{F-1}\delta^{f^{-1}(j)}<n^{1+\frac{1}{k}}\cdot\frac{c}{\delta}\sum_{l=0}^{\infty}\delta^{l\cdot c}=O\left(\frac{c}{\delta}\cdot n^{1+\frac{1}{k}}\right)~,\]
where in the last step we used the assumption that $\delta\leq\frac{1}{2}$.

Finally, we conclude that the expected size of $H$ is at most
\[\mathbb{E}[|H_P|+|H_B|]=O\left(Fn+\frac{c}{\delta}\cdot n^{1+\frac{1}{k}}\right)~,\]
as desired.

\end{proof}

To prove Theorem \ref{thm:HopsetsExtendedVersion}, we need to show the partition of $H$ to three subsets. The first subset, $H_1$, is in fact the set $H_P$ that was defined in the proof of Lemma \ref{lemma:HopsetSize}:
\[H_1=\bigcup_{u\in V}\bigcup_{i=0}^{F-1}\{(u,p_i(u))\}~.\]

To define the sets $H_2,H_3$, we divide the set $H_B$ from the proof of Lemma \ref{lemma:HopsetSize} to two disjoint subsets:
\[H_2=\bigcup_{j=0}^{c-1}\bigcup_{u\in A_{f^{-1}(j)}}\bigcup_{v\in B_j(u)}\{(u,v)\},\;\;\;H_3=\bigcup_{j=c}^{F-1}\bigcup_{u\in A_{f^{-1}(j)}}\bigcup_{v\in B_j(u)}\{(u,v)\}~.\]
That is, $H_2$ captures all the hop-edges between a vertex and its bunch members, for bunches of level $j\in[0,c-1]$, while $H_3$ does the same for levels $j\in[c,F-1]$. 

\begin{lemma} \label{lemma:H3Size}
In expectation, $|H_3|=O\left(c\delta^{c-1}\cdot n^{1+\frac{1}{k}}\right)$.
\end{lemma}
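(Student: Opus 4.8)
The plan is to recycle, almost verbatim, the calculation in the proof of Lemma~\ref{lemma:HopsetSize}, the only change being that the outer union defining $H_3$ ranges over levels $j\in[c,F-1]$ rather than $j\in[0,F-1]$. Recall that there, for each fixed level $j$, we bounded
\[\mathbb{E}\Bigl[\sum_{u\in A_{f^{-1}(j)}}|B_j(u)|\Bigr]\;\le\;\mathbb{E}[|A_{f^{-1}(j)}|]\cdot\tfrac1\delta n^{\lambda_j/k}\;\le\;\tfrac1\delta\,\delta^{f^{-1}(j)}\,n^{1+\frac1k(\lambda_j-\sum_{l<f^{-1}(j)}\lambda_l)}\;=\;\tfrac1\delta\,\delta^{f^{-1}(j)}\,n^{1+\frac1k},\]
using $\mathbb{E}[|A_i|]=\delta^i n^{1-\frac1k\sum_{l<i}\lambda_l}$, the bunch bound (\ref{eq:BunchSize}), and the identity $\lambda_j-\sum_{l<f^{-1}(j)}\lambda_l=1$ coming from the recurrence (\ref{eq:LambdaDef}). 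Summing this per-level estimate over $j\in[c,F-1]$ gives $\mathbb{E}[|H_3|]\le n^{1+\frac1k}\cdot\tfrac1\delta\sum_{j=c}^{F-1}\delta^{f^{-1}(j)}$.

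It then remains to bound the sum $\sum_{j=c}^{F-1}\delta^{f^{-1}(j)}$. For the choice $f(i)=\lfloor i/c\rfloor c+c-1$ we have $f^{-1}(j)=\lfloor j/c\rfloor c$, so as $j$ runs over $[c,F-1]$ the exponent $f^{-1}(j)$ takes each value $lc$ with $l\ge1$ on at most $c$ consecutive integers. Grouping the sum by $l=\lfloor j/c\rfloor$ therefore yields $\sum_{j=c}^{F-1}\delta^{f^{-1}(j)}\le\sum_{l\ge1}c\,\delta^{lc}=c\,\delta^{c}\cdot\frac{1}{1-\delta^{c}}\le 2c\,\delta^{c}$, where the last step uses $\delta\le\frac12$ (hence $\delta^{c}\le\frac12$). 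Substituting back, $\mathbb{E}[|H_3|]\le n^{1+\frac1k}\cdot\tfrac1\delta\cdot2c\,\delta^{c}=2c\,\delta^{c-1}n^{1+\frac1k}=O\!\left(c\,\delta^{c-1}n^{1+\frac1k}\right)$, which is the claimed bound.

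I do not anticipate any real obstacle: the argument is a strict specialization of a computation already carried out, and the only effect of starting the sum at $j=c$ is that the geometric-type series over $\delta^{f^{-1}(j)}$ now begins at exponent $c$ rather than $0$, producing the extra factor $\delta^{c}$ (which, after the $\tfrac1\delta$ from the bunch bound, becomes $\delta^{c-1}$). The single point that needs (the same minor) care as in Lemma~\ref{lemma:HopsetSize} is that $A_{f^{-1}(j)}$ and $B_j(u)$ are not independent, since $f^{-1}(j)\le j$; this is handled by the standard Thorup--Zwick conditioning, namely that the geometric tail bound $\mathbb{E}[|B_j(u)|\mid A_0,\dots,A_j]\le\frac1\delta n^{\lambda_j/k}-1$ holds on the event $u\in A_{f^{-1}(j)}$ because that event is measurable with respect to $A_0,\dots,A_{f^{-1}(j)}$ while the bound only uses the fresh sampling $A_j\to A_{j+1}$; one then takes expectations and sums over $u$. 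I would simply invoke the identical reasoning used for $H_B$ rather than reproduce it.
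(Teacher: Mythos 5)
Your proof is correct and takes essentially the same route as the paper: restrict the computation of Lemma~\ref{lemma:HopsetSize} to levels $j\in[c,F-1]$, obtain $\mathbb{E}[|H_3|]\le n^{1+1/k}\cdot\tfrac1\delta\sum_{j=c}^{F-1}\delta^{f^{-1}(j)}$, group the geometric-type sum by $l=\lfloor j/c\rfloor\ge1$ to get $\le 2c\,\delta^{c}$, and absorb $\tfrac1\delta$ to land on $O(c\,\delta^{c-1}n^{1+1/k})$. Your explicit remark about conditioning on $A_0,\dots,A_j$ to justify pulling $\mathbb{E}[|B_j(u)|]$ out of the random sum over $u\in A_{f^{-1}(j)}$ is a slightly more careful rendering of a step the paper leaves implicit, but the argument is otherwise identical.
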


\begin{proof}

Repeating the proof of Lemma \ref{lemma:HopsetSize} proves that $|H_3|=O\left(n^{1+\frac{1}{k}}\cdot\frac{1}{\delta}\sum_{j=c}^{F-1}\delta^{f^{-1}(j)}\right)$. The only difference is that now the index $j$ runs from $c$ to $F-1$, instead of from $0$ to $F-1$. 

Using the fact that $f^{-1}(j)=\left\lfloor\frac{j}{c}\right\rfloor\cdot c$, we conclude that
\[|H_3|=O\left(n^{1+\frac{1}{k}}\cdot\frac{1}{\delta}\sum_{j=c}^{F-1}\delta^{f^{-1}(j)}\right)<O\left(n^{1+\frac{1}{k}}\cdot\frac{c}{\delta}\sum_{l=1}^{\infty}\delta^{l\cdot c}\right)=O\left(c\delta^{c-1}\cdot n^{1+\frac{1}{k}}\right)~,\]
where in the last step we used the assumption that $\delta\leq\frac{1}{2}$.

\end{proof}

The following lemma proves the existence of a path-reporting preserver for the set $H_1$. It can be easily derived from Lemma 1 in \cite{ES23}.
\begin{lemma} \label{lemma:H1Preserver}
The set $H_1$ has a path-reporting preserver with size $O(Fn)$.
\end{lemma}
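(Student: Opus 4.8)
The plan is to realize $H_1$ as a union of $F$ shortest-path forests, one for each level of the hierarchy $V=A_0\supseteq A_1\supseteq\cdots\supseteq A_F=\emptyset$. Recall that $H_1=\bigcup_{u\in V}\bigcup_{i=0}^{F-1}\{(u,p_i(u))\}$, where $p_i(u)$ is the closest vertex to $u$ in $A_i$ under the fixed consistent tie-breaking rule used in the construction. Fix a level $i\in[0,F-1]$. I would build a forest $F_i\subseteq E$ as follows: add a dummy source $s_i$ joined to every vertex of $A_i$ by a zero-weight edge, run Dijkstra from $s_i$ with a lexicographic tie-breaking rule that agrees with the rule defining the pivots, take the resulting shortest-path tree $T_i$, and delete $s_i$. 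The components of $F_i:=T_i\setminus\{s_i\}$ are trees rooted at the vertices of $A_i$; the component containing a vertex $u\in V$ is rooted exactly at $p_i(u)$, and the unique $u$–$p_i(u)$ path inside $F_i$ has weight exactly $d_G(u,p_i(u))$. Since $F_i$ is a forest on the $n$ vertices of $V$, it has at most $n-1$ edges.

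Next I would set $S=\bigcup_{i=0}^{F-1}F_i$, so that $|E_S|\le F(n-1)=O(Fn)$. For the oracle $D$: for every vertex $u\in V$ and every level $i\in[0,F-1]$ store a pointer to the parent of $u$ in $F_i$ (a root pointing to itself), together with a dictionary mapping each pair $(x,y)\in H_1$ to some level $i$ with $y=p_i(x)$. This uses $O(Fn)$ words. On a query $(x,y)\in H_1$, the oracle looks up the associated level $i$, then walks up the parent pointers from $x$ in $F_i$, emitting each traversed edge, until it reaches $y$; the emitted edges form a $u$–$p_i(u)$ path contained in $S$ of weight exactly $d_G(x,y)$, and the running time is $O(1)$ plus the length of the reported path, i.e.\ query time $O(1)$ in the convention of Remark~\ref{remark:QueryTimeOfPathReportingStructures}. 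Hence $(S,D)$ is a path-reporting preserver for $H_1$ of size $\max\{|E_S|,|D|\}=O(Fn)$.

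The only delicate point, and the one I expect to require the most care, is the consistency of tie-breaking: one must ensure that the forest $F_i$ genuinely roots the component of $u$ at $p_i(u)$ and that the tree path from $u$ to that root is a true shortest path in $G$ (rather than merely a walk of the correct length). This is handled by the standard device of perturbing the edge weights lexicographically (equivalently, running Dijkstra with a fixed priority order on vertices) so that shortest paths become unique and coincide with those implicitly used to define the pivots $p_i$ and bunches $B_j$ in \cite{NS22}. This is precisely the argument behind Lemma~1 of \cite{ES23}, from which the statement follows directly; I would simply instantiate that lemma with the pivot set of each level and take the union over the $F$ levels.
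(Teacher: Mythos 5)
Your proof is correct and takes essentially the same approach as the paper: both construct, for each level $i\in[0,F-1]$, a shortest-path forest rooted at $A_i$ (equivalently, a parent pointer $q_i(u)$ for each $u$ towards $p_i(u)$), argue via consistent tie-breaking that the component of $u$ is rooted at $p_i(u)$ and the tree path is a genuine shortest path, and conclude $|E_1|\le Fn$ with parent pointers serving as the $O(Fn)$-size oracle. Your Dijkstra-with-dummy-source construction is simply an explicit implementation of the paper's abstractly phrased ``consistent shortest paths'' $P_{u,p_i(u)}$, and the key fact you flag as delicate (that any vertex $x$ on $P_{u,p_i(u)}$ satisfies $p_i(x)=p_i(u)$) is exactly what the paper verifies by a short triangle-inequality argument.
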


\begin{proof}

For every $u,v\in V$, denote by $P_{u,v}$ a shortest path in $G$ between $u$ and $v$ that was chosen in a consistent way. Denote $E_1^i=\bigcup_{u\in V}P_{u,p_i(u)}$ and $E_1=\bigcup_{i=0}^{F-1}E_1^i$.

Fix some $i\in[0,F-1]$. For every vertex $u\in V$, let $q_i(u)$ be a pointer to the next vertex on the path $P_{u,p_i(u)}$ towards $p_i(u)$. Note that if $x$ is a vertex on $P_{u,p_i(u)}$, then $x$ has the same $i$-th pivot as $u$: seeking contradiction, if $d_G(x,p_i(x))<d_G(x,p_i(u))$, then also 
\[d_G(u,p_i(x))\leq d_G(u,x)+d_G(x,p_i(x))<d_G(u,x)+d_G(x,p_i(u))=d_G(u,p_i(u))~.\]
This contradicts the fact that $p_i(u)$ is the $i$-th pivot of $u$. Therefore, $d_G(x,p_i(x))=d_G(x,p_i(u))$, and since the pivots are chosen in a consistent manner, this means that $p_i(x)=p_i(u)$. Hence, the edge $(x,q_i(x))$ is the next edge after $x$ on the path $P_{u,p_i(u)}$. This implies that every edge in $E_1^i$ is of the form $(x,q_i(x))$, and thus $|E_1^i|\leq n$. By the union bound, $|E_1|\leq\sum_{i=0}^{F-1}n=Fn$.

We also define an oracle $D_1$, that given a pair of the form $(u,p_i(u))$, uses the pointers $q_i(x)$, starting with $x=u$ and reporting the edges $(x,q_i(x))$, until it gets to $x=p_i(u)$. The storage of the oracle $D_1$ is the total number of the pointers $q_i(u)$, for every $u\in V$ and every $i\in[0,F-1]$. That is, $|D_1|\leq Fn$.

We conclude that $(E_1,D_1)$ is a path-reporting preserver for $G,H_1$, with size $O(Fn)$.

\end{proof}

In \cite{TZ01}, Thorup and Zwick provided a distance oracle that is constructed in a similar way to the hopset $H$. There, for every vertex $u\in A_i\setminus A_{i+1}$, for every $i\in[0,F-1]$, the distances between $u$ and every $v\in B_j(u)$, for every $j\in[i,F-1]$, were stored. Then, to construct a spanner, rather than a distance oracle, they added a shortest path between any such pair of vertices $u,v$. To claim that the size of the resulting spanner is small, they used an argument that is based on the notion of \textit{clusters}.

Let $v\in V$, and suppose that $v\in A_j\setminus A_{j+1}$. We use the same definition from \cite{TZ01} for the cluster of $v$, but with a small modification - we only include in this cluster vertices $u\in A_f^{-1}(j)$, since these are the only vertices that might be connected by a hop-edge to $v$. The formal definition is
\[C(v)=\{u\in V\;|\;v\in B_j(u)\}=\{u\in A_{f^{-1}(j)}\;|\;d_G(u,v)<d_G(u,p_{j+1}(u)\}~.\]
Note that now we may rewrite the definition of $H_2$ as
\[H_2=\bigcup_{j=0}^{c-1}\bigcup_{u\in A_{f^{-1}(j)}\setminus A_{j+1}}\bigcup_{v\in B_j(u)}\{(u,v)\}=\bigcup_{j=0}^{c-1}\bigcup_{v\in A_j}\bigcup_{u\in C(v)}\{(u,v)\}~.\]
The definition for $H_3$ is the same but with the index $j$ runs from $c$ to $F-1$, instead of from $0$ to $c-1$. 

We now make the same argument as in \cite{TZ01} regarding clusters, to prove the following lemma.

\begin{lemma} \label{lemma:H2Preserver}
The set $H_2$ has a path-reporting preserver with expected size $O\left(\frac{c}{\delta}\cdot n^{1+\frac{1}{k}}\right)$.

\end{lemma}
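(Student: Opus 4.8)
The plan is to reproduce the Thorup--Zwick cluster-sparsification argument, in direct analogy with the pivot-based proof of Lemma~\ref{lemma:H1Preserver}. I would first record a structural simplification: for every $j\in[0,c-1]$ we have $f^{-1}(j)=0$, so the pairs of $H_2$ are exactly those pairs $(u,v)$ with $v\in A_j\setminus A_{j+1}$ for some $j\le c-1$ and $u\in C(v)$, where $C(v)\subseteq A_0=V$ is the cluster at $v$'s level. Next I would fix, once and for all and in a consistent manner (e.g.\ via a single shortest-path tree out of $v$ with a global tie-breaking rule), a shortest path $P_{x,v}$ from every $x\in C(v)$ to $v$. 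The key geometric fact is the prefix-closure of clusters: if $u\in C(v)$ and $x$ lies on $P_{u,v}$, then $x\in C(v)$. This is the same triangle-inequality computation as in \cite{TZ01}: writing $j$ for the level of $v$, $d_G(x,p_{j+1}(x))\ge d_G(u,p_{j+1}(x))-d_G(u,x)\ge d_G(u,p_{j+1}(u))-d_G(u,x)>d_G(u,v)-d_G(u,x)=d_G(x,v)$, where the middle step uses that $p_{j+1}(u)$ is the closest $A_{j+1}$-vertex to $u$. Consequently the paths $\{P_{x,v}\}_{x\in C(v)}$ assemble into one shortest-path tree $T(v)$ rooted at $v$ and spanning exactly $C(v)$, so $|T(v)|\le|C(v)|$, and every non-root $x\in C(v)$ carries a single ``parent'' pointer $q_v(x)$ to its successor on $P_{x,v}$.

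I would then take the preserver to be $E_2=\bigcup_{j=0}^{c-1}\bigcup_{v\in A_j\setminus A_{j+1}}T(v)$, with an oracle $D_2$ storing all the pointers $q_v(\cdot)$. On a query $(u,v)\in H_2$ one determines the level $j=j(v)$ of $v$ and walks from $u$ along the pointers $q_v(\cdot)$ until reaching $v$, reporting the traversed edges; the output is a subpath of a shortest path to $v$, hence of weight exactly $d_G(u,v)$, and its running time is proportional to its length, so $(E_2,D_2)$ is a genuine path-reporting preserver. A small point to get right here is that $(u,v)\in H_2$ only guarantees $v\in B_{j'}(u)$ for \emph{some} $j'\le j(v)$, but since $j'\mapsto d_G(u,p_{j'+1}(u))$ is non-decreasing, $v\in B_{j'}(u)$ forces $v\in B_{j(v)}(u)$, i.e.\ $u\in C(v)$ at $v$'s own level; hence using $j(v)$ is always legitimate and we need only one tree per vertex.

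For the size I would bound $|E_2|\le\sum_{j=0}^{c-1}\sum_{v\in A_j\setminus A_{j+1}}|C(v)|$ and switch the order of summation, obtaining $\sum_{v\in A_j\setminus A_{j+1}}|C(v)|\le\sum_{u\in V}|B_j(u)|$ for each $j$. Inequality~(\ref{eq:BunchSize}) gives $\mathbb{E}[|B_j(u)|]\le\frac1\delta n^{\lambda_j/k}$, and the crucial observation is that $\lambda_j=1$ for all $j\in[0,c-1]$: indeed $\lambda_0=1$, and for $1\le j\le c-1$ the index set $\{l<f^{-1}(j)\}=\{l<0\}$ is empty, so the recurrence~(\ref{eq:LambdaDef}) yields $\lambda_j=1$. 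Hence $\mathbb{E}[\sum_{u\in V}|B_j(u)|]\le\frac{n}{\delta}\,n^{1/k}=\frac1\delta n^{1+1/k}$ for each of the $c$ values of $j$, and summing over $j$ gives $\mathbb{E}[|E_2|]\le\frac{c}{\delta}n^{1+1/k}$; the oracle stores one pointer per cluster-membership, so $|D_2|$ satisfies the same bound. I expect the only genuinely fiddly part to be this level-bookkeeping --- deciding which $j$ to charge each pair and each tree to, both for the oracle to be well defined and for the size not to be overcounted --- while everything else is a routine replay of \cite{TZ01} together with the $\lambda_j=1$ identity peculiar to the low levels.
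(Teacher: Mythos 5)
Your proof is correct and follows essentially the same route as the paper's: the prefix-closure (triangle-inequality) argument showing that clusters are closed under taking shortest-path prefixes, a cluster-tree preserver with one pointer per cluster member, and the size bound via $\sum_v |C(v)| = \sum_u |B_j(u)|$ together with the identity $\lambda_j = 1$ for $j \in [0,c-1]$. The ``level-bookkeeping'' point you flag is actually vacuous --- $v \in B_j(u)$ already forces $v \notin A_{j+1}$, so the level $j$ for which $(u,v)$ enters $H_2$ is uniquely $j(v)$ --- but including that check does no harm.
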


\begin{proof}

For every $u,v\in V$, denote by $P_{u,v}$ a shortest path in $G$ between $u$ and $v$ that was chosen in a consistent way. For every 
$j\in[0,c-1]$ and $v\in A_j$, denote $E_2^j(v)=\bigcup_{u\in C(v)}P_{u,v}$, and $E_2=\bigcup_{j=0}^{c-1}\bigcup_{v\in A_j}E_2^j(v)$.

Fix some $j\in[0,c-1]$ and $v\in A_j$. For every vertex $u\in C(v)$, let $q_{j,v}(u)$ be a pointer to the next vertex on the path $P_{u,v}$ towards $v$. We claim that if $x$ is a vertex on $P_{u,v}$, then also $x\in C(v)$: first, notice that for every $j\in[0,c-1]$, we have $f^{-1}(j)=0$, due to the definition of the function $f$. Therefore, $x\in A_0=V$ is trivial. Seeking contradiction, we assume that $d_G(x,p_{j+1}(x))\leq d_G(x,v)$, then also 
\[d_G(u,p_{j+1}(u))\leq d_G(u,p_{j+1}(x))\leq d_G(u,x)+d_G(x,p_{j+1}(x))\leq d_G(u,x)+d_G(x,v)=d_G(u,v)~.\]
Thus, $v\notin B_j(u)$. This contradicts the fact that $u\in C(v)$. Therefore, $x\in A_0$ and $d_G(x,v)<d_G(x,p_{j+1}(x))$ - i.e., $x\in C(v)$. 

As a result, the edge $(x,q_{j,v}(x))$ is the next edge after $x$ on the path $P_{u,v}$. This implies that every edge in $E_2^j(v)$ is of the form $(x,q_{j,v}(x))$ for some $x\in C(v)$, and thus $|E_2^j(v)|\leq|C(v)|$. By the union bound, 
\begin{eqnarray*}
\mathbb{E}[|E_2|]&\leq&\sum_{j=0}^{c-1}\sum_{v\in A_j}\mathbb{E}[|C(v)|]
=\sum_{j=0}^{c-1}\sum_{v\in B_j(u)}\sum_{u\in C(v)}1
=\sum_{j=0}^{c-1}\sum_{u\in V}\sum_{v\in B_j(u)}1\\
&=&\sum_{j=0}^{c-1}\sum_{u\in V}\mathbb{E}[|B_j(u)|]
\stackrel{(\ref{eq:BunchSize})}{\leq}\sum_{j=0}^{c-1}\sum_{u\in V}\frac{1}{\delta}n^{\frac{\lambda_j}{k}}~.
\end{eqnarray*}

Recall the definition of the sequence $\{\lambda_i\}$ from Equation (\ref{eq:LambdaDef}). Since $f^{-1}(j)=0$ for every $j\in[0,c-1]$, we simply get that $\lambda_j=1$ for every $j\in[0,c-1]$. We conclude that the expected size of $E_2$ is at most
\[\sum_{j=0}^{c-1}\sum_{u\in V}\frac{1}{\delta}n^{\frac{\lambda_j}{k}}=\sum_{j=0}^{c-1}\sum_{u\in V}\frac{1}{\delta}n^{\frac{1}{k}}=\frac{c}{\delta}\cdot n^{1+\frac{1}{k}}~.\]

We also define an oracle $D_2$, that given a pair of the form $(u,v)$, where $j\in[0,c-1],\;v\in A_j$ and $u\in C(v)$, uses the pointers $q_{j,v}(x)$, starting with $x=u$ and reporting the edges $(x,q_{j,v}(x))$, until it gets to $x=v$. The storage of the oracle $D_2$ is the total number of the pointers $q_{j,v}(u)$, for every $j\in[0,c-1],\;v\in A_j$ and $u\in C(v)$. By the same computation as above, the expectation of this number is $|E_2|\leq\frac{c}{\delta}\cdot n^{1+\frac{1}{k}}$.

We conclude that $(E_2,D_2)$ is a path-reporting preserver for $G,H_2$, with size $O(\frac{c}{\delta}\cdot n^{1+\frac{1}{k}})$.

\end{proof}

We are now ready to prove Theorem \ref{thm:HopsetsExtendedVersion}.

\begin{theorem*}[Theorem \ref{thm:HopsetsExtendedVersion}]
Let $G=(V,E)$ be an undirected weighted graph on $n$ vertices, and let $1<c\leq k$ be two integer parameters. There is a hopset $H$ for $G$ with stretch $8c+3$, hopbound $O\left(\frac{1}{c}\cdot k^{1+\frac{2}{\ln c}}\right)$, and size $O\left(cn\log_ck+ck^{\frac{9}{c-1}}n^{1+\frac{1}{k}}\right)$.

Moreover, the hoposet $H$ can be divided into three disjoint subsets $H=H_1\cup H_2\cup H_3$, such that $H_1$ has a path-reporting preserver with size $O(cn\log_ck)$, $H_2$ has a path-reporting preserver with size $O\left(ck^{\frac{9}{c-1}}n^{1+\frac{1}{k}}\right)$, and 
\[|H_3|=O\left(ck^{-9}n^{1+\frac{1}{k}}\right)~.\]
\end{theorem*}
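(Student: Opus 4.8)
The plan is to take $H$ to be exactly the \cite{NS22} hopset recalled above, built on the hierarchy $V=A_0\supseteq A_1\supseteq\cdots\supseteq A_F=\emptyset$ with the modified sampling probabilities $\delta\cdot n^{-\lambda_i/k}$, and to fix the free parameter as $\delta=\min\bigl\{\tfrac12,\,k^{-9/(c-1)}\bigr\}$. With this choice $\delta\le\tfrac12$ as required, and moreover $\delta^{c-1}\le k^{-9}$ in both regimes: if $\delta=k^{-9/(c-1)}$ this is an equality, while if $\delta=\tfrac12$ then by definition $c-1\ge 9\log_2 k$, so $2^{-(c-1)}\le k^{-9}$. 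Also $\tfrac1\delta=\max\{2,k^{9/(c-1)}\}\le 2k^{9/(c-1)}$ since $k^{9/(c-1)}\ge 1$. So the task reduces to (i) bounding the number of levels $F$, (ii) reading off stretch and hopbound from Theorem \ref{thm:StretchAndHopbound}, and (iii) assembling the three size bounds from Lemmas \ref{lemma:H1Preserver}, \ref{lemma:H2Preserver} and \ref{lemma:H3Size}.

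First I would bound $F$. A one-line induction from the recurrence (\ref{eq:LambdaDef}), using $f^{-1}(j)=\lfloor j/c\rfloor c$, gives $\lambda_{mc+r}=(1+c)^m$ for $0\le r<c$, hence $\sum_{l=0}^{Mc-1}\lambda_l=(1+c)^M-1$; consequently the expected size $\delta^i n^{1-\frac1k\sum_{l<i}\lambda_l}$ of $A_i$ drops below $1$ once $M=\lceil\log_{1+c}(k+1)\rceil$ complete blocks have been taken, so $F=O(c\log_{1+c}k)=O(c\log_c k)$. Shrinking $\delta$ below the constant used in \cite{NS22} only makes the levels $A_i$ empty out sooner, so $F$ is no larger than there, and, as the excerpt notes, the $\delta$ change does not affect Theorem \ref{thm:StretchAndHopbound}. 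Plugging $F=O(c\log_c k)$ and $\lfloor F/c\rfloor=O(\log_c k)$ into the bound $(1+\tfrac1c)^F(2c+2)^{2\lfloor F/c\rfloor}$ of Theorem \ref{thm:StretchAndHopbound} yields hopbound $O\bigl(\tfrac1c\,k^{1+2/\ln c}\bigr)$ (this is precisely the hopbound computation of \cite{NS22}); the stretch is $8c+3$ directly from the same theorem.

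Next I would set $H_1,H_2,H_3$ to be the three explicit pieces defined just before the theorem. Lemma \ref{lemma:H1Preserver} gives $H_1$ a path-reporting preserver of size $O(Fn)=O(cn\log_c k)$. Lemma \ref{lemma:H2Preserver} gives $H_2$ a path-reporting preserver of expected size $O\bigl(\tfrac c\delta n^{1+1/k}\bigr)=O\bigl(ck^{9/(c-1)}n^{1+1/k}\bigr)$, and the same computation bounds $\mathbb{E}|H_2|$. Lemma \ref{lemma:H3Size} gives $\mathbb{E}|H_3|=O\bigl(c\delta^{c-1}n^{1+1/k}\bigr)=O\bigl(ck^{-9}n^{1+1/k}\bigr)$ using $\delta^{c-1}\le k^{-9}$. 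Adding $|H_1|\le Fn$, $|H_2|$ and $|H_3|$ and absorbing the $ck^{-9}n^{1+1/k}$ term into $ck^{9/(c-1)}n^{1+1/k}$ gives the claimed total size $O\bigl(cn\log_c k+ck^{9/(c-1)}n^{1+1/k}\bigr)$. Since Lemmas \ref{lemma:HopsetSize}, \ref{lemma:H3Size} and \ref{lemma:H2Preserver} are in expectation, I would close with the standard probabilistic-method step: each of the (constantly many) random quantities exceeds four times its mean with probability $<1/4$, and the event $A_F\ne\emptyset$ at our chosen level $F$ has probability $o(1)$, so a union bound leaves a fixed hierarchy realizing all of the bounds simultaneously up to constants.

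The genuinely delicate point is not any individual estimate but the calibration of $\delta$: it must be small enough that $\delta^{c-1}\le k^{-9}$, so that $H_3$ is negligible and can afterward be handled by the (comparatively expensive) pairwise-spanner preservation used for the upper bound in Theorem \ref{thm:PairwiseSpanner2}, yet large enough that $\tfrac1\delta$ does not push the $H_2$-preserver (and hence the $n^{1+1/k}$ part of $|H|$) past $k^{9/(c-1)}n^{1+1/k}$; the exponent $9$ is exactly what makes these two requirements compatible. One must also double-check that lowering $\delta$ below the value implicit in \cite{NS22} keeps $F=O(c\log_c k)$ and leaves the stretch/hopbound analysis behind Theorem \ref{thm:StretchAndHopbound} intact. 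The heavier bookkeeping---the cluster-counting argument \`a la \cite{TZ01} that turns the pivot edges $H_1$ and the low-level bunch edges $H_2$ into pointer-based path-reporting preservers---is already encapsulated in Lemmas \ref{lemma:H1Preserver} and \ref{lemma:H2Preserver}, so at the level of the theorem the proof is assembly plus a Chernoff/Markov cleanup.
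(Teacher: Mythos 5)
Your proof follows the same route as the paper: take the \cite{NS22} hopset $H=H_1\cup H_2\cup H_3$ with modified sampling probability, invoke Theorem~\ref{thm:StretchAndHopbound} for stretch and hopbound, prove the block formula $\lambda_{ac+b}=(c+1)^a$ by induction to pin down $F=O(c\log_c k)$, and then read off the three size bounds from Lemmas~\ref{lemma:H1Preserver}, \ref{lemma:H2Preserver}, \ref{lemma:H3Size} with a specific choice of $\delta$. So in substance this is the paper's proof.

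Two small remarks. First, your choice $\delta=\min\{\tfrac12,k^{-9/(c-1)}\}$ is actually a bit more careful than the paper's unqualified $\delta=k^{-9/(c-1)}$: Lemmas~\ref{lemma:HopsetSize}, \ref{lemma:H3Size}, \ref{lemma:H2Preserver} are all proved under the hypothesis $\delta\le\tfrac12$, and when $c-1>9\log_2 k$ the paper's $\delta$ violates that hypothesis while your $\min$ does not (and, as you correctly note, in that regime $\delta=\tfrac12$ still gives $\delta^{c-1}\le k^{-9}$ and $1/\delta\le 2k^{9/(c-1)}$, so both target bounds survive). Second, your closing Markov/union-bound step to turn the three in-expectation size bounds into a single realization achieving all of them is a correct piece of hygiene that the paper leaves implicit; it is not a deviation, just filled-in standard detail.
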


\begin{proof}

Recall the definition of the sequence $\{\lambda_i\}$ in Equation (\ref{eq:LambdaDef}). Using the fact that $f^{-1}(j)=\left\lfloor\frac{j}{c}\right\rfloor\cdot c$, we prove by induction that $\lambda_{ac+b}=(c+1)^a$, for every integer $a\geq0$ and $b\in[0,c-1]$. For $a=b=0$, $\lambda_0=1=(c+1)^0$ by definition. Fix some $j=ac+b$, and assume by induction that for every $l=a'c+b'\leq j$ we have $\lambda_j=(c+1)^{a'}$. If $b<c-1$, then $b+1\leq c-1$, therefore $f^{-1}(j+1)=\left\lfloor\frac{ac+b+1}{c}\right\rfloor\cdot c=ac$, and then
\begin{eqnarray*}
\lambda_{j+1}&=&1+\sum_{l<f^{-1}(j+1)}\lambda_l=1+\sum_{l<ac}\lambda_l=1+\sum_{a'<a}\sum_{b'=0}^{c-1}(c+1)^{a'}\\
&=&1+c\cdot\frac{(c+1)^a-1}{(c+1)-1}=(c+1)^a~.
\end{eqnarray*}

If $b=c-1$, then $j+1=(a+1)c$, and $f^{-1}(j+1)=\left\lfloor\frac{j+1}{c}\right\rfloor\cdot c=(a+1)c$. Then,
\begin{eqnarray*}
\lambda_{j+1}&=&1+\sum_{l<f^{-1}(j+1)}\lambda_l=1+\sum_{l<(a+1)c}\lambda_l=1+\sum_{a'<a+1}\sum_{b'=0}^{c-1}(c+1)^{a'}\\
&=&1+c\cdot\frac{(c+1)^{a+1}-1}{(c+1)-1}=(c+1)^{a+1}~.
\end{eqnarray*}
This completes the inductive proof.

Recall from the proof of Lemma \ref{lemma:HopsetSize} that the expected size of $A_i$, for every $i\in[0,F-1]$ is $\delta^i n^{1-\frac{1}{k}\sum_{l<i}\lambda_l}$. Assume that $i=ac+b$ for integers $a\geq0$ and $b\in[0,c-1]$. To compute the sum that appear in the exponent of $n$, we first assume that $b=0$. Then, 
\[\sum_{l<i}\lambda_l=\sum_{a'<a}\sum_{b'=0}^{c-1}(c+1)^{a'}=(c+1)^a-1~.\]
Therefore, the expected size of $A_i$ is
\[|A_i|=\delta^i n^{1-\frac{1}{k}\sum_{l<i}\lambda_l}=\delta^i n^{1-\frac{(c+1)^a-1}{k}}~.\]
For $a\geq\log_{c+1}(k+1)$, this value is at most $\delta^i\leq\left(\frac{1}{2}\right)^{c\log_{c+1}(k+1)}=(k+1)^{-\frac{c}{\log_2(c+1)}}\leq\frac{1}{k+1}$. Thus, with high probability, $A_i$ is empty for $i=c\lceil\log_{c+1}(k+1)\rceil$. This means that we can always assume that $F=c\lceil\log_{c+1}(k+1)\rceil$.

For proving the theorem, let $H$ be the hopset that was defined in Equation (\ref{eq:HopsetDef}). By Theorem \ref{thm:StretchAndHopbound} and by Lemma \ref{lemma:HopsetSize}, $H$ has the desired stretch, hopbound and size. Recall that $H$ can be partitioned into three disjoint subsets $H_1,H_2,H_3$ defined above. We saw in Lemma \ref{lemma:H1Preserver} that $H_1$ has a path-reporting preserver with size $O(Fn)$. We saw in Lemma \ref{lemma:H2Preserver} that $H_2$ has a path-reporting preserver with size $O\left(\frac{c}{\delta}\cdot n^{1+\frac{1}{k}}\right)$. Lastly, we saw in Lemma \ref{lemma:H3Size} that $|H_3|=O\left(c\delta^{c-1}\cdot n^{1+\frac{1}{k}}\right)$.

Choose $\delta=k^{-\frac{9}{c-1}}$. We substitute this value of $\delta$ and $F=c\lceil\log_{c+1}(k+1)\rceil$, and conclude that
\begin{itemize}
    \item $H_1$ has a path-reporting preserver with size $O(Fn)=O(cn\log_ck)$,
    \item $H_2$ has a path-reporting preserver with size $O\left(\frac{c}{\delta}\cdot n^{1+\frac{1}{k}}\right)=O\left(ck^{\frac{9}{c-1}}n^{1+\frac{1}{k}}\right)$,
    \item $|H_3|=O\left(c\delta^{c-1}\cdot n^{1+\frac{1}{k}}\right)=O\left(ck^{-9}n^{1+\frac{1}{k}}\right)$,
\end{itemize}
as desired.

\end{proof}

\end{document}